\def\@secnumfont{\bfseries}
\def\section{\@startsection{section}{1}%
\z@{1.0\linespacing\@plus\linespacing}{0.5\linespacing}%
{\normalfont\bfseries\centering}}
\newtheorem{theorem}{Theorem}
\newtheorem{proposition}[theorem]{Proposition}
\newtheorem{lemma}[theorem]{Lemma}
\newtheorem{assumption}{Assumption}
\newtheorem{corollary}[theorem]{Corollary}
\newtheorem{remark}[theorem]{Remark}
\newtheorem{definition}[theorem]{Definition}
\newtheorem{example}[theorem]{Example}
\numberwithin{theorem}{section}
\newcommand{\eps}{\varepsilon}
\newcommand{\U}{\mathbf U}
\newcommand{\V}{\mathbf V}
\newcommand{\W}{\mathbf W}
\renewcommand{\v}{\mathbf v}
\renewcommand{\u}{\mathbf u}
\newcommand{\x}{\mathbf x}
\newcommand{\y}{\mathbf y}
\newcommand{\ba}{\boldsymbol\alpha}
\newcommand{\bb}{\boldsymbol\beta}
\newcommand{\be}{\boldsymbol \eta}
\newcommand{\bz}{\boldsymbol \zeta}
\newcommand{\bg}{\boldsymbol \gamma}
\newcommand{\s}{\mathfrak s}
\newcommand{\p}{\mathfrak p}
\newcommand{\q}{\mathfrak q}
\newcommand{\dd}{\mathrm d}
\newcommand{\E}{\mathbb E}
\newcommand{\T}{\mathsf T}
\begin{document}
\title{High-dimensional canonical correlation analysis}

\author{Anna Bykhovskaya}
\address[Anna Bykhovskaya]{Duke University}
\email{anna.bykhovskaya@duke.edu}

\author{Vadim Gorin}
\address[Vadim Gorin]{University of California at Berkeley}
\email{vadicgor@gmail.com}

\thanks{The authors would like to thank Tim Bollerslev, Grigori Olshanski, Alexei Onatski, Andrew Patton, and anonymous referees for valuable comments and suggestions. The authors also would like to thank Thomas Barker for excellent research assistance. Gorin's work was partially supported by NSF grant DMS-2246449.}
\date{\today}
\maketitle

\begin{abstract}
This paper studies high-dimensional canonical correlation analysis (CCA) with an emphasis on the vectors that define canonical variables. The paper shows that when two dimensions of data grow to infinity jointly and proportionally, the classical CCA procedure for estimating those vectors fails to deliver a consistent estimate. This provides the first result on the impossibility of identification of canonical variables in the CCA procedure when all dimensions are large. As a countermeasure, the paper derives the magnitude of the estimation error, which can be used in practice to assess the precision of CCA estimates. Applications of the results to cyclical vs.~non-cyclical stocks and to a limestone grassland data set are provided.
\end{abstract}


\section{Introduction}

\subsection{Background}\label{section_intro_background}
Canonical correlation analysis (CCA) is a classical statistical analysis method used to find a common structure between two data sets. It was first introduced in \citet{harold1936relations} and remains in active use today. CCA can be viewed as a generalization of principal component analysis (PCA) from one set of variables to two: in PCA the goal is to find a signal (or factor) in a large matrix with a large amount of noise, while CCA searches for a common signal among two large matrices with a large amount of noise. We refer the reader to the textbooks \citet{thompson1984canonical,gittins1985canonical,anderson1958introduction}, and \citet{muirhead2009aspects} for introductions to CCA.

There are numerous applications of CCA in the social and natural sciences. In genomics CCA is used to find commonalities between multiple
assays coming from the same set of individuals (see, e.g., \citet{witten2009extensions} and references therein). In neuroscience it is used to match brain measurements with behavioral and medical scores (e.g., \cite{wang2020finding,zhuang2020technical}). Similarly, in ecology CCA is used to correlate characteristics of living species with those of their habitats (e.g., \citet{gittins1985canonical} and \citet[Chapter 23]{simon1998assessing}). In econometrics CCA is used to select the number of factors driving the behaviour of various processes (see, e.g., \citet{breitung2013canonical,andreou2019inference}, \citet{choi2021canonical}, and \citet{franchi2023estimating}). Further, in time series econometrics CCA appears in cointegration analysis: a strong correlation between first differences (which are stationary) and lags (which are nonstationary) indicates the presence of cointegration (cf.~\citet{johansen1988}).

Beyond the cases outlined above, there are numerous settings where CCA can provide a meaningful alternative approach and complement previous methods. For example, in sociology and applied economics CCA can provide a framework for analysis of the relationships between various measures of socioeconomic characteristics and investments, on one side, and labour market outcomes and associated skills, on the other side (cf.~the production function in \citet{cunha2010estimating} and the results of \citet{papageorge2020genes}). In finance and financial econometrics an important question, which can be addressed via CCA, is whether different types of stocks are highly correlated and, thus, can be used to create portfolios for pairs trading (e.g., one could exploit a correlation between stocks from different industries or between traditional stocks and cryptocurrencies). Further, this pertains to finding a maximally predictable portfolio (see \citet{lo1997maximizing} for seminal results and \citet{goulet2023maximally} and references therein for most recent developments); recently \citet{firoozye2023canonical} applied CCA in a closely related setting.



One may notice that most of the above examples ideally require working with very large (across all dimensions) data sets. However, the high-dimensional machinery for CCA (in contrast to that for PCA) is still very much underdeveloped. In particular, while all the above scenarios crucially require estimates of canonical vectors (or variables), the literature has not provided any theoretic guarantees for the consistency of such estimates in general settings when we do not know anything about their structure and do not impose further restrictions such as sparsity, smoothness, etc. Thus, the development of new tools and theory suitable for high-dimensional settings remains vital. In this paper we aim to provide such techniques. Our key result quantifies the inconsistency in CCA by providing exact formulas to measure how far the sample canonical vectors are from their population counterparts. These formulas are easy to evaluate and are readily applicable for practitioners seeking to assess the quality of their estimates.

\subsection{High-dimensional CCA setup}
Formally, if we have two random vectors $\u\in\mathbb{R}^K$ and $\v\in\mathbb{R}^M$, then the first goal of CCA is to find deterministic vectors $\ba\in\mathbb{R}^K,\,\bb\in\mathbb{R}^M$ that maximize the correlation between $\u^\T\ba$ and $\v^\T\bb$, where $^\T$ is the matrix transposition. That is, we are trying to find highly correlated combinations of coordinates of $\u$ and $\v$. The maximal correlation value is called the largest canonical correlation, and the corresponding $\u^\T\ba$ and $\v^\T\bb$ form the first pair of canonical variables. Other canonical correlations (there are $\min(K,M)$ of them) can be found iteratively. Equivalently, squares of all canonical correlations can be found as the eigenvalues of $K\times K$ matrix $(\E \u \u^\T)^{-1} (\E\u \v^\T) (\E\v \v^\T)^{-1} (\E\v \u^\T)$ or as the eigenvalues of the ${M\times M}$ matrix $(\E\v \v^\T)^{-1} (\E\v \u^\T) (\E\u \u^\T)^{-1} (\E\u \v^\T)$. Although proving it is not straightforward, the equivalence of the two procedures is a known fact in linear algebra; we also explain it in Appendix \ref{Section_CCA_master}. By multiplying the corresponding eigenvectors of the former and the latter matrices by $\u^\T$ and $\v^\T$, respectively, we obtain all the pairs of canonical variables. We call the setup with two random vectors a population formulation.


In contrast, in the sample formulation, two vectors are replaced by two matrices (e.g., we observe $S$ samples of $\u$ and $S$ samples of $\v$). Let those samples be $\U$ ($K\times S$ matrix) and $\V$ ($M\times S$ matrix). The finite sample analogues of the canonical correlations, the  \textit{sample} canonical correlations, are computed by maximizing the sample correlations between the $S$-dimensional vectors $\U^\T\widehat \ba$ and $\V^\T\widehat \bb$. Equivalently, their squares are eigenvalues of $(\U\U^\T)^{-1}\U\V^\T(\V\V^\T)^{-1}\V\U^\T$. The canonical correlation vectors are eigenvectors of the preceding product of matrices and another similar expression. The corresponding $\U^\T\widehat \ba$ and $\V^\T\widehat \bb$ are called sample canonical variables.

Assume that the columns of $\mathbf U$ and $\mathbf V$ are independent samples from the joint law of $(\mathbf u,\mathbf v)$. Then, when $S$ is large while $K$ and $M$ are fixed, the sample covariances of $\U$ and $\V$ as well as their cross-covariances are consistent estimators of their population counterparts. Therefore, $(\U\U^\T)^{-1}\U\V^\T(\V\V^\T)^{-1}\V\U^\T$ is a consistent estimate of its population analogue $(\E \u \u^\T)^{-1} (\E\u \v^\T) (\E\v \v^\T)^{-1} (\E\v \u^\T)$, and we obtain consistent estimates of the squared correlation coefficients and corresponding vectors. The exact distribution of the sample canonical correlations and variables and their asymptotic behavior in the fixed $K$ and $M$ regime have attracted the attention of many researchers, with the seminal results going back to \citet{hsu1939distribution,hsu1941limiting} and  \citet{constantine1963some} and the latest corrections to the results on canonical variables being much more recent (see \citet{anderson1999asymptotic}).

The properties of CCA when all three dimensions $S,K,M$ are large and comparable have not yet been properly analyzed. The remarkable seminal result in this direction is that of \citet{wachter1980limiting}, who studies the case when $\u$ and $\v$ are uncorrelated and columns $\U$ and $\V$ are independent samples of $\u$ and $\v$, respectively. \citet{wachter1980limiting} shows that despite $\U$ and $\V$ being independent, the empirical distribution of the sample canonical correlations has a nontrivial limit as $S,K,M\to\infty$ with $S/K\to \tau_K$, $S/M\to\tau_M$. An independent rediscovery of this result is presented in \citet{bouchaud2007large}. In particular, the majority of the correlations are bounded away from $0$. The conceptual conclusion is that in the large $S,K,M$ setting the sample canonical correlations \emph{do not} deliver consistent estimates of the population canonical correlations, which in this case all equal zero due to $\u$ and $\v$ being independent. These results have been extended very recently in \citet{bao2019canonical}, \citet{yang2022limiting}, and \citet{ma2023sample} to the case when some (but finitely many as the dimensions grow) population correlations are allowed to be nonzero. Conceptually similar results are obtained: the sample squared canonical correlations are always larger and bounded away from their population counterparts. However, there is an explicit dependence between the sample and population canonical correlations, and therefore, knowing the former allows one to reconstruct the latter. Hence, it is possible to identify nontrivial population canonical correlations.

No progress has been achieved so far to tackle the challenging properties of the corresponding vectors of canonical variables in the regime in which $S,K,M$ are large and proportional to each other. In practice squared correlation coefficients can be used to test whether there is a common signal between two data sets (i.e., the largest coefficient is statistically larger than zero). However, vectors are required to be able to say something about the nature of this commonality: where it comes from, what it represents, etc. The asymptotic analysis of these vectors is the central topic of the present paper.

To address the problem of estimating CCA vectors, we first assume that $\U$ and $\V$ are Gaussian and independent across $S$ samples of $\u$ and $\v$ but allow $\u$ to be correlated with $\v$. We start with $\u$ and $\v$ having exactly one nonzero canonical correlation and refer to it as the \emph{signal}, while the remaining uncorrelated parts of $\u$ and $\v$ are viewed as the \emph{noise}. We characterize (in terms of the parameters $S,K,M$ and the covariances between coordinates of $\u$ and $\v$) when we can detect a nonzero canonical correlation, i.e.,~when there is enough signal in the data compared to the pure noise case. Next, we develop formulas for sample $\widehat \ba$ and $\widehat \bb$, which allow us to show that, for finite ratios $S/K,\,S/M$, one \emph{cannot} consistently estimate true values $\ba$ and $\bb$. The estimated canonical variables $\U^\T\widehat \ba$ and $\V^\T\widehat \bb$ will always lie on cones around the true population canonical variables $\U^\T\ba$ and $\V^\T\bb$. We provide explicit formulas for the width of these cones. They show that the cones shrink as we increase the ratios $S/K,\,S/M$ and that ultimately consistency is restored in the limit.

Next, we provide various generalizations from the i.i.d.~Gaussian setting with one signal. First, motivated by the fact that often data are far from normal (especially in financial applications), we relax our assumptions to accommodate any distribution as long as its first four moments match the Gaussian moments.
Second, we allow for correlated realizations, treating the signal part (i.e., the signal is not independent across $S$) and the noise part (i.e., the noise is not independent across $S$) separately.

We do not cover the case of simultaneously correlated structures for the signal and noise in this paper. We remark that such setting could be of interest for the cointegration testing, where the no-signal (or no-cointegration) case has been studied in recent works such as \citet{onatski_ecta, onatski2019extreme} and \citet{BG1, BG2}.


Finally, we extend the machinery to allow for multiple vectors $\ba$ and $\bb$, i.e.,\ multiple nonzero canonical correlations in the population setting. In other words, we allow the signal to be of any finite rank, and the vectors $\ba$ and $\bb$ are replaced by matrices.

\subsection{Other related literature}
\subsubsection{CCA vs.~sparse and regularized CCA} While the precise measure of CCA inconsistency was not available before, researchers were aware that the quality of the estimates delivered through sample canonical correlations and variables deteriorates when the ratios between the number of samples and dimension of the space, $S/K$ and $S/M$, are not large. Drawing intuition from many other high-dimensional problems, it is common to impose additional restrictions on canonical variables (or other model parameters) to address this issue. For example, \citet{gao2015minimax,gao2017sparse} analyze CCA under sparsity assumptions, while \citet{tuzhilina2023canonical} (see also references therein) explore other regularization methods. These approches help reduce dimensionality and often produce easily interpretable results. However, this advantage often comes at the cost of complex numerical computations and the loss of some desirable properties. In contrast, without restrictions, sample canonical variables are maximum likelihood estimators of the population variables, offering certain optimality guarantees.

From a practical perspective, it is crucial to ensure that the data satisfies the assumptions being imposed. For example, one must assess whether sparsity is a reasonable assumption for a given data set --- a topic that remains debated. \citet{giannone2021economic} reports that many large data sets in macroeconomics, microeconomics, and finance exhibit dense rather than sparse structures. Similarly, in our financial data set example in Section \ref{Section_stocks}, most coefficients of the estimated leading canonical variables are positive and of comparable magnitude, making the sparsity assumption unrealistic. Therefore, it is essential to develop methods, such as those in this paper, that analyze CCA in settings with only minimal assumptions.

\subsubsection{CCA vs.~PCA and factor models}
CCA is closely connected to the literature on factors. First, as mentioned in Section \ref{section_intro_background}, some methods for factor estimation rely on CCA. The idea is that the correlation between different sets of variables, which is captured by CCA, can be fundamentally related to the presence and properties of factors. For instance, one can distinguish between common and group-specific factors (see, e.g., \citet{andreou2019inference} and \citet{choi2021canonical}). \citet{franchi2023estimating} relies on the CCA between levels $X_t$ and sums of $X_t$ to infer the dimension of non-stationary subspace in $X_t$.

Another point of view on the interconnection between factors and CCA comes from the interpretation of canonical variables as common explanatory factors between two data sets. The main difference is that factors operate with a single matrix/data set while canonical correlations require two matrices/data sets. The main approach for finding factors is via PCA and its modifications (cf.\ \citet{bai2008large,stock2011dynamic,fan2016overview} and references therein). There are many similarities in the spirit of the modeling and results between CCA and PCA. However, CCA as considered in our present paper is much more challenging and involved than PCA. This is perhaps an inherent feature of the CCA procedure, which involves matrix inversion, a much more complicated operation to carry out than simply calculating a product.\footnote{The reader is invited to compare the difficulty of the answers in our Theorem \ref{Theorem_master} with that of the parallel Theorems 2.9 and 2.10 in \citet{benaych2012singular} for the PCA setting.}


In statistics and probability factors (signals in the data) are often modeled as a spiked random matrix, a term going back to \citet{johnstone2001distribution}. A spiked random matrix is a sum of a full-rank noise matrix and a small-rank signal matrix. Often this is embedded in the PCA setting: one observes a matrix $C=A+B$, where $A$ is treated as a low-rank signal and $B$ is treated as noise, and tries to reconstruct $A$ from observing $C$ through its singular values and vectors. $A$, $B$, and $C$ here are rectangular matrices with both dimensions assumed to be large. In econometrics $A$ represents a product of factors and their loadings, while $B$ is composed of idiosyncratic errors.

The spiked covariance model, which begins with a positive-definite covariance matrix $\Omega=\Sigma+\sigma^2 I_N$, can be viewed as a special case of the spiked random matrix framework. In this model $\Sigma$ is a low-rank signal matrix, $I_N$ is the $N\times N$ identity matrix and $\sigma^2$ is a real parameter that quantifies the noise strength. One observes $N\times S$ matrix $X$, whose columns are i.i.d.\ samples of a random vector with covariance $\Omega$. The goal is to reconstruct $\Omega$ (equivalently, $\Sigma$) from the sample covariance matrix $\frac{1}{S}X X^\T$. For such a setting, \citet{baik2005phase} and \citet{baik2006eigenvalues} discovered the phenomenon now known as the BBP phase transition: when the signal is small while the strength of the noise is large, one cannot detect the presence of the signal matrix $\Sigma$ from eigenvalues of $\frac{1}{S}X X^\T$ , while for a large signal, the largest singular values of $\frac{1}{S}X X^\T$ connect to those of $\Sigma$ through explicit formulas. The corresponding singular vectors represent an inconsistent estimate of $\Sigma$, in parallel to what we observe in the CCA setting (see \citet{johnstone2009consistency,paul2007asymptotics,nadler2008finite} and references therein for earlier work in the learning theory literature in physics). If the rank of the signal matrix $\Sigma$ is not restricted to be small, a popular approach for estimating it from $\frac{1}{S}X X^\T$ is a method known as ``shrinkage''. While this approach does not yield consistent estimators, it is optimal in a certain sense, see \citet{ledoit2022power} for a review.

The phase transition analogous to the BBP has also been observed in the econometrics literature on factor models. When factors are weakly influential, their identification becomes possible only if the signal's strength is above a threshold (see \citet{onatski2012asymptotics} and references therein). Moreover,  when factors become stronger, estimation consistency is restored.

\subsubsection{CCA vs.~$F$-matrix}
Closely related to the CCA framework is a spiked $F$-matrix model, explored recently in, e.g., \citet{gavish2023matrix,hou2023spiked}. An $F$-matrix, arising from the $F$-test, involves the ratio of two large matrices. In the benchmark case of no spikes (or no signal), there exists a direct correspondence between an $F$-matrix and the random matrices used in CCA. When signals are present in the data, a more delicate connection between non-central $F$-matrices and CCA has been established in recent work (see \citet{bai2022limiting} and \citet{zhang2023limiting}). However, this connection applies only to eigenvalues, not eigenvectors, leaving it unclear how to extend these results to analyze the vectors of canonical variables, which are the focus of our study.

\subsection{Outline of the paper} The remainder of the paper is organised as follows. Section \ref{Section_setting} discusses the basic setting of i.i.d.~normal vectors. Various generalizations (to the cases with nonnormal errors, correlated observations, and multiple signals) are presented in Section \ref{Section_general}. Section \ref{Section_empirical} provides two empirical illustrations of our results. Finally, Section \ref{Section_conclusion} concludes. Proofs, additional computer simulations, and data details are given in appendices.

\section{Basic framework}
\label{Section_setting}

\subsection{Population setting}\label{Section_basic_population_setting}
Let $\u=(u_1,\dots,u_K)^\T$ and $\v=(v_1,\dots,v_M)^\T$ be $K$- and $M$- dimensional random vectors with zero means and nondegenerate covariance matrices, where $\,^\T$ here and below denotes matrix transposition. Without loss of generality, we assume $K\le M$.

\begin{assumption}\label{ass_basic} The vectors $\u$ and $\v$ satisfy the following:
\begin{enumerate}[label=(\arabic*), ref=\ref{ass_basic}.(\arabic*)]
\item\label{ass_basic_gauss} The vectors $\u$ and $\v$ are jointly Gaussian with zero means.
\item\label{ass_basic_indep_set} There exist nonzero deterministic vectors $\ba\in\mathbb{R}^K,\,\bb\in\mathbb{R}^M$ such that
\begin{enumerate}
  \item For any $\bg\in\mathbb{R}^K$, if $\u^\T\ba$ and $\u^\T\bg$ are uncorrelated, then $\v$ and $\u^\T\bg$ are also uncorrelated;
  \item For any $\bg\in\mathbb{R}^M$, if $\v^\T\bb$ and $\v^\T\bg$ are uncorrelated, then $\u$ and $\v^\T\bg$ are also uncorrelated.
\end{enumerate}
\end{enumerate}
\end{assumption}

The pair $(\u^\T\ba, \v^\T\bb)$ represents the signal in the data, while the remaining part is treated as the noise, which is uncorrelated with the signal. Ultimately, we are interested in the signal and would like to filter out the influence of the noise.
\begin{example}\label{ex_10}
\emph{Assumption \ref{ass_basic} is satisfied with $\ba=(1,0,\ldots,0)^\T,\,\bb=(1,0,\ldots,0)^\T$ if $(u_1,\dots,u_K,v_1,\dots,v_M)^\T$ is a mean zero Gaussian vector such that, for each $2\le k\le K$, the coordinate $u_k$ is uncorrelated with $u_1$ and with $v_1,\dots, v_M$ and, for each $2\le m \le M$, the coordinate $v_m$ is uncorrelated with $v_1$ and with $u_1,\dots,u_K$. In other words, the only possible nonzero correlations are between $u_1$ and $v_1$, between $u_k$ and $u_{k'}$, $2\le k,k'\le K$, and between $v_m$ and $v_{m'}$, $2\le m,m'\le M$. These correlations can be arbitrary.}

\emph{Any other example can be obtained from the preceding by a change of basis.}
\end{example}

\begin{definition} \label{Def_squared_correlation_coef} Let $C_{uu}:=\E (\u^\T\ba)^2$, $C_{vv}:=\E (\v^\T\bb)^2$, $C_{uv}=C_{vu}:=\E (\u^\T\ba) (\v^\T\bb)$ and $r^2$ be the squared correlation coefficient:
$$
r^2:=\frac{C_{uv}^2}{C_{uu} C_{vv}}.
$$
\end{definition}

\medskip

The number $r^2$ and the vectors $\ba$ and $\bb$ of Assumption \ref{ass_basic} can be read from the covariance structure of $\u$ and $\v$, as the following lemma explains.
\begin{lemma} \label{Lemma_signal_population}
 The number $r^2$ equals the single nonzero eigenvalue of the $K\times K$ matrix
 $(\E \u \u^\T)^{-1} (\E\u \v^\T) (\E\v \v^\T)^{-1} (\E\v \u^\T)$ and the single nonzero eigenvalue of the ${M\times M}$ matrix $(\E\v \v^\T)^{-1} (\E\v \u^\T) (\E\u \u^\T)^{-1} (\E\u \v^\T)$. $\ba$ and $\bb$ are the corresponding eigenvectors of the former and the latter matrices, respectively.
\end{lemma}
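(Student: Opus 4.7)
My plan is to extract from Assumption \ref{ass_basic_indep_set} that the cross-covariance $\Sigma_{uv}:=\E\u\v^\T$ has rank at most one, to identify its left/right factors with $\Sigma_{uu}\ba$ and $\Sigma_{vv}\bb$ (where $\Sigma_{uu}:=\E\u\u^\T$ and $\Sigma_{vv}:=\E\v\v^\T$), and then to compute the two matrices in the statement by direct algebra. Write the first half of Assumption \ref{ass_basic_indep_set} as: if $\ba^\T\Sigma_{uu}\bg=0$ then $\Sigma_{vu}\bg=0$. Since $\Sigma_{uu}$ is invertible and $\ba\ne 0$, the set $\{\bg:\ba^\T\Sigma_{uu}\bg=0\}$ is a hyperplane in $\mathbb{R}^K$ contained in $\ker \Sigma_{vu}$, which forces $\rank\Sigma_{vu}\le 1$. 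The case $\Sigma_{vu}=0$ gives $r^2=0$ and nothing to prove, so assume $\Sigma_{uv}=\p\q^\T$ with $\p\in\mathbb{R}^K$, $\q\in\mathbb{R}^M$ nonzero.

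Next, since $\ker(\ba^\T\Sigma_{uu})\subseteq\ker(\mathbf p^\T)$ and both are hyperplanes, they coincide, so $\mathbf p=\lambda\,\Sigma_{uu}\ba$ for some $\lambda\ne 0$. The symmetric argument applied to the second half of Assumption \ref{ass_basic_indep_set} gives $\mathbf q=\mu\,\Sigma_{vv}\bb$ for some $\mu\ne 0$. Substituting into the $K\times K$ matrix and using $\Sigma_{uu}^{-1}\p=\lambda\ba$,
$$
\Sigma_{uu}^{-1}\Sigma_{uv}\Sigma_{vv}^{-1}\Sigma_{vu}\,\ba
=(\q^\T\Sigma_{vv}^{-1}\q)\,(\p^\T\ba)\,\Sigma_{uu}^{-1}\p
=\lambda^2\mu^2\,C_{uu}C_{vv}\,\ba,
$$
using $\p^\T\ba=\lambda C_{uu}$ and $\q^\T\Sigma_{vv}^{-1}\q=\mu^2 C_{vv}$. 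The matrix is rank one because $\Sigma_{uv}$ is, so $\ba$ spans its range and this is the unique nonzero eigenvalue. Finally, $C_{uv}=\ba^\T\p\,\q^\T\bb=\lambda\mu\,C_{uu}C_{vv}$, hence $r^2=C_{uv}^2/(C_{uu}C_{vv})=\lambda^2\mu^2 C_{uu}C_{vv}$, which matches the eigenvalue. An identical computation for the $M\times M$ matrix (just interchange the roles of $\u,\v$) shows it has the single nonzero eigenvalue $r^2$ with eigenvector $\bb$.

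The steps are all elementary, so no step is a real obstacle; the only thing to be careful about is tracking the scalars $\lambda,\mu$ and checking that the eigenvalue coming out of the direct computation indeed equals $r^2$ from Definition \ref{Def_squared_correlation_coef}, which it does through the intermediate identity $C_{uv}=\lambda\mu\,C_{uu}C_{vv}$. As an alternative one could first reduce to the coordinate form of Example \ref{ex_10} by changing bases via Cholesky factors of $\Sigma_{uu},\Sigma_{vv}$ (which preserves $r^2$ and transforms $\ba,\bb$ equivariantly with the two matrices), but the rank-one route above is more self-contained and avoids invoking the unproven "change of basis" remark at the end of the example.
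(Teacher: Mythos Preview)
Your argument is correct. The paper's own proof is the one-line reduction you mention at the end: check the statement in the coordinate form of Example \ref{ex_10} (where it is immediate) and then invoke a change of basis. You instead extract directly from Assumption \ref{ass_basic_indep_set} that $\Sigma_{uv}$ has rank one, identify its factors as $\Sigma_{uu}\ba$ and $\Sigma_{vv}\bb$ up to scalars, and compute the eigenvalue by hand. This is a genuinely different route: it is more self-contained (it does not rely on the ``any other example can be obtained by change of basis'' claim, which the paper states but does not prove), and it makes transparent exactly how the structural hypothesis forces the rank-one form. The paper's approach is shorter once one accepts the reduction, and it has the pedagogical advantage of pointing the reader to the canonical coordinate picture that recurs later. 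A minor cosmetic point: you switch between $\p$ and $\mathbf p$ (and $\q$ and $\mathbf q$) mid-argument; pick one.
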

In the setting of Example \ref{ex_10}, the statement of Lemma \ref{Lemma_signal_population} is straightforward. The general situation is reduced to this example by a change of basis.

\subsection{Sample setting} \label{Section_basic_sample_setting}

Let $\W=\begin{pmatrix}\U\\ \V \end{pmatrix}$ be a $(K+M)\times S$ matrix composed of $S$ independent samples of $\begin{pmatrix} \u \\ \v \end{pmatrix}$. The matrix $\W$ represents observed data, which comes from the population setting \ref{Section_basic_population_setting}. We are interested in finding the signal  $(\u^\T\ba, \v^\T\bb)$ or vectors $(\ba,\bb)$. In the sample setting they come from the squared sample canonical correlations and their corresponding vectors. The vectors for the largest correlation represent the sample analogues of $\ba$ and $\bb$. The following definition is motivated as a sample version of Lemma \ref{Lemma_signal_population}.

\begin{definition} \label{Definition_sample_setting}
Let $K\le M$. The squared sample canonical correlations $\lambda_1\ge \lambda_2\ge \dots\ge \lambda_K$ are defined as eigenvalues of the $K\times K$ matrix $(\U \U^\T)^{-1} \U \V^\T (\V \V^\T)^{-1} \V \U^\T$ or as the $K$ largest eigenvalues of the $M\times M$ matrix $(\V \V^\T)^{-1} \V \U^\T (\U \U^\T)^{-1} \U \V^\T$. We set $\widehat \ba$ to be the eigenvector of the former matrix, corresponding to its largest eigenvalue, and we set $\widehat \bb$ to be the eigenvector of the latter matrix, corresponding to its largest eigenvalue.

 The sample canonical variables are defined as $\widehat \x=\U^\T \widehat \ba$ and $\widehat \y =\V^\T \widehat \bb$. We also set $\x=\U^\T \ba$ and $\y =\V^\T \bb$.
\end{definition}

\subsection{Results}
In the sample setting of Section \ref{Section_basic_sample_setting}, we are going to assume that $S/K\to \tau_K$, $S/M\to\tau_M$, and $r^2\to \rho^2$. We define six constants which depend on these parameters:


\begin{align}
\label{eq_rhoc} \rho_c^2&=\frac{1}{\sqrt{(\tau_M-1)(\tau_K-1)}},\\
\label{eq_lambda_plus} \lambda_\pm&=\left(\sqrt{\tau_M^{-1}(1-\tau_K^{-1})}\pm \sqrt{\tau_K^{-1}(1-\tau_M^{-1})}  \right)^2, \\
\label{eq_zrho}  z_\rho&=\frac{\bigl(  (\tau_K-1)\rho^2  + 1 \bigr) \bigl(  (\tau_M-1) \rho^2 + 1\bigr)}{\rho^2 \tau_K \tau_M },\\
\label{eq_sx} \s_x&= \frac{(1-\rho^2)(\tau_K-1)}{(\tau_M-1)(\tau_K-1) \rho^2 - 1 } \cdot \frac{ (\tau_M-1) \rho^2 + 1 }{(\tau_K-1) \rho^2 + 1},\\
\label{eq_sy} \s_y&= \frac{(1-\rho^2)(\tau_M-1)}{(\tau_M-1)(\tau_K-1) \rho^2 - 1 } \cdot \frac{ (\tau_K-1) \rho^2 + 1 }{(\tau_M-1) \rho^2 + 1}.
\end{align}

The constants play the following role. First, by \cite{wachter1980limiting} and \cite{Johnstone_Jacobi}, most of the squared sample canonical correlations should belong to the $[\lambda_-,\lambda_+]$ interval. \citet{wachter1980limiting} shows that as $S\to\infty$, the empirical distribution of the squared sample canonical correlations between independent data sets converges to the Wachter distribution, which has the density $\omega_{\tau_K,\tau_M}(x)=\frac{\tau_K}{2\pi} \frac{\sqrt{(x-\lambda_-)(\lambda_+-x)}}{x (1-x)} \mathbf 1_{[\lambda_-,\lambda_+]}(x)$, see Appendix \ref{Section_Wachter_specialization} for more details and Figure \ref{Fig_spike} for an illustration. This fact remains true for our setting, as follows from Lemma \ref{Lemma_interlacing} in Appendix.

Second, \citet{bao2019canonical} have shown that the largest squared canonical correlation undergoes a phase transition depending on the value of $\rho^2$: if $\rho^2>\rho_c^2$, then the largest squared canonical correlation is close to $z_\rho>\lambda_+$; otherwise, it is close to $\lambda_+$. In the former case, this squared canonical correlation can be clearly seen on the histogram as an outlier from the Wachter distribution, see Figure \ref{Fig_spike}. We focus on the case $\rho^2>\rho_c^2$ and examine how close the sample canonical variables corresponding to the largest correlation are to their population counterparts by analyzing the angles between them. The constants $\s_x$ and $\s_y$ provide the answers, as stated in the following main theorem.

\begin{figure}[t]
    \centering
        \includegraphics[width=0.6\textwidth]{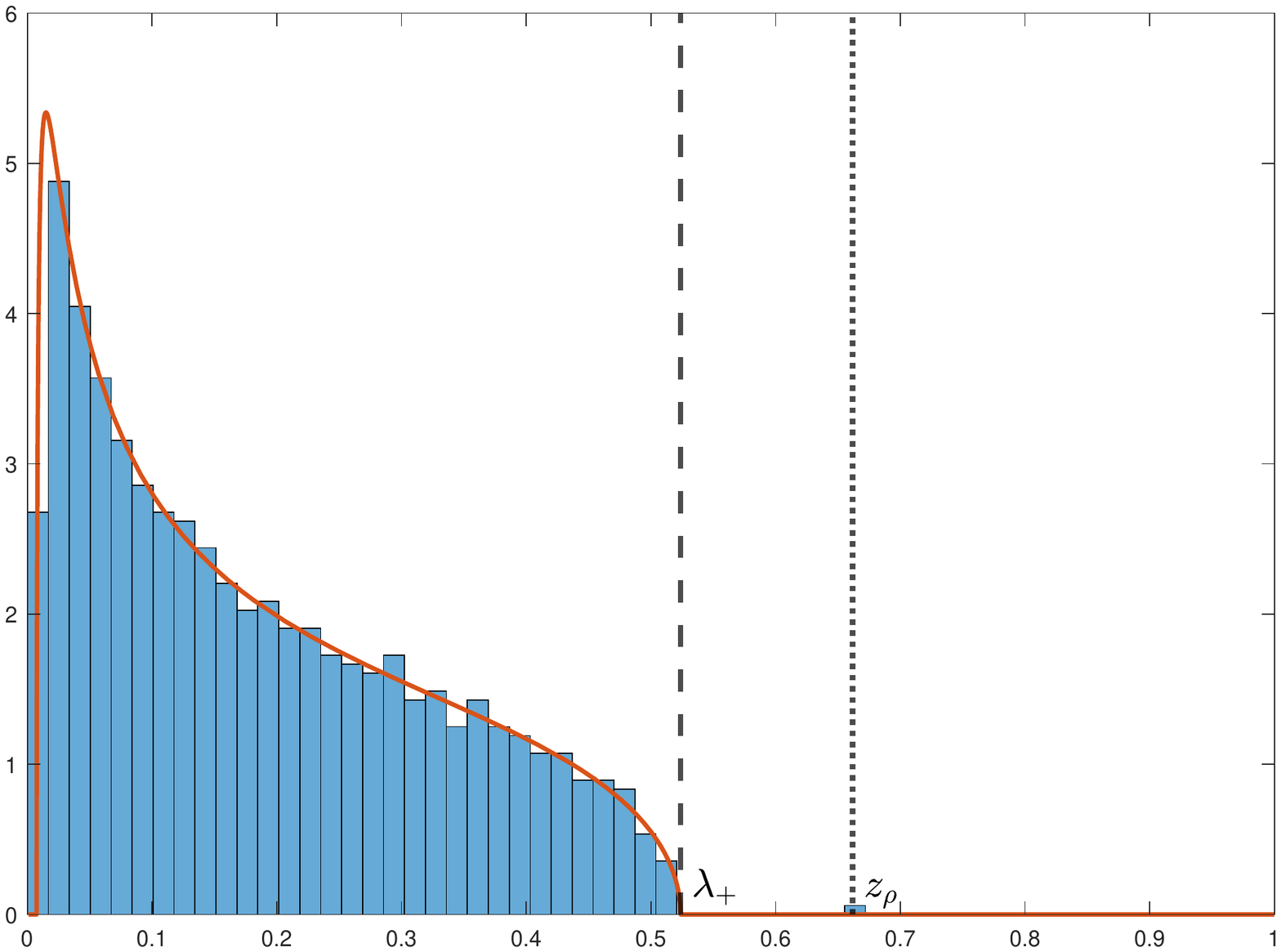}
    \caption{Illustration of Theorem \ref{Theorem_basic_setting}: Histogram of the squared sample canonical correlations from one simulation with $K=1000$, $M=1500$, $S=8000$, $r^2=0.49$. We observe a single spike in the correlations approximately at $z_\rho$ location. The density of the Wachter distribution with corresponding parameters is shown in orange.} \label{Fig_spike}
\end{figure}

\begin{theorem} \label{Theorem_basic_setting}
 Suppose that Assumption \ref{ass_basic} holds, that the columns of the data matrix $\W$ are i.i.d., and that the squared sample canonical correlations and variables are constructed as in Section \ref{Section_basic_sample_setting}. Let $S$ tend to infinity and $K\le M$ depend on it in such a way that the ratios $S/K$ and $S/M$ converge\footnote{To match the notations of \cite{bao2019canonical}, we should set $c_1=\tau_M^{-1}$ and $c_2=\tau_K^{-1}$. \citet{bao2019canonical} contains an earlier proof of \eqref{eq_canonical_limit} by another method; the part in \eqref{eq_vector_limit1} and \eqref{eq_vector_limit2} is new.} to $\tau_K>1$ and $\tau_M>1$, respectively, and $\tau_M^{-1}+\tau_K^{-1}<1$. Simultaneously, suppose that $\lim_{S\to\infty} r^2=\rho^2$.
 \begin{enumerate}
  \item[\bf 1.] If $\rho^2> \rho_c^2$, then $z_\rho>\lambda_+$, and for the two largest squared sample canonical correlations ${\lambda_1\ge \lambda_2}$, we have
  \begin{equation}
  \label{eq_canonical_limit}
   \lim_{S\to\infty} \lambda_1=z_\rho, \qquad \lim_{S\to\infty} \lambda_2=\lambda_+  \qquad \text{in probability}.
  \end{equation}
  The squared sine of the angle $\theta_x$ between $\x$ and $\widehat \x$ defined as
  $
   \sin^2\theta_x = 1- \frac{(\x^\T \widehat \x)^2}{(\x^\T \x)(\widehat \x^\T \widehat \x)}$ satisfies
  \begin{equation}
  \label{eq_vector_limit1}
   \lim_{S\to\infty} \sin^2\theta_x=\s_x \qquad \text{in probability}.
  \end{equation}
  The squared sine of the angle $\theta_y$ between $\y$ and $\widehat \y$ satisfies
  \begin{equation}
  \label{eq_vector_limit2}
   \lim_{S\to\infty} \sin^2\theta_y=\s_y \qquad \text{in probability}.
  \end{equation}
  \item[\bf 2.] If instead $\rho^2\le \rho_c^2$, then
  \begin{equation}
  \label{eq_no_spike}
   \lim_{S\to\infty} \lambda_1=\lambda_+ \qquad \text{in probability}.
  \end{equation}
 \end{enumerate}
\end{theorem}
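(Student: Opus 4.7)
The plan is to reduce to a canonical noise-plus-signal form by change of basis, and then carry out a resolvent / secular equation analysis of the resulting low-rank spiked perturbation. Since the squared sample canonical correlations and the angles in \eqref{eq_vector_limit1}-\eqref{eq_vector_limit2} are invariant under linear changes of basis in $\u$-space and $\v$-space, Assumption \ref{ass_basic} combined with Gaussianity lets me assume $\ba=\bb=e_1$, that the pair $(u_1,v_1)$ is a centered Gaussian with squared correlation $r^2$, and that the remaining coordinates are i.i.d.\ standard normal, independent of $(u_1,v_1)$. Writing
\[
\U = \begin{pmatrix} \x^\T \\ \U_0 \end{pmatrix}, \qquad \V = \begin{pmatrix} \y^\T \\ \V_0 \end{pmatrix},
\]
one has $(\x,\y)\in\mathbb{R}^{2S}$ jointly Gaussian with explicit covariance, and $\U_0,\V_0$ independent Gaussian noise matrices independent of $(\x,\y)$. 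Let $P_U, P_V$ be the orthogonal projectors onto $\mathrm{row}(\U)$ and $\mathrm{row}(\V)$. The squared sample canonical correlations are the nonzero eigenvalues of $P_U P_V$, and $\widehat\x=\U^\T\widehat\ba$ lies in $\mathrm{row}(\U)$ and satisfies $P_U P_V \widehat\x = \lambda_1 \widehat\x$.

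The key algebraic identities are the rank-one decompositions
\[
P_U = P_{U_0} + \frac{\x^{\perp}(\x^{\perp})^\T}{\|\x^{\perp}\|^2}, \qquad P_V = P_{V_0} + \frac{\y^{\perp}(\y^{\perp})^\T}{\|\y^{\perp}\|^2},
\]
with $\x^\perp=(I-P_{U_0})\x$ and $\y^\perp=(I-P_{V_0})\y$, which present $P_U P_V$ as a rank-two perturbation of the noise-only operator $P_{U_0}P_{V_0}$. Applying the Weinstein--Aronszajn identity to $\lambda I - P_U P_V$ reduces the outlier eigenvalue problem to a secular equation $\det F(\lambda)=0$ for a small matrix $F(\lambda)$ whose entries are bilinear forms built from $\x^\perp,\y^\perp$ and the resolvent $R(\lambda)=(\lambda-P_{U_0}P_{V_0})^{-1}$; the null vector of $F(\lambda_1)$ simultaneously yields an explicit expansion of $\widehat\x$ in the perturbation directions. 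Because $P_{U_0}P_{V_0}$ has limiting spectral distribution equal to the Wachter law and the noise projectors $P_{U_0},P_{V_0}$ are orthogonally invariant and independent of $(\x,\y)$, concentration arguments --- for example, conditioning on $(\x,\y)$ and invoking an anisotropic local law for the Jacobi ensemble, or Hanson--Wright-type estimates --- show that each relevant bilinear form $\x^\T R(\lambda)\x$, $\y^\T R(\lambda)\y$, $\x^\T R(\lambda)\y$, $\x^\T R(\lambda) P_{U_0}\y$, etc., converges to a deterministic limit expressible via the Stieltjes transform of the Wachter distribution.

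Substituting these limits into the secular equation yields a rational equation in $\lambda$ whose unique root outside $[\lambda_-,\lambda_+]$ is exactly $z_\rho$ when $\rho^2>\rho_c^2$, proving $\lambda_1\to z_\rho$ in \eqref{eq_canonical_limit}. When $\rho^2\le\rho_c^2$ no such root exists, and rigidity at the bulk edge (Lemma \ref{Lemma_interlacing}) forces $\lambda_1\to\lambda_+$, giving \eqref{eq_no_spike}. The convergence $\lambda_2\to\lambda_+$ follows from the same rigidity, since a rank-two perturbation creates at most one outlier. For the angles \eqref{eq_vector_limit1}-\eqref{eq_vector_limit2}, I compute $\cos^2\theta_x=(\x^\T\widehat\x)^2/(\|\x\|^2\|\widehat\x\|^2)$ using the explicit null vector of $F(\lambda_1)$: evaluated at $\lambda_1=z_\rho$, the numerator and denominator become rational functions of the deterministic bilinear-form limits from the previous paragraph, and after algebraic simplification they reduce to $1-\s_x$; the analogous computation for $\widehat\y$ gives $1-\s_y$.

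The main obstacle is twofold. The analytic core is the uniform concentration of the resolvent bilinear forms down to spectral parameters approaching the edge $\lambda_+$, which requires Wachter-ensemble estimates at the spectral edge and not merely for the global Stieltjes transform; this is the technically heaviest part. The second, lighter but still non-trivial, obstacle is the algebraic reduction of the generic rational expressions for $\cos^2\theta_x$ and $\cos^2\theta_y$ into the compact closed-form identities \eqref{eq_sx}-\eqref{eq_sy}, which relies on specific cancellations in terms of $\rho^2,\tau_K,\tau_M$ that are particular to the Wachter Stieltjes transform.
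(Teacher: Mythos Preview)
Your approach is viable but genuinely different from the paper's. You propose the standard spiked-model route: write $P_U P_V$ as a finite-rank perturbation of $P_{U_0}P_{V_0}$, apply Weinstein--Aronszajn, and control the resulting resolvent bilinear forms via an anisotropic local law for the Jacobi ensemble. The paper instead works in the canonical basis $\{\u_i\},\{\v_j\}$ of the \emph{noise} subspaces (Lemma \ref{Lemma_canonical_bases}), writes the eigenvector of $P_\U P_\V$ in coordinates $(\alpha_0,\alpha_1,\dots)$, $(\beta_0,\beta_1,\dots)$, and finds critical points of $\langle\sum\alpha_i\u_i,\sum\beta_j\v_j\rangle$ subject to unit-norm constraints via Lagrange multipliers. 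Because the noise part of the resulting linear system is block $2\times 2$ (one $(\alpha_i,\beta_i)$ block per $c_i$), it can be solved \emph{exactly} without any matrix inversion, yielding an exact master equation (Theorem \ref{Theorem_master_equation}) for $z$ and exact formulas for $\alpha_0^2$, $\beta_0^2$ in terms of the scalars $\langle\u^*,\u_i\rangle$, $\langle\u^*,\v_j\rangle$, $\langle\v^*,\u_i\rangle$, $\langle\v^*,\v_j\rangle$, $c_i$. The asymptotic step is then just the weak law of large numbers for sums like $\sum_i \langle\u^*,\u_i\rangle^2/(z-c_i^2)$, whose terms are uncorrelated under the fourth-moment Gaussian assumption (Lemma \ref{Lemma_asymptotic_approx}); no resolvent estimates or local laws are needed. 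The paper explicitly flags this as the point of departure from prior work (footnote in the proof of Theorem \ref{Theorem_master_equation}): solving \eqref{eq_CCA_equations} directly avoids the matrix inversion in \eqref{eq_CCA_equations_matrix_2}, which is essentially what your resolvent route confronts. Your approach buys a familiar RMT framework and would port to settings where a canonical basis is less convenient; the paper's buys a far lighter analytic toolkit (only LLN and the Wachter global law from Theorem \ref{Theorem_Wachter}), at the cost of a longer exact-algebra step. One minor slip: a rank-two perturbation can in principle produce two outliers, not one; the paper instead uses that the limiting secular equation \eqref{eq_x13} has a \emph{unique} root above $\lambda_+$ (Lemma \ref{Lemma_Wachter_spike_answer}) together with interlacing (Lemma \ref{Lemma_interlacing}) to pin down $\lambda_2$.
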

\begin{remark}
While we do not prove it in this text, we expect (based on the related results in \citet{benaych2011eigenvalues, benaych2012singular} and \citet{bloemendal2016principal}) that in the case $\bf 2$ with strict inequality the angles $\theta_x$ and $\theta_y$ tend to $90$ degrees as $S\to\infty$, with distance to $90$ degrees of order $S^{-1/2}$ (see Figure \ref{fig_angles_theor_simul} for an illustration). This is in line with $\s_x$ and $\s_y$ tending to $1$ as $\rho\to \rho_c$. Intuitively, this suggests that the data contains very limited information about $\ba$ or $\bb$ in this case; see additional discussion of what can be recovered for the cases analogous to $\bf 2$ in similar contexts in \cite{onatski2013asymptotic} and \cite{johnstone2020testing}.

In the critical case when $\rho^2$ is close to $\rho_c^2$, we expect that the angles $\theta_x$ and $\theta_y$ still tend to $90$ degrees, but the convergence might be very slow: a comparison with \cite{bloemendal2016principal} and \cite{bao2022eigenvector} predicts the distance to $90$ degrees to be of order $S^{-1/6}$.
\end{remark}

Figure \ref{fig_rho2_depend} illustrates the dependence of $\s_x,\s_y,z_{\rho}$ on the squared correlation coefficient $\rho^2$. Notice that $z_\rho$ as a function of $\rho^2$ has a minimum at the cutoff $\rho^2=\rho_c^2$ and is monotone increasing above the cutoff. The values of $\s_x$ and $\s_y$ are between $0$ and $1$ for $\rho^2$ above the same cutoff. These properties continue to hold for general values of $\tau_K,\tau_M>1$.

Figure \ref{fig_angles_theor_simul} shows results from a single simulation (for each fixed value $\rho^2$, we run one simulation) vs.~theoretical predictions. As we can see, the simulated path is very close to the theoretical one, with largest discrepancy around the cutoff $\rho_c^2$. 
We also report in Appendix \ref{Section_appendix_MC} analogous results obtained from many instead of a single Monte Carlo simulations, as well as the finite sample effects on the convergence of angles.

\begin{figure}[t]
\begin{subfigure}{.49\textwidth}
  \centering
  \includegraphics[width=1.0\linewidth]{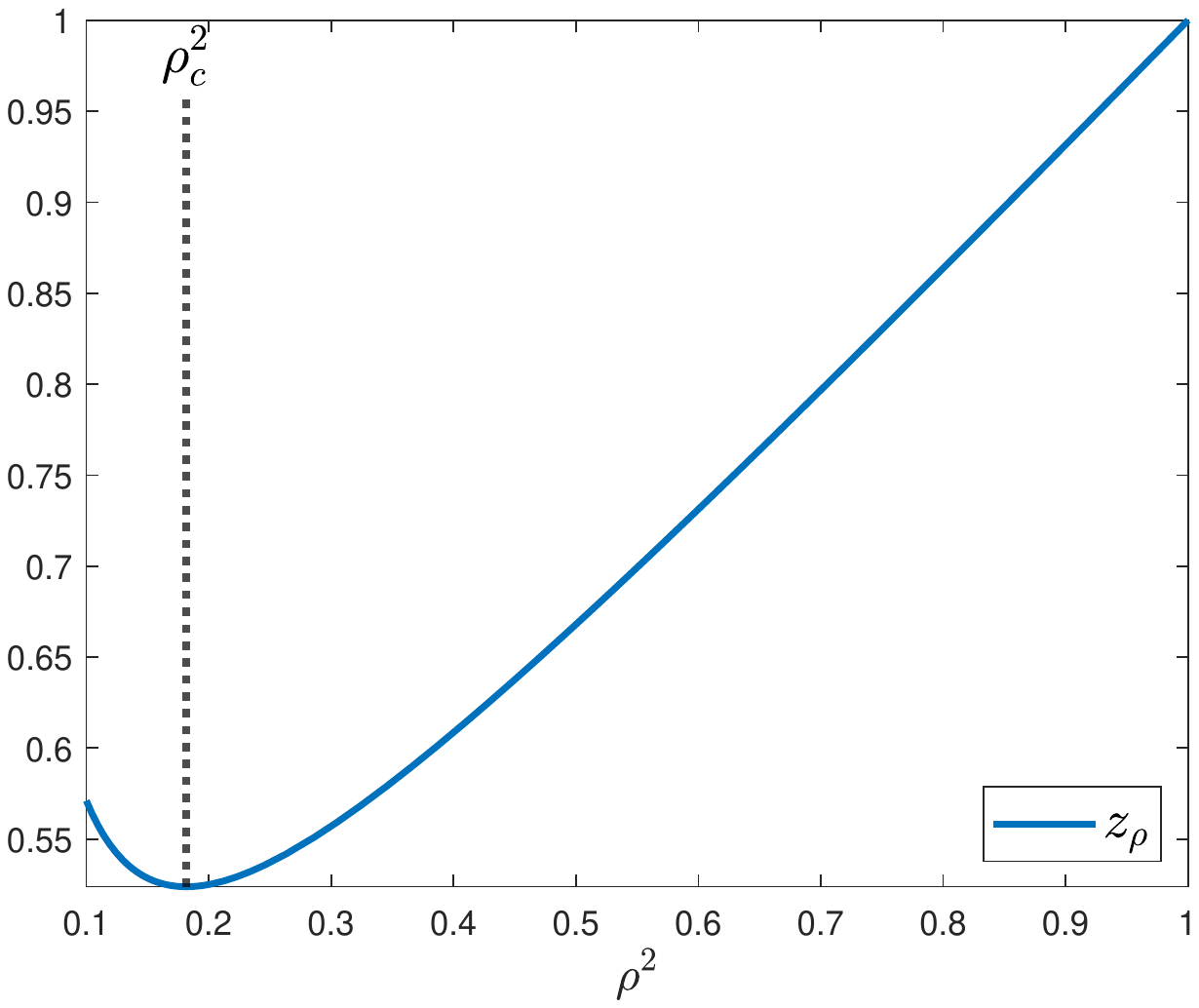}
  \caption{$z_{\rho}$ as a function of $\rho^2$.}
  \label{rho2_zrho}
\end{subfigure}%
\begin{subfigure}{.49\textwidth}
  \centering
  \includegraphics[width=1.0\linewidth]{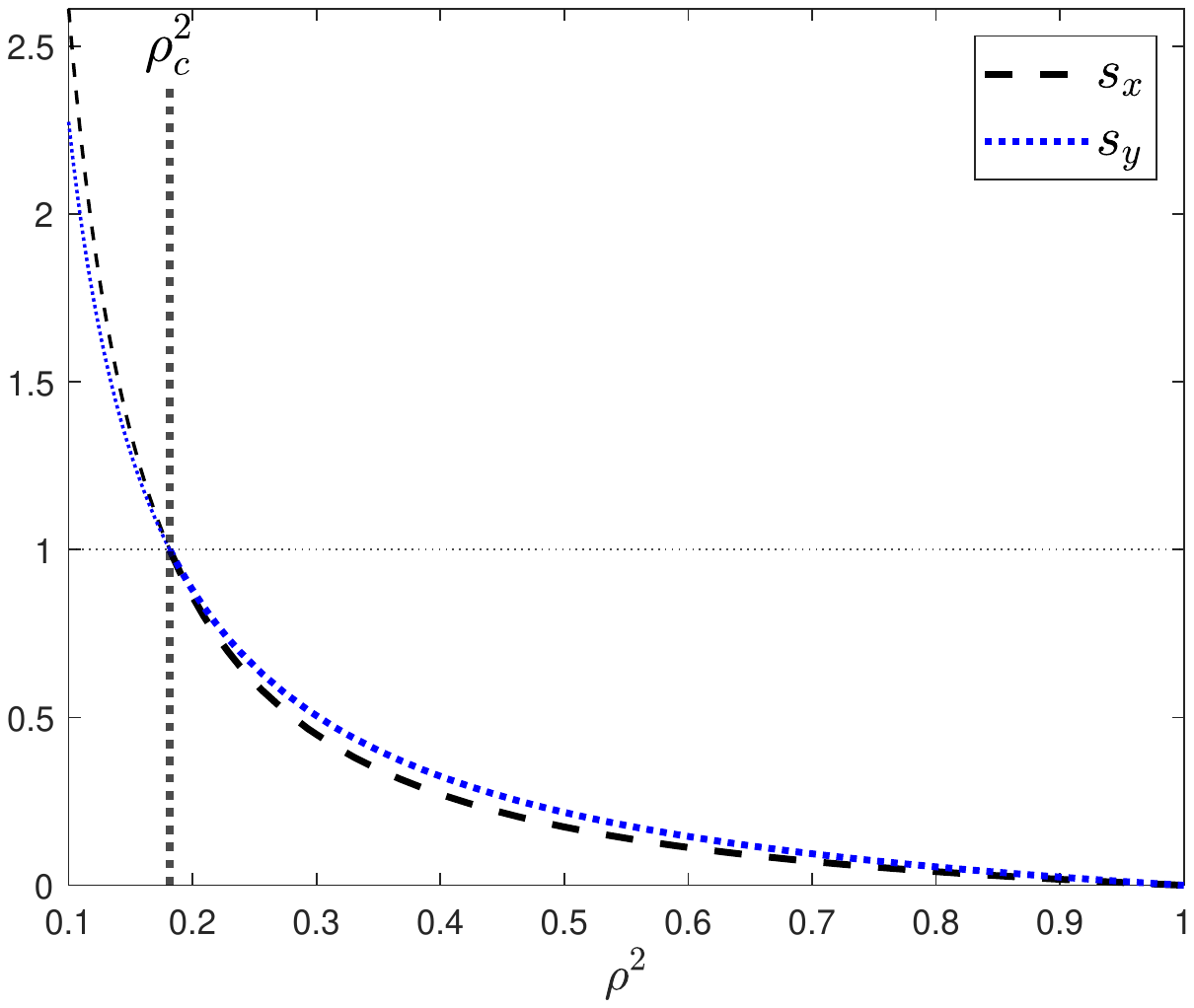}
  \caption{$\s_x$ and $\s_y$ as functions of $\rho^2$.}
  \label{rho2_sxsy}
\end{subfigure}
\caption{Illustration of Eq.~\eqref{eq_zrho}, \eqref{eq_sx}, \eqref{eq_sy} for $K=1000,\,M=1500,\,S=8000$.}
\label{fig_rho2_depend}
\end{figure}

\begin{figure}[t]
\begin{subfigure}{.49\textwidth}
  \centering
  \includegraphics[width=1.0\linewidth]{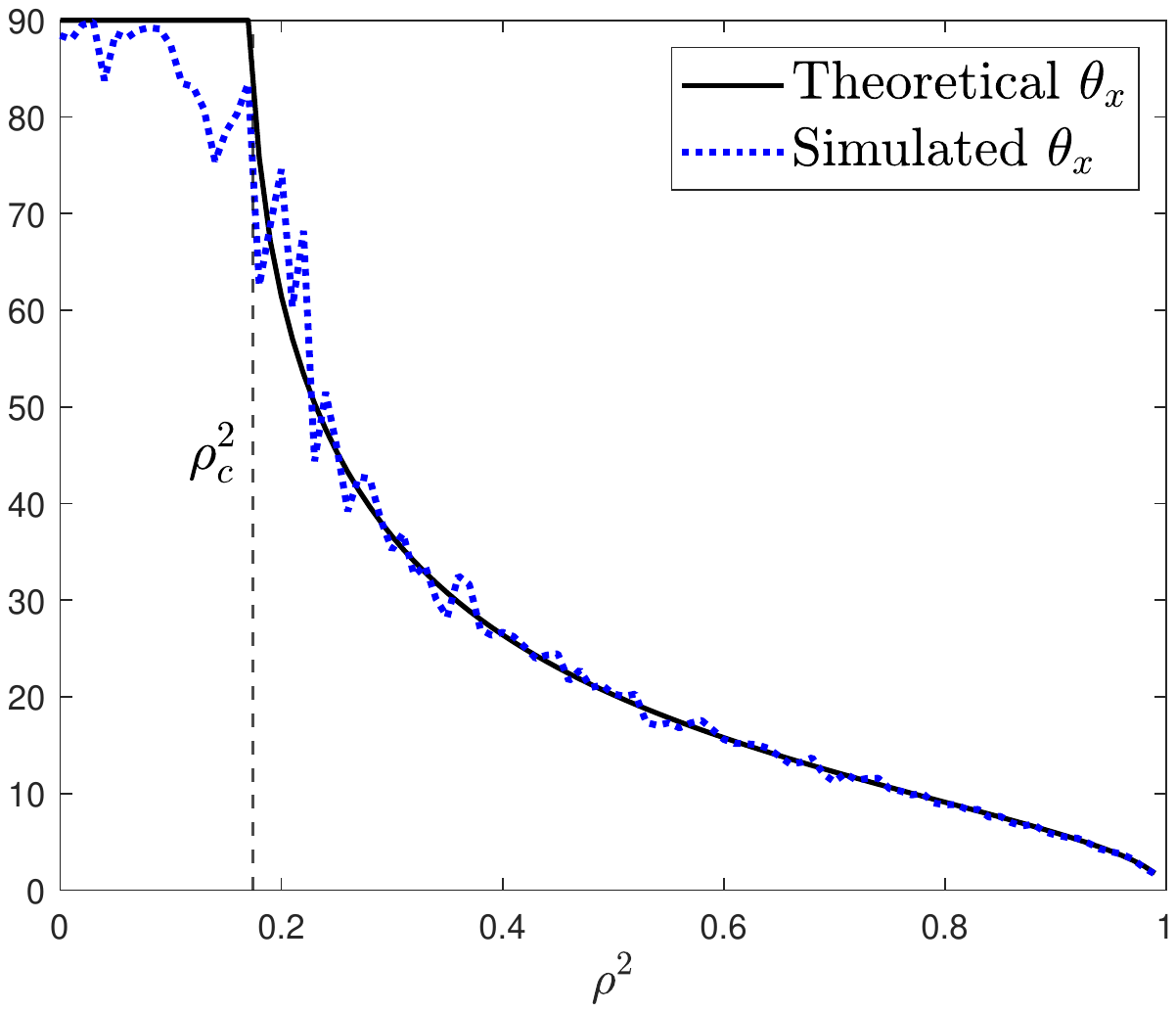}
  \caption{Values of $\theta_x$ (in degrees): \\theoretical and simulated angles.}
  \label{thetax_pic}
\end{subfigure}%
\begin{subfigure}{.49\textwidth}
  \centering
  \includegraphics[width=1.0\linewidth]{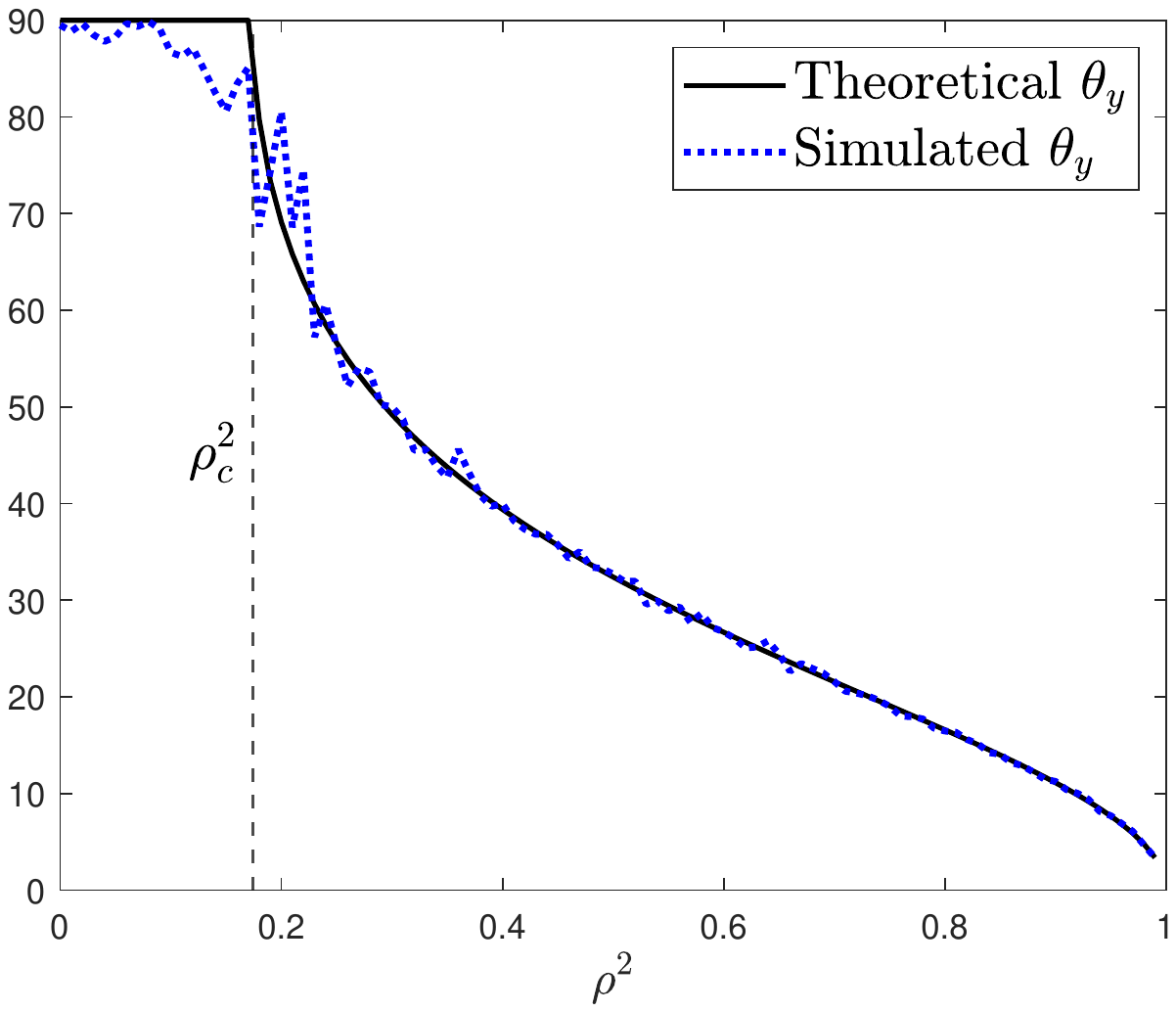}
  \caption{Values of $\theta_y$ (in degrees): \\theoretical and simulated angles.}
  \label{thetay_pic}
\end{subfigure}
\caption{Comparison of theoretical and simulated results for ${K=500}$, ${M=2500}$, $S=8000$. Simulated curves are based on one simulation from a fixed value $\rho^2$.}
\label{fig_angles_theor_simul}
\end{figure}

The conditions $\tau_K>1$, $\tau_M>1$, and $\tau_M^{-1}+\tau_K^{-1}<1$ are not artifacts of our proofs, but rather conceptual restrictions of CCA. Suppose that $\tau_M\leq1$, so that $M\geq S$. Then, almost surely, the rows of $M\times S$ matrix $\V$ span the entire $S$--dimensional space. Therefore, the sample canonical correlations are all equal to $1$ and do not convey information about the population setting (the matrices in Definition \ref{Definition_sample_setting} are not invertible in this case, and one should use the equivalent definition of Lemma \ref{Lemma_canonical_bases}).

Similarly, if $\tau_K>1$, $\tau_M>1$, but $\tau_M^{-1}+\tau_K^{-1}>1$, which means $K<S$, $M<S$, but $K+M>S$, then any two subspaces of dimensions $K$ and $M$ must intersect in an $S$--dimensional vector space. The dimension of the intersection is at least $K+M-S$. Hence, in this situation $K+M-S$ largest canonical correlations are equal to $1$ (again, this can be seen from Lemma \ref{Lemma_canonical_bases} and subsequent discussion), no matter what the parameters in the population setting are. This makes the largest canonical correlation useless. Nevertheless, in this situation other sample canonical correlations remain bounded away from $1$ and potentially might be used to extract information about $\rho^2$. Yet, this is still an open question. The condition $\tau_M^{-1}+\tau_K^{-1}<1$ is very common in CCA literature, e.g., it is also imposed in \citet{bao2019canonical} and \citet{yang2022limiting}.


Finally, in the boundary case $\tau_K^{-1}+\tau_M^{-1}=1$, we have $\lambda_+=1$ and it becomes impossible to have sample canonical correlations larger than $\lambda_+$.

\subsection{Implications of Theorem \ref{Theorem_basic_setting}}\label{sec_implic_basic}
There are several aspects of Theorem \ref{Theorem_basic_setting} worth emphasizing. First, in practice, when one is working with real data, the true value of $\rho$ is unknown, and the results of Theorem \ref{Theorem_basic_setting} should be applied in the following way.

Given that the model matches the data, by \cite{wachter1980limiting} and \cite{Johnstone_Jacobi}, the histogram of the squared canonical correlations should resemble the Wachter distribution supported on the $[\lambda_-,\lambda_+]$ interval. If there is a gap between the largest canonical correlation $\lambda_1$ and $\lambda_+$, as in Figure \ref{Fig_spike}, then in line with Eq.~\eqref{eq_canonical_limit}, one can take $\lambda_1$ as an approximation of $z_\rho$. Treating $z_\rho$ as known and approximating $\tau_K$, $\tau_M$ with $S/K$, $S/M$, Eq.~\eqref{eq_zrho} becomes a quadratic equation in $\rho^2$. Solving it (using $ \rho_c^2\le \rho^2\le 1$ to choose the correct root out of the two, cf.\ Figure \ref{rho2_zrho}), we obtain an estimate for $\rho^2$, and further plugging into \eqref{eq_sx} and \eqref{eq_sy} and using  \eqref{eq_vector_limit1} and \eqref{eq_vector_limit2}, we obtain an estimate for the angle between $\x$ and $\widehat \x$ or between $\y$ and $\widehat \y$. If several canonical correlations larger than $\lambda_+$ are observed, then one needs to use an extension of Theorem \ref{Theorem_basic_setting} provided in Theorem \ref{Theorem_basic_multi}: for each of these canonical correlations one can use exactly the same procedure as just outlined.

\smallskip

\begin{figure}[t]
    \centering
        \includegraphics[width=0.4\textwidth]{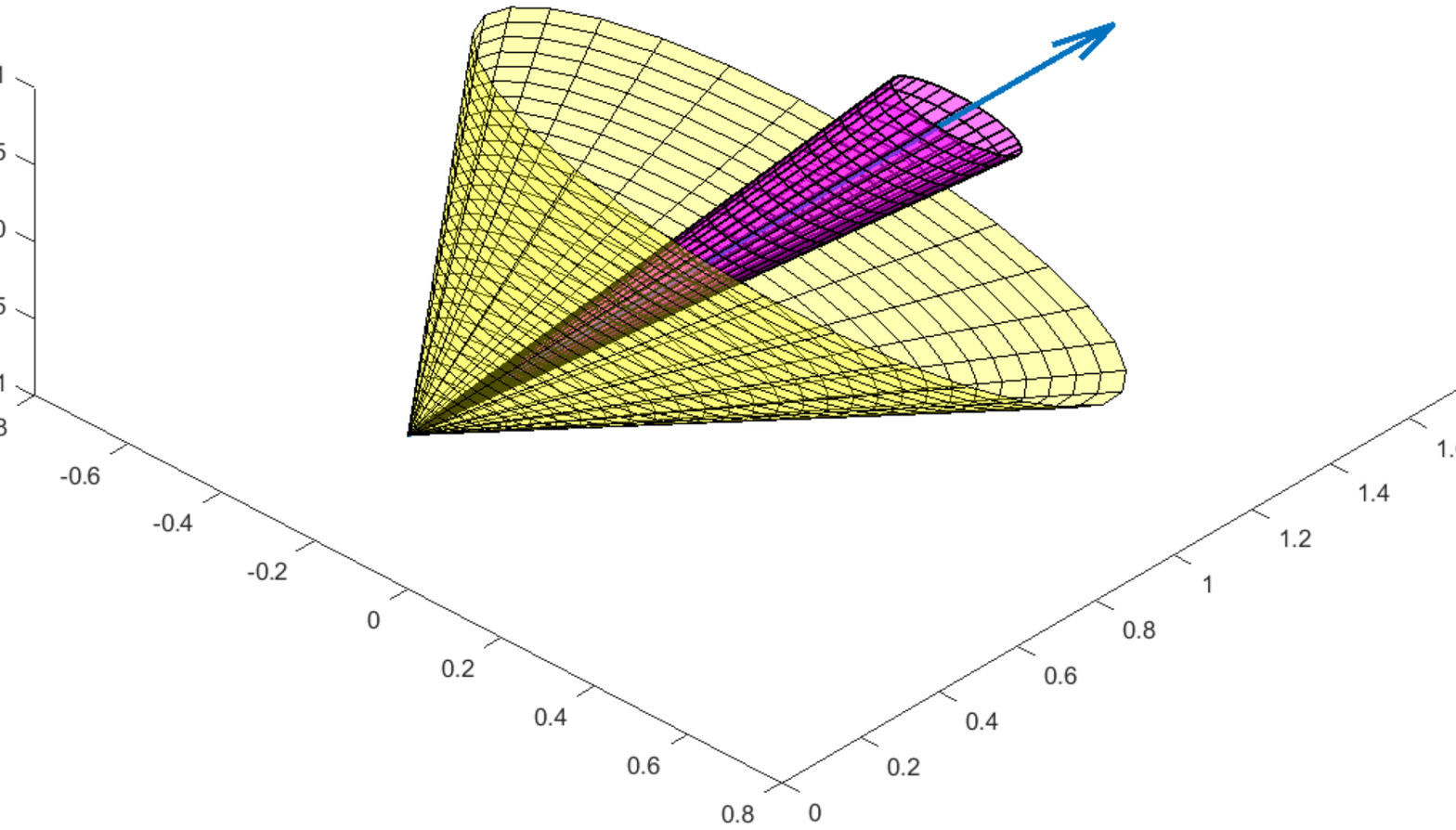}
    \caption{The estimated canonical variable belongs to a cone whose axis is the true direction shown by the blue arrow. If $\sin^2\theta$ is small, then the cone is narrow, as shown in purple; if $\sin^2\theta$ is large, then the cone is wide, as shown in yellow. \label{Fig_cone}}
\end{figure}

Next, the angles $\theta_x$, $\theta_y$ and the limiting value $z_\rho$ for the largest squared canonical correlation depend on $\rho$, $\tau_K$, and $\tau_M$ in a nontrivial fashion, leading to important features of the formulas \eqref{eq_zrho}, \eqref{eq_sx}, and \eqref{eq_sy}:
\begin{itemize}
 \item For any values of $\tau_K>1$, $\tau_M>1$ with $\tau_K^{-1}+\tau_M^{-1}<1$ and $\rho_c^2<\rho^2<1$, we always have $z_\rho>\rho^2$ and $\s_x>0$, $\s_y>0$ (see Figure \ref{fig_rho2_depend}). In other words, the largest sample canonical correlation overestimates the true largest correlation $\rho^2$, while the estimates for the canonical variables are never consistent but rather inclined by nonvanishing angles toward the desired true directions (see Figure \ref{Fig_cone}).
 \item If $\tau_K>1$, $\tau_M>1$ with $\tau_K^{-1}+\tau_M^{-1}<1$ and $\rho^2< \rho_c^2$, then the asymptotic value of the largest canonical correlation, $\lambda_+$, is again larger than $\rho^2$. While we do not prove it in this text, we expect that the asymptotic values of $ \sin^2\theta_x$ and $ \sin^2\theta_y$ are close to $1$ in this situation, i.e.,\ that the estimates for the canonical variables are almost orthogonal to the desired true directions (cf.~the simulated curves in Figure \ref{fig_angles_theor_simul}).
 \item If both $\tau_K$ and $\tau_M$ become large while $\rho^2$ is fixed, then the condition $\rho^2>\rho_c^2$ becomes trivially true, $z_\rho\to \rho^2$ and $\s_x, \s_y\to 0$. Therefore, $z_\rho$, $\widehat \x$, and $\widehat \y$ become consistent estimates of $\rho^2$, $\x$ and $\y$, respectively.
 \item If $\tau_K$ becomes large while $\rho^2$ and $\tau_M$ remain fixed (this means that $M$ and $S$ are of the same magnitude but $K$ is much smaller), then $z_\rho$ remains larger than $\rho^2$, and $\s_y$ remains positive. However, $\s_x$ tends to $0$, which means that $\widehat \x$ becomes a consistent estimate of $\x$.
\end{itemize}

\smallskip

The final essential aspect of Theorem \ref{Theorem_basic_setting} is the choice of the angles $\theta_x$ (between $\widehat \x$ and $\x$) and $\theta_y$ (between $\widehat \y$ and $\y$) as the measure of the quality of approximations of $\widehat\ba$ and $\widehat\bb$ for $\ba$ and $\bb$, respectively. In principle, one could have concentrated on the angle between $\widehat \ba$ and $\ba$ (or $\widehat \bb$ and $\bb$) instead. However, our choice offers significant advantages that are essential for our developments. 

The angle between $\widehat \ba$ and $\ba$ (or $\widehat \bb$ and $\bb$) depends on the choice of units of measurement: the angle changes if we multiply one of the coordinates of $\ba$ by a constant (equivalently, divide a component of $\u$ or a row in $\U$ by the same constant). Hence, asymptotic theory for these angles would require some normalization conditions on the coordinates of $\u$ (and, in fact, correlations between different components of $\u$ would also become important), but such natural normalizations are rare in real data, because different coordinates of $\u$ might be coming from very different sources or types of observations. By concentrating on $\theta_x$ and $\theta_y$, we avoid this problem entirely, and in particular, the result in Theorem \ref{Theorem_basic_setting} does not depend on the covariance matrix of the vector $\u$ (and similarly on the covariance matrix of the vector $\v$), which can be arbitrary, as long as it satisfies Assumption \ref{ass_basic}.

From the technical perspective, computing an angle between $\widehat \ba$ and $\ba$ is equivalent to decomposing $\widehat \ba = p \ba + p' \ba^\perp$, where $p,p'\in\mathbb R$ and $\ba^\perp$ is orthogonal to $\ba$ (for simplicity, assume that $\widehat \ba$, $\ba$, and $\ba^\perp$ are unit vectors). To obtain the angle between $\widehat \x$ and $\x$ of Definition \ref{Definition_sample_setting}, one multiplies the decomposition by $\U^\T$, leading to $\widehat \x=p\x + q \U^\T \ba^\perp$. The problem is that $\U^\T$ is not an orthogonal transformation, and there is no reason for $\U^\T \ba^\perp$ to be orthogonal to $\x$. A special case is when the covariance matrix of $\u$ is proportional to the identity matrix, so that all $K\times S$ matrix elements of $\U$ are i.i.d. Then by the law of large numbers, for any deterministic unit vector $\mathbf d$, the length of $\U^\T \mathbf d$ is $S+o(S)$ as $S\to\infty$. Therefore, by the polarization identity, multiplication by $\U^\T$ preserves the angles between the vectors, up to a small error. In contrast, when the covariance matrix of $\u$ is generic and unknown to us, the knowledge of the angle between $\widehat \x$ and $\x$ does not provide information about the angle between $\widehat \ba$ and $\ba$. Indeed, if we multiply $\u$ by an arbitrary $K\times K$ matrix $\Upsilon$ (therefore transforming the covariance of $\u$), Assumption \ref{ass_basic} still holds, hence, the conclusion \eqref{eq_vector_limit1} continues to hold with no changes. But the angle between $\widehat \ba$ and $\ba$ becomes the angle between $\Upsilon \widehat \ba$ and $\Upsilon \ba$, which can be made arbitrary by an appropriate choice of $\Upsilon$, depending only on $\ba$. Similarly, since the sample canonical correlations are invariant with respect to multiplication of the data by $\Upsilon$, consistent estimation of the angle between $\widehat \ba$ and $\ba$ is impossible without additional information beyond these correlations.

We supplement this discussion with two numeric simulations shown in Figure \ref{fig_angles_x_alpha} where $\ba=(1,0,0,\dots)^\T$, $\bb=(1,0,0,\dots)^\T$ and vector $\v$ has i.i.d.\ $\mathcal N(0,1)$ components. In the first one the vector $\u$ has i.i.d.\ $\mathcal N(0,1)$ components. In the second one the components are independent, the first component is $\mathcal N(0,4)$ and all other components are $\mathcal N(0,1)$. In line with Theorem \ref{Theorem_basic_setting}, the angles between  $\widehat \x$ and $\x$ are very close to each other in two simulations. However, the angles between $\widehat \ba$ and $\ba$ are different, and only in the first case they are close to the ones between $\widehat \x$ and $\x$.


\begin{figure}[t]
\begin{subfigure}{.49\textwidth}
  \centering
  \includegraphics[width=1.0\linewidth]{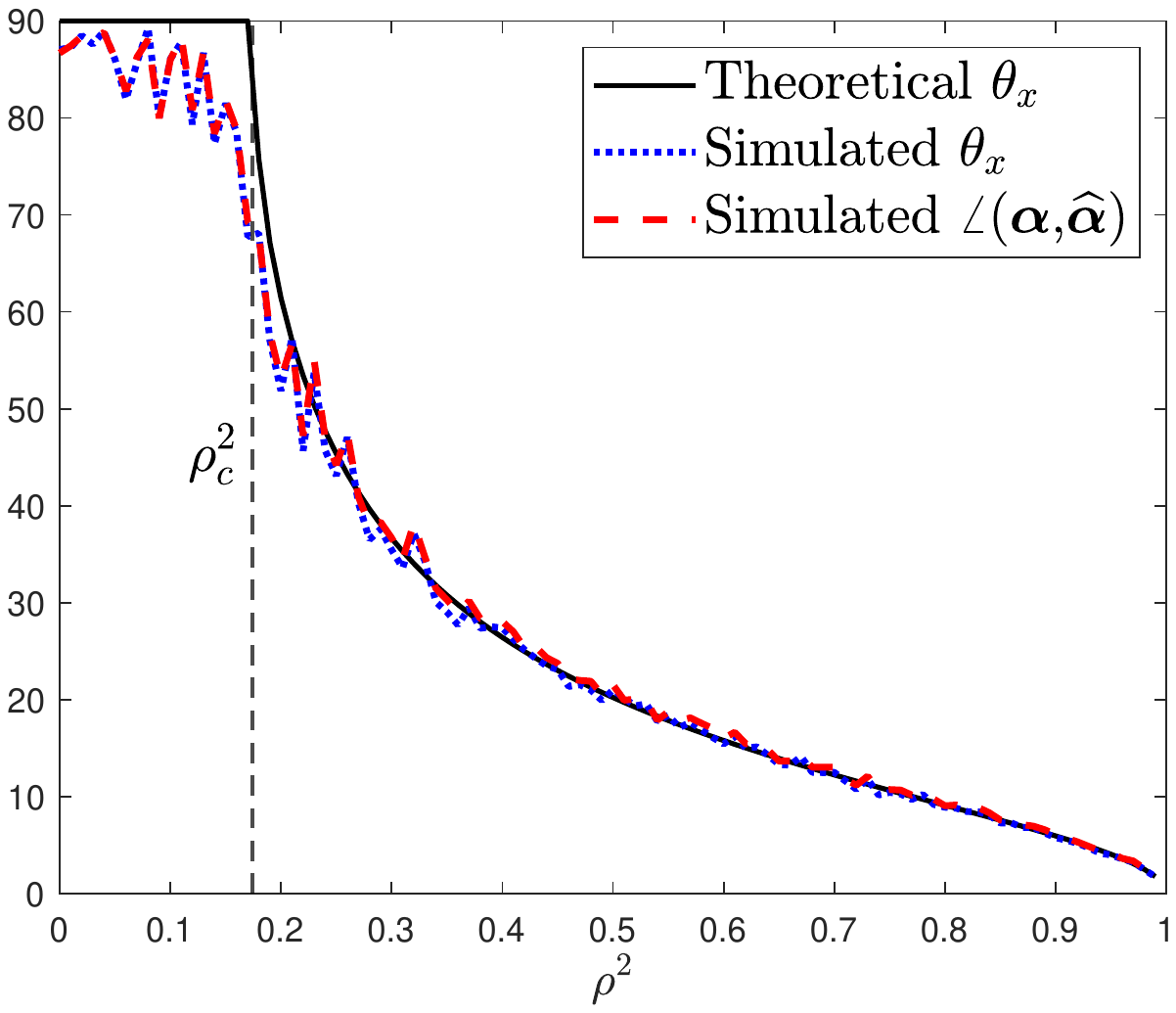}
  \caption{$u_k\thicksim$ i.i.d.~$\mathcal N(0,1)$, $k=1,\ldots,K$.}
  \label{angles_x_a_I}
\end{subfigure}%
\begin{subfigure}{.49\textwidth}
  \centering
  \includegraphics[width=1.0\linewidth]{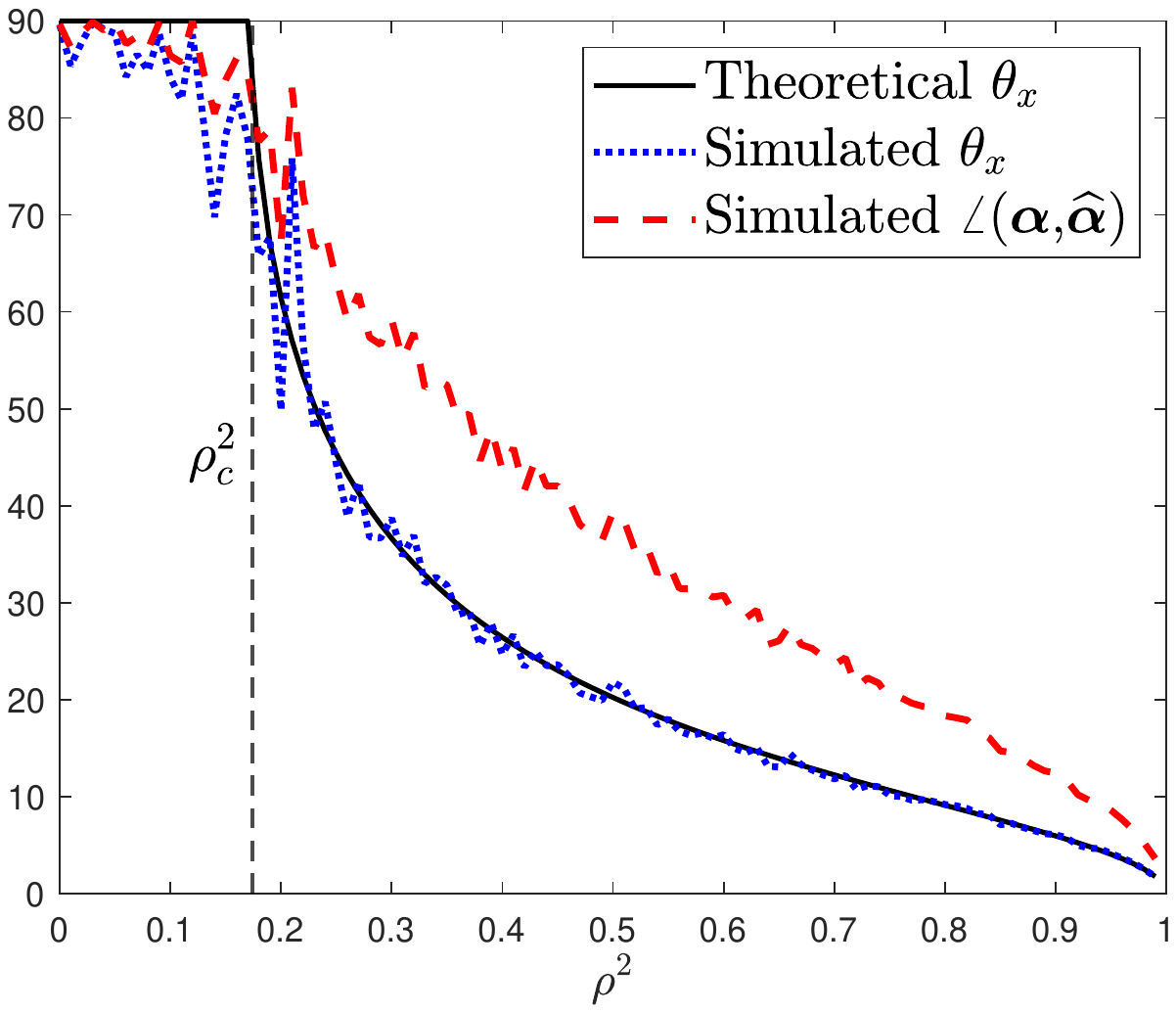}
  \caption{$u_1\thicksim\mathcal N(0,4),\,u_k\thicksim\mathcal N(0,1),k>1$.}
  \label{angles_x_a_nonI}
\end{subfigure}
\caption{Angles: theoretical (black solid) and simulated (blue dotted) between $\widehat \x$ and $\x$ and simulated (red dashed) between $\widehat \ba$ and $\ba$ for different covariance matrices. ${K=500}$, ${M=2500}$, $S=8000$. Simulated curves are based on one simulation from a fixed value $\rho^2$.}
\label{fig_angles_x_alpha}
\end{figure}

\section{General framework}\label{Section_general}

In the previous section we presented our main theorem in the basic setting of i.i.d.\ Gaussian data with a single nonzero canonical correlation in the population. The results are the most transparent in this case, yet many important generalizations can be derived. In this section we state four such generalizations by
\begin{enumerate}
\item Relaxing Gaussianity Assumption \ref{ass_basic_gauss};
\item Allowing for signals $\u^\T\ba$ and $\v^\T\bb$ that are correlated along dimension $S$;
\item Allowing for noise (orthogonal to $\u^\T\ba$ and $\v^\T\bb$) that is correlated along dimension $S$ at the expense of more complicated formulas governing the answers in an extension of Theorem \ref{Theorem_basic_setting}; and
\item Allowing for multiple signals.
\end{enumerate}

Our theorems accommodate combinations of these generalizations. The first, third, and fourth extensions can be applied together. The second extension---by allowing for any distribution of the signal---partially overlaps with the first extension and also works in combination with the fourth. While it is plausible that additional combinations of these extensions might be feasible, we currently do not see a way to allow all four generalizations simultaneously. This limitation is inherent to our approach, as the asymptotic concentration in \eqref{eq_canonical_limit}, \eqref{eq_vector_limit1}, and \eqref{eq_vector_limit2} relies on a form of the law of large numbers. To achieve this we require at least some degree of independence --- or, at a minimum, a decay of correlations along the $S$ dimension.



\subsection{Non-Gaussian data}

\begin{definition} \label{Def_4moments}
We say that a mean zero random vector $(X_1,\dots,X_n)$ is fourth-moment Gaussian if there exists a jointly Gaussian mean zero vector $(Y_1,\dots,Y_n)$ such that the covariance matrix and the joint fourth moments of $(X_i)_{i=1}^n$ match those of $(Y_i)_{i=1}^n$.
\end{definition}
Definition \ref{Def_4moments} is equivalent to requiring the fourth joint moments of $(X_1,\dots,X_n)$ to satisfy the Wick rule: for any $1\le i,j,k,l\le n$, we should have
\begin{equation}
\label{eq_Wick}
  \E X_i X_j X_k X_l = \E X_i X_j\cdot \E X_k X_l+ \E X_i X_k \cdot \E X_j X_l + \E X_i X_l \cdot \E X_j X_k.
\end{equation}
For $n=1$, \eqref{eq_Wick} reduces to a single condition $\E X_1^4= 3 [\E X_1^2]^2$. For $n=2$, there are five conditions: $\E X_1^4=3 [\E X_1^2]^2$, $\E X_1^3 X_2=3 \E X_1^2 \E X_1 X_2$, $\E X_1^2 X_2^2 = \E X_1^2 \E X_2^2 + 2 [\E X_1 X_2]^2$, $\E X_1 X_2^3= 3 \E X_1 X_2 \E X_2^2$, and $\E X_2^4=3 [\E X_2^2]^2$.

\smallskip

We now present a relaxation of Assumption \ref{ass_basic} on the random vectors $\u\in\mathbb R^K$ and $\v\in\mathbb R^M$.

\begin{assumption}\label{ass_4moments} There exists a deterministic vector $\ba\in \mathbb R^K$ and a deterministic $(K-1)\times K$ matrix $A$ of rank $K-1$; and a deterministic vector $\bb\in\mathbb R^M$ and a deterministic $(M-1)\times M$ matrix $B$ of rank $M-1$ such that
\begin{enumerate}
\item  The random variables $\u^\T\ba$ and $\v^\T\bb$ are jointly fourth-moment Gaussian with mean zero, as in Definition \ref{Def_4moments} with $n=2$.
\item  Denote $\tilde \u=A \u $ and $\tilde \v=B \v $, where $\tilde \u$ has coordinates $(\tilde u_i)_{i=1}^{K-1}$ and $\tilde \v$ has coordinates $(\tilde v_j)_{j=1}^{M-1}$. Then
    \begin{itemize}
      \item
      All components $\tilde u_1,\dots,\tilde u_{K-1}, \tilde v_1,\dots, \tilde v_{M-1}$ are jointly independent from each other and from both $\u^\T\ba$ and $\v^\T\bb$.
      \item $\E \tilde u_i=0$, $\E\tilde u_i^2=1$, $1\le i<K$, and $\E\tilde v_j=0$, $\E \tilde v_j^2=1$, $1\le j<M$.
      \item For constants $\kappa>0$ and $C>0$, $\sup_{i} \E |\tilde u_i|^{4+\kappa}<C$ and $\sup_{j} \E |\tilde v_j|^{4+\kappa}<C$.

     \end{itemize}
\end{enumerate}
\end{assumption}
Assumption \ref{ass_4moments} splits the vectors $\u$ and $\v$ into two components: the correlated signal part $(\u^\T\ba, \v^\T\bb)$ and the remaining noise part $(A\u, B\v)$. The coordinates of the latter are independent but not necessarily identically distributed.  Assumptions \ref{ass_basic} and \ref{ass_4moments} coincide in the case of Gaussian $\u$ and $\v$.



\begin{theorem} \label{Theorem_4moments} If we replace Assumption \ref{ass_basic} with Assumption \ref{ass_4moments}, then Theorem \ref{Theorem_basic_setting} continues to hold with exactly the same conclusion.
\end{theorem}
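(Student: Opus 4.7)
My plan is to derive Theorem \ref{Theorem_4moments} from Theorem \ref{Theorem_basic_setting} by a universality argument. Since sample canonical correlations and the angles $\theta_x,\theta_y$ are invariant under invertible linear changes of basis in $\mathbb R^K$ and $\mathbb R^M$, I would first use the matrix $A$ (augmented by $\ba$) and the analogous augmentation of $B$ by $\bb$ from Assumption \ref{ass_4moments} to bring the data into a canonical form. In this form the first rows of $\U$ and $\V$ are the signal streams $(\u^\T\ba)_s$ and $(\v^\T\bb)_s$ across the $S$ samples, while the remaining rows $\tilde u_i$ and $\tilde v_j$ are independent across rows and across $s$, have zero mean, unit variance, and bounded $(4+\kappa)$-th moments, and are independent of the signal block.

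Next I would express the quantities of interest in resolvent form. The largest squared sample canonical correlation $\lambda_1$ arises as the root of a secular equation built from the resolvent $G(z)$ of a low-rank perturbation of the pure-noise Jacobi-type random matrix that arises from the $\tilde u$'s and $\tilde v$'s, while the overlaps $\x^\T\widehat\x$ and $\y^\T\widehat\y$ are recovered via contour integrals of the same resolvent around the outlier $z_\rho$. This should mirror the strategy that underlies Theorem \ref{Theorem_basic_setting}. The benefit of the reformulation is that the effect of the non-Gaussian noise is concentrated into quadratic forms $\langle \mathbf p, G(z)\mathbf q\rangle$ of the noise resolvent against deterministic vectors coming from the signal block.

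For those quadratic forms, I would invoke an isotropic local law: for deterministic unit vectors $\mathbf p,\mathbf q$ and any $z$ at distance at least $S^{-c}$ from the bulk spectrum $[\lambda_-,\lambda_+]$, the quantity $\langle \mathbf p, G(z)\mathbf q\rangle$ concentrates around its deterministic equivalent with an $o(1)$ error, provided the noise entries have bounded $(4+\kappa)$-th moment. Such laws can be established by a Lindeberg-type row-swapping whose leading-order dependence is only on the variance, so the limit coincides with the Gaussian case. The signal enters the secular equation only through joint moments of $(\u^\T\ba,\v^\T\bb)$ of order at most four; since this pair is fourth-moment Gaussian by Assumption \ref{ass_4moments}, those contributions agree termwise with the Gaussian computation, and the limits \eqref{eq_canonical_limit}, \eqref{eq_vector_limit1}, \eqref{eq_vector_limit2}, and \eqref{eq_no_spike} transfer verbatim.

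The hard part is the eigenvector (angle) conclusion rather than the eigenvalue conclusion. Four-moment universality of the outlier $\lambda_1$ itself is standard for sample covariance and CCA ensembles, but the overlaps $\x^\T\widehat\x$ and $\y^\T\widehat\y$ require uniform control of isotropic resolvent entries in a complex neighbourhood of $z_\rho$ against deterministic (rather than random isotropic) test vectors. This is where the full strength of the $(4+\kappa)$-moment assumption is needed: either through an isotropic local law at the relevant scale, or through a row-by-row Lindeberg swap in which replacing one non-Gaussian row of $\U$ or $\V$ by a Gaussian row with identical first four moments perturbs $\x^\T\widehat\x$ and $\y^\T\widehat\y$ by $o(S^{-1})$, summed over $O(S)$ rows, so that the cumulative error is $o(1)$.
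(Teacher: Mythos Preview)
Your proposal is plausible in spirit but takes a genuinely different route from the paper, and it is worth understanding the contrast.

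The paper does \emph{not} argue by universality or Lindeberg swapping against a Gaussian reference. Instead it first derives an exact deterministic identity (Theorem~\ref{Theorem_master_equation}, the ``master equation''): given the signal rows $\u^*,\v^*$ and the canonical bases $(\u_i,\v_j,c_i)$ of the noise subspaces, the squared canonical correlations and the angles are expressed as explicit rational functions of the scalar products $\langle \u^*,\u_i\rangle$, $\langle \u^*,\v_j\rangle$, $\langle \v^*,\u_i\rangle$, $\langle \v^*,\v_j\rangle$. The fourth-moment Gaussian assumption on $(\u^\T\ba,\v^\T\bb)$ is then used \emph{constructively}, not comparatively: in Lemma~\ref{Lemma_asymptotic_approx} one computes directly that the Wick rule makes the summands in expressions like $\sum_i \langle \u^*,\u_i\rangle^2/(z-c_i^2)$ pairwise uncorrelated, so the weak LLN applies termwise. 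This yields Theorem~\ref{Theorem_master} (which is already stated under Assumption~\ref{ass_cor_noise}, hence under Assumption~\ref{ass_4moments}), and the i.i.d.\ noise case then reduces to evaluating $G(z)\to\mathcal G_{\tau_K,\tau_M}(z)$ via Theorem~\ref{Theorem_Wachter} and simplifying the resulting algebra (Lemmas~\ref{Lemma_Wachter_spike_answer}, \ref{Lemma_cos_limit_iid}). In particular, Theorem~\ref{Theorem_basic_setting} is deduced \emph{from} Theorem~\ref{Theorem_4moments}, not the other way around.

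Your approach would instead prove the Gaussian case first and then transfer it by an isotropic local law or a row-by-row swap. This can be made to work, but it imports substantial black boxes (isotropic local laws for Jacobi/CCA ensembles at the relevant scale, uniform in deterministic test vectors), whereas the paper's route needs nothing beyond an elementary variance bound and the Wachter global law with edge rigidity for the noise correlations. The trade-off is that your machinery, once in place, is more robust to further perturbations, while the paper's argument is essentially self-contained and explains transparently \emph{why} only the first four mixed moments of the signal matter: they are exactly what appears when checking uncorrelatedness of the cross-terms in the master-equation sums.
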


\begin{figure}[t]
\begin{subfigure}{.55\textwidth}
  \centering
  \includegraphics[width=1.0\linewidth]{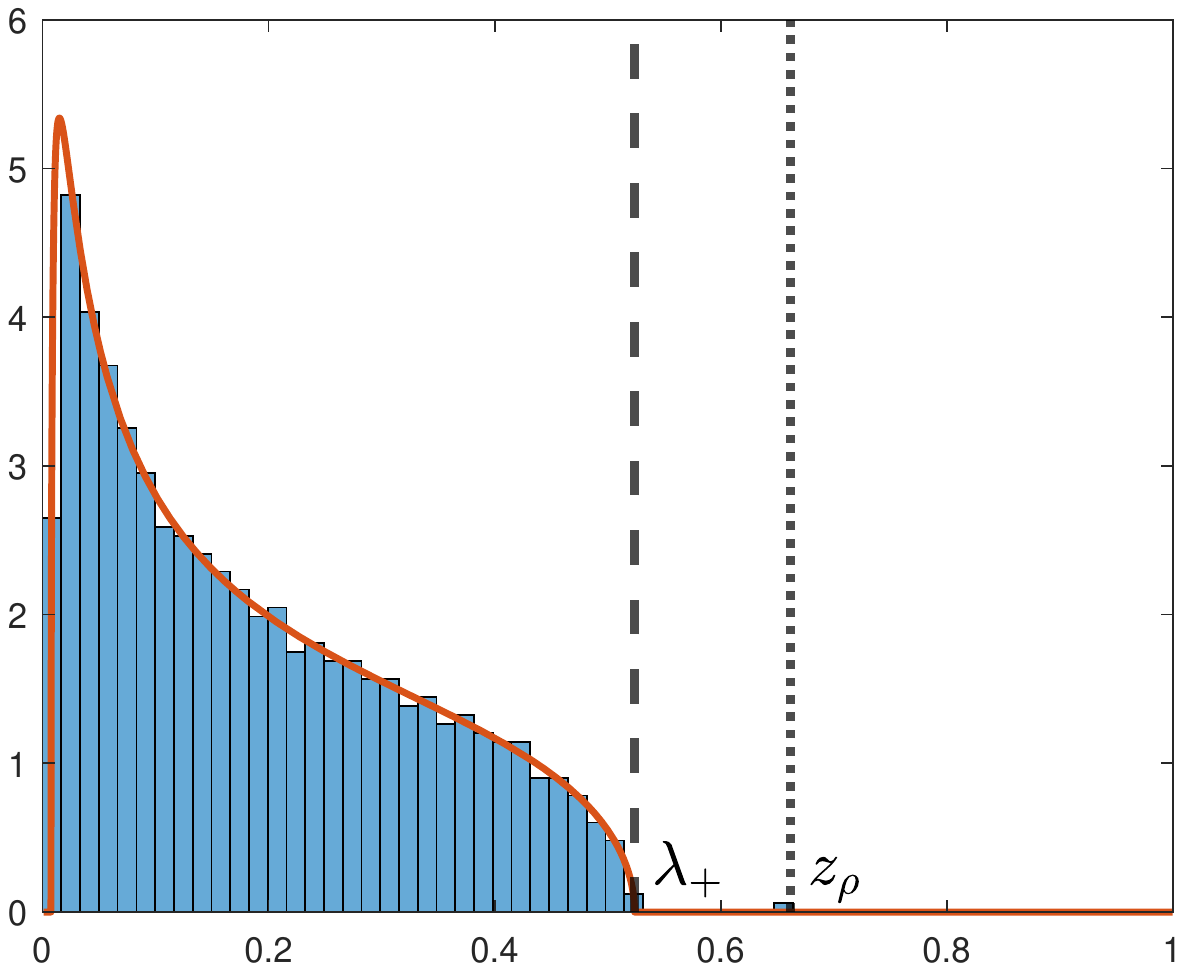}
  \caption{CCA eigenvalues and Wachter distribution.}
  \label{fig_uniform_CCA}
\end{subfigure}
\begin{subfigure}{.44\textwidth}
\centering
  \begin{tabular}{c|c|c}
  \multicolumn{3}{c}{}\\
  \multicolumn{3}{c}{}\\
  \multicolumn{3}{c}{Angles (in degrees)}\\
  \cline{2-3}
  \multicolumn{1}{c}{} & \small{Theoretical} & \small{Simulation}\\
  \hline
  \hline
  $\theta_x$ & $25.22$ & $25.40$\\
  $\theta_y$ & $28.39$ & $29.60$\\
  \hline
  \hline
  \multicolumn{3}{c}{}\\
  \multicolumn{3}{c}{}
\end{tabular}
\caption{Theoretical angles and corresponding angles from one simulation with one signal  and $U[-1,1]$ errors.}
\label{table_uniform_errors}
\end{subfigure}
\caption{Illustration for the results with one signal and $U[-1,1]$ errors, ${K=1000}$, $M=1500$, $S=8000$, $r^2=0.49$.}
\label{Fig_uniform_distr}
\end{figure}

\begin{figure}[t]
\begin{subfigure}{.55\textwidth}
  \centering
  \includegraphics[width=1.0\linewidth]{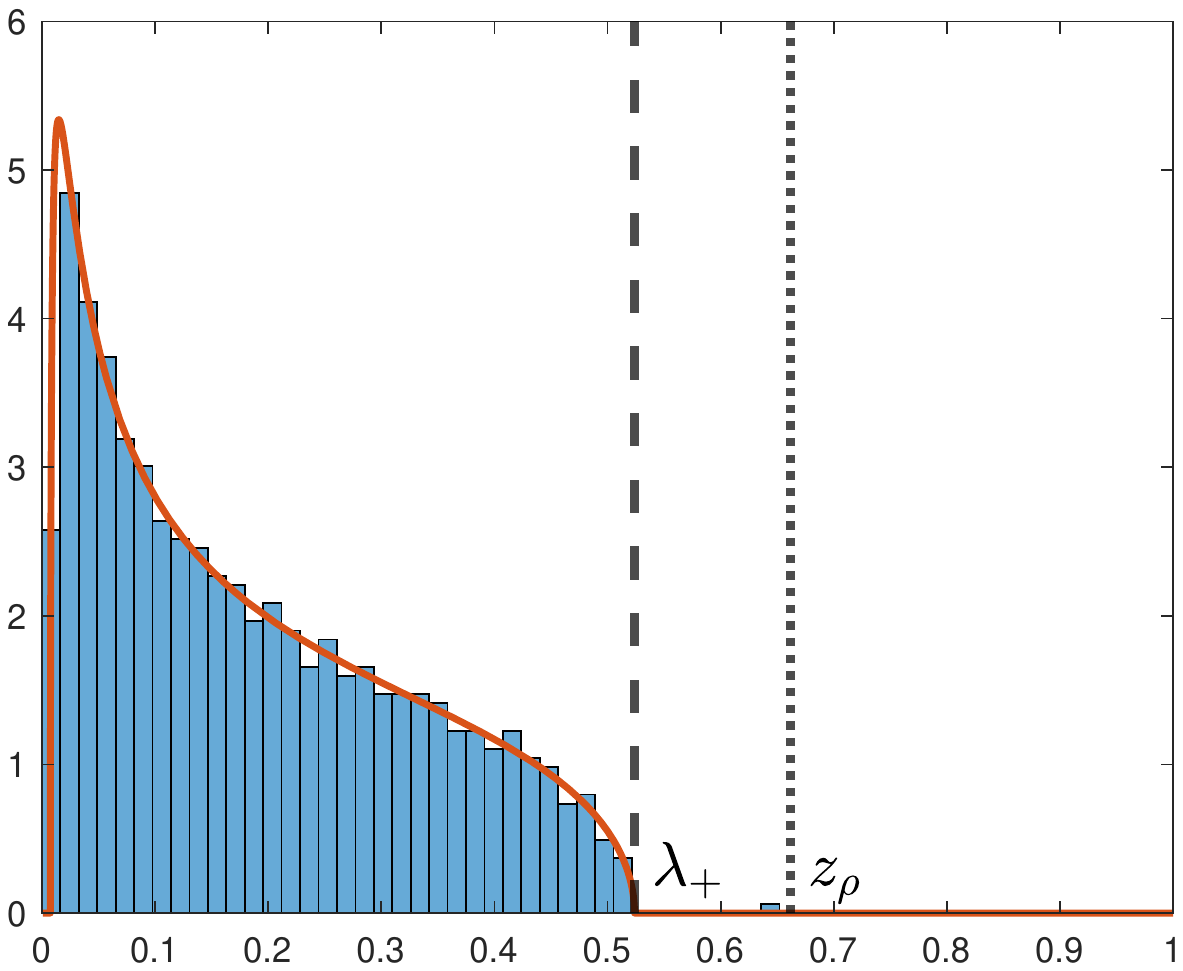}
  \caption{CCA eigenvalues and Wachter distribution.}
  \label{fig_tdistr_CCA}
\end{subfigure}
\begin{subfigure}{.44\textwidth}
\centering
  \begin{tabular}{c|c|c}
  \multicolumn{3}{c}{}\\
  \multicolumn{3}{c}{}\\
  \multicolumn{3}{c}{Angles (in degrees)}\\
  \cline{2-3}
  \multicolumn{1}{c}{} & \small{Theoretical} & \small{Simulation}\\
  \hline
  \hline
  $\theta_x$ & $25.22$ & $25.89$\\
  $\theta_y$ & $28.39$ & $29.84$\\
  \hline
  \hline
  \multicolumn{3}{c}{}\\
  \multicolumn{3}{c}{}
\end{tabular}
\caption{Theoretical angles and corresponding angles from one simulation with one signal and $t(3)$ errors.}
\label{table_student_t_errors}
\end{subfigure}
\caption{Illustration for the results with one signal and $t(3)$ errors, ${K=1000}$, $M=1500$, $S=8000$, $r^2=0.49$.}
\label{Fig_student_t_distr}
\end{figure}

The implications of Theorem \ref{Theorem_4moments} are the same as those discussed in Section \ref{sec_implic_basic}. Thus, practical implementations remain the same.

The moment conditions in Assumption \ref{ass_4moments} are used in the proof of Lemma \ref{Lemma_asymptotic_approx} in the appendix: they simplify the derivations by insuring that some terms are uncorrelated and, therefore, satisfy the most basic form of the Law of Large Numbers. One can possibly relax this condition and use LLN for correlated data or come up with alternatives to our moment conditions (see \citet{yang2022limiting} for one possible approach). As shown in Figure \ref{Fig_uniform_distr}, uniform errors $U[-1,1]$, whose fourth moment does not coincide with the fourth moment of a Gaussian distribution, still lead to the same results as in Theorem \ref{Theorem_4moments}. Moreover, Figure \ref{Fig_student_t_distr} illustrates the same alignment between the theoretical results and a single simulation using  $t$ distribution with $3$ degrees of freedom for errors, where the third moment does not exist.

We anticipate that the limits \eqref{eq_canonical_limit}, \eqref{eq_vector_limit1}, \eqref{eq_vector_limit2} remain the same under conditions weaker than Assumption \ref{ass_4moments}. However, we expect the size of the fluctuations around the limit to be sensitive to the fourth moments. For related results, see \citet{yang2022limiting} and \citet{muirhead1980asymptotic}, with the latter addressing the regime where $S\to\infty$ while $K$ and $M$ remain fixed.

\subsection{Correlated signal} \label{Section_cor_signal} For the next generalization we need to adjust the procedure of Section \ref{Section_basic_sample_setting} and no longer assume that the data are obtained from i.i.d.\ samples of $\u$ and $\v$ vectors.

We still use Definition \ref{Definition_sample_setting}: we are given a $K\times S$ matrix $\U$ and $M\times S$ matrix $\V$, and we compute their sample canonical correlations and corresponding variables. What changes is the probabilistic mechanism creating these matrices: while in Section \ref{Section_setting} we started from vectors $\u$ and $\v$ and then considered i.i.d.\ samples of them, now we weaken the i.i.d.\ assumption and therefore have to introduce assumptions directly on the distributions of $\U$ and $\V$ rather than on $\u$ and $\v$. Despite the more complicated probabilistic setting, the final interpretation remains the same: the signal part of the data comes from two deterministic vectors $\ba\in\mathbb R^K$ and $\bb\in\mathbb R^M$, and we are trying to reconstruct them using $\widehat \ba$ and $\widehat \bb$ of Definition \ref{Definition_sample_setting}.

Hence, our assumptions are now phrased in terms of $K\times S$ and $M\times S$ matrices $\U$ and $\V$, respectively.

\begin{assumption}\label{ass_cor_signal}There exists a deterministic vector $\ba\in \mathbb R^K$ and a deterministic  ${(K-1)\times K}$ matrix $A$ of rank $K-1$, and  a deterministic vector $\bb\in\mathbb R^M$ and a deterministic $(M-1)\times M$ matrix $B$ of rank $M-1$,  such that the $(M+K-2)S$ matrix elements of the matrices $A \U$ and $B\V$ are i.i.d.\  $\mathcal N(0,1)$ random variables independent from $\x=\U^\T \ba$ and $\y =\V^\T \bb$.

\end{assumption}

Note that there are no restrictions on $\x=\U^\T \ba$ and $\y =\V^\T \bb$ in Assumption \ref{ass_cor_signal}. In particular, they are allowed to have correlated or even deterministic coordinates. In the setting of Assumption  \ref{ass_cor_signal}, we replace the squared correlation coefficient $r^2$ of Definition \ref{Definition_sample_setting}  with its sample version:
\begin{equation}
\label{eq_sample_r}
     \hat r^2:=\frac{ (\x^\T \y)^2}{ (\x^\T \x) (\y^\T\y)}.
\end{equation}

As before, we treat vectors $\x$ and $\y$ as the signal part of the data and the rest as the noise. Note that if $\x$ and $\y$ have i.i.d.\ Gaussian components, then Assumption \ref{ass_cor_signal} coincides with Assumption \ref{ass_basic} (up to a linear transformation of the noise part). In this situation $\hat r^2$ differs from $r^2$ from Definition \ref{Definition_sample_setting}, but their difference tends to $0$ as $S\to\infty$ by the law of large numbers.

\begin{theorem} \label{Theorem_non_iid_signal}
 Suppose that the squared sample canonical correlations and variables are constructed as in Definition \ref{Definition_sample_setting} with data matrices $\U$ and $\V$ satisfying Assumption \ref{ass_cor_signal}. Let $S$ tend to infinity and $K\le M$ depend on it in such a way that the ratios $S/K$ and $S/M$ converge to $\tau_K>1$ and $\tau_M>1$, respectively, and $\tau_M^{-1}+\tau_K^{-1}<1$. Simultaneously, suppose that $\lim_{S\to\infty} \hat r^2=\rho^2$. Then, the conclusions of Theorem \ref{Theorem_basic_setting} continue to hold.
\end{theorem}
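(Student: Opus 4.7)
The plan is to reduce Theorem \ref{Theorem_non_iid_signal} to Theorem \ref{Theorem_basic_setting} by exploiting the orthogonal invariance in $\mathbb{R}^S$ afforded by the i.i.d.\ Gaussian noise in Assumption \ref{ass_cor_signal}. After conditioning on the signal pair $(\x,\y)$, the matrices $A\U$ and $B\V$ remain i.i.d.\ standard Gaussian and independent of $(\x,\y)$, and the conditional law of $(\lambda_1,\lambda_2,\theta_x,\theta_y)$ turns out to depend on $(\x,\y)$ only through the sample correlation $\hat r^2$. This allows the Gaussian result of Theorem \ref{Theorem_basic_setting} to transfer verbatim.

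First, I would record two invariances of the quantities in Definition \ref{Definition_sample_setting}. Replacing $(\U,\V)$ by $(\U O,\V O)$ for any orthogonal $O\in O(S)$ leaves $\U\U^\T,\V\V^\T,\U\V^\T$ unchanged and hence preserves $\lambda_1,\lambda_2,\widehat\ba,\widehat\bb$; simultaneously $(\x,\y,\widehat\x,\widehat\y)$ is mapped to $(O^\T\x,O^\T\y,O^\T\widehat\x,O^\T\widehat\y)$, so $\theta_x$ and $\theta_y$ are preserved. Similarly, replacing $(\U,\V)$ with $(T_1\U,T_2\V)$ for invertible $T_1,T_2$ also leaves the canonical quantities unchanged since they depend only on the row spaces of $\U$ and $\V$ in $\mathbb{R}^S$.

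Next I condition on $(\x,\y)$ and exploit the right-rotational invariance of the conditional law of $(A\U,B\V)$. Picking $O\in O(S)$ so that $O^\T\x=\|\x\|e_1$ and $O^\T\y=\tfrac{\x^\T\y}{\|\x\|}e_1+\sqrt{\|\y\|^2-(\x^\T\y)^2/\|\x\|^2}\,e_2$, where $e_1,e_2$ denote fixed standard basis vectors of $\mathbb{R}^S$, and combining with a diagonal rescaling that normalizes $\|\x\|=\|\y\|=1$, one sees that the conditional distribution of $(\lambda_1,\lambda_2,\theta_x,\theta_y)$ depends on $(\x,\y)$ only through $\hat r^2$. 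The i.i.d.\ Gaussian setting of Section \ref{Section_basic_sample_setting} is itself a special case of Assumption \ref{ass_cor_signal} in which the coordinates of $\x$ and $\y$ are i.i.d.\ Gaussian; there $\hat r^2\to r^2\to\rho^2$ by the law of large numbers, and Theorem \ref{Theorem_basic_setting} yields \eqref{eq_canonical_limit}--\eqref{eq_no_spike}. Since the conditional law depends on $(\x,\y)$ only through $\hat r^2$, those conclusions carry over to the present setting once $\hat r^2\to\rho^2$ is assumed directly.

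The main obstacle is making this transfer rigorous, that is, extracting a conditional-on-$\hat r^2$ version of Theorem \ref{Theorem_basic_setting} from its unconditional statement. The cleanest way is to revisit the proof of Theorem \ref{Theorem_basic_setting} (a resolvent analysis of $(\U\U^\T)^{-1}\U\V^\T(\V\V^\T)^{-1}\V\U^\T$ together with projections onto $\x,\y$) and observe that the signal enters only through its Gram matrix, so that every intermediate convergence statement remains valid with $r^2$ replaced by $\hat r^2$ and no new random-matrix input is required. A standard dominated convergence argument for convergence in probability then combines the resulting conditional convergences with the hypothesis $\hat r^2\to\rho^2$ to complete the proof.
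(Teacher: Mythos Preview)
Your strategy---condition on $(\x,\y)$, exploit the right-orthogonal invariance of the i.i.d.\ Gaussian noise, and reduce to the setting of Theorem~\ref{Theorem_basic_setting}---is exactly the paper's. The one substantive implementation difference is in how the rotation is chosen. You pick a \emph{deterministic} $O$, measurable in $(\x,\y)$, that aligns the signal with $e_1,e_2$; the paper instead multiplies by a \emph{Haar-random} $O$ independent of everything. The payoff of the random choice is that $(O^\T\x,O^\T\y)$ then has the law of a uniformly random pair of vectors with prescribed lengths and inner product, which the paper realizes as normalized and Gram--Schmidt-orthogonalized i.i.d.\ Gaussians. This lets one reuse the law-of-large-numbers step (Lemma~\ref{Lemma_asymptotic_approx}) verbatim with $C_{uu}=C_{vv}=1$, $C_{uv}=\hat r$, and no genuinely new argument is needed.

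Your deterministic reduction correctly shows that the conditional law depends only on $\hat r^2$, but the ``main obstacle'' you flag is real and your proposed remedy is slightly imprecise. In the master-equation route (Theorem~\ref{Theorem_master_equation}) the signal enters through scalar products $\langle\u^*,\u_i\rangle$, $\langle\u^*,\v_j\rangle$, etc., with the \emph{noise's} canonical basis, not just through its own Gram matrix; only after the concentration step of Lemma~\ref{Lemma_asymptotic_approx} does everything collapse to $\hat r^2$. With your fixed $\u^*\propto e_1$, that concentration would have to come from the randomness of the noise's canonical basis rather than from the signal, which is feasible but is a separate argument you have not supplied. The paper's random-$O$ device is precisely the trick that sidesteps this, by transferring the needed randomness back onto the signal vectors so that Lemma~\ref{Lemma_asymptotic_approx} applies directly.
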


The practical implementation of the procedure discussed at the beginning of Section \ref{sec_implic_basic} continues to hold in the setting of Theorem \ref{Theorem_non_iid_signal}. Theorem \ref{Theorem_non_iid_signal} uses the Gaussianity of the noise part of the data, $A\U$ and $B\V$. While it is plausible that this restriction can be relaxed, we do not address such a question in this paper.

\subsection{Correlated noise} In the next extension we relax the assumption that the noise has independent coordinates. This complements the correlated signal of Section \ref{Section_cor_signal}. The results for the correlated noise depend on the knowledge of the canonical correlations of the noise itself, and the formulas become much more complicated than in Theorems \ref{Theorem_basic_setting}, \ref{Theorem_4moments}, and \ref{Theorem_non_iid_signal}. Nevertheless, one can still efficiently use them when working with the data. Our assumptions are again phrased in terms of $K\times S$ and $M\times S$ matrices $\U$ and $\V$, respectively.

\begin{assumption}\label{ass_cor_noise} There exists  a deterministic vector $\ba\in \mathbb R^K$ and a deterministic  $(K-1)\times K$ matrix $A$ of rank $K-1$, and a deterministic vector $\bb\in \mathbb R^M$ and a deterministic $(M-1)\times M$ matrix $B$ of rank $M-1$,  such that:
\begin{enumerate}[label=(\arabic*), ref=\ref{ass_cor_noise}.(\arabic*)]
\item We set $\x=\U^\T \ba$ and $\y =\V^\T \bb$ and assume that $(\x,\y)$ is $S\times 2$ matrix with i.i.d.\ rows.\footnote{As the proof of Theorem \ref{Theorem_master} reveals, it is also sufficient to require only that the rows are uncorrelated if we additionally assume that the joint distribution of all $2S$ matrix elements is fourth-moment Gaussian.}  Each row is a mean zero fourth-moment Gaussian (as in Definition \ref{Def_4moments}) two-dimensional vector with covariance matrix $\begin{pmatrix} C_{uu} & C_{uv}\\ C_{vu} & C_{vv}\end{pmatrix}$.
\item \label{ass_corr_noise_independence} The matrices $A \U$ and $B\V$ are assumed to be independent from $\x$ and $\y$.
\end{enumerate}
\end{assumption}

\noindent In the setting of Assumption \ref{ass_cor_noise}, we  set $\displaystyle r^2:=\frac{C_{uv}^2}{C_{uu} C_{vv}}$. The matrices $A \U$ and $B\V$ in the assumption might depend on each other and might be deterministic. These are $(K-1)\times S$ and $(M-1)\times S$ matrices, respectively, and we denote through $1\ge c_1^2 \ge c_2^2\ge \dots \ge c_{K-1}^2\ge 0$ their sample squared canonical correlations, which are eigenvalues of the $(K-1)\times(K-1)$ matrix $(A \U \U^\T A^\T)^{-1} A \U \V^\T B^\T (B \V \V^\T B^\T)^{-1} B \V \U^\T A^\T$. We start by stating two theoretical results and then present a practical statistical implication as a corollary.

\begin{theorem} \label{Theorem_master} Suppose that the squared sample canonical correlations are constructed as in Definition \ref{Definition_sample_setting} with data matrices $\U$ and $\V$ satisfying Assumption \ref{ass_cor_noise}. Let $\lambda_i$ be one of the squared sample canonical correlations (not necessarily the largest one), and let $\widehat \x$, $\widehat \y$ be the corresponding canonical variables. Let $S$ tend to infinity and $K\le M$ depend on it in such a way that $S/K$ and $S/M$ converge to $\tau_K>1$ and $\tau_M>1$, respectively, and $\tau_M^{-1}+\tau_K^{-1}<1$.
In addition, fix $\eps>0$, and let
\begin{equation}
\label{eq_G_def}
 G(z):= \frac{1}{S} \sum_{k=1}^{K-1} \frac{1}{z- c_k^2}, \qquad z\in \mathbb C.
\end{equation}
Then, as $S\to\infty$, any canonical correlation $\lambda_i$ that is at a distance of at least $\eps$ from all $\{c_k^2\}_{k=1}^{K-1}$ satisfies the following relation \eqref{eq_CCA_master_asymptotic}, where the remainder term $o(1)$ tends to $0$ as $S \to \infty$ with fixed $\eps$, $\tau_K$, and $\tau_M$. Moreover, any $\lambda_i$ solving the relation \eqref{eq_CCA_master_asymptotic} (and a distance of at least $\eps$ away from $\{c_k^2\}_{k=1}^{K-1}$) is a canonical correlation.
\begin{equation} \label{eq_CCA_master_asymptotic}
  r^2 + o(1)=\frac{\displaystyle \lambda_i \left[1-2\frac{K}{S}-\frac{1}{\lambda_i} \cdot \frac{M-K}{S}-(1-\lambda_i)G(\lambda_i) \right]  \left[1-\frac{K}{S}-\frac{M}{S}-(1 - \lambda_i) G(\lambda_i) \right]}{\displaystyle  \left[1 - \frac{M}{S}-\lambda_i\frac{K}{S}  -  \lambda_i(1-\lambda_i)  G(\lambda_i)  \right]^2}.
\end{equation}
Let us further denote
\begin{equation}
\label{eq_Q_def}
 Q_x(z)=  -  \frac{1-2\frac{K}{S}-\frac{1}{z} \cdot \frac{M-K}{S}-(1-z)G(z)}{1 - \frac{M}{S}-z\frac{K}{S}  -  z(1-z)  G(z) }, \qquad Q_y(z)=- \frac{1-\frac{K}{S}-\frac{M}{S}-(1 - z) G(z)}{1 - \frac{M}{S}-z\frac{K}{S}  -  z(1-z)  G(z) },
\end{equation}
let $\cos^2\theta_x$ be the squared cosine of the angle between $\x$ and $\widehat \x$ defined as
  $
   \cos^2\theta_x = \frac{(\x^\T \widehat \x)^2}{(\x^\T \x)(\widehat \x^\T \widehat \x)}$, and let $\cos^2\theta_y$ be the squared cosine of the angle between $\y$ and $\widehat \y$. Then we have
  \begin{multline} \label{eq_alpha_cos_final}
 \cos^2\theta_x +o(1)= \Biggl(1-\frac{K}{S} -\lambda_i Q_x(\lambda_i) \left(\frac{K}{S}+ (1-\lambda_i) G(\lambda_i)\right) \Biggr)^2\\ \times \Biggl( 1-2\frac{K}{S} - 2\frac{K}{S}\lambda_i Q_x(\lambda_i)
 +G(\lambda_i)\left[2\lambda_i-1+2\lambda_i(2\lambda_i-1)  Q_x(\lambda_i) + \frac{\lambda_i^2}{r^2}  Q_x^2(\lambda_i)\right] \\ +(\lambda_i^2-\lambda_i)G'(\lambda_i) \Bigl[1  +2 \lambda_i  Q_x(\lambda_i) + \frac{\lambda_i}{r^2} Q_x^2(\lambda_i) \Bigr]\Biggr)^{-1},
\end{multline}
\begin{multline}\label{eq_beta_cos_final}
\cos^2\theta_y + o(1) =  \Biggl(1-\frac{M}{S} -\lambda_i Q_y(\lambda_i) \left(\frac{M}{S}+(1-\lambda_i)G(\lambda_i)+\frac{1-\lambda_i}{\lambda_i}\cdot \frac{M-K}{S}\right) \Biggr)^2
\\ \times \Biggl(
1-2\frac{M}{S} - 2\frac{K}{S}\lambda_i  Q_y(\lambda_i) +\frac{M-K}{S} \left[1+\frac{ Q_y^2 (\lambda_i)}{r^2}  \right]
 +G(\lambda_i)\Bigl[2\lambda_i-1\\ +2\lambda_i(2\lambda_i-1) Q_y(\lambda_i) + \frac{\lambda_i^2}{r^2}  Q_y^2(\lambda_i)\Bigr]  +(\lambda_i^2-\lambda_i)G'(\lambda_i) \Bigl[1  +2 \lambda_i  Q_y(\lambda_i) + \frac{\lambda_i}{r^2}  Q_y^2(\lambda_i) \Bigr]\Biggr)^{-1}.
\end{multline}
\end{theorem}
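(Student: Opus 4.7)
My plan is to reduce the generalized eigenvalue problem defining sample CCA to a bounded-rank perturbation of the pure-noise problem and then combine a Schur-complement / secular-equation analysis with concentration of quadratic forms in $\x,\y$. Sample canonical correlations and variables are invariant under invertible left-multiplication of $\U$ and $\V$, so after an appropriate change of basis on $\mathbb{R}^K$ and $\mathbb{R}^M$ sending $\ba$ and $\bb$ to the first coordinate vector we may assume
\begin{equation*}
  \U=\begin{pmatrix}\x^\T\\\bar\U\end{pmatrix},\qquad \V=\begin{pmatrix}\y^\T\\\bar\V\end{pmatrix},
\end{equation*}
with $\bar\U=A\U$ of size $(K-1)\times S$ and $\bar\V=B\V$ of size $(M-1)\times S$, and with $\ba=\bb=(1,0,\dots,0)^\T$. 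Letting $P_U,P_V$ be the orthogonal projections in $\mathbb{R}^S$ onto the row spans of $\U,\V$, the squared sample canonical correlations are the nonzero eigenvalues of $P_UP_VP_U$ (and the canonical variables $\widehat\x,\widehat\y$ its eigenvectors, up to scaling). Setting $\tilde\x=(I-P_{\bar U})\x$ and $\tilde\y=(I-P_{\bar V})\y$,
\begin{equation*}
  P_U=P_{\bar U}+\frac{\tilde\x\tilde\x^\T}{\tilde\x^\T\tilde\x},\qquad P_V=P_{\bar V}+\frac{\tilde\y\tilde\y^\T}{\tilde\y^\T\tilde\y},
\end{equation*}
so that $P_UP_VP_U$ is an $O(1)$-rank perturbation of $P_{\bar U}P_{\bar V}P_{\bar U}$, whose $K-1$ nonzero eigenvalues are exactly $\{c_k^2\}_{k=1}^{K-1}$.

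For $\lambda$ at distance at least $\eps$ from $\{c_k^2\}\cup\{0,1\}$, I would apply the matrix determinant lemma (equivalently, a Schur complement on the block form of $\lambda\U\U^\T-\U\V^\T(\V\V^\T)^{-1}\V\U^\T$) to rewrite the condition that $\lambda$ be a squared sample canonical correlation as the vanishing of the determinant of an explicit small matrix whose entries are bilinear forms in $\x,\y$ and in the resolvent $R(\lambda)=(\lambda-P_{\bar U}P_{\bar V}P_{\bar U})^{-1}$, together with the analogous objects obtained by swapping $P_{\bar U}$ with $P_{\bar V}$. By Assumption \ref{ass_corr_noise_independence}, $\bar\U$ and $\bar\V$ (and hence $R(\lambda)$) are independent of $\x,\y$, which have i.i.d.\ mean-zero fourth-moment Gaussian rows, so the law of large numbers for quadratic forms yields
\begin{equation*}
  \x^\T R(\lambda)\x\approx C_{uu}\,\mathrm{tr}\,R(\lambda),\quad \y^\T R(\lambda)\y\approx C_{vv}\,\mathrm{tr}\,R(\lambda),\quad \x^\T R(\lambda)\y\approx C_{uv}\,\mathrm{tr}\,R(\lambda),
\end{equation*}
and analogous concentration for every other bilinear form appearing, including those containing $P_{\bar U}$, $P_{\bar V}$, and $R'(\lambda)$. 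Each such trace reduces to a sum over $\{c_k^2\}$ plus a kernel contribution from the $(S-K+1)$-dimensional nullspace of $P_{\bar U}P_{\bar V}P_{\bar U}$, which in rescaled form is exactly a combination of $G(\lambda)$, $G'(\lambda)$, $K/S$, $M/S$, and $(M-K)/S$. Substituting into the secular determinant, dividing through by $C_{uu}C_{vv}$ so that $r^2=C_{uv}^2/(C_{uu}C_{vv})$ emerges naturally, and collecting terms yields \eqref{eq_CCA_master_asymptotic}; the compact rearrangement $r^2=\lambda_i Q_x(\lambda_i)Q_y(\lambda_i)+o(1)$ already shows why the ratios $Q_x,Q_y$ enter.

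The same secular analysis produces an explicit formula for the corresponding null vector, which in the new basis reads $\widehat\ba=(\widehat a_1,\mathbf h)^\T$ with $\mathbf h\in\mathbb{R}^{K-1}$ an explicit linear combination of $R(\lambda_i)\x$ and $R(\lambda_i)\y$, so that $\widehat\x=\U^\T\widehat\ba=\widehat a_1\,\x+\bar\U^\T\mathbf h$. Computing $\x^\T\widehat\x$, $\widehat\x^\T\widehat\x$ and their analogues for $\widehat\y$, applying concentration once more, and identifying $Q_x(\lambda_i)$ and $Q_y(\lambda_i)$ with the ratios already produced by the secular equation delivers \eqref{eq_alpha_cos_final} and \eqref{eq_beta_cos_final}. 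The hardest part of the argument is not probabilistic but algebraic: keeping track of the numerous trace/resolvent identities involving $G$, $G'$, the kernel offset $(M-K)/S$, and the cross-covariance $C_{uv}$, and simplifying them into the compact forms stated in the theorem --- in particular, verifying that the coefficient of $G'(\lambda_i)$ collapses as displayed and that the asymmetric term $(M-K)/S$ enters $Q_x$ and the $\cos^2\theta_y$ formula exactly where shown. A secondary technical point is ensuring that the concentration is uniform in $\lambda$ on compact sets bounded away from $\{c_k^2\}$, so that the solutions of the asymptotic equation correspond, modulo $o(1)$, bijectively to genuine sample canonical correlations.
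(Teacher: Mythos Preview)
Your proposal is correct and follows essentially the same approach as the paper. The paper's proof first establishes a purely deterministic exact identity (Theorem~\ref{Theorem_master_equation}) by writing the CCA critical-point equations in the canonical bases of Lemma~\ref{Lemma_canonical_bases} and solving the resulting block system---which is precisely your Schur-complement/secular-equation step phrased via Lagrange multipliers---and then passes to the limit via a law of large numbers for the bilinear forms $\sum_i \langle \u^*,\u_i\rangle^2/(z-c_i^2)$ etc.\ (Lemma~\ref{Lemma_asymptotic_approx}), which is exactly your concentration of quadratic forms $\x^\T R(\lambda)\x\approx C_{uu}\,\mathrm{tr}\,R(\lambda)$.
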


The formulas in Theorem \ref{Theorem_master} depend on the unknown function $G(z)$. There are two ways to avoid this difficulty. First, by imposing additional assumptions on the noise, one can deduce exact asymptotic formulas for $G(z)$---this is how we deduce Theorems \ref{Theorem_basic_setting} and \ref{Theorem_4moments} from Theorem \ref{Theorem_master}. In those theorems $G(z)$ approximates the Stieltjes transform of the Wachter distribution, with the density shown by orange curves in Figures \ref{Fig_spike}, \ref{fig_uniform_CCA}, and \ref{fig_tdistr_CCA} as well as in Figures \ref{fig_multi_spikes_hist} and \ref{stocks_hist} appearing later.

Alternatively, one can reuse the observed squared canonical correlations $\lambda_1\ge \lambda_2\ge \dots\ge \lambda_K$ through the following approximation statement, which is a direct corollary of Lemma \ref{Lemma_interlacing} from the appendix.

\begin{lemma} \label{Lemma_G_approximation}
 Take $K\le M$, and let $\U$ and $\V$ be $K\times S$ and $M\times S$ matrices, respectively. In addition, fix $(K-1)\times K$ matrix $A$ and $(M-1)\times M$ matrix $B$. Let $\lambda_1\ge \dots\ge \lambda_K$ be the squared sample canonical correlations between $\U$ and $\V$; let $c_1^2\ge c_2^2\ge \dots\ge c_{K-1}^2$ be the squared sample canonical correlations between $A\U$ and $B\V$. For each $1\le \ell\le K$, we have
\begin{equation}
 \lim_{S\to\infty} \left| \frac{1}{S} \sum_{k=1}^{K-1} \frac{1}{z-c_k^2} - \frac{1}{S} \sum_{k=\ell}^{K} \frac{1}{z-\lambda_k}\right|=
 \lim_{S\to\infty} \left| \frac{\partial}{\partial z}\left[\frac{1}{S} \sum_{k=1}^{K-1} \frac{1}{z-c_k^2}\right] - \frac{\partial}{\partial z}\left[\frac{1}{S} \sum_{k=\ell }^{K} \frac{1}{z-\lambda_k}\right]\right|=0,
\end{equation}
where the convergence is uniform over the choices of $K$, $M$, $\U$, $\V$, $A$, and $B$, and over complex $z$ bounded away from the segment  $[\min(c^2_{K-1}, \lambda_K), \max(c_1^2,\lambda_{\ell})]$.
\end{lemma}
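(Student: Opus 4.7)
The plan is to deduce this from the interlacing statement in Lemma~\ref{Lemma_interlacing}. Since $A$ and $B$ are full-rank matrices of sizes $(K-1)\times K$ and $(M-1)\times M$, passing from the CCA of $(\U,\V)$ to that of $(A\U, B\V)$ amounts to two successive rank-one reductions of the underlying projections onto the row spans. I therefore expect Lemma~\ref{Lemma_interlacing} to provide a Cauchy-type interlacing of the form $\lambda_{k+2} \le c_k^2 \le \lambda_k$ for $k = 1,\dots,K-1$ (with the convention $\lambda_{K+1} = 0$). In particular, both $c_k^2$ and $\lambda_{k+1}$ lie in $[\lambda_{k+2},\lambda_k]$, and $\lambda_k$ for $2 \le k \le \ell-1$ lies in $[\lambda_\ell,\lambda_2] \subseteq [\lambda_\ell, c_1^2]$. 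All these points therefore belong to the segment $[\min(c_{K-1}^2,\lambda_K),\max(c_1^2,\lambda_\ell)]$, so the hypothesis on $z$ will give $|z - \cdot| \ge \varepsilon$ in every denominator.

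The next step is the pairing. Using the identity $\sum_{k=1}^{K-1}\tfrac{1}{z-\lambda_{k+1}} = \sum_{k=2}^{K}\tfrac{1}{z-\lambda_k}$, I decompose
$$\frac{1}{S}\sum_{k=1}^{K-1}\frac{1}{z-c_k^2} - \frac{1}{S}\sum_{k=\ell}^{K}\frac{1}{z-\lambda_k} = \frac{1}{S}\sum_{k=1}^{K-1}\frac{c_k^2 - \lambda_{k+1}}{(z-c_k^2)(z-\lambda_{k+1})} + R_\ell,$$
where $R_\ell = \tfrac{1}{S}\sum_{k=2}^{\ell-1}\tfrac{1}{z-\lambda_k}$ for $\ell \ge 2$ and $R_1 = -\tfrac{1}{S(z-\lambda_1)}$. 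The interlacing yields $|c_k^2-\lambda_{k+1}| \le \lambda_k - \lambda_{k+2}$, so the first sum is bounded by $\varepsilon^{-2} S^{-1} \sum_{k=1}^{K-1}(\lambda_k - \lambda_{k+2})$, and the right-hand side telescopes to at most $\varepsilon^{-2}(\lambda_1+\lambda_2)/S \le 2/(S\varepsilon^2)$. The remainder $R_\ell$ contains at most $\max(\ell-2,1)$ terms, each bounded in modulus by $1/\varepsilon$, so $|R_\ell| \le \ell/(S\varepsilon)$. For fixed $\ell$ and $\varepsilon$ both contributions vanish as $S\to\infty$, uniformly in $K$, $M$, $\U$, $\V$, $A$, $B$.

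For the derivative statement I differentiate termwise and use the same pairing, now with
$$\frac{1}{(z-c_k^2)^2} - \frac{1}{(z-\lambda_{k+1})^2} = \frac{(c_k^2 - \lambda_{k+1})\,(2z - c_k^2 - \lambda_{k+1})}{(z-c_k^2)^2(z-\lambda_{k+1})^2}.$$
Since $c_k^2,\lambda_{k+1}\in[0,1]$ and $z$ lies in a bounded set, the factor $|2z - c_k^2 - \lambda_{k+1}|$ is uniformly bounded by a constant $C$, and each paired difference is at most $C(\lambda_k-\lambda_{k+2})/\varepsilon^4$. Telescoping again gives an $O(1/(S\varepsilon^4))$ bound, while the derivative of $R_\ell$ contributes $O(\ell/(S\varepsilon^2))$; both vanish in the limit.

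The main obstacle is simply to extract from Lemma~\ref{Lemma_interlacing} an interlacing of the form $\lambda_{k+a} \le c_k^2 \le \lambda_{k-b}$ for some fixed small integers $a,b$ (I expect $a=2$, $b=0$ in view of the two rank-one reductions). Any such bounded-step interlacing is enough: it converts the bound on $|c_k^2-\lambda_{k+1}|$ into a telescoping sum over adjacent gaps $\lambda_k-\lambda_{k+O(1)}$, whose total is $O(1)$ because all $\lambda_k\in[0,1]$. Once this algebraic input is in hand, the remainder of the argument is the elementary computation sketched above, with the distance hypothesis providing the uniform control on denominators.
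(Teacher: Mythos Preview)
Your approach is exactly what the paper has in mind—it states only that the lemma is ``a direct corollary of Lemma~\ref{Lemma_interlacing}''—and the double interlacing $\lambda_{k+2}\le c_k^2\le\lambda_k$ (obtained by combining \eqref{eq_interlacing_1} and \eqref{eq_interlacing_2} through the intermediate sequence $y_k$) together with a telescoping bound is the natural way to make this precise.

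One gap to close: your inclusion $[\lambda_\ell,\lambda_2]\subseteq[\lambda_\ell,c_1^2]$ requires $\lambda_2\le c_1^2$, but the interlacing does not give this; both $\lambda_2$ and $c_1^2$ lie in $[y_2,y_1]$ with no ordering between them. For $\ell\ge 3$ the hypothesis therefore allows $z$ arbitrarily close to $\lambda_2$, so your paired term at $k=1$ and your remainder $R_\ell$ can each blow up separately, even though their sum does not. The clean fix is to pair $c_k^2$ with $\lambda_{k+1}$ only for $k\ge\ell-1$, leaving $c_1^2,\dots,c_{\ell-2}^2$ as an unpaired remainder of at most $\ell-2$ terms; each of these is bounded by $1/\varepsilon$ since every $c_k^2$ lies in the excluded segment, and the paired sum now involves only $\lambda_\ell,\dots,\lambda_K$, all of which are controlled. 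A second, smaller point: your derivative estimate assumes $z$ bounded when you control $|2z-c_k^2-\lambda_{k+1}|$, but the uniformity is over all $z$ at distance $\ge\varepsilon$ from the segment. Either split off large $|z|$ trivially, or factor the paired difference as $\bigl(\tfrac{1}{z-c_k^2}-\tfrac{1}{z-\lambda_{k+1}}\bigr)\bigl(\tfrac{1}{z-c_k^2}+\tfrac{1}{z-\lambda_{k+1}}\bigr)$ and bound the second factor by $2/\varepsilon$ directly.
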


The primary application of Theorem \ref{Theorem_master} arises when $i=1$, focusing on the largest canonical correlation $\lambda_1$. In this case Theorem \ref{Theorem_master}, combined with Lemma \ref{Lemma_G_approximation} and the interlacement inequalities from Lemma \ref{Lemma_interlacing}, imply the following statement.

\begin{corollary}\label{corollary_sample_CCA} Suppose that the squared sample canonical correlations are constructed as in Definition \ref{Definition_sample_setting} using data matrices $\U$ and $\V$ that satisfy Assumption \ref{ass_cor_noise}. Let $\lambda_1$ denote the largest squared sample canonical correlation, with corresponding canonical variables $\widehat \x$ and $\widehat \y$. Define $\theta_x$ to be the angle between $\x$ and $\widehat \x$ and $\theta_y$ to be the angle between $\y$ and $\widehat \y$. Let $S$ tend to infinity and $K\le M$ depend on it in such a way that $S/K$ and $S/M$ converge to $\tau_K>1$ and $\tau_M>1$, respectively, and $\tau_M^{-1}+\tau_K^{-1}<1$. Fix $\eps>0$, which does not depend on $S$, and assume that $\lambda_1-\lambda_2\ge \eps$. Then $\lambda_1$, $\theta_x$ and $\theta_y$ satisfy the asymptotic relations:
\begin{equation}
\label{eq_x27}
 r^2+o(1)=\mathcal R(\lambda_1),\qquad \cos^2\theta_x+o(1)=\mathcal C_x(\lambda_1),\qquad \cos^2\theta_y+o(1)=\mathcal C_y(\lambda_1),
\end{equation}
where $\mathcal R(\lambda_1)$, $\mathcal C_x(\lambda_1)$, $\mathcal C_y(\lambda_1)$ are the right-hand sides of \eqref{eq_CCA_master_asymptotic}, \eqref{eq_alpha_cos_final}, \eqref{eq_beta_cos_final}, respectively, with $i=1$ and $G(z)$ replaced with
\begin{equation}
 \hat G(z)= \frac{1}{S} \sum_{k=2}^{K} \frac{1}{z- \lambda_{k}}.
\end{equation}
\end{corollary}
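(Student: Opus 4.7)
The plan is to derive the corollary as a direct application of Theorem~\ref{Theorem_master} at $i=1$, combined with the approximation from Lemma~\ref{Lemma_G_approximation} that substitutes the unobservable function $G$ (defined through the noise canonical correlations $c_k^2$) by the empirically computable $\hat G$ (defined through the observed $\lambda_k$, $k\ge 2$).

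First, I would verify the hypothesis of Theorem~\ref{Theorem_master} for $\lambda_1$: namely, that $\lambda_1$ sits at distance $\ge \eps' > 0$ from every $c_k^2$. The interlacement bounds of Lemma~\ref{Lemma_interlacing} place each $c_k^2$ below $\lambda_k$, so for $k\ge 2$ one has $c_k^2 \le \lambda_2 \le \lambda_1 - \eps$. The delicate case is $c_1^2$: the direct interlacement only locates it in $[\lambda_2,\lambda_1]$, but the asymptotic coincidence of the empirical measures of $\{c_k^2\}_{k=1}^{K-1}$ and of $\{\lambda_k\}_{k=2}^{K}$ (which is also what underlies Lemma~\ref{Lemma_G_approximation}) forces $c_1^2 \le \lambda_2 + o(1)$, whence $\lambda_1 - c_1^2 \ge \eps/2$ for $S$ large. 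With this separation secured, Theorem~\ref{Theorem_master} applied at $i=1$ produces the identities \eqref{eq_CCA_master_asymptotic}, \eqref{eq_alpha_cos_final}, \eqref{eq_beta_cos_final} expressed in $G(\lambda_1)$ and $G'(\lambda_1)$.

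Second, I would invoke Lemma~\ref{Lemma_G_approximation} with $\ell=2$. Since $\lambda_1$ lies a fixed distance above $\max(c_1^2,\lambda_2) = \lambda_2 + o(1)$ by the previous step, the lemma yields $G(\lambda_1) = \hat G(\lambda_1) + o(1)$ and $G'(\lambda_1) = \hat G'(\lambda_1) + o(1)$. The right-hand sides of \eqref{eq_CCA_master_asymptotic}, \eqref{eq_alpha_cos_final}, \eqref{eq_beta_cos_final} are rational functions of $G(\lambda_1)$, $G'(\lambda_1)$, $\lambda_1$, $K/S$, $M/S$, and in the outlier regime $\lambda_1 > \lambda_+ + $~const their denominators (including the denominators hidden inside $Q_x$ and $Q_y$ of \eqref{eq_Q_def}) remain bounded away from zero. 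Consequently, substituting $\hat G$ for $G$ and $\hat G'$ for $G'$ perturbs each side only by $o(1)$, which is absorbed into the error terms and yields \eqref{eq_x27}.

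The main obstacle is the first step --- specifically, the sharpening of the interlacement bound $c_1^2\le \lambda_1$ into the asymptotic statement $c_1^2 \le \lambda_2 + o(1)$, which is what prevents $\lambda_1$ from colliding with a pole of $G$ and is what makes the gap assumption $\lambda_1-\lambda_2\ge\eps$ exactly the right hypothesis. Once this separation is established, the remainder of the proof is a routine perturbation argument on rational functions together with bookkeeping of the $o(1)$ error terms coming from Theorem~\ref{Theorem_master} and Lemma~\ref{Lemma_G_approximation}.
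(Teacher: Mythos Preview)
Your approach matches the paper's: the corollary is presented there as a direct consequence of Theorem~\ref{Theorem_master}, Lemma~\ref{Lemma_G_approximation}, and the interlacement inequalities of Lemma~\ref{Lemma_interlacing}, with no further detail, and you have correctly identified exactly those ingredients and isolated the one nontrivial point --- the separation of $\lambda_1$ from $c_1^2$ needed to invoke Theorem~\ref{Theorem_master} at $z=\lambda_1$ and to place $\lambda_1$ outside the excluded segment in Lemma~\ref{Lemma_G_approximation}.

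One caution about your resolution of that point. The heuristic ``asymptotic coincidence of the empirical measures of $\{c_k^2\}$ and $\{\lambda_k\}_{k\ge 2}$ forces $c_1^2\le\lambda_2+o(1)$'' is not a valid deduction as written: closeness of empirical measures (equivalently, of Stieltjes transforms away from the support, which is all Lemma~\ref{Lemma_G_approximation} delivers) says nothing about the position of a single eigenvalue, since one stray point contributes only $O(1/K)$ to the measure. The double interlacing coming from Lemma~\ref{Lemma_interlacing} pins $c_1^2$ only to $[\lambda_3,\lambda_1]$ (not $[\lambda_2,\lambda_1]$), so neither ingredient rules out $c_1^2$ sitting near $\lambda_1$. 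The paper does not address this either --- it is buried in the one-line ``imply'' --- so your sketch is no less complete than the paper's own proof; but be aware that closing this step rigorously requires an argument beyond what you have written (for instance, controlling the single contribution $\tfrac{1}{S(\lambda_1-c_1^2)}$ directly, or using the independence of the signal from the noise in Assumption~\ref{ass_cor_noise} to bound $y_1-c_1^2$).
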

In practice, one only knows the values of $S$, $\lambda_1$ and $\lambda_2$, but not $\eps$. Thus, some guidance is needed to decide if Corollary \ref{corollary_sample_CCA} is applicable. As a safe choice, we suggest to approximate $r^2$ via \eqref{eq_x27} when $\lambda_1-\lambda_2$ is much larger than $S^{-1/2}$ (say, $\lambda_1-\lambda_2\ge 5 S^{-1/2}$). The intuition behind this suggestion is as follows. We expect the $o(1)$ error to be of order $S^{-1/2}$, cf.~\citet{bao2019canonical}, yet, as $\lambda_1-\lambda_2$ decreases it can grow. At the same time the eigenvalue $\lambda_2$ contributes to $\hat G(\lambda_1)$ the term of order $\frac{1}{S(\lambda_1-\lambda_2)}$, thus, having $\lambda_1-\lambda_2$ of order $S^{-1/2}$ is a critical scale which balances these two contributions. A more precise analysis of the dependence of $o(1)$ errors in \eqref{eq_x27} on $\lambda_1-\lambda_2$ is left for future research.




\subsection{Multiple signals}


The four theorems---\ref{Theorem_basic_setting}, \ref{Theorem_4moments}, \ref{Theorem_non_iid_signal}, and \ref{Theorem_master}--- assumed that there is a unique signal in the $\U$ part of the data and a unique signal  in the $\V$ part of the data. All the theorems have extensions to the situation of several signals. In the extended statements exactly the same procedures are used for each signal.

We start from the basic setup of Section \ref{Section_basic_sample_setting}. The population setting of Assumption \ref{ass_basic} says that there is exactly one nonzero canonical correlation\footnote{In the population setting, squared canonical correlations can be computed as eigenvalues of the $K\times K$ matrix $(\E \u \u^\T)^{-1} (\E \u \v^\T) (\E \v \v^\T)^{-1} (\E \v \u^\T)$.} between $\u$ and $\v$, and it corresponds to the canonical variables $\u^\T\ba$ and $\v^\T\bb$. Instead, assume that there are $\mathbbm q$ nonzero canonical correlations:

\begin{assumption}\label{ass_basic_multi} The vectors $\u$ and $\v$ satisfy the following:
\begin{enumerate}
\item\label{ass_basic_gauss_multu} The vectors $\u$ and $\v$ are jointly Gaussian with mean zero.
\item\label{ass_basic_indep_set_multi} There exist $\mathbbm q$ nonzero deterministic vectors $\ba^1,\dots,\ba^\mathbbm q\in\mathbb{R}^K$ and $\mathbbm q$ nonzero deterministic vectors $\bb^1,\dots,\bb^\mathbbm q\in\mathbb{R}^M$ such that
\begin{enumerate}
  \item $\E (\u^\T\ba^q)(\u^\T\ba^{q'})=\E (\v^\T\bb^q)(\v^\T\bb^{q'})=\E (\u^\T\ba^q)(\v^\T\bb^{q'})=0$ for each $q\ne q'$.\footnote{Any $2\mathbbm q$ vectors $\ba^1,\dots,\ba^\mathbbm q\in\mathbb{R}^K$,  $\bb^1,\dots,\bb^\mathbbm q\in\mathbb{R}^M$ can be linearly transformed to satisfy this condition, cf.~Lemma \ref{Lemma_canonical_bases}.}
  \item Let $r^2[q]$ denote the squared correlation coefficient between $\u^\T\ba^q$ and $\v^\T \bb^q$, as in Definition  \ref{Def_squared_correlation_coef}. We assume that these numbers are all distinct.
  \item For any $\bg\in\mathbb{R}^K$, if $\u^\T\bg$ is uncorrelated with all  $\u^\T\ba^q$, $1\le q\le \mathbbm q$, then $\v$ and $\u^\T\bg$ are also uncorrelated;
  \item For any $\bg\in\mathbb{R}^M$, if $\v^\T\bg$ is uncorrelated with all $\v^\T\bb^q$, $1\le q\le \mathbbm q$, then $\u$ and $\v^\T\bg$ are also uncorrelated.
\end{enumerate}
\end{enumerate}
\end{assumption}
\begin{example}
\emph{Assumption \ref{ass_basic} is satisfied with $\alpha^q$ being the $q$th coordinate vector in $\mathbb R^K$ and $\beta^q$ being the $q$th coordinate vector in $\mathbb R^M$, $1\le q \le \mathbbm q$, if $(u_1,\dots,u_K,v_1,\dots,v_M)^\T$ is a mean zero Gaussian vector such that the only possible nonzero correlations are between $u_q$ and $v_q$ for $1\le q \le \mathbbm q$, between $u_{k}$ and $u_{k'}$ for $\mathbbm q+1\le k,k'\le K$, and between $v_m$ and $v_{m'}$ for $\mathbbm q+1\le m,m'\le M$. Additionally, assume that the squared correlation coefficients between $u_q$ and $v_q$, $1\le q\le \mathbbm q$, are all distinct.}

\emph{Any other example can be obtained from the preceding by a change of basis.}
\end{example}

As in Section \ref{Section_basic_sample_setting}, we let $\W=\begin{pmatrix}\U\\ \V \end{pmatrix}$ be a $(K+M)\times S$ matrix composed of $S$ independent samples of $\begin{pmatrix} \u \\ \v \end{pmatrix}$. The squared sample canonical correlations $\lambda_1\ge \lambda_2\ge \dots\ge \lambda_K$ are the eigenvalues of the $K\times K$ matrix $(\U \U^\T)^{-1} \U \V^\T (\V \V^\T)^{-1} \V \U^\T$  or $K$ largest eigenvalues of the $M\times M$ matrix $(\V \V^\T)^{-1} \V \U^\T (\U \U^\T)^{-1} \U \V^\T$. Choosing one eigenvalue $\lambda_i$, we set $\widehat \ba^i$ to be the corresponding eigenvector of the former matrix, and we set $\widehat \bb^i$ to be the eigenvector of the latter matrix, corresponding to the same eigenvalue. The sample canonical variables are defined as $\widehat \x^i=\U^\T \widehat \ba^i$ and $\widehat \y^i =\V^\T \widehat \bb^i$. We also set $\x^q=\U^\T \ba^q$ and $\y^q =\V^\T \bb^q$, where $q$ is chosen so that $\rho^2[q]$ is the $q$-th largest elements of $\rho^2[1],\dots,\rho^2[\mathbbm q]$. 

\begin{theorem} \label{Theorem_basic_multi}
 Suppose that Assumption \ref{ass_basic_multi} holds and the columns of the data matrix $\W$ are i.i.d. Let $S$ and $K\le M$ tend to infinity in such a way that the ratios $S/K$ and $S/M$ converge to $\tau_K>1$ and $\tau_M>1$, respectively, and $\tau_M^{-1}+\tau_K^{-1}<1$. Suppose also that $\lim_{S\to\infty} r^2[q]=\rho^2[q]$, $1\le q \le \mathbbm q$, with the numbers $\rho^2[1],\dots,\rho^2[\mathbbm q]$ all being distinct. Then, for each $q=1,\dots,\mathbbm q$, we have
 \begin{enumerate}
  \item[\bf I.] If $\rho^2$ is the $q$-th largest number from $\rho^2[1],\dots, \rho^2[\mathbbm q]$ and $\rho^2> \rho_c^2$, then $z_{\rho}>\lambda_+$ and
  \begin{equation}
  \label{eq_canonical_limit_multi}
   \lim_{S\to\infty} \lambda_q=z_{\rho}, \qquad \text{in probability}.
  \end{equation}
  The squared sine of the angle $\theta_x$ between the corresponding $\x^q$ and $\widehat \x^q$ satisfies
  \begin{equation}
  \label{eq_vector_limit1_multi}
   \lim_{S\to\infty} \sin^2\theta_x=\s_x, \qquad \text{in probability}.
  \end{equation}
  The squared sine of the angle $\theta_y$ between the corresponding $\y^q$ and $\widehat \y^q$ satisfies
  \begin{equation}
  \label{eq_vector_limit2_multi}
   \lim_{S\to\infty} \sin^2\theta_y=\s_y, \qquad \text{in probability}.
  \end{equation}
  The values of $\lambda_+$, $z_{\rho}$, $\s_x$, and $\s_y$ are obtained through \eqref{eq_lambda_plus}, \eqref{eq_zrho}, \eqref{eq_sx}, and \eqref{eq_sy}.
  \item[\bf II.] If the $q$-th largest number from $\rho^2[1],\dots, \rho^2[\mathbbm q]$ is at most $\rho_c^2$, then $$\lim_{S\to\infty} \lambda_q=\lambda_+, \qquad \text{in probability}.$$
 \end{enumerate}
 Finally, $\lim_{S\to\infty} \lambda_{\mathbbm q+1}=\lambda_+.$
\end{theorem}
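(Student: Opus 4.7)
The plan is to reduce Theorem \ref{Theorem_basic_multi} to the single-signal Theorem \ref{Theorem_master}, by treating one signal at a time and absorbing the remaining $\mathbbm{q}-1$ signals into the ``correlated noise'' slot of Assumption \ref{ass_cor_noise}. First I would apply a change of basis using Lemma \ref{Lemma_canonical_bases}, reducing to the coordinate-vector example below Assumption \ref{ass_basic_multi}: $\ba^q$ and $\bb^q$ become the $q$-th coordinate vectors, and the orthogonality in Assumption \ref{ass_basic_multi}.(a) ensures that the joint Gaussian vector $(\u,\v)$ decomposes into $\mathbbm{q}$ mutually independent correlated pairs $(\u^\T\ba^q,\v^\T\bb^q)$ plus a residual pair $(\tilde\u,\tilde\v)$ with $\tilde\u,\tilde\v$ independent of everything else. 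Both the squared sample canonical correlations $\lambda_i$ and the squared sines in \eqref{eq_vector_limit1_multi}, \eqref{eq_vector_limit2_multi} are invariant under this basis change.

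Next, fix a signal index $q$ and pick deterministic matrices $A\in\mathbb{R}^{(K-1)\times K}$, $B\in\mathbb{R}^{(M-1)\times M}$ annihilating $\ba^q,\bb^q$ respectively. Then $\x=\U^\T\ba^q$, $\y=\V^\T\bb^q$, and the noise matrices $A\U$, $B\V$ still contain the $\mathbbm{q}-1$ other Gaussian signals in addition to the pure $(\tilde\u,\tilde\v)$ part. Crucially, the orthogonality from Assumption \ref{ass_basic_multi}.(a) guarantees that $A\U, B\V$ are independent of $(\x,\y)$, so Assumption \ref{ass_cor_noise} is satisfied and Theorem \ref{Theorem_master} applies. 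In particular, \eqref{eq_CCA_master_asymptotic}, \eqref{eq_alpha_cos_final}, \eqref{eq_beta_cos_final} hold for each sample canonical correlation $\lambda_i$ of the full data that stays bounded away from the noise canonical correlations $\{c_k^2\}$.

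The core computation is then to evaluate $G(z) = \frac{1}{S}\sum_{k=1}^{K-1}\frac{1}{z-c_k^2}$ at the candidate location $z=z_{\rho[q]}$. The squared canonical correlations of $(A\U,B\V)$ decompose into a bulk that converges to the Wachter distribution (by Lemma \ref{Lemma_interlacing}, since the reduction to $\mathbbm{q}-1$ signals in the noise is a finite-rank perturbation of the pure-noise case already covered by \cite{wachter1980limiting}) and at most $\mathbbm{q}-1$ outliers near $\{z_{\rho[q']}\}_{q'\ne q}$. Because the $\rho^2[q']$ are all distinct, these outliers contribute only $O(1/S)$ to $G(z_{\rho[q]})$, so $G(z_{\rho[q]})$ converges to the Stieltjes transform of the Wachter distribution evaluated at $z_{\rho[q]}$. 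Substituting this into \eqref{eq_CCA_master_asymptotic}--\eqref{eq_beta_cos_final} and simplifying yields exactly \eqref{eq_zrho}--\eqref{eq_sy}, i.e.\ the asserted limits $z_\rho,\s_x,\s_y$; this algebraic reduction is the same one used to derive Theorem \ref{Theorem_basic_setting} from Theorem \ref{Theorem_master}. Case \textbf{II} and the final claim $\lim_{S\to\infty}\lambda_{\mathbbm{q}+1}=\lambda_+$ follow from the interlacement part of Lemma \ref{Lemma_interlacing}: removing $\mathbbm{q}$ signals interlaces the full spectrum with one whose empirical distribution is Wachter and supported in $[\lambda_-,\lambda_+]$, so at most $\mathbbm{q}$ eigenvalues can exceed $\lambda_+$, and subcritical signals cannot generate an outlier.

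The main obstacle is verifying that the $q$-th largest sample eigenvalue $\lambda_q$ indeed converges to $z_{\rho[q]}$ and is the one whose eigenvectors yield the angles for the $q$-th population signal, rather than these outliers being permuted. This requires two observations: (i) by the monotonicity of $z_\rho$ in $\rho$ above the critical threshold (see Figure \ref{rho2_zrho}), the ordering of the limiting outlier locations matches the ordering of the $\rho^2[q]$; and (ii) the ``second part'' of Theorem \ref{Theorem_master} (that any solution of \eqref{eq_CCA_master_asymptotic} separated from $\{c_k^2\}$ is a sample canonical correlation) rules out spurious outliers and pins down the bijection between population spikes above threshold and sample outliers. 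Combined with the distinctness of the $\rho^2[q]$, this guarantees each $\lambda_q$ has a simple limit and the corresponding $\widehat\x^q,\widehat\y^q$ are governed by $\s_x,\s_y$ at parameter $\rho=\rho[q]$.
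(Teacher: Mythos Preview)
Your approach is correct and essentially the same as the paper's: both reduce to Theorem \ref{Theorem_master} by singling out one signal and absorbing the remaining $\mathbbm q-1$ signals into the correlated-noise slot, then use $G(z)\to\mathcal G_{\tau_K,\tau_M}(z)$ away from the other outliers to recover the single-signal formulas. The paper frames this as an explicit induction on the number of signals added (using the ``add the $q$-th signal last'' freedom), which is precisely what justifies your assertion that the noise outliers of $(A\U,B\V)$ sit near $\{z_{\rho[q']}\}_{q'\ne q}$; you state this location claim but the tools you cite (Lemma \ref{Lemma_interlacing} and \cite{wachter1980limiting}) only give the bulk and the count, so you are tacitly invoking the same induction.
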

\begin{remark}
As in Theorem \ref{Theorem_basic_setting}, the formulas \eqref{eq_canonical_limit_multi} were previously developed in \cite{bao2019canonical} by another method, while \eqref{eq_vector_limit1_multi} and \eqref{eq_vector_limit2_multi} are new.
\end{remark}

\begin{figure}[t]
\begin{subfigure}{.55\textwidth}
  \centering
  \includegraphics[width=1.0\linewidth]{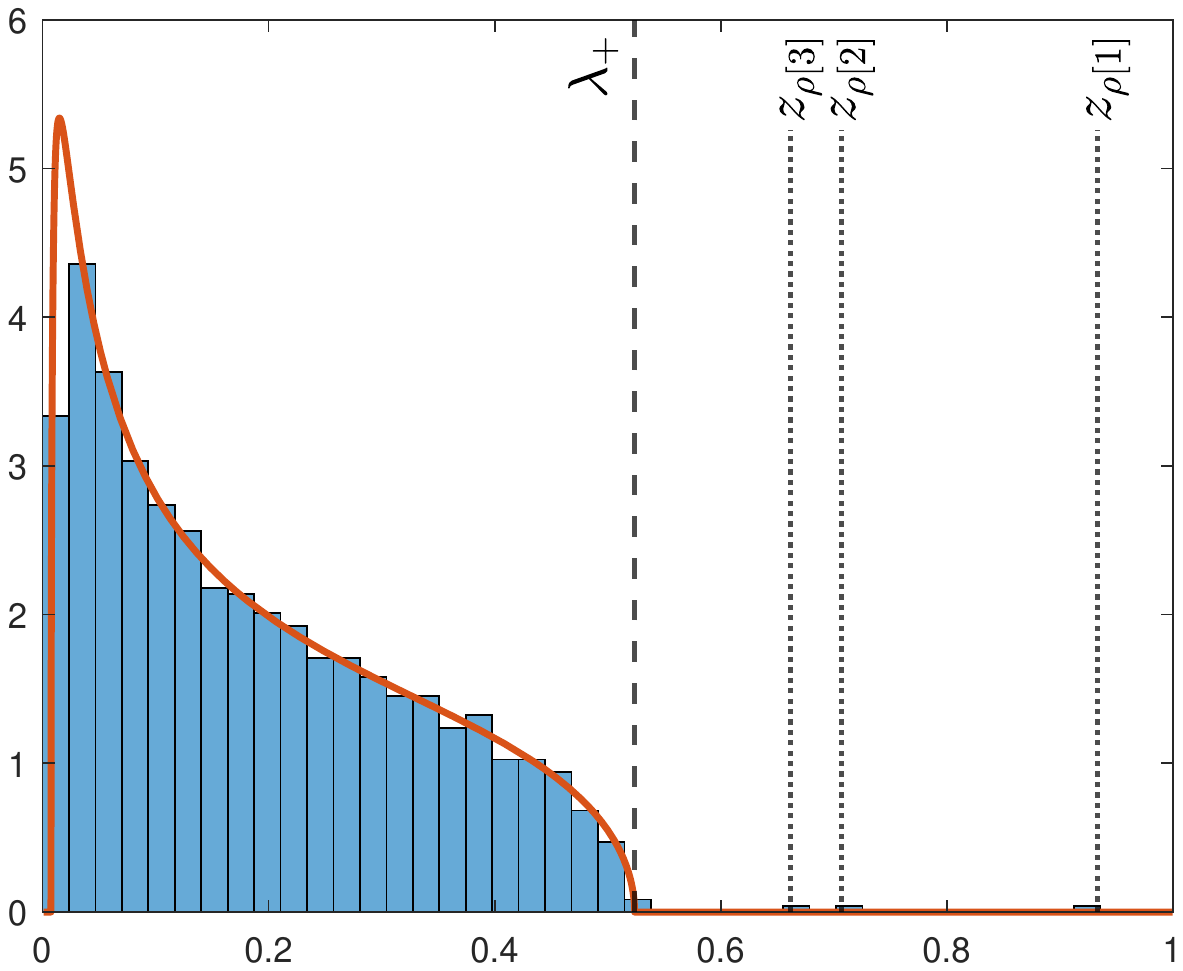}
  \caption{CCA eigenvalues and Wachter distribution.}
  \label{fig_multi_spikes_hist}
\end{subfigure}
\begin{subfigure}{.44\textwidth}
\centering
  \begin{tabular}{c|c|c}
  \multicolumn{3}{c}{}\\
  \multicolumn{3}{c}{}\\
  \multicolumn{3}{c}{Angles (in degrees)}\\
  \cline{2-3}
  \multicolumn{1}{c}{} & \small{Theoretical} & \small{Simulation}\\
  \hline
  \hline
  $\theta_x^1$ & $7.57$ & $7.82$\\
  $\theta_x^2$ & $21.37$ & $21.22$\\
  $\theta_x^3$ & $25.22$ & $25.57$\\
  \hline
  $\theta_y^1$ & $8.89$ & $9.02$\\
  $\theta_y^2$ & $24.31$ & $23.93$\\
  $\theta_y^3$ & $28.39$ & $27.79$\\
  \hline
  \hline
  \multicolumn{3}{c}{}\\
  \multicolumn{3}{c}{}
\end{tabular}
\caption{Theoretical angles from Theorem \ref{Theorem_basic_multi} and corresponding angles from one simulation  with three signals and Gaussian errors.}
\label{table_multi_spikes}
\end{subfigure}
\caption{Illustration of Theorem \ref{Theorem_basic_multi} for $K=1000,\,M=1500,\,S=8000$, $r[1]=0.95,\, r[2]=0.75,\, r[3]=0.7$.}
\label{fig_multi_spikes}
\end{figure}

Figure \ref{fig_multi_spikes} illustrates Theorem \ref{Theorem_basic_multi}: there are three signals of different strength indicated by the three separate right-most eigenvalues. The corresponding angles increase when the strength of the signal  goes down (smaller $\rho^2$), as can be seen from Table \ref{table_multi_spikes}. Note also that the theoretical formulas are close to what one obtains in the simulations.

Theorem \ref{Theorem_basic_multi} helps one find the number of signals when it is unknown. The eigenvalues to the right of $\lambda_+$ represent signals, and thus, one can visually deduce\footnote{One should be able to construct a formal test of the number of signals, e.g., by using CLT-type results of \citet{bao2019canonical}, \citet{yang2022limiting}. In the setting of factors a parallel question generated considerable interest, see, e.g., \cite{bai2002determining, bai2007determining,hallin2007determining,onatski2009testing,ahn2013eigenvalue}. } how many there are by looking at a histogram (e.g, one clearly sees three signals in Figure \ref{fig_multi_spikes}).
In practice we do not know the values of $\rho^2[q]$ a priori. Instead, we should look at the squared sample canonical correlations $\lambda_1\ge \lambda_2\ge\dots$: if several of the largest ones are well separated from $\lambda_+$ and from each other, then we can use Theorem \ref{Theorem_basic_multi}, reconstruct the corresponding $\rho^2[q]$ and deduce the values for the angles $\theta^q_x$ and $\theta^q_y$.

The requirement that all $\rho^2[q]$ are distinct is not simply a technical artifact. Indeed, if two canonical correlations coincide, then the corresponding canonical variables are no longer well defined because any linear combination of two eigenvectors with the same eigenvalue is again an eigenvector with the same eigenvalue.

\medskip

Theorems \ref{Theorem_4moments}, \ref{Theorem_non_iid_signal}, and \ref{Theorem_master} have very similar extensions to the case of $\mathbbm q$ nontrivial canonical correlations. The extension of Theorem \ref{Theorem_4moments} is exactly the same: the data are allowed to be fourth-moment Gaussian rather than Gaussian. In the extension of Theorem \ref{Theorem_non_iid_signal} we have $2\mathbbm q$ vectors $\x^q$, $\y^q$, $1\le q\le \mathbbm q$, which represent $\mathbbm q$ ``true'' canonical variables (and therefore $\U^\T \x^q$, $\V^\T \y^{q'}$ are pairwise orthogonal except for the allowed correlation when $q=q'$, which should result in distinct correlation coefficients as we vary $q$). In the extension of Theorem \ref{Theorem_master}, we have $2\mathbbm q$ random vectors $\x^q$, $\y^q$, $1\le q\le \mathbbm q$, with i.i.d.\ components, which represent the canonical variables in the population, and the corresponding canonical correlations should be all distinct. In each extension, the final statement is as in Theorem \ref{Theorem_basic_multi}: the relations for each $q=1,2,\dots,\mathbbm q$ are exactly the same as for the $\mathbbm q=1$ situation, and we omit further details.


\section{Empirical illustrations}\label{Section_empirical}

In this section we provide an empirical implementation of our approach in two data sets. The first example provides insights into the behavior of the stock market and analyzes the relationship between cyclical and non-cyclical (defensive) stocks. The second example complements \cite{bao2019canonical} in analyzing a limestone grassland data set.

\subsection{Cyclical vs.~non-cyclical stocks} \label{Section_stocks}

How to model and explain stock returns has always been an important question in finance and financial econometrics. There are many different approaches and techniques, including asset pricing models, volatility models, and dimension-reduction factor models. This section adds CCA to the list and shows how this method can be used to analyze stock returns. Our illustration is in parallel with the literature on maximally predictable portfolio, which originated with \citet{lo1997maximizing} and aims at predicting some combination of stocks and bonds from a combination of observable economic variables. Canonical variables in CCA can serve as such portfolio. Along these lines, recently \citet{firoozye2023canonical} applied CCA to current and past returns to construct a predictable portfolio.

Compared to the preceding, in our application we do not distinguish between outcome variables and predictors and treat them symmetrically. We use two corpuses of data and concentrate on consumer cyclical and non-cyclical (consumer defensive) stocks. The former are known to follow the state of the economy, while the latter are not related to business cycles and are useful during economic slowdowns. Knowing their correlations can be helpful for portfolio allocation. Hence, we apply CCA and search for maximally correlated combinations of cyclical and non-cyclical stocks.



\begin{figure}[t]
\begin{subfigure}{0.7\textwidth}
  \centering
  \includegraphics[width=1.0\linewidth]{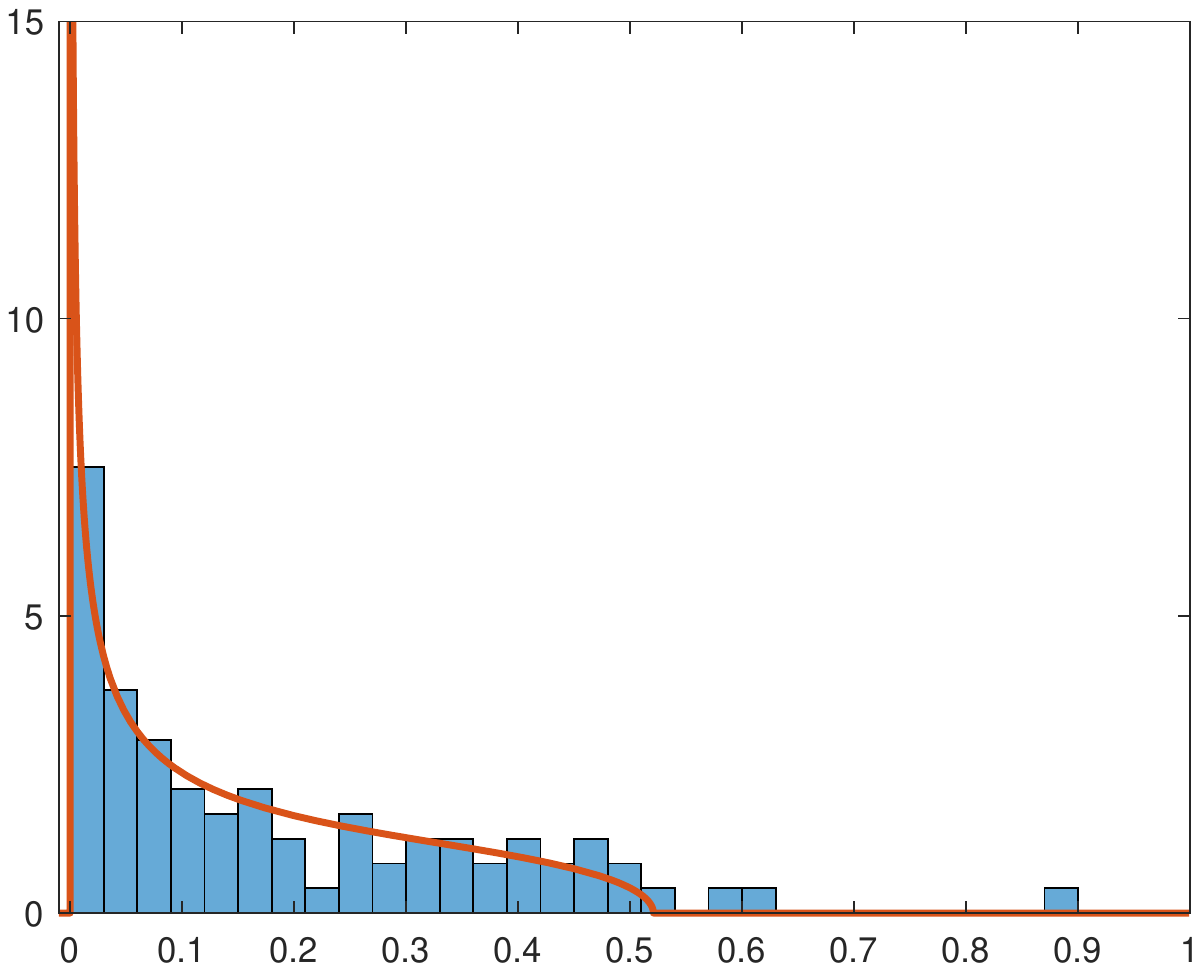}
  \caption{Squared sample canonical correlations between cyclical and non-cyclical stocks (blue columns) and Wachter distribution (orange curve).}
  \label{stocks_hist}
\end{subfigure}
\begin{subfigure}{1.0\textwidth}
\centering
  \begin{tabular}{c|c|c|c|c|c}
  \multicolumn{6}{c}{}\\
  \multicolumn{6}{c}{}\\
   & $z_{\rho}$ & $\rho^2$ & $|\rho|$ & Angle (in degrees) & Sine squared\\
  \hline
  \hline
  $1$st signal & $0.89$ & $0.84$ & $0.92$ & $10.8$ & $0.03$\\
  $2$nd signal & $0.62$ & $0.43$ & $0.65$ & $31.0$ & $0.27$\\
  $3$rd signal & $0.58$ & $0.34$ & $0.59$ & $38.2$ & $0.38$\\
  \hline
  \hline
  \multicolumn{6}{c}{}
\end{tabular}
\caption{Strength of the signals and respective vector estimation precision.}
\label{table_stocks}
\end{subfigure}
\caption{Results of CCA performed on cyclical vs.~non-cyclical stocks.}
\label{stocks_pic}
\end{figure}

We use weekly returns (which are widely believed to be uncorrelated across time) for the $80$ largest cyclical and $80$ largest non-cyclical stocks over ten years ($01.01.2010-01.01.2020$), which gives us $521$ observations across time. The time window is chosen to exclude the $2007-2008$ financial crisis and COVID-19 periods. The exact list of stocks is reported in Appendix \ref{Section_appendix_data}. Figure \ref{stocks_hist} shows the squared sample canonical correlations between the returns of the two groups of stocks. We can see that the overall shape of the histogram of the correlations matches the Wachter distribution (as discussed in Section \ref{sec_implic_basic}, this indicates that the data matches our modelling assumptions and we can use the formulas of Theorem \ref{Theorem_basic_multi}) and, at the same time, that the three largest correlations are clearly separated from the rest. The largest squared sample canonical correlation, which is close to $0.9$, represents a very strong signal, which can be interpreted as an overall market movement. In favor of this interpretation is also the fact that the majority of the coordinates in the corresponding pair of eigenvectors are positive, though the eigenvectors are not reported here for the sake of space. For the cyclicals, the sum of squares of the positive coordinates is $0.85$ out of $1$, and for the noncyclicals, it is $0.78$ out of $1$. Two other correlations represent medium-strength signals. The presence of exactly three signals is interesting for several reasons. First, it is resonant of the three factors of \citet{fama1992cross}. Second, it is in noticeable contrast to the results of a principal component analysis (PCA) applied to the same stock returns. PCA often shows few or even only one clearly distinct factor for a large group of stocks; see, e.g., \citet[20.4.2 A One-Factor Model]{potters2020first}. For our two groups of stocks, PCA indicates the presence of at most two factors---fewer than what we detect with CCA. Figure \ref{fig_PCA_stocks} in Appendix \ref{Section_appendix_data} shows the PCA eigenvalues, i.e., eigenvalues of $(\V \V^\T)/S$ and $(\U \U^\T)/S$, where $\V$ and $\U$ are the de-meaned cyclical and non-cyclical returns, respectively. We can see that the cyclicals have only one strong factor while the non-cyclicals have two. As with CCA, the largest eigenvalue in PCA is believed to be market related. In our stock returns data set, the largest factors are highly correlated across the two groups (the correlation is $0.83$) and with their CCA counterparts (the first pair of canonical variables). The two other CCA pairs of canonical variables are capturing additional information (cf.~the size and value factors in \citet{fama1992cross}) not reflected by PCA. Based on the above, we observe that CCA picks up more information than PCA by extracting additional relevant factors.

\begin{figure}[t]
\begin{subfigure}{.32\textwidth}
  \centering
  \includegraphics[width=1.0\linewidth]{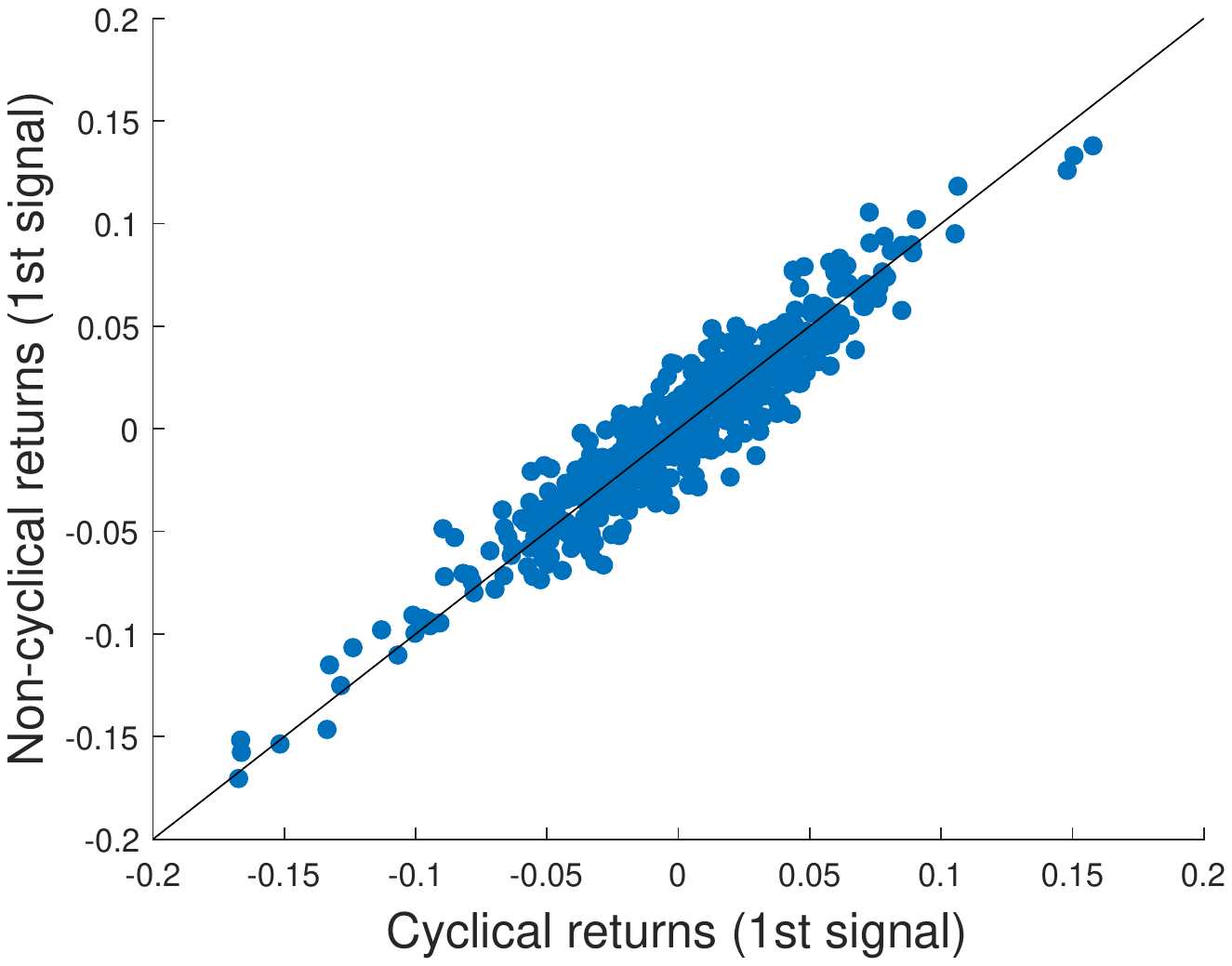}
  \caption{First signal.}
  \label{fig_scatter_stocks_s1}
\end{subfigure}%
\begin{subfigure}{.32\textwidth}
  \centering
  \includegraphics[width=1.0\linewidth]{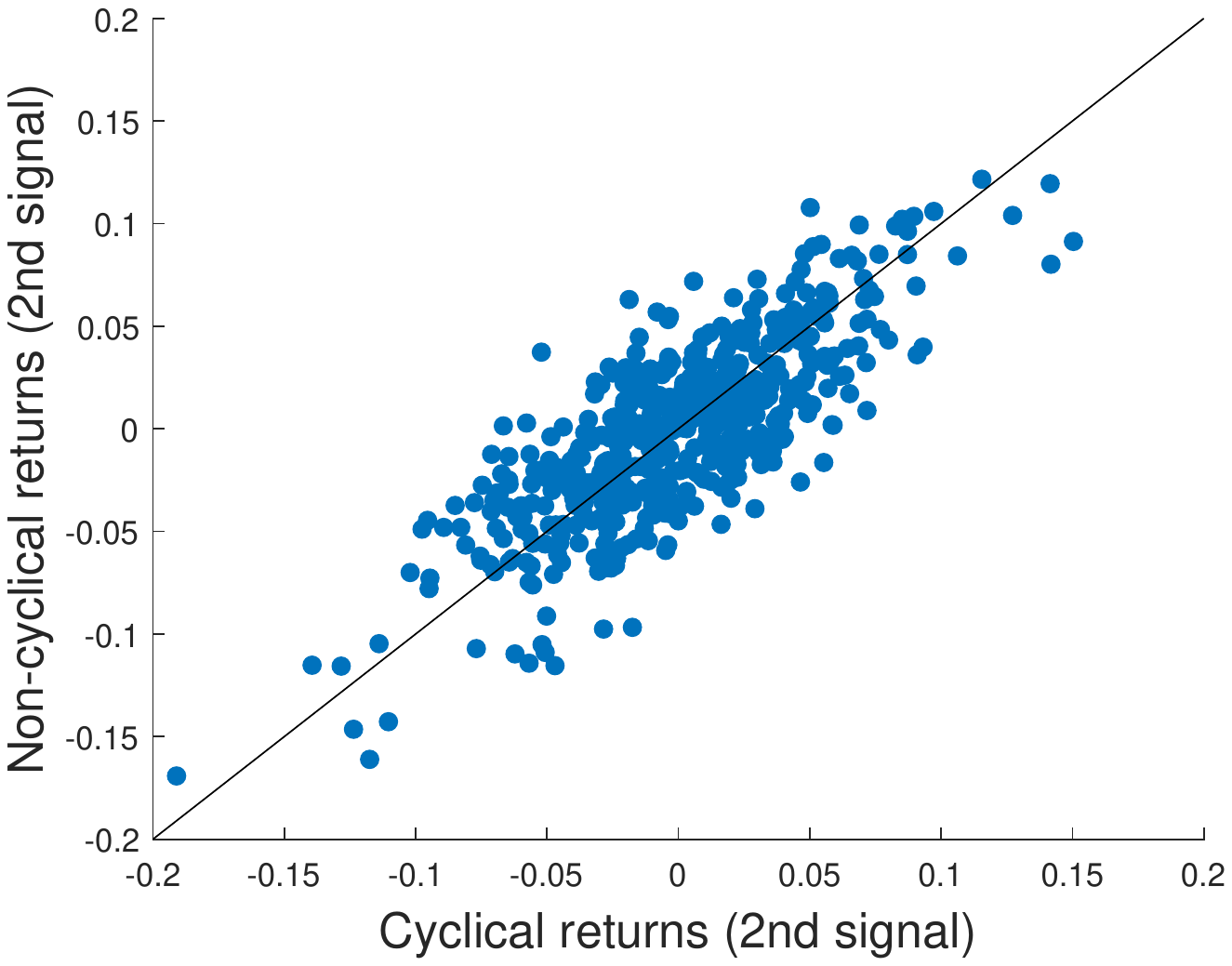}
  \caption{Second signal.}
  \label{fig_scatter_stocks_s2}
\end{subfigure}
\begin{subfigure}{.32\textwidth}
  \centering
  \includegraphics[width=1.0\linewidth]{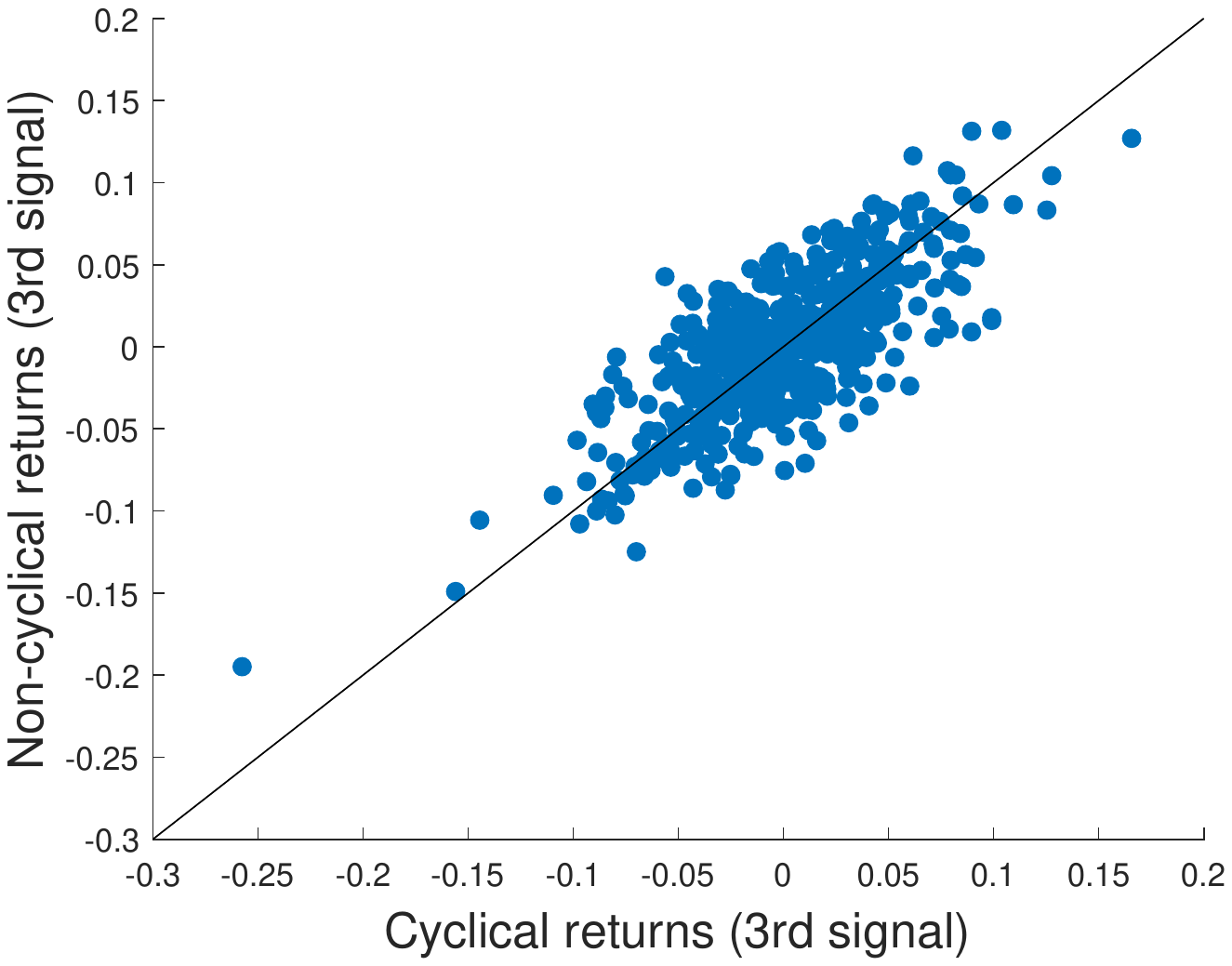}
  \caption{Third signal.}
  \label{fig_scatter_stocks_s3}
\end{subfigure}
\caption{Scatter plot of 521 points $(\V^\T \widehat \bb^i,\,\U^\T \widehat \ba^i)$, where matrix $\V$ is composed of cyclical stock returns and matrix $\U$ of non-cyclical stock returns and $i=1,2,3$ is the number of the signal. The length of $\V^\T \widehat \bb^i$ and $\U^\T \widehat \ba^i$ is normalized to $1$.}
\label{fig_scatter_stocks}
\end{figure}

The exact values of the CCA signals are reported in Table \ref{table_stocks}. The correlations and angles in Table \ref{table_stocks} are calculated based on the results of Theorem \ref{Theorem_basic_multi}. Since we have $K=M=80$, the angles are the same for the cyclicals and non-cyclicals, and we report only one angle per signal. In line with the strength of the signals, the angle between the first estimated signal and the unobserved truth is the smallest, showing precise estimation. The two other signals are weaker, leading to larger angles, i.e., large estimation error. This observation is reinforced by Figure \ref{fig_scatter_stocks}, which shows a scatter plot of the cyclical vs.~non-cyclical canonical variables.  The points in Figure \ref{fig_scatter_stocks_s1} show the tightest fit to the $45$-degree line, demonstrating a very strong correlation, while those in Figures \ref{fig_scatter_stocks_s2} and \ref{fig_scatter_stocks_s3} are more spread out. The tightness in Figures \ref{fig_scatter_stocks_s2} and \ref{fig_scatter_stocks_s3} is approximately the same, in line with the second- and third-largest eigenvalues being very close.

\subsection{The limestone grassland community data}\label{Section_empirical_limestone}

\begin{table}[t]
	\begin{tabular}{c|c|c|c}
		\hline
		 & Eigenvector  & Angle & Sine\\
        & & (in degrees) & squared\\
		\hline
		\hline
		$\widehat \ba,\,\theta_x,\,\s_x$ & $(-0.30,-0.49,-0.80,-0.08,0.15,0.09)$ & $14.21$ & $0.06$\\
        $\widehat \bb,\,\theta_y,\,\s_y$ & $(0.77,0.51,0.16,0.10,-0.10,0.15,0.12,-0.25)$ & $15.77$ & $0.07$\\
		\hline
        \hline
        \multicolumn{4}{c}{}
	\end{tabular}
	\caption{Limestone grassland community: CCA. Eigenvectors are obtained from the data, while angles are calculated based on Theorem \ref{Theorem_basic_setting}.}	
  \label{table_limestone}
\end{table}

\begin{figure}[t]
    \centering
        \includegraphics[width=0.6\textwidth]{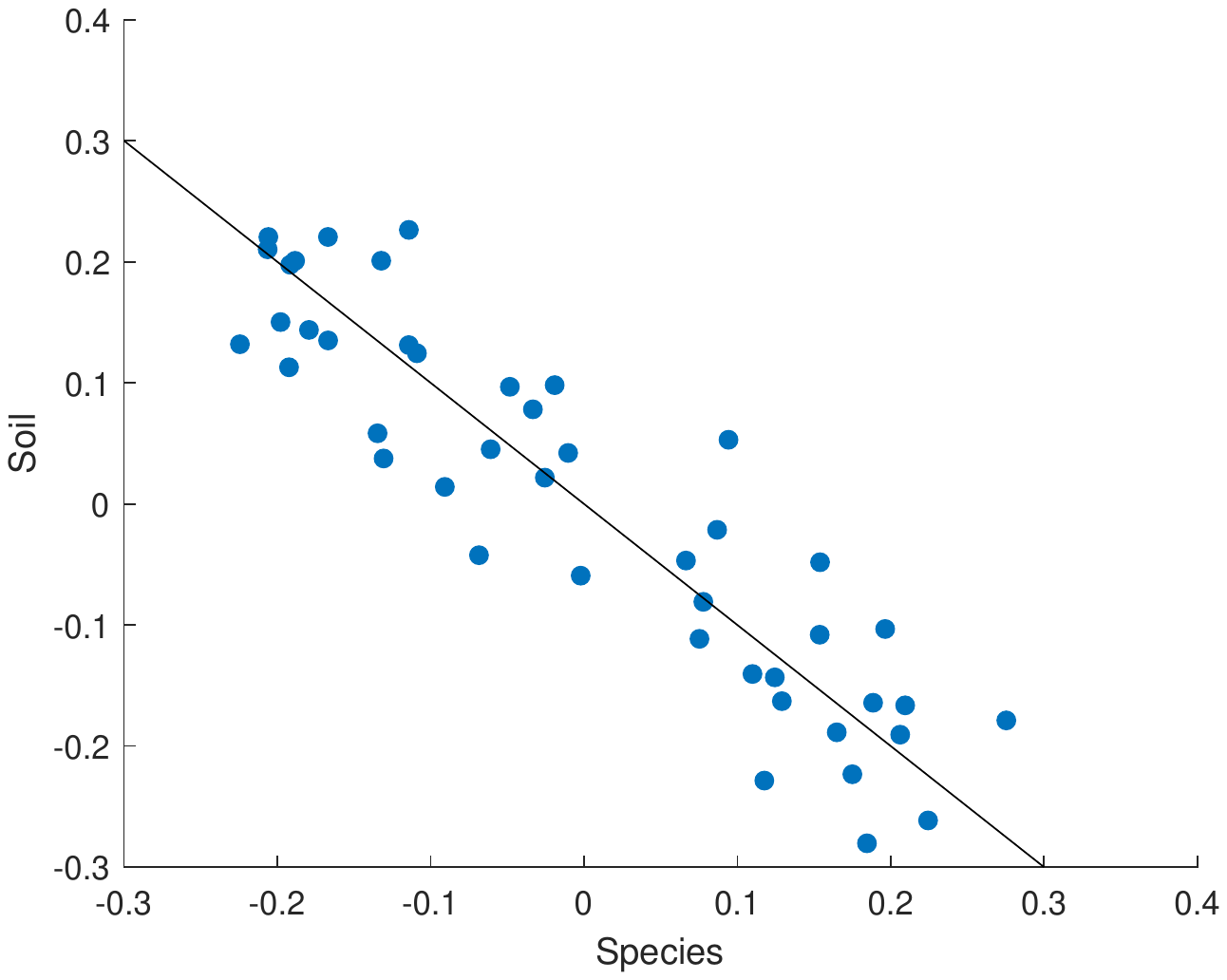}
    \caption{Scatter plot of 45 points $(\V^\T \widehat \bb,\,\U^\T \widehat \ba)$, where matrix $\V$ is composed of plant species and matrix $\U$ of soil properties. The length of $\V^\T \widehat \bb^i$ and $\U^\T \widehat \ba^i$ is normalized to $1$.} \label{fig_scatter_limestone}
\end{figure}

One of the classic data sets examined by CCA is the limestone grassland community data from Anglesey, North Wales (\citet{gittins1985canonical}); for instance, these data were discussed recently in \cite{bao2019canonical}. The goal of the CCA application in this case is to identify the relationship between several soil properties and the representation of plant species. There are $M=8$ species considered---\textit{Helictotrichon pubescens}, \textit{Trifolium pratense}, \textit{Poterium sanguisorba}, \textit{Phleum bertolonii}, \textit{Rhytidiadelphus squarrosus}, \textit{Hieracium pilosella}, \textit{Briza media}, and \textit{Thymus drucei}---and $K=6$ soil characteristics---depth ($d$), extractable phosphate ($P$), exchangeable potassium ($K$), and cross-product terms between all three soil variables ($d\times P,\, d\times K,\, P\times K$). The number of random samples is $S=45$. Although the dimensions seem small, Monte Carlo simulations (see Appendix \ref{Section_appendix_MC}) indicate that Theorem \ref{Theorem_basic_setting} gives a reasonable approximation for these values: the theoretical and simulated angles are close to each other.

To analyze the data, we first de-mean all the variables (the means are calculated across $S$-space). After that we find that the six nonzero squared canonical correlations are $0.83,0.52,0.36,0.11,0.09,0.04$, and we have $\lambda_{+}=0.53$, which indicates the presence of a rank $1$ signal. This signal has estimated strength $|\rho|=0.86$ or $\rho^2=0.75$. As can be seen from Table \ref{table_limestone}, the estimation precision is quite high; i.e., angles $\theta_x$ and $\theta_y$ are small (the angles are obtained via Theorem \ref{Theorem_basic_setting}). The estimated vectors $\widehat \ba$ and $\widehat \bb$ suggest that most of the correlation comes from the first coordinate of $\widehat \bb$ (\textit{Helictotrichon pubescens}) and the third coordinate of $\widehat \ba$ (potassium). That is, more potassium in the soil is strongly correlated with the presence of \textit{Helictotrichon pubescens}. Figure \ref{fig_scatter_limestone} shows a scatter plot of $(\U^\T \widehat \ba,\,\V^\T \widehat \bb)$. As we can see, the data points lie close to the $-45$-degree line, which indicates a correlation close to $-1$.

\section{Conclusion}
\label{Section_conclusion}

High-dimensional econometrics and statistics have been playing an increasingly prominent role in the analysis of data in social sciences (see, e.g., \citet{fan2011sparse}). One popular methods for dealing with two large data sets is canonical correlation analysis (CCA). However, little is known about the performance of CCA in the high-dimensional setting where all dimensions of data are large and comparable. This paper provides one of the first results on the precision of CCA in high dimensions. To be more precise, we show that estimation of canonical vectors is inconsistent and compute the exact degree of inconsistency in terms of angles between the true and estimated canonical variables. Consistency can be restored as one of the dimensions becomes much larger than the others.

There are various directions for future research on high-dimensional CCA. The first important generalization is to allow for full correlation across $S$ in the data, i.e., for a setting where both the signal and the noise are dependent across the $S$ dimension. This extension will make the results useful for a typical time series setting, where the coordinate $S$ represents time. One would hope that for stationary time series results in line with Theorem \ref{Theorem_master} continue to hold. An interesting application would be to then use the procedure on the FRED-MD macro time series data set (\citet{mccracken2016fred}) and look for canonical correlations between real and nominal variables. Generally, the traditional approach to searching for factors in the FRED-MD data is principal component analysis (PCA). However, this approach does not use any additional information on the type of each variable; i.e., real and nominal variables are mixed together. CCA applied to a subset of real variables vs.~a subset of nominal variables can potentially provide another ``common'' factor.

Another noteworthy area of interest concerns deriving the asymptotic distribution of the angles $\theta_x$ and $\theta_y$---this would require developing a version of a central limit theorem. An asymptotic distribution of the angles can be required for various procedures that use angles as an input (such as confidence interval construction, the delta method, and simulation of bootstrap standard errors). 

Finally, a more refined analysis of what is happening near the identification boundary $\rho^2> \rho_c^2$ in the spirit of local to unit root limit theory is of theoretical interest. We expect that novel equations involving random functions should appear in the process of answering this question.

\appendix

\section{The master equation}
\label{Section_appendix_master}

The proofs of all our theorems are based on \emph{the master equation} --- an exact equation satisfied by the canonical correlations and corresponding variables. For illustration purpose, we start by presenting the main ideas for developing such an equation in a simpler setting of principal component analysis (PCA) in Appendix \ref{Section_PCA}. We then proceed to the canonical correlation analysis (CCA) situation of our main interest in Appendix \ref{Section_CCA_master}.  Appendix \ref{Section_appendix_asymptotics} analyzes these equations as the dimensions go to infinity jointly and proportionally.

\subsection{PCA master equation}
\label{Section_PCA}

Suppose that we are given $N\times S$ matrix $\U$, in which the rows are indexed by $i=0,1,\dots,N-1$ and $S\ge N$. We treat the $0$--th row as ``signal'' and the remaining $N-1$ rows as ``noise''. Let $\widetilde \U$ denote the latter, i.e., it is the $(N-1)\times S$ matrix formed by rows $i=1,2,\dots,N-1$ of $\U$. As for the former, we let $\lambda_*$ denote the length of the zeroth row of $\U$ (treated as a $S$--dimensional vector) and let $\u^*$ be the unit vector in the direction of this row.

We would like to connect the singular values and singular vectors of $\U$ to the triplet: singular values and vectors of $\widetilde \U$;  $\lambda_*$; and $\u^*$. The goal is to view this connection through the lenses of reconstruction of $\lambda_*$ and $\u^*$ by the singular values and vectors of $\U$.

Because the singular values are invariant under orthogonal transformations in $N$--dimensional space, the fact that ``signal'' is the $0$th row of $\U$ is not important: it could have been any other row or their linear combination. Rephrasing, we would like to understand how the singular value decomposition of $\U$ distinguishes the signal vector from the background noise.\footnote{A related, but slightly different setup is to take a $N\times S$ noise matrix $\mathbf{X}$ and rank $1$ signal matrix $\mathbf{P}$ of the same size, set $\U=\mathbf{X}+\mathbf{P}$ and aim to identify $P$ from singular values (and vectors) of $\U$. Applying orthogonal transformation, one can again assume that $\mathbf{P}$ has only first nonzero row, however, this time the first row of $\U$ is not pure signal, but is also contaminated by the noise coming from the first row of $\mathbf{X}$. Therefore, in this setup the noise has two effects: it contaminates $\mathbf{P}$ by simple addition and then plays a role in computation of singular values. In contrast, in our setup we distinguish the two effects and only look into the second one (because the role of the first one is quite straightforward).}

We let $(\v_i,\u_i, \lambda_i)$, $1\le i \le N-1$, be the left singular vector (of $(N-1)\times 1$ dimensions), right singular vector (of $S\times 1$ dimensions), singular value triplets for $\widetilde \U$, which means that
$$
 \widetilde \U=\begin{pmatrix} \v_1; \v_2; \dots; \v_{N-1} \end{pmatrix} \begin{pmatrix} \lambda_1& 0 & \dots \\ 0 & \lambda_2 & 0 & \\ & 0 & \ddots\\ & & 0 &\lambda_{N-1}\end{pmatrix} \begin{pmatrix} \u_1^\T\\ \u_2^\T \\ \vdots \\ \u_{N-1}^\T\end{pmatrix}=\sum_{i=1}^{N-1} \lambda_i \v_i \u_i^\T
$$
and $\langle \u_i, \u_j\rangle=\delta_{i=j}$, $\langle \v_i, \v_j\rangle=\delta_{i=j}$. We order the singular values so that $ \lambda_1 \ge \lambda_{2}\ge \dots \ge \lambda_{N-1}\ge 0$.

Next, let $\widehat \ba$ be a left singular vector of $\U$. We represent it through coordinates $(\alpha_0,\alpha_1,\dots,\alpha_N)$ in the orthonormal basis
\begin{equation} \label{eq_ortho_basis}
\begin{pmatrix} 1\\ 0^{N-1}\end{pmatrix}, \quad \begin{pmatrix} 0\\ \v_i\end{pmatrix},\, 1\le i \le N-1,
\end{equation}
of $N$--dimensional space and normalize by $\sum_{i=0}^{N-1} \alpha_i^2=1$.
\begin{proposition} \label{Proposition_PCA_master}Suppose that $\widehat \ba=(\alpha_0,\alpha_1,\dots,\alpha_{N-1})$ is a left singular vector of $\U$ in basis \eqref{eq_ortho_basis} and with a squared singular value $a\ge 0$. Then
\begin{equation}
\label{eq_PCA_master_2}
 \lambda_*^2  =a \left[1+ \sum_{i=1}^{N-1} \frac{\lambda_i^2 \langle \u^*, \u_i\rangle^2}{a-\lambda_i^2}\right]^{-1}, \qquad  \alpha_0^2= \left[ 1+ a \dfrac{\sum_{i=1}^{N-1} \frac{\lambda_i^2 \langle \u^*, \u_i\rangle^2}{(a-\lambda_i^2)^2}}{1+\sum_{i=1}^{N-1} \frac{\lambda_i^2 \langle \u^*, \u_i\rangle^2}{a-\lambda_i^2}}\right]^{-1}.
\end{equation}
\end{proposition}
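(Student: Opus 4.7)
The plan is to derive both formulas directly from the eigenvalue equation $\U\U^\T \widehat\ba = a \widehat\ba$, rewritten in the orthonormal basis \eqref{eq_ortho_basis}. The main task is to identify the matrix entries of $\U\U^\T$ in this basis; everything else is elementary algebra.

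First I would decompose $\U$ as a block with the signal row on top:
\begin{equation*}
 \U = \begin{pmatrix} \lambda_* (\u^*)^\T \\ \widetilde\U \end{pmatrix}, \qquad \text{so} \qquad \U\U^\T = \begin{pmatrix} \lambda_*^2 & \lambda_* (\u^*)^\T \widetilde\U^\T \\ \lambda_* \widetilde\U \u^* & \widetilde\U \widetilde\U^\T\end{pmatrix}.
\end{equation*}
Using the singular value decomposition of $\widetilde\U$, one gets $\widetilde\U \widetilde\U^\T = \sum_i \lambda_i^2 \v_i \v_i^\T$, which is diagonal in the $\{\v_i\}$ basis with entries $\lambda_i^2$. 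Similarly, $\widetilde\U \u^* = \sum_i \lambda_i \langle \u^*,\u_i\rangle \v_i$, so the coupling between the signal coordinate and the $i$-th noise coordinate in the chosen basis is exactly $\lambda_* \lambda_i \langle \u^*,\u_i\rangle$. Hence $\U\U^\T$ has a simple arrow shape in basis \eqref{eq_ortho_basis}: diagonal entries $\lambda_*^2, \lambda_1^2, \dots, \lambda_{N-1}^2$, off-diagonal zeros except for the first row/column, whose entries are $\lambda_* \lambda_i \langle \u^*, \u_i\rangle$.

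Next I would write out $\U\U^\T \widehat\ba = a \widehat\ba$ coordinate-wise. Row $i \ge 1$ yields $\lambda_* \lambda_i \langle \u^*, \u_i\rangle \alpha_0 + \lambda_i^2 \alpha_i = a \alpha_i$, which (assuming $a \ne \lambda_i^2$) gives
\begin{equation*}
 \alpha_i = \frac{\lambda_* \lambda_i \langle \u^*, \u_i\rangle}{a - \lambda_i^2} \alpha_0.
\end{equation*}
Substituting into row $0$, namely $\lambda_*^2 \alpha_0 + \sum_{i\ge 1} \lambda_* \lambda_i \langle \u^*, \u_i\rangle \alpha_i = a \alpha_0$, and cancelling $\alpha_0$ (the case $\alpha_0=0$ is handled separately and gives a pre-existing singular value of $\widetilde\U$, outside the regime of interest) delivers
\begin{equation*}
 \lambda_*^2 \left[ 1 + \sum_{i=1}^{N-1} \frac{\lambda_i^2 \langle \u^*, \u_i\rangle^2}{a - \lambda_i^2}\right] = a,
\end{equation*}
which is the first equation of \eqref{eq_PCA_master_2}.

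For the second equation I would plug the expressions for $\alpha_i$ in terms of $\alpha_0$ into the normalization $\sum_{i=0}^{N-1} \alpha_i^2 = 1$, obtaining $\alpha_0^{-2} = 1 + \lambda_*^2 \sum_{i\ge 1} \frac{\lambda_i^2 \langle \u^*, \u_i\rangle^2}{(a-\lambda_i^2)^2}$, and then replace $\lambda_*^2$ using the first equation. The second formula in \eqref{eq_PCA_master_2} follows immediately. There is no real obstacle here — the work is entirely bookkeeping; the only subtle point is noting that $a = \lambda_i^2$ corresponds to a degenerate case in which $\widehat\ba$ is orthogonal to the signal direction, which does not contradict the proposition (the reciprocal formulas should be interpreted as limits, or one restricts to $a$ away from the $\lambda_i^2$).
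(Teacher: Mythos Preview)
Your proof is correct and follows essentially the same route as the paper. The only cosmetic difference is the entry point: the paper sets up the problem variationally (critical points of $\|\U^\T\widehat\ba\|^2$ on the unit sphere, then Lagrange multipliers) before remarking that the resulting system is exactly the eigenvalue equation $\U\U^\T\widehat\ba=a\widehat\ba$ in basis \eqref{eq_ortho_basis}, whereas you go straight to that eigenvalue equation via the block/arrow structure of $\U\U^\T$. From that point on the two arguments are identical: solve for $\alpha_i$ in terms of $\alpha_0$, substitute into the $0$th row to get the first formula, and then into the normalization to get the second.
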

\begin{proof}
Each left singular vector $\widehat \ba$ of $\U$ is an $N\times 1$--dimensional column-vector of unit length, which is an eigenvector of $\U \U^\T$; equivalently, this is one of the critical points (on the space of unit vectors) for the quadratic form
$$
 \widehat \ba \mapsto f(\widehat \ba)=\langle \widehat \ba, \U \U^\T \widehat \ba \rangle= \| \U^\T \widehat \ba\|^2,
$$
with the eigenvector of largest eigenvalue corresponding to the maximum of this quadratic form.  Then we have
$$
 f(\widehat \ba)=\left\|\alpha_0 \lambda_* \u^* + \sum_{i=1}^{N-1} \alpha_i \lambda_i \u_i\right\|^2= \alpha_0^2 \lambda_*^2 + 2\alpha_0 \lambda_* \sum_{i=1}^{N-1} \alpha_i \lambda_i \langle \u^*, \u_i\rangle +\sum_{i=1}^{N-1} \alpha_i^2 \lambda_i^2.
$$
We aim to find critical points of $f(\widehat \ba)$ subject to the constraint $\|\widehat \ba\|^2=\sum_{i=0}^{N-1} \alpha_i^2=1$. Thus, we introduce the Lagrange multiplier $a$ and find critical points of the function
$$
 g(\widehat \ba,a)=f(\widehat \ba)-a \left( \sum_{i=0}^{N-1} \alpha_i^2 -1\right).
$$
Taking derivatives with respect to $\alpha$, we get:
\begin{equation}
\label{eq_PCA_equations}
\begin{dcases}
 0=\frac{\partial g}{\partial \alpha_i}= 2\alpha_0 \lambda_* \lambda_i \langle \u^*,\u_i\rangle+2\alpha_i \lambda_i^2 - 2a \alpha_i, & 1\le i \le N-1,\\ 0 =\frac{\partial g}{\partial \alpha_0}=2\alpha_0 \lambda_*^2 + 2\sum_{i=1}^{N-1} \alpha_i \lambda_* \lambda_i \langle \u^*, \u_i\rangle - 2 a \alpha_0.
\end{dcases}
\end{equation}
The equations \eqref{eq_PCA_equations} are supplemented by the normalization condition $\sum_{i=0}^{N-1} \alpha_i^2=1$.
Note that equations \eqref{eq_PCA_equations} equivalently express the fact that $\widehat \ba$ is an eigenvector with eigenvalue $a$ for the matrix $\U \U^\T$ written in the orthonormal basis \eqref{eq_ortho_basis}. From this interpretation we see that the values of $a$, for which \eqref{eq_PCA_equations} has a solution, are precisely eigenvalues of $\U \U^\T$, i.e., squared singular values of $\U$. On the other hand, \eqref{eq_PCA_equations} can be also solved directly. Indeed, the first set of $N-1$ equations leads to
\begin{equation}
\label{eq_x1}
 \alpha_i= \alpha_0 \frac{\lambda_* \lambda_i \langle \u^*, \u_i\rangle}{a-\lambda_i^2}, \qquad 1\le i \le N-1.
\end{equation}
Plugging the expressions for $\alpha_i$ into the last equation of \eqref{eq_PCA_equations}, we get
$$
 2\alpha_0 \left( \lambda_*^2 + \lambda_*^2 \sum_{i=1}^{N-1} \frac{\lambda_i^2 \langle \u^*, \u_i\rangle^2}{a-\lambda_i^2} -a \right)=0.
$$
If $\alpha_0=0$, then so are all $\alpha_i$ through \eqref{eq_x1}, which contradicts the $\sum_{i=0}^{N-1} \alpha_i^2=1$ normalization. Hence, we can divide by $2\alpha_0$, arriving at the final equation for $a$:
\begin{equation}
\label{eq_PCA_master}
 \lambda_*^2  \left(1+ \sum_{i=1}^{N-1} \frac{\lambda_i^2 \langle \u^*, \u_i\rangle^2}{a-\lambda_i^2}\right)=a,
\end{equation}
which is equivalent to the first equation in \eqref{eq_PCA_master_2}.

In order to compute the corresponding value of $\alpha_0$, which is the cosine of the angle between $\widehat \ba$ and the signal direction $(1,0^{N-1})$, plug \eqref{eq_x1} into the normalization condition $\sum_{i=0}^{N-1} \alpha_i^2=1$  to get
\begin{equation}
\label{eq_vector_1}
 \alpha_0^2 \left( 1+ \lambda_*^2 \sum_{i=1}^{N-1} \frac{\lambda_i^2 \langle \u^*, \u_i\rangle^2}{(a-\lambda_i^2)^2} \right)=1
\end{equation}
Plugging $\lambda_*^2$ from the first equation in \eqref{eq_PCA_master_2}, we rewrite \eqref{eq_vector_1} as the second equation of  \eqref{eq_PCA_master_2}.
\end{proof}

Here is a brief qualitative analysis of the formulas \eqref{eq_PCA_master_2}.
First, let us treat the first equation in \eqref{eq_PCA_master_2} in the form \eqref{eq_PCA_master} as an equation on $a$, assuming that  the values of $\lambda_*$, $\lambda_1,\dots, \lambda_{N-1}$ and $\langle \u^*, \u_1\rangle$, \dots, $\langle \u^*, \u_{N-1}\rangle$ are known. Let us also assume, for simplicity, that all $\lambda_i$ are distinct\footnote{If $\lambda_i=\lambda_{i+1}$, then an additional singular value $a=\lambda_i^2$ appears.}. After multiplying by $\prod_{i=1}^{N-1} (a-\lambda_i^2)$, the  equation  \eqref{eq_PCA_master} becomes a degree $N$ polynomial equation on $a$; hence, it has $N$ complex roots. For each $i=1,2,\dots,N-2$, the difference between right-hand side and left-hand side of \eqref{eq_PCA_master} continuously varies  from $+\infty$ to $-\infty$ on $(\lambda_{i+1}^2, \lambda_{i}^2)$ segment; hence there is one root of the equation on each such segment. In addition, by the same sign change argument there is one more root on $[0,\lambda_{N-1})$ and one more root on $(\lambda_1,+\infty)$. We are mostly interested in the latter two values of $a$, because one can easily distinguish a separated largest/smallest singular value from others, while doing the same for singular values in the bulk of spectrum is challenging.

Next, let us treat the formulas \eqref{eq_PCA_master_2} as a parameterization of $\lambda_*^2$ and $\alpha_0^2$ by the value of $a$. When we work with data, we know $a$: this is one of the squared singular values of the matrix $\U$. Therefore, it is reasonable to ask what information on (unknown to us) $\lambda_*$ and $\alpha_0$ it provides. There are two important regimes here:
\begin{enumerate}
 \item If $a$ is large, then $\lambda_*^2=a+O(1)$ is also large, while $\alpha_0^2=1+O(\tfrac{1}{a})$ approaches $1$.
 \item If $a$ is close to $0$, then $\lambda_*^2=O(a)$ is also close to $0$, while $\alpha_0^2=1+O(a)$ approaches $1$.
\end{enumerate}
On the other hand, for the intermediate values of $a$ and $\lambda_*^2$, the value of $\alpha_0^2$ is typically quite far away from $1$. The conclusion is that we can effectively distinguish the signal vector (the zeroth row of $\U$), if its length is either very small or very large, as compared to the values of $\{\lambda_i\}_{i=1}^{N}$, which are singular values of the matrix $\widetilde \U$ formed by the remaining $N$ rows of $\U$. Recall that $\widetilde \U$ represents the noise part; it is, perhaps, strange to ask the noise to be large enough, and much more natural to assume that the noise is small enough. Hence, in practice, the situation of interest is when the length $\lambda_*$ of the signal vector is large (rather than small) compared to the magnitude of the noise.

\subsection{CCA master equation}
\label{Section_CCA_master}
We proceed to the canonical correlations analysis setting. Suppose that we are given two subspaces, $\widetilde \U$ and $\widetilde \V$ in $S$--dimensional space and let $\dim(\widetilde \V)={M-1}\ge {K-1}=\dim(\widetilde \U)$. In addition, we have two vectors $\u^*$ and $\v^*$, which we add to spaces $\widetilde \U$ and $\widetilde \V$. Define
$$
 \U=\mathrm{span}(\u^*, \widetilde \U), \qquad \V=\mathrm{span}(\v^*, \widetilde \V).
$$
Our task is to reconstruct the vectors $\u^*$ and $\v^*$ inside the spaces $\U$ and $\V$, respectively, by analyzing the canonical correlations between $\U$ and $\V$.

\smallskip

We start by bringing the pair of subspaces $\widetilde \U$ and $\widetilde \V$ to the canonical form. The following definitions of canonical correlations and variables can already be found in \citet{harold1936relations}, see bottom of page 330 there. We also refer to the textbook \citet[Chapter 12]{anderson1958introduction}.

\begin{lemma}\label{Lemma_canonical_bases}
 Suppose that $S> M \ge K$ and let $\widetilde \U$ and $\widetilde \V$ be ${K-1}$-dimensional and ${M-1}$-dimensional subspaces of $S$ dimensional space, respectively. Then there exist two orthonormal bases: vectors $\u_1, \u_2, \dots, \u_{K-1}$ span $\widetilde \U$ and vectors $\v_1,\dots, \v_{M-1}$ span $\widetilde \V$ -- such that for all meaningful indices $i$ and $j$:
\begin{equation}
\label{eq_scalar_products_table}
 \langle \u_i, \u_j\rangle = \delta_{i=j}, \qquad \langle \v_i, \v_j\rangle = \delta_{i=j},\qquad \langle \u_i, \v_j\rangle = c_i \delta_{i=j},
\end{equation}
where  $1\ge c_1\ge c_2 \ge \dots \ge c_{K-1}\ge 0$.
\end{lemma}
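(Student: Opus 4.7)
The plan is to reduce the statement to an ordinary singular value decomposition applied to a certain Gram-type matrix built from arbitrary orthonormal bases of the two subspaces, and then read off the desired bases from the singular vectors.

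First, pick any orthonormal basis $\u'_1,\dots,\u'_{K-1}$ of $\widetilde\U$ and any orthonormal basis $\v'_1,\dots,\v'_{M-1}$ of $\widetilde\V$, and assemble them as the columns of $S\times(K-1)$ and $S\times(M-1)$ matrices $P$ and $Q$ (so $P^\T P=I_{K-1}$ and $Q^\T Q=I_{M-1}$). Consider the $(K-1)\times(M-1)$ matrix of pairwise inner products $M:=P^\T Q$, whose entries are exactly $\langle \u'_i,\v'_j\rangle$. The key observation is that a change of orthonormal basis inside $\widetilde\U$ corresponds to multiplying $P$ on the right by an orthogonal $(K-1)\times(K-1)$ matrix $U$, and likewise for $\widetilde\V$ with an orthogonal $(M-1)\times(M-1)$ matrix $V$; under these changes $M$ transforms as $M\mapsto U^\T M V$. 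Hence the question reduces to finding orthogonal $U,V$ such that $U^\T M V$ is diagonal with a nonnegative, weakly decreasing diagonal — this is precisely the singular value decomposition of $M$.

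Apply the SVD to write $M=U\Sigma V^\T$ where $U$ and $V$ are orthogonal of the appropriate sizes and $\Sigma$ is a $(K-1)\times(M-1)$ matrix with nonnegative entries $c_1\ge c_2\ge\dots\ge c_{K-1}\ge 0$ on the main diagonal and zero elsewhere. Define the new bases by taking $\u_i$ to be the $i$th column of $PU$ and $\v_j$ to be the $j$th column of $QV$. Orthonormality of these new bases is immediate from $(PU)^\T(PU)=U^\T P^\T P U=U^\T U=I_{K-1}$ and analogously for $V$, which verifies the first two identities in \eqref{eq_scalar_products_table}. The cross-products satisfy $(PU)^\T(QV)=U^\T P^\T Q V=U^\T M V=\Sigma$, giving $\langle \u_i,\v_j\rangle=c_i\delta_{i=j}$ as required.

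It remains to verify $c_i\le 1$. This is simply the Cauchy--Schwarz inequality applied to unit vectors: $c_i=\langle \u_i,\v_i\rangle\le \|\u_i\|\,\|\v_i\|=1$. The main (minor) obstacle is only notational — making sure the rectangular SVD conventions and the inequality $K-1\le M-1$ are handled cleanly so that $\Sigma$ has exactly $K-1$ singular values in the required positions, and noting that the sign freedom in SVD columns can be used to ensure the $c_i$ are nonnegative. No restriction from the hypothesis $S>M$ is actually needed for the lemma itself (the SVD argument works as long as the subspaces live in a common ambient space), but it ensures that the two subspaces together are not forced to contain redundant dimensions, which is the natural regime for the subsequent development.
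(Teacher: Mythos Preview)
Your proof is correct and is the standard SVD-based argument for the existence of canonical bases. The paper does not actually supply its own proof of this lemma: it simply cites \citet{harold1936relations} and \citet[Chapter 12]{anderson1958introduction} as references for the construction, and then records two equivalent characterizations (via critical points of $\langle\u,\v\rangle$ and via eigenvalues of $P_{\widetilde\U}P_{\widetilde\V}P_{\widetilde\U}$) without deriving them. Your argument fills this gap cleanly and is exactly the kind of proof those references contain.
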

The numbers $c_1\ge c_2 \ge \dots \ge c_{K-1}$ are called canonical correlation coefficients between subspaces $\widetilde \U$ and $\widetilde \V$; they are also cosines of the canonical angles between the subspaces. The vectors $\u_i$, $1\le i \le {K-1}$, and $\v_j$, $1\le k\le {M-1}$, are called canonical variables; they split into ${K-1}$ pairs $(\u_i,\v_i)$ and $M-K$ singletons $v_j$, $j\ge K$. In some of the following formulas we also use $c_j$ with $j\ge K$ under the convention:
\begin{equation}
\label{eq_can_cor_convention}
 c_{K}=c_{{K-1}+2}=\dots=c_{M-1}=0.
\end{equation}

There are two equivalent ways to find the canonical correlations and variables:
\begin{itemize}
 \item If we consider a function $f(\u,\v)=\langle \u, \v\rangle$, in which $\u$ varies over all unit vectors in $\widetilde \U$ and $\v$ varies over all unit vectors in $\widetilde \V$, then $(\u_i,\v_i,c_i)$, $1\le i \le {K-1}$, are critical points of $f$ and corresponding values of $f$. In particular, $c_1$ is the maximum of $f$, which is achieved at $(\u_1,\v_1)$. The remaining vectors $\v_j$, $j>{K-1}$ are an arbitrary orthogonal basis in the part of $\widetilde \V$ orthogonal to all vectors of $\widetilde \U$ and $\v_1,\dots,\v_{K-1}$.
 \item For a subspace $W$, let us denote through $P_W$ the orthogonal projector on $W$. Then nonzero $c_i^2$, $1\le i \le {K-1}$ are nonzero eigenvalues of $P_{\widetilde \U}P_{\widetilde \V} P_{\widetilde \U}$ and $\u_i$ are corresponding eigenvectors. Simultaneously, nonzero $c_i^2$ are eigenvalues of $P_{\widetilde \V}P_{\widetilde \U} P_{\widetilde \V}$. If we identify the spaces $\widetilde \U$ and $\widetilde \V$ with spans of rows of $(K-1)\times S$ and $(M-1)\times S$ matrices, then this is equivalent\footnote{One can move $\widetilde \U^\T$ from the right to the left in $(\widetilde \U \widetilde \U^\T)^{-1} \widetilde \U \widetilde \V^\T (\widetilde \V \widetilde \V^\T)^{-1} \widetilde \V \widetilde \U^\T$ without changing the eigenvalues. We get the matrix  $P_{\widetilde \U} P_{\widetilde \V}$, which is the same as $P_{\widetilde \U} P_{\widetilde \U} P_{\widetilde \V}$ and has the same eigenvalues as $P_{\widetilde \U} P_{\widetilde \V} P_{\widetilde \U}$. }  to Definition \ref{Definition_sample_setting}.
\end{itemize}

Throughout the rest of this section we use the bases of Lemma \ref{Lemma_canonical_bases}.

In exactly the same fashion we can also define the canonical correlation coefficients and corresponding variables for the spaces $\U$ and $\V$. The next theorem connects them to $\u^*$, $\v^*$ and the data of Lemma \ref{Lemma_canonical_bases}

Take two vectors $\widehat \ba=(\alpha_0,\alpha_1,\dots,\alpha_{K-1})$ and $\widehat \bb=(\beta_0,\beta_1,\dots,\beta_{M-1})$, such that $ \alpha_0\u^*+\sum_{i=1}^{K-1} \alpha_i \u_i$ and $\beta_0 \v^*+\sum_{j=1}^{M-1} \beta_j \v_j$ is a pair of canonical variables for $\U$ and $\V$. We normalize the vectors so that
\begin{equation}
\label{eq_normalization}
 \left\|\alpha_0 \u^*+ \sum_{i=1}^{K-1} \alpha_i \u_i\right\|^2=\left\|\beta_0 \v^*+ \sum_{j=1}^{M-1} \beta_j \v_j\right\|^2=1.
\end{equation}

The next theorem is a direct analogue of Proposition \ref{Proposition_PCA_master} in the CCA setting.

\begin{theorem} \label{Theorem_master_equation}
Take two vectors $\widehat \ba$, $\widehat \bb$ normalized as in \eqref{eq_normalization} and such that $\alpha_0 \u^*+  \sum_{i=1}^{K-1} \alpha_i \u_i$ and $\beta_0 \v^*+\sum_{j=1}^{M-1} \beta_j \v_j$ is a pair of canonical variables for $\U$ and $\V$ with squared canonical correlation coefficient $z=\left\langle \alpha_0 \u^*+ \sum_{i=1}^{K-1} \alpha_i \u_i,\, \beta_0 \v^* +\sum_{j=1}^{M-1} \beta_j \v_j\right\rangle^2$. Then we have
\begin{multline}
\label{eq_CCA_master}
 \left[\langle \u^*, \v^*\rangle + \sum_{j=1}^{M-1}  \frac{\langle\u^*, \v_j\rangle(c_j \langle \v^*, \u_j\rangle -z \langle \v^*, \v_j\rangle)}{z-c_j^2} -z  \sum_{i=1}^{K-1}    \frac{\langle\u^*, \u_i\rangle(\langle \v^*, \u_i\rangle - c_i \langle \v^*, \v_i\rangle)}{z - c_i^2}   \right]^2\\= z \left[-\langle\u^*,\u^*\rangle+\sum_{j=1}^{M-1}\frac{  \langle \u^*, \v_j \rangle^2 -2 c_j  \langle\u^*, \v_j\rangle  \langle \u^*, \u_j\rangle }{z-c_j^2} +z\sum_{i=1}^{K-1}   \frac{\langle \u^*, \u_i\rangle^2}{z - c_i^2} \right] \\ \times \left[-\langle\v^*,\v^*\rangle+\sum_{i=1}^{K-1}  \frac{\langle \v^*, \u_i\rangle^2 - 2  c_i \langle \v^*, \u_i\rangle \langle \v^*, \v_i\rangle}{z -c_i^2} +z\sum_{j=1}^{M-1} \frac{\langle\v^*,\v_j\rangle^2}{z - c_j^2}\right],
\end{multline}
where we used the convention \eqref{eq_can_cor_convention}, and
\begin{align}
\notag \frac{1}{\alpha_0^2}= \langle\u^*,\u^*\rangle &+ 2 \sum_{i=1}^{K-1}  \frac{\langle \u^*, \u_i\rangle}{z - c_i^2} \Bigl[ c_i \langle \u^*, \v_i\rangle -z \langle \u^*, \u_i\rangle -z \mathfrak Q_\alpha(z) \bigl(\langle \v^*, \u_i\rangle - c_i \langle \v^*, \v_i\rangle\bigr)\Bigr]\\ & +\sum_{i=1}^{K-1} \frac{1}{(z - c_i^2)^2} \Bigl[ c_i \langle \u^*, \v_i\rangle -z \langle \u^*, \u_i\rangle -z \mathfrak Q_\alpha(z) \bigl(\langle \v^*, \u_i\rangle - c_i \langle \v^*, \v_i\rangle\bigr)\Bigr]^2, \label{eq_CCA_alpha}
\end{align}
\begin{align}
\notag \frac{1}{\beta_0^2}= \langle\v^*,\v^*\rangle &+ 2 \sum_{j=1}^{M-1} \frac{\langle \v^*, \v_j\rangle }{z-c_j^2} \Bigl[-z \mathfrak Q_\beta(z) \bigl( \langle \u^*, \v_j \rangle -c_j \langle \u^*, \u_j\rangle \bigr)+c_j \langle \v^*, \u_j\rangle -z \langle \v^*, \v_j\rangle\Bigr] \\  &+\sum_{j=1}^{M-1}  \frac{1}{(z-c_j^2)^2} \Bigl[-z \mathfrak Q_\beta(z) \bigl( \langle \u^*, \v_j \rangle -c_j \langle \u^*, \u_j\rangle \bigr)+c_j \langle \v^*, \u_j\rangle -z \langle \v^*, \v_j\rangle\Bigr]^2, \label{eq_CCA_beta}
\end{align}
and
\begin{multline}
\label{eq_CCA_cos_a}
  \left\langle \u^*,\,  \alpha_0 \u^*+ \sum_{i=1}^{K-1} \alpha_i \u_i \right\rangle\\=\alpha_0 \Biggl(\langle \u^*,\u^*\rangle+ \sum_{i=1}^{K-1} \frac{\langle \u^*,\u_i\rangle}{z - c_i^2} \Bigl[ c_i \langle \u^*, \v_i\rangle -z \langle \u^*, \u_i\rangle -z \mathfrak Q_\alpha(z) \bigl(\langle \v^*, \u_i\rangle - c_i \langle \v^*, \v_i\rangle\bigr)\Bigr]\Biggr),
\end{multline}
\begin{multline}
\label{eq_CCA_cos_b}
   \left\langle \v^*,\,  \beta_0\v^*+\sum_{j=1}^{M-1} \beta_i \v_i \right\rangle\\=\beta_0 \Biggl(\langle \v^*,\v^*\rangle+ \sum_{j=1}^{M-1}\frac{\langle \v^*,\v_j\rangle}{z-c_j^2} \Bigl[-z \mathfrak Q_\beta(z) \bigl( \langle \u^*, \v_j \rangle -c_j \langle \u^*, \u_j\rangle \bigr)+c_j \langle \v^*, \u_j\rangle -z \langle \v^*, \v_j\rangle\Bigr]\Biggr),
\end{multline}
where
\begin{equation}
\label{eq_CCA_alpha_Q} \mathfrak Q_\alpha(z):= \frac{-\langle\u^*,\u^*\rangle+\sum_{j=1}^{M-1}\frac{  \langle \u^*, \v_j \rangle^2 -2 c_j  \langle\u^*, \v_j\rangle  \langle \u^*, \u_j\rangle }{z-c_j^2} +z\sum_{i=1}^{K-1}   \frac{\langle \u^*, \u_i\rangle^2}{z - c_i^2} }{\langle \u^*, \v^*\rangle + \sum_{j=1}^{M-1}  \frac{\langle\u^*, \v_j\rangle(c_j \langle \v^*, \u_j\rangle -z \langle \v^*, \v_j\rangle)}{z-c_j^2} -z  \sum_{i=1}^{K-1}    \frac{\langle\u^*, \u_i\rangle(\langle \v^*, \u_i\rangle - c_i \langle \v^*, \v_i\rangle)}{z - c_i^2}},
\end{equation}
\begin{equation}
\label{eq_CCA_beta_Q} \mathfrak Q_\beta(z):=
  \frac{-\langle\v^*,\v^*\rangle+\sum_{i=1}^{K-1}  \frac{\langle \v^*, \u_i\rangle^2 - 2  c_i \langle \v^*, \u_i\rangle \langle \v^*, \v_i\rangle}{z -c_i^2} +z\sum_{j=1}^{M-1} \frac{\langle\v^*,\v_j\rangle^2}{z - c_j^2}}{\langle\u^*, \v^*\rangle+\sum_{i=1}^{K-1} \frac{\langle \v^*, \u_i\rangle (c_i \langle \u^*, \v_i\rangle -z \langle \u^*, \u_i\rangle)}{z-c_i^2}-z\sum_{j=1}^{M-1} \frac{\langle \v^*, \v_j\rangle( \langle \u^*, \v_j \rangle -c_j \langle \u^*, \u_j\rangle )}{z-c_j^2}}.
\end{equation}
\end{theorem}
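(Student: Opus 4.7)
The plan is to reduce the theorem to a constrained optimization problem via Lagrange multipliers, solve the resulting linear system for the ``noise'' coordinates $\alpha_i, \beta_j$ with $i, j \geq 1$ in terms of the ``signal'' coordinates $\alpha_0, \beta_0$, and then derive \eqref{eq_CCA_master} as the compatibility condition of the remaining $2 \times 2$ homogeneous system in $(\alpha_0, \beta_0)$. This is the direct CCA analogue of the PCA argument in Proposition \ref{Proposition_PCA_master}.

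First I would set up the variational problem. By definition, a pair of canonical variables $x = \alpha_0 \u^* + \sum_{i=1}^{K-1} \alpha_i \u_i$, $y = \beta_0 \v^* + \sum_{j=1}^{M-1} \beta_j \v_j$ is a critical point of $\langle x, y \rangle$ subject to $\|x\|^2 = \|y\|^2 = 1$. Writing the Lagrangian $L = \langle x, y \rangle - \tfrac{\mu}{2}(\|x\|^2 - 1) - \tfrac{\nu}{2}(\|y\|^2 - 1)$ and contracting the gradient conditions with $x$ and $y$, one gets $\mu = \nu = \langle x, y \rangle$, so one may substitute $\mu = \nu = \sqrt{z}$.

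Next I would extract the equations $\partial L/\partial \alpha_i = 0$ for $1 \leq i \leq K-1$ and $\partial L/\partial \beta_j = 0$ for $1 \leq j \leq M-1$. Using the scalar-product table \eqref{eq_scalar_products_table}, the $\alpha_i$--$\beta_i$ equations couple in a $2\times 2$ block through $c_i$ (with determinant $z - c_i^2$), while for $K \leq j \leq M-1$ the coefficient $c_j$ vanishes by convention \eqref{eq_can_cor_convention} and $\beta_j$ is determined directly. Solving each block yields $\alpha_i$ and $\beta_i$ as explicit linear combinations of $\alpha_0$ and $\beta_0$ with coefficients built from $\langle \u^*, \u_i\rangle, \langle \u^*, \v_i\rangle, \langle \v^*, \u_i\rangle, \langle \v^*, \v_i\rangle$, the scalar $c_i$, and $\sqrt{z}$.

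After that I would substitute these expressions into the two remaining gradient conditions $\partial L/\partial \alpha_0 = \partial L/\partial \beta_0 = 0$. This produces a homogeneous $2\times 2$ linear system in $(\alpha_0, \beta_0)$, whose entries are precisely the sums appearing on the right-hand side of \eqref{eq_CCA_master} and in the denominators of \eqref{eq_CCA_alpha_Q}--\eqref{eq_CCA_beta_Q}. A nontrivial solution exists iff the determinant vanishes, and once the $\sqrt{z}$ factors are cleared this is exactly \eqref{eq_CCA_master}. The ratios $\sqrt{z}\,\beta_0/\alpha_0$ and $\sqrt{z}\,\alpha_0/\beta_0$ read off from either row of this system are precisely $\mathfrak{Q}_\alpha(z)$ and $\mathfrak{Q}_\beta(z)$; their equality between the two rows is a restatement of the master equation. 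Finally, I would impose $\|x\|^2 = 1$ expanded in the orthogonal decomposition of $x$, substitute the closed-form expressions for $\alpha_i$, and use $\beta_0/\alpha_0 = \mathfrak{Q}_\alpha(z)/\sqrt{z}$ to eliminate $\beta_0$; collecting terms gives \eqref{eq_CCA_alpha}, with \eqref{eq_CCA_beta} the symmetric mirror. The cosine formulas \eqref{eq_CCA_cos_a}--\eqref{eq_CCA_cos_b} follow immediately by expanding $\langle \u^*, x\rangle$ and $\langle \v^*, y\rangle$ in the same bases and substituting the explicit formulas for $\alpha_i$ and $\beta_j$.

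The main obstacle is organizational rather than conceptual: the algebra involves eight families of scalar products and requires careful bookkeeping of the asymmetry between the $K$-- and $M$--dimensional sides, in particular the decoupling of $\beta_j$ for $j \geq K$ where $\alpha_j$ is absent. Once the $2 \times 2$ block structure is exposed and the ratio $\sqrt{z}\,\beta_0/\alpha_0$ is identified with $\mathfrak{Q}_\alpha(z)$, all five displayed identities \eqref{eq_CCA_master}--\eqref{eq_CCA_beta_Q} emerge as reorganizations of the same substitutions.
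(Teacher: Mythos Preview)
Your proposal is correct and follows essentially the same route as the paper's proof: Lagrange multipliers, solving the $2\times 2$ blocks for $(\alpha_i,\beta_i)$ in terms of $(\alpha_0,\beta_0)$, reducing to a homogeneous $2\times 2$ system whose determinant condition is \eqref{eq_CCA_master}, and then reading off $\mathfrak Q_\alpha,\mathfrak Q_\beta$, the normalization identities, and the cosine formulas. The only cosmetic difference is that the paper carries two separate multipliers $a,b$ and sets $z=4ab$, whereas you observe $\mu=\nu=\langle x,y\rangle$ at the outset (up to sign, $\mathfrak Q_\alpha(z)=-\beta_0/(\sqrt{z}\,\alpha_0)$ rather than $\sqrt{z}\,\beta_0/\alpha_0$); the computations are otherwise identical.
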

\begin{remark}
 Comparing to the PCA setting, \eqref{eq_CCA_master} is an analogue of the first equation in \eqref{eq_PCA_master_2} and \eqref{eq_CCA_alpha}, \eqref{eq_CCA_beta} are analogues of the second equation in \eqref{eq_PCA_master_2}. A version of \eqref{eq_CCA_cos_a}, \eqref{eq_CCA_cos_b} for the PCA would be the computation of the scalar product between $\u^*$ and the \emph{right} singular vector corresponding to the singular value $a$, which we omitted there.
\end{remark}


\begin{proof}[Proof of Theorem \ref{Theorem_master_equation}]
We seek for a pair of vectors $\widehat \ba=(\alpha_0,\alpha_1,\dots,\alpha_{K-1})$ and $\widehat \bb=(\beta_0,\beta_1,\dots,\beta_{M-1})$, which represent critical points of the function
$$
 f(\widehat \ba,\widehat \bb)=\left\langle \alpha_0 \u^*+ \sum_{i=1}^{K-1} \alpha_i \u_i,\, \beta_0 \v^*+ \sum_{j=1}^{M-1} \beta_j \v_j\right\rangle,
$$
subject to the normalization constraints
\begin{equation}
\label{eq_normalization_2}
 \left\|\alpha_0 \u^*+  \sum_{i=1}^{K-1} \alpha_i \u_i\right\|^2=\left\|\beta_0 \v^*+  \sum_{j=1}^{M-1} \beta_j \v_j\right\|^2=1.
\end{equation}
In this way, $\alpha_0 \u^*+  \sum_{i=1}^{K-1} \alpha_i \u_i$,  $\beta_0 \v^*+ \sum_{j=1}^{M-1} \beta_j \v_j$ is a pair of canonical variables for $\U$ and $\V$ and the corresponding value of $f(\widehat \ba,\widehat \bb)$ is a canonical correlation coefficient.

Introducing the Lagrange multipliers $a$ and $b$ corresponding to two normalizations \eqref{eq_normalization_2}, we are led to the Lagrangian function
\begin{multline}
 g(\widehat \ba,\widehat \bb,a,b)= \alpha_0 \beta_0 \langle \u^*, \v^*\rangle + \alpha_0 \sum_{j=1}^{M-1} \beta_j \langle \u^*,\v_j\rangle +\beta_0 \sum_{i=1}^{K-1} \alpha_i \langle \v^*, \u_i\rangle + \sum_{i=1}^{K-1} \alpha_i \beta_i c_i
 \\- a\left( \alpha_0^2 \langle\u^*,\u^*\rangle + 2\alpha_0 \sum_{i=1}^{K-1} \alpha_i \langle \u^*, \u_i\rangle +\sum_{i=1}^{K-1} \alpha_i^2 - 1\right)
   \\- b\left( \beta_0^2 \langle\v^*,\v^*\rangle + 2 \beta_0 \sum_{j=1}^{M-1} \beta_j \langle \v^*, \v_j\rangle +\sum_{j=1}^{M-1} \beta_j^2 -1\right).
\end{multline}
We need to find critical points of this function.
Differentiating with respect to $\alpha_i$ and $\beta_j$, we get a system of $K+M$ homogeneous linear equations on coordinates of $\widehat \ba$ and $\widehat \bb$:

\begin{equation} \label{eq_CCA_equations}
\begin{dcases}
 0=\frac{\partial g}{\partial \alpha_0}= \beta_0 \langle \u^*, \v^*\rangle + \sum_{j=1}^{M-1} \beta_j \langle \u^*, \v_j\rangle - 2 a\alpha_0 \langle\u^*,\u^*\rangle -2 a \sum_{i=1}^{K-1} \alpha_i \langle \u^*, \u_i\rangle,\\
 0=\frac{\partial g}{\partial \alpha_i}= \beta_0 \langle \v^*, \u_i\rangle + \beta_i c_i -2a \alpha_0 \langle \u^*, \u_i \rangle - 2 a \alpha_i, & 1\le i < K,\\
 0=\frac{\partial g}{\partial \beta_0} = \alpha_0 \langle \u^*, \v^*\rangle +\sum_{i=1}^{K-1} \alpha_i \langle \v^*, \u_i\rangle  - 2b \beta_0 \langle\v^*,\v^*\rangle - 2b \sum_{j=1}^{M-1} \beta_j \langle \v^*, \v_j\rangle,\\
  0=\frac{\partial g}{\partial \beta_j}= \alpha_0 \langle \u^*, \v_j\rangle + \delta_{j< K} \cdot \alpha_j c_j - 2 b \beta_0 \langle \v^*, \v_j\rangle - 2b \beta_j, & 1 \le j <M.\\
\end{dcases}
\end{equation}
In the matrix form, the equations \eqref{eq_CCA_equations} can be rewritten as
\begin{equation} \label{eq_CCA_equations_matrix}
\begin{cases}
 S_{uv} \widehat \bb=2a S_{uu} \widehat \ba,\\
 S_{vu} \widehat \ba= 2b S_{vv} \widehat \bb,
\end{cases}
\end{equation}
where $\widehat \ba$ and $\widehat \bb$ are treated as column-vectors of sizes $K\times 1$ and $M\times 1$, respectively, and $S_{\cdot\cdot}$  are matrices of scalar products:
\begin{equation}
 [S_{uv}]_{i,j}=[S_{vu}]_{j,i}=\langle \u_i, \v_j\rangle, \qquad [S_{uu}]_{i,i'}=\langle \u_i, \u_{i'}\rangle, \qquad [S_{vv}]_{j,j'}=\langle \v_j, \v_{j'}\rangle,
\end{equation}
with  $0\le i,i'\le {K-1}$, $0\le j,j'\le {M-1}$ and upon identification $\u_0=\u^*$, $\v_0=\v^*$. Note that \eqref{eq_CCA_equations_matrix} implies eigenrelations
\begin{equation} \label{eq_CCA_equations_matrix_2}
\begin{cases}
 (S_{vv})^{-1} S_{vu} (S_{uu})^{-1} S_{uv}\, \widehat \bb=4ab\, \widehat \bb,\\
 (S_{uu})^{-1} S_{uv} (S_{vv})^{-1} S_{vu}\, \widehat \ba= 4ab\, \widehat \ba.
\end{cases}
\end{equation}
Comparing \eqref{eq_CCA_equations_matrix_2} with standard algorithms for finding canonical correlations as in Definition \ref{Definition_sample_setting}, we conclude that $4ab$ should necessary be one of the squared sample canonical correlations between spaces $\U$ and $\V$. On the other hand, the equations \eqref{eq_CCA_equations} can be solved directly.\footnote{On the technical level, this is the main difference of our approach from the previous attempts on this problem in the literature: solving \eqref{eq_CCA_equations_matrix_2} is difficult, because the left-hand sides involve matrix inversions. On the other hand, directly solving \eqref{eq_CCA_equations} by exploiting their block structure turns out to be much simpler.}    We first combine the second and the forth equation for $i=j\le {K-1}$ and treat the result as a system of two linear equations on the variables $(\alpha_i, \beta_i)$:
\begin{equation}
\label{eq_2equations}
\begin{cases}
 - 2 a \alpha_i + c_i\beta_i  &=\,\,2a \alpha_0 \langle \u^*, \u_i \rangle -\beta_0 \langle \v^*, \u_i\rangle, \\
  \,\,\,\,\,c_i\alpha_i - 2b \beta_i &=\,\,- \alpha_0 \langle \u^*, \v_i\rangle  + 2 b \beta_0 \langle \v^*, \v_i\rangle.
\end{cases}
\end{equation}
We solve these equation by inverting the $2\times 2$ matrix in the left-hand side:
$$
 \begin{pmatrix} - 2 a &  c_i \\
                 c_i & - 2b
 \end{pmatrix}^{-1}
 =\frac{1}{4ab - c_i^2} \begin{pmatrix} -2b & -c_i \\ -c_i & -2a \end{pmatrix}.
$$
Multiplying by the right-hand side of \eqref{eq_2equations}, the solution is
\begin{equation}
\label{eq_2equations_solved}
 \begin{dcases}
  \alpha_i=\frac{1}{4ab - c_i^2} \Bigl[ \bigl(c_i \langle \u^*, \v_i\rangle -4ab \langle \u^*, \u_i\rangle\bigr)\cdot \alpha_0 &+\quad \bigl(\langle \v^*, \u_i\rangle - c_i \langle \v^*, \v_i\rangle\bigr) \cdot 2b \cdot \beta_0\Bigr],\\
  \beta_i=\frac{1}{4ab-c_i^2} \Bigl[ \bigl( \langle \u^*, \v_i \rangle -c_i \langle \u^*, \u_i\rangle \bigr)\cdot 2a \cdot \alpha_0 &+\quad \bigl(c_i \langle \v^*, \u_i\rangle -4ab \langle \v^*, \v_i\rangle\bigr)\cdot \beta_0\Bigr].
 \end{dcases}
\end{equation}
In addition, for $K\le j<M$, we rewrite the last equation of \eqref{eq_CCA_equations} in a similar form:
\begin{equation}
\label{eq_additional_solved}
 \beta_j= \frac{1}{4ab} \Bigl[  \langle \u^*, \v_j\rangle \cdot 2a \cdot \alpha_0  - 4 a b \langle \v^*, \v_j  \rangle \cdot \beta_0 \Bigr],
\end{equation}
which can be thought of as the second equation of \eqref{eq_2equations_solved} with $c_j=0$. Next, we plug \eqref{eq_2equations_solved} and \eqref{eq_additional_solved} back into the first and third equations of \eqref{eq_CCA_equations} and get:
\begin{multline}
\label{eq_x2}
  \alpha_0 \cdot 2a \cdot  \left[-\langle\u^*,\u^*\rangle+\sum_{j=1}^{M-1}\frac{  \langle \u^*, \v_j \rangle^2 -2 c_j  \langle\u^*, \v_j\rangle  \langle \u^*, \u_j\rangle }{4ab-c_j^2} +4ab\sum_{i=1}^{K-1}   \frac{\langle \u^*, \u_i\rangle^2}{4ab - c_i^2} \right]
  \\
  +\beta_0 \left[\langle \u^*, \v^*\rangle + \sum_{j=1}^{M-1}  \frac{\langle\u^*, \v_j\rangle(c_j \langle \v^*, \u_j\rangle -4ab \langle \v^*, \v_j\rangle)}{4ab-c_j^2} -4ab  \sum_{i=1}^{K-1}    \frac{\langle\u^*, \u_i\rangle(\langle \v^*, \u_i\rangle - c_i \langle \v^*, \v_i\rangle)}{4ab - c_i^2}   \right]
  \\ =0,
\end{multline}
\begin{multline}
\label{eq_x3}
 \alpha_0 \left[\langle\u^*, \v^*\rangle+\sum_{i=1}^{K-1} \frac{\langle \v^*, \u_i\rangle (c_i \langle \u^*, \v_i\rangle -4ab \langle \u^*, \u_i\rangle)}{4ab-c_i^2}-4ab\sum_{j=1}^{M-1} \frac{\langle \v^*, \v_j\rangle( \langle \u^*, \v_j \rangle -c_j \langle \u^*, \u_j\rangle )}{4ab-c_j^2}\right]
 \\+  \beta_0\cdot 2b\cdot \left[-\langle\v^*,\v^*\rangle+\sum_{i=1}^{K-1}  \frac{\langle \v^*, \u_i\rangle^2 - 2  c_i \langle \v^*, \u_i\rangle \langle \v^*, \v_i\rangle}{4ab -c_i^2} +4ab\sum_{j=1}^{M-1} \frac{\langle\v^*,\v_j\rangle^2}{4ab - c_j^2}\right]
 =0,
\end{multline}
where we adopted the agreement $c_j=0$ for $K\le j<M$. \eqref{eq_x2} and \eqref{eq_x3} is a system of two homogeneous linear equations on $(\alpha_0,\beta_0)$. If the system is non-degenerate, then the only solution is $(0,0)$, which then leads through \eqref{eq_2equations_solved} and \eqref{eq_additional_solved} to vanishing of all $\alpha_i$, $\beta_j$, which contradicts the normalization condition \eqref{eq_normalization}. Hence, the system must be degenerate, which is equivalent to vanishing of the determinant of $2\times 2$ matrix of its coefficients. Noticing that the coefficient of $\beta_0$ in \eqref{eq_x2} and the coefficient of $\alpha_0$ in \eqref{eq_x3} are two equivalent forms of the same expression, the condition becomes
\begin{multline}
 \left[\langle \u^*, \v^*\rangle + \sum_{j=1}^{M-1}  \frac{\langle\u^*, \v_j\rangle(c_j \langle \v^*, \u_j\rangle -4ab \langle \v^*, \v_j\rangle)}{4ab-c_j^2} -4ab  \sum_{i=1}^{K-1}    \frac{\langle\u^*, \u_i\rangle(\langle \v^*, \u_i\rangle - c_i \langle \v^*, \v_i\rangle)}{4ab - c_i^2}   \right]^2\\= 4ab \left[-\langle\u^*,\u^*\rangle+\sum_{j=1}^{M-1}\frac{  \langle \u^*, \v_j \rangle^2 -2 c_j  \langle\u^*, \v_j\rangle  \langle \u^*, \u_j\rangle }{4ab-c_j^2} +4ab\sum_{i=1}^{K-1}   \frac{\langle \u^*, \u_i\rangle^2}{4ab - c_i^2} \right] \\ \times \left[-\langle\v^*,\v^*\rangle+\sum_{i=1}^{K-1}  \frac{\langle \v^*, \u_i\rangle^2 - 2  c_i \langle \v^*, \u_i\rangle \langle \v^*, \v_i\rangle}{4ab -c_i^2} +4ab\sum_{j=1}^{M-1} \frac{\langle\v^*,\v_j\rangle^2}{4ab - c_j^2}\right].
\end{multline}
Denoting $z:=4ab$, we arrive at the desired \eqref{eq_CCA_master}. Recalling the interpretation of $z$ explained after \eqref{eq_CCA_equations_matrix_2}, \eqref{eq_CCA_master} is the equation for the squared canonical correlation coefficients $z$ between subspaces $\U$ and $\V$.

\medskip

Once $z$ is identified from \eqref{eq_CCA_master}, we plug it back into \eqref{eq_x3} and find the relation
\begin{equation}
\label{eq_x4}
 \alpha_0=-\beta_0 \cdot 2b \cdot \mathfrak Q_\beta(z),
\end{equation}
where $\mathfrak Q_\beta(z)$ is given by \eqref{eq_CCA_beta_Q}.
 Further, \eqref{eq_2equations_solved} transforms into
\begin{equation}
\label{eq_x9}
  \beta_j=\beta_0 \frac{1}{z-c_j^2} \Bigl[-z \mathfrak Q_\beta(z) \bigl( \langle \u^*, \v_j \rangle -c_j \langle \u^*, \u_j\rangle \bigr)+c_j \langle \v^*, \u_j\rangle -z \langle \v^*, \v_j\rangle\Bigr].
\end{equation}
The normalization equation $\left\|\beta_0 \v^*+ \sum_{j=1}^{M-1} \beta_j \v_j\right\|^2=1$ becomes the desired \eqref{eq_CCA_beta}:
\begin{align}
\notag \frac{1}{\beta_0^2}= \langle\v^*,\v^*\rangle &+ 2 \sum_{j=1}^{M-1} \frac{\langle \v^*, \v_j\rangle }{z-c_j^2} \Bigl[-z \mathfrak Q_\beta(z) \bigl( \langle \u^*, \v_j \rangle -c_j \langle \u^*, \u_j\rangle \bigr)+c_j \langle \v^*, \u_j\rangle -z \langle \v^*, \v_j\rangle\Bigr] \\ \notag &+\sum_{j=1}^{M-1}  \frac{1}{(z-c_j^2)^2} \Bigl[-z \mathfrak Q_\beta(z) \bigl( \langle \u^*, \v_j \rangle -c_j \langle \u^*, \u_j\rangle \bigr)+c_j \langle \v^*, \u_j\rangle -z \langle \v^*, \v_j\rangle\Bigr]^2
\end{align}
We further reconstruct $\alpha_0$ by using  \eqref{eq_x2} instead of \eqref{eq_x3} in the form:
$$
 \beta_0=-\alpha_0\cdot 2a \cdot \mathfrak Q_\alpha(z),
$$
where $\mathfrak Q_\alpha(z)$ is given by \eqref{eq_CCA_alpha_Q}. Then we transform \eqref{eq_2equations_solved} into
\begin{equation}
\label{eq_x8}
   \alpha_i=\alpha_0 \frac{1}{z - c_i^2} \Bigl[ c_i \langle \u^*, \v_i\rangle -z \langle \u^*, \u_i\rangle -z \mathfrak Q_\alpha(z) \bigl(\langle \v^*, \u_i\rangle - c_i \langle \v^*, \v_i\rangle\bigr)\Bigr].
\end{equation}
The normalization condition  $\left\|\alpha_0 \u^*+  \sum_{i=1}^{K-1} \alpha_i \u_i\right\|^2=1$ becomes the desired \eqref{eq_CCA_alpha}:
\begin{align}
\notag \frac{1}{\alpha_0^2}= \langle\u^*,\u^*\rangle &+ 2 \sum_{i=1}^{K-1}  \frac{\langle \u^*, \u_i\rangle}{z - c_i^2} \Bigl[ c_i \langle \u^*, \v_i\rangle -z \langle \u^*, \u_i\rangle -z \mathfrak Q_\alpha(z) \bigl(\langle \v^*, \u_i\rangle - c_i \langle \v^*, \v_i\rangle\bigr)\Bigr]\\ &\notag +\sum_{i=1}^{K-1} \frac{1}{(z - c_i^2)^2} \Bigl[ c_i \langle \u^*, \v_i\rangle -z \langle \u^*, \u_i\rangle -z \mathfrak Q_\alpha(z) \bigl(\langle \v^*, \u_i\rangle - c_i \langle \v^*, \v_i\rangle\bigr)\Bigr]^2.
\end{align}
The formulas for the scalar products \eqref{eq_CCA_cos_a} and \eqref{eq_CCA_cos_b} follow by using \eqref{eq_x8} and \eqref{eq_x9}, respectively.
\end{proof}

Lemmas \ref{Lemma_number_of_roots} and \ref{Lemma_interlacing} clarify the structure of \eqref{eq_CCA_master} in Theorem \ref{Theorem_master_equation} treated as an equation on an unknown variable $z$, assuming that all $\langle \u_i,\u_j\rangle$, $\langle \u_i,\v_j\rangle$, $\langle \v_i,\v_j\rangle$, and $c_j$ are given.

\begin{lemma} \label{Lemma_number_of_roots}
The identity \eqref{eq_CCA_master}, treated as an equation on an unknown $z$, is equivalent to a polynomial equation of degree $K$, and, therefore, has $K$ roots.
\end{lemma}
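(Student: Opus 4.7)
The equation \eqref{eq_CCA_master} was derived in the proof of Theorem \ref{Theorem_master_equation} as the vanishing of the $2 \times 2$ determinant coming from the reduced system \eqref{eq_x2}, \eqref{eq_x3}. That reduction is precisely a Schur complement: one eliminates $\alpha_i$ ($1\le i\le K-1$) and $\beta_j$ ($1\le j\le M-1$) from the $(K+M)$-dimensional homogeneous linear system \eqref{eq_CCA_equations} using \eqref{eq_2equations_solved}--\eqref{eq_additional_solved}. Writing that system as $\mathcal M\, w = 0$ with $\mathcal M$ the $(K+M)\times(K+M)$ coefficient matrix and $w = (\alpha_0, \alpha_1, \dots, \alpha_{K-1}, \beta_0, \beta_1, \dots, \beta_{M-1})^\T$, the plan is to show that, away from the isolated set $\{z = c_i^2\}$ where the elimination is not defined, the condition \eqref{eq_CCA_master} is equivalent to $\det \mathcal M = 0$, and that the latter is a polynomial equation of degree exactly $K$ in $z$.

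The next step is to exploit the other natural $2 \times 2$ block decomposition of $\mathcal M$ already written in \eqref{eq_CCA_equations_matrix},
\begin{equation*}
 \mathcal M = \begin{pmatrix} -2a\,S_{uu} & S_{uv} \\ S_{vu} & -2b\,S_{vv} \end{pmatrix},
\end{equation*}
where $S_{uu}, S_{vv}, S_{uv}, S_{vu}$ are the Gram matrices defined in the proof of Theorem \ref{Theorem_master_equation}. Because $\u^* \notin \widetilde \U$ and $\v^* \notin \widetilde \V$, the diagonal blocks $S_{uu}$ (of size $K \times K$) and $S_{vv}$ (of size $M \times M$) are positive definite and in particular invertible. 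Applying the Schur complement identity to the lower-right block, for any $b \ne 0$,
\begin{equation*}
 \det \mathcal M = \det(-2b\,S_{vv}) \cdot \det\bigl(-2a\,S_{uu} + \tfrac{1}{2b} S_{uv} S_{vv}^{-1} S_{vu}\bigr) = (-1)^M (2b)^{M-K} \det S_{vv} \cdot \det\bigl(S_{uv} S_{vv}^{-1} S_{vu} - z\,S_{uu}\bigr),
\end{equation*}
where $z = 4ab$. The prefactor $(-1)^M (2b)^{M-K} \det S_{vv}$ is nonzero and independent of $z$, while $\det(S_{uv} S_{vv}^{-1} S_{vu} - z\,S_{uu})$ is a polynomial in $z$ with leading coefficient $(-1)^K \det S_{uu} \ne 0$, hence of degree exactly $K$.

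Combining the two steps, \eqref{eq_CCA_master} is equivalent to the degree-$K$ polynomial equation $\det(S_{uv} S_{vv}^{-1} S_{vu} - z\,S_{uu}) = 0$, which admits $K$ complex roots counted with multiplicity. The one bookkeeping point I expect to be the main (though minor) obstacle is handling extraneous factors cleanly: clearing denominators in \eqref{eq_CCA_master} directly by multiplying through by $\prod_{i=1}^{K-1}(z-c_i^2)^2$ yields a polynomial equation of degree $2K-1$, and this polynomial should factor as $\prod_{i=1}^{K-1}(z-c_i^2)$ times the essential degree-$K$ polynomial above, with the spurious factor $\prod_{i=1}^{K-1}(z-c_i^2)$ arising precisely at the values of $z$ for which the Schur-complement reduction of \eqref{eq_CCA_equations} breaks down and must be excluded.
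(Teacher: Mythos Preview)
Your proposal is correct and takes a genuinely different route from the paper. The paper argues directly: it examines the Laurent expansion of the left- and right-hand sides of \eqref{eq_CCA_master} at each $z=c_k^2$, observes that the leading (double-pole) terms coincide and therefore cancel in the difference, so that multiplying $\text{LHS}-\text{RHS}$ by the single product $\prod_{i=1}^{K-1}(z-c_i^2)$ already produces a polynomial, whose degree is then read off. Your approach instead recognizes \eqref{eq_CCA_master} as the $2\times2$ Schur complement of the full $(K+M)\times(K+M)$ matrix $\mathcal M$ and compares it with a second Schur complement (the $\U$--$\V$ block), yielding the explicit identification of the degree-$K$ polynomial as $\det(S_{uv}S_{vv}^{-1}S_{vu}-zS_{uu})$; this is more structural and has the pleasant side effect of exhibiting the polynomial as the characteristic polynomial of the CCA eigenproblem (cf.\ Definition~\ref{Definition_sample_setting}). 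Your last paragraph's ``bookkeeping'' worry is in fact already resolved by your own setup: the complementary block to $(\alpha_0,\beta_0)$ in $\mathcal M$ has determinant $\det D=(-2b)^{M-K}\prod_{i=1}^{K-1}(z-c_i^2)$, so equating your two Schur-complement expressions for $\det\mathcal M$ gives precisely $\bigl(\text{RHS}-\text{LHS of \eqref{eq_CCA_master}}\bigr)\cdot\prod_{i=1}^{K-1}(z-c_i^2)=\pm\det S_{vv}\cdot\det(S_{uv}S_{vv}^{-1}S_{vu}-zS_{uu})$, which is exactly the factorization you anticipated and makes the ``spurious'' $\prod(z-c_i^2)$ and the $(2b)^{M-K}$ prefactor drop out simultaneously.
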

\begin{proof}
Let us study the behavior of \eqref{eq_CCA_master} near $z=c_k^2$. The left-hand side behaves as:
\begin{equation}
   \left[\frac{c_k(\langle\u^*, \v_k\rangle-c_k\langle\u^*, \u_k\rangle) (\langle \v^*, \u_k\rangle -c_k \langle \v^*, \v_k\rangle)}{z - c_k^2} +O(1)   \right]^2.
\end{equation}
The right-hand side behaves as:
\begin{equation}
 c_k^2 \left[ \frac{  (\langle \u^*, \v_k \rangle -c_k  \langle \u^*, \u_k\rangle)^2}{z-c_k^2}   +O(1) \right] \left[  \frac{(\langle \v^*, \u_k\rangle -   c_k  \langle \v^*, \v_k\rangle)^2}{z -c_k^2}  +O(1)\right].
\end{equation}
We notice the cancelation of the double pole in the difference of the left-hand side and right-hand side. Hence, the difference has only a simple pole. Therefore, if we multiply both sides of \eqref{eq_CCA_master} by $\prod_{i=1}^{K-1} (z-c_i^2)$, we get a polynomial equation\footnote{Note that there is no singularity at $0$: all denominators $z-c_j^2$ with $K\le j <M$ are matched with factors $z$ in numerators.} of degree $K$.
\end{proof}

\begin{lemma} \label{Lemma_interlacing} Let $K$ roots of \eqref{eq_CCA_master} be denoted $z_1,\dots, z_K$. Then
\begin{enumerate}
\item All $z_i$ are real numbers between $0$ and $1$.
\item If we arrange $z_i$ in the decreasing order, then there exists another sequence of ${K-1}$ real numbers $y_1\ge y_2\ge \dots\ge y_{{K-1}}$, such that two interlacing conditions hold:
\begin{equation}
\label{eq_interlacing_1}
 z_1\ge y_1 \ge z_2 \ge \dots \ge y_{K-1}\ge z_{K},\qquad \text{ and }
\end{equation}
\begin{equation}
\label{eq_interlacing_2}
 y_1 \ge c_1^2 \ge y_2 \ge \dots \ge y_{K-1} \ge c_{K-1}^2.
\end{equation}
\end{enumerate}
\end{lemma}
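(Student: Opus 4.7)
\textbf{Proof proposal for Lemma \ref{Lemma_interlacing}.}
The plan is to identify the $z_i$'s with the squared canonical correlations between the subspaces $\U=\mathrm{span}(\u^*,\widetilde\U)$ and $\V=\mathrm{span}(\v^*,\widetilde\V)$, introduce as an intermediate object the squared canonical correlations $y_1\ge\cdots\ge y_{K-1}$ between $\widetilde\U$ and $\V$, and then obtain both interlacing chains from two applications of a Cauchy/Weyl interlacing principle. The first step is routine: from \eqref{eq_CCA_equations_matrix}--\eqref{eq_CCA_equations_matrix_2} in the proof of Theorem \ref{Theorem_master_equation}, every squared canonical correlation between $\U$ and $\V$ satisfies \eqref{eq_CCA_master}; conversely, every root $z$ of \eqref{eq_CCA_master} arises as $z=4ab$ for a nonzero solution $(\widehat\ba,\widehat\bb,a,b)$ of the eigensystem \eqref{eq_CCA_equations_matrix}, which forces it to be a squared canonical correlation coefficient of $(\U,\V)$. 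Since Lemma \ref{Lemma_number_of_roots} says that there are exactly $K$ roots and there are exactly $K=\dim\U$ squared canonical correlations, the two sets coincide (with multiplicities). In particular all $z_i$ are real and lie in $[0,1]$, settling part~(1).

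For part~(2), I will use the familiar variational description of squared canonical correlations as eigenvalues of compressed projectors. Fix an orthonormal basis of $\U$ whose first $K-1$ vectors span $\widetilde\U$ and whose last vector $\u^\sharp$ spans the orthogonal complement of $\widetilde\U$ inside $\U$; let $Q_\U$ be the corresponding $S\times K$ isometry and $Q_{\widetilde\U}$ its first $K-1$ columns. Then the squared canonical correlations of $(\U,\V)$ are the eigenvalues of the $K\times K$ Gram matrix $Q_\U^\T P_\V Q_\U$, and the squared canonical correlations $y_j$ of $(\widetilde\U,\V)$ are the eigenvalues of the $(K-1)\times(K-1)$ principal submatrix $Q_{\widetilde\U}^\T P_\V Q_{\widetilde\U}$. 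The classical Cauchy interlacing theorem for Hermitian matrices (eigenvalues of a principal submatrix interlace those of the full matrix) gives precisely
\begin{equation*}
z_1\ge y_1\ge z_2\ge y_2\ge\cdots\ge y_{K-1}\ge z_K,
\end{equation*}
which is \eqref{eq_interlacing_1}.

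For the second chain, I compare the eigenvalues of $Q_{\widetilde\U}^\T P_\V Q_{\widetilde\U}$ (which are the $y_j$) with those of $Q_{\widetilde\U}^\T P_{\widetilde\V} Q_{\widetilde\U}$ (which, by Lemma \ref{Lemma_canonical_bases} applied to $(\widetilde\U,\widetilde\V)$, are $c_1^2\ge\cdots\ge c_{K-1}^2$). Since $\V=\widetilde\V\oplus\mathbb R\v^\sharp$ for a unit vector $\v^\sharp\perp\widetilde\V$, the two projectors differ by the rank-one positive semidefinite perturbation $P_\V-P_{\widetilde\V}=\v^\sharp(\v^\sharp)^\T$, and therefore
\begin{equation*}
Q_{\widetilde\U}^\T P_\V Q_{\widetilde\U}=Q_{\widetilde\U}^\T P_{\widetilde\V} Q_{\widetilde\U}+(Q_{\widetilde\U}^\T\v^\sharp)(Q_{\widetilde\U}^\T\v^\sharp)^\T
\end{equation*}
is a rank-one PSD perturbation of the $c_i^2$ matrix. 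Weyl's inequalities for rank-one updates give $y_j\ge c_j^2\ge y_{j+1}$ for all admissible $j$, which is the claimed chain \eqref{eq_interlacing_2}. All $y_j$ and $c_j^2$ live in $[0,1]$ by the same argument that bounded the $z_i$, completing the proof.

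The only genuinely delicate point is the first step—matching the $K$ roots of the polynomial \eqref{eq_CCA_master} with the $K$ squared canonical correlations of $(\U,\V)$, including multiplicities; once this is established, the two interlacing inequalities are automatic consequences of Cauchy interlacing and Weyl's rank-one perturbation bound applied to suitably chosen compressions of $P_\V$ and $P_{\widetilde\V}$.
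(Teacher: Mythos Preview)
Your proposal is correct and follows essentially the same strategy as the paper: identify the $z_i$ with the squared canonical correlations of $(\U,\V)$ to obtain part~(1), introduce $y_j$ as the squared canonical correlations of $(\widetilde\U,\V)$, and derive \eqref{eq_interlacing_1} via Cauchy interlacing for $Q_{\widetilde\U}^\T P_\V Q_{\widetilde\U}$ as a principal submatrix of $Q_\U^\T P_\V Q_\U$.

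The one genuine difference is in how you obtain \eqref{eq_interlacing_2}. The paper switches sides, viewing the $y_j$ as the nonzero eigenvalues of $P_\V P_{\widetilde\U} P_\V$ and the $c_j^2$ as those of its principal submatrix $P_{\widetilde\V} P_{\widetilde\U} P_{\widetilde\V}$, and applies Cauchy interlacing a second time (with a remark about extra zero eigenvalues in the $M\times M$ ambient space). You instead stay with the $(K-1)\times(K-1)$ compression $Q_{\widetilde\U}^\T(\,\cdot\,)Q_{\widetilde\U}$ and observe that $P_\V-P_{\widetilde\V}=\v^\sharp(\v^\sharp)^\T$ is a rank-one PSD perturbation, so the interlacing $y_j\ge c_j^2\ge y_{j+1}$ follows from the standard rank-one update interlacing theorem. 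Both arguments are short and valid; your version has the mild advantage of avoiding the bookkeeping with extra zeros, while the paper's version is pleasingly symmetric (two applications of the same submatrix principle). A small terminological quibble: the interlacing for a rank-one PSD update is usually stated as its own result (via the secular equation or min--max) rather than as a direct corollary of Weyl's inequalities, so you might prefer to cite it that way.
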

\begin{proof}
It is tricky to see this property by directly analyzing the equation \eqref{eq_CCA_master} and we proceed in another way\footnote{We are grateful to G.~Olshanski for a discussion leading to this argument.} by using the identification of $\{z_i\}$ with squared canonical correlation coefficients between $\U$ and $\V$, as claimed in Theorem \ref{Theorem_master_equation}. It immediately follows that they are real numbers between $0$ and $1$. Next, by definition, $\{c_i^2\}$ are squared canonical correlations between $\widetilde \U$ and $\widetilde \V$. The numbers $\{y_i\}$ in \eqref{eq_interlacing_1}, \eqref{eq_interlacing_2} also have a similar interpretation: they are squared canonical correlations between $\widetilde \U$ and $\V$.

In order to see interlacement, it is helpful to use the identification of canonical correlations with eigenvalues of products of projectors.  $\{z_i\}$ are nonzero eigenvalues of $P_\U P_\V P_\U$, while $\{y_i\}$ are nonzero eigenvalues of $P_{\widetilde \U} P_\V P_{\widetilde \U}$. If we choose an orthonormal basis of $\U$, for which $\widetilde \U$ is spanned by the first ${K-1}$ basis vectors, then the former matrix (ignoring zero part) becomes a $K\times K$ matrix, while the latter matrix is its $(K-1)\times (K-1)$ principal submatrix. Hence, \eqref{eq_interlacing_1} are classical (see e.g., \citet[Corollary III.1.5]{bhatia_matrix}) interlacing inequalities between eigenvalues of a symmetrix matrix and its submatrix. On the other hand, we can also identify $\{y_i\}$ with nonzero eigenvalues of $P_\V P_{\widetilde \U} P_\V$. Then \eqref{eq_interlacing_2} is obtained by comparing this matrix (viewed as $M\times M$ matrix in $V$) with its $(M-1)\times (M-1)$ submatrix $P_{\widetilde \V} P_{\widetilde \U} P_{\widetilde \V}$, whose nonzero eigenvalues are $\{c_i^2\}$; the interlacing condition looks slightly different because of the additional $0$s among the eigenvalues: if we add the smallest coordinate $y_{K}=0$, then one can think of \eqref{eq_interlacing_2} as being of the same form as \eqref{eq_interlacing_1}.
\end{proof}

\section{Asymptotic approximations}

\label{Section_appendix_asymptotics}

The proofs of all theorems in Sections \ref{Section_setting} and \ref{Section_general} are based on the $K,M,S\to\infty$ asymptotic approximations of the formulas of Theorem \ref{Theorem_master_equation}. We first prove Theorem \ref{Theorem_master}, and then show that all other theorems in Sections \ref{Section_setting} and \ref{Section_general} are its corollaries.

\subsection{Proof of Theorem \ref{Theorem_master}}

We start by connecting the settings of Theorems \ref{Theorem_master} and \ref{Theorem_master_equation}.

\begin{lemma} \label{Lemma_simple_data} In Theorem \ref{Theorem_master}, one can take without loss of generality the vectors and matrices of Assumption \ref{ass_cor_noise} to be: $\ba=(1,0^{K-1})$, $\bb=(1,0^{M-1})$; $A$ is the matrix of the projection operator onto the last $K-1$ coordinate vectors, i.e., $A[i,j]=\delta_{j=i+1}$, $i=1,\dots,K-1$, $j=1,\dots,K$, and $B$ is the matrix of the projection operator onto the last $M-1$ coordinate vectors, i.e., $B[i,j]=\delta_{j=i+1}$, $i=1,\dots,M-1$, $j=1,\dots,M$.
\end{lemma}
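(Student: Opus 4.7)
My plan is to pass to coordinates in which $\ba,\bb,A,B$ take the asserted simple form via an invertible change of basis on $\mathbb R^K$ and $\mathbb R^M$. Form the block matrices
$$
P=\begin{pmatrix}\ba^\T\\ A\end{pmatrix}\in\mathbb R^{K\times K},\qquad Q=\begin{pmatrix}\bb^\T\\ B\end{pmatrix}\in\mathbb R^{M\times M},
$$
and replace the data with $\U':=P\U$ and $\V':=Q\V$. In the new coordinates, take $\ba'=e_1$, $\bb'=e_1$, and let $A'$, $B'$ be the projections onto the last $K-1$ and $M-1$ coordinates, respectively. Then $(\U')^\T\ba'=\U^\T P^\T e_1=\U^\T\ba=\x$, and similarly $(\V')^\T\bb'=\y$. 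Moreover $A'\U'=A'P\U=A\U$, because $A'$ selects the last $K-1$ rows of $P$, which equal the rows of $A$; likewise $B'\V'=B\V$. Hence the new quadruple $(\ba',\bb',A',B')$ has exactly the simple form claimed in the lemma and produces the same $\x,\y$ and the same noise matrices $A\U,B\V$ as the original data.

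The only substantive point is invertibility of $P$ (and, by symmetry, of $Q$). Since $\operatorname{rank}(A)=K-1$, $P$ is invertible if and only if $\ba$ is not in the row span of $A$. Suppose to the contrary that $\ba=A^\T\mu$ for some $\mu\in\mathbb R^{K-1}$. Then $\x=\U^\T\ba=(A\U)^\T\mu$ is a deterministic function of $A\U$. Assumption \ref{ass_corr_noise_independence} forces $A\U$ to be independent of $\x$, so $(A\U)^\T\mu$ must equal a constant almost surely, and since it has mean zero (as rows of $\x$ do), it must vanish identically. This contradicts $C_{uu}>0$, which is required for $r^2$ to be well defined. So $P$ is invertible, and the same argument handles $Q$.

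It remains to check that every quantity appearing in Theorem \ref{Theorem_master} is invariant under $\U\mapsto P\U$, $\V\mapsto Q\V$. Left multiplication by an invertible $K\times K$ matrix preserves the row span of $\U$ in $\mathbb R^S$, so the squared sample canonical correlations $\{\lambda_i\}$ of $(\U,\V)$ are unchanged, and, since $A'\U'=A\U$ and $B'\V'=B\V$, the noise correlations $\{c_k^2\}$ (and hence $G(z)$) are unchanged as well. The CCA eigenmatrix $(\U\U^\T)^{-1}\U\V^\T(\V\V^\T)^{-1}\V\U^\T$ transforms by conjugation $M\mapsto(P^\T)^{-1}M\,P^\T$, so an eigenvector $\widehat\ba$ of $M$ corresponds to $\widehat\ba'=(P^\T)^{-1}\widehat\ba$, and
$$
\widehat\x'=(\U')^\T\widehat\ba'=\U^\T P^\T(P^\T)^{-1}\widehat\ba=\U^\T\widehat\ba=\widehat\x.
$$
The symmetric argument yields $\widehat\y'=\widehat\y$, so the angles $\theta_x,\theta_y$ are preserved; and $r^2=C_{uv}^2/(C_{uu}C_{vv})$ is preserved since it depends only on $\x,\y$. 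Thus the conclusion of Theorem \ref{Theorem_master} for the general setup reduces to the same conclusion in the simple setup described in the lemma, so the latter may be assumed without loss of generality. The main (and only nontrivial) step is the invertibility argument, which is precisely where the independence part of Assumption \ref{ass_cor_noise} is used.
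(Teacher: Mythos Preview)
Your proof is correct and follows essentially the same route as the paper: you form the block matrix $P=\begin{pmatrix}\ba^\T\\ A\end{pmatrix}$ (the paper's $\tilde A^\sharp$), prove its invertibility via the independence in Assumption~\ref{ass_corr_noise_independence}, and then check that all quantities in Theorem~\ref{Theorem_master} are unchanged under $\U\mapsto P\U$, $\V\mapsto Q\V$. Your argument is in fact slightly more careful than the paper's on two points: you explain why independence of $\x$ from $A\U$ forces $\x$ to be constant (hence zero, contradicting $C_{uu}>0$) rather than merely asserting ``independence is impossible,'' and you verify explicitly the conjugation law for the eigenmatrix and the resulting invariance of $\widehat\x$, $\widehat\y$, and the angles.
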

\begin{proof}
Suppose that Assumption \ref{ass_cor_noise} holds with some vectors $\ba^\sharp$, $\bb^\sharp$, some matrices $A^\sharp$, $B^\sharp$, and some date $\U^\sharp$, $\V^\sharp$. Our task is to transform the data to the form claimed in Lemma \ref{Lemma_simple_data}.
Let $\tilde{A}^\sharp$ be $K\times K$ matrix, whose first row is $(\ba^\sharp)^\T$ and the remaining rows are given by matrix $A^\sharp$. We claim that ${\tilde A}^\sharp$ is invertible. Indeed, if $\tilde{A}^\sharp$ is degenerate, while $A^\sharp$ has rank $K-1$, then there should be a way to express $(\ba^\sharp)^\T$ as a linear combination of rows of $A^\sharp$. Hence, $\x^\T=(\ba^\sharp)^\T \U$ is a linear combination of rows of $A^\sharp \U$. But then independence of $\x$ and $A^\sharp\U$ postulated in Assumption \ref{ass_corr_noise_independence} is impossible. Similarly, we define an invertible matrix ${\tilde B}^\sharp$ to be $M\times M$ matrix, whose first row is $(\bb^\sharp)^\T$ and remaining rows are given by $B^\sharp$.

We let $\U=\tilde A^\sharp \U^\sharp$ and $\V=\tilde B^\sharp\V^\sharp$ and rephrase everything in terms of $\U$ and $\V$. Note that the canonical correlations and variables in Theorem \ref{Theorem_master} depend only on the linear subspaces (in $S$--dimensional space) spanned by the rows of $\U$ and $\V$, rather than on the matrices $\U$ and $\V$ themselves. Hence, due to invertibility of $A^\sharp$ and $B^\sharp$, they are the same for the matrices $(\U,\V)$ and $(\U^\sharp, \V^\sharp)$. Hence, $\lambda_1$, $\hat\x$, and $\hat \y$ in Theorem \ref{Theorem_master} are unchanged. Simultaneously, $\x$, $\y$, $A \U$, $B\V$ do not change since $\x=(\U^\sharp)^\T \ba^\sharp=\U^\T\ba$, $\y=(\V^\sharp)^\T\bb^\sharp=\V\bb$, $A^\sharp\U^\sharp=A \U$, $B^\sharp \V^\sharp=B\V$, where $A$, $B$, $\ba$, $\bb$ are from the statement of the lemma. Thus, all the ingredients in the formulas \eqref{eq_G_def}-\eqref{eq_beta_cos_final} remain the same.
\end{proof}

The next step is to get to rid of (unknown to us) $\langle \u^*,\u_i\rangle$, $\langle \u^*,\v_i\rangle$, $\langle \v^*,\u_i\rangle$, $\langle \u^*,\v_i\rangle$ appearing in the formulas of Theorem \ref{Theorem_master_equation}.

\begin{lemma} \label{Lemma_asymptotic_approx} Suppose that in $S$--dimensional space we are given a pair of random vectors $(\u^*,\v^*)$ and a collection of vectors $(\u_i)_{i=1}^{K-1}$, $(\v_j)_{j=1}^{M-1}$ and numbers $(c_j)_{j=1}^{M-1}$, with $c_K=\dots=c_{M-1}=0$ such that
 \begin{itemize}
  \item The $S\times 2$ matrix $(\u^*,\v^*)$ has fourth-moment Gaussian elements, in the sense of Definition \ref{Def_4moments}, such that the $S$ rows are mean zero and uncorrelated with each other. The covariance matrix of each row is the same $\begin{pmatrix} C_{uu} & C_{uv}\\ C_{vu} & C_{vv}\end{pmatrix}$.
  \item Either $(\u_i)_{i=1}^{K-1}$, $(\v_j)_{j=1}^{M-1}$, $(c_j)_{j=1}^{M-1}$ are deterministic, or independent from $(\u^*,\v^*)$.
  \item The scalar products for $i,j\ge 1$ are $\langle \u_i,\u_j\rangle =\delta_{i=j}$, $\langle \v_i,\v_j\rangle =\delta_{i=j}$, $\langle \u_i,\v_j\rangle =\delta_{i=j} c_i$.
 \end{itemize}
Then as $S\to\infty$ we have
 \begin{align}
 \label{eq_approx_1}
 \sum_{i=1}^{K-1}   \frac{\langle \u^*, \u_i\rangle^2}{z - c_i^2}&\approx C_{uu}  \sum_{i=1}^{K-1}   \frac{1}{z - c_i^2},
 &
 \sum_{j=1}^{M-1}   \frac{\langle \u^*, \v_j\rangle^2}{z - c_j^2}&\approx C_{uu}  \sum_{j=1}^{M-1}   \frac{1}{z - c_j^2},
 \\
 \label{eq_approx_2}
 \sum_{i=1}^{K-1}   \frac{\langle \v^*, \u_i\rangle^2}{z - c_i^2}&\approx C_{vv}  \sum_{i=1}^{K-1}   \frac{1}{z - c_i^2},
 &
 \sum_{j=1}^{M-1}   \frac{\langle \v^*, \v_j\rangle^2}{z - c_j^2}&\approx C_{vv}  \sum_{j=1}^{M-1}   \frac{1}{z - c_j^2},
 \\
  \label{eq_approx_3}
 \sum_{i=1}^{K-1}   \frac{\langle \u^*, \u_i\rangle \langle \v^*, \u_i\rangle}{z - c_i^2}&\approx C_{uv}  \sum_{i=1}^{K-1}   \frac{1}{z - c_i^2},
 &
 \sum_{j=1}^{M-1}   \frac{\langle \u^*, \v_j\rangle\langle \v^*, \v_j\rangle}{z - c_j^2}&\approx C_{uv}  \sum_{j=1}^{M-1}   \frac{1}{z - c_j^2},
 \\
  \label{eq_approx_4}
 \sum_{i=1}^{K-1}   \frac{c_i\langle \u^*, \u_i\rangle \langle \u^*, \v_i\rangle}{z - c_i^2}&\approx C_{uu}  \sum_{i=1}^{K-1}   \frac{c_i^2}{z - c_i^2},
 &
 \sum_{i=1}^{K-1}   \frac{c_i\langle \v^*, \u_i\rangle \langle \v^*, \v_i\rangle}{z - c_i^2}&\approx C_{vv}  \sum_{i=1}^{K-1}   \frac{c_i^2}{z - c_i^2},
  \\
   \label{eq_approx_5}
 \sum_{i=1}^{K-1}   \frac{c_i\langle \u^*, \u_i\rangle \langle \v^*, \v_i\rangle}{z - c_i^2}&\approx C_{uv}  \sum_{i=1}^{K-1}   \frac{c_i^2}{z - c_i^2},
 &
 \sum_{i=1}^{K-1}   \frac{c_i\langle \v^*, \u_i\rangle \langle \u^*, \v_i\rangle}{z - c_i^2}&\approx C_{uv}  \sum_{i=1}^{K-1}   \frac{c_i^2}{z - c_i^2},
 \end{align}
 where the $\approx$ sign means that the difference between right- and left-hand sides is $o(S)$ (tends to $0$ in probability after dividing by $S$) uniformly in complex $z$ bounded away from all zeros  of the denominators, $\{c_i^2\}$. Similar asymptotic approximations hold if we replace all $(z-c_i)$ denominators with $(z-c_i)^2$.
\end{lemma}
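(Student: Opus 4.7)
\textbf{Proof plan for Lemma \ref{Lemma_asymptotic_approx}.} The strategy is a conditional law of large numbers argument: after conditioning on $(\u_i)_{i=1}^{K-1}$, $(\v_j)_{j=1}^{M-1}$, $(c_j)_{j=1}^{M-1}$ (which by hypothesis may be treated as deterministic), each left-hand side is a sum with expectation equal to the corresponding right-hand side, and orthonormality together with the fourth-moment Wick rule makes the covariance between different summands vanish, giving an $O(S)$ variance bound that is much smaller than the $O(S^2)$ scale relevant for the $o(S)$ conclusion.

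First, I would compute the relevant first and second moments of the scalar products $a_i := \langle \u^*,\u_i\rangle$, $a'_j := \langle \u^*,\v_j\rangle$, $b_i := \langle \v^*,\u_i\rangle$, $b'_j := \langle \v^*,\v_j\rangle$. Because the $S$ rows of $(\u^*,\v^*)$ are mean zero, uncorrelated, and share the covariance $\bigl(\begin{smallmatrix} C_{uu} & C_{uv}\\ C_{vu} & C_{vv}\end{smallmatrix}\bigr)$, bilinearity yields $\E[a_i a_j] = C_{uu}\langle \u_i,\u_j\rangle = C_{uu}\delta_{ij}$, $\E[a_i b_j] = C_{uv}\delta_{ij}$, $\E[a_i b'_j] = C_{uv}\langle \u_i,\v_j\rangle = C_{uv} c_i\delta_{ij}$, and so on for the remaining pairs. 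Substituting these identities into the left-hand sides of \eqref{eq_approx_1}--\eqref{eq_approx_5} term by term reproduces each claimed right-hand side, so it remains to prove concentration.

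Second, I would bound the variance. Fourth-moment Gaussianity of $(\u^*,\v^*)$ is preserved by linear combinations, so $\{a_i, a'_j, b_i, b'_j\}$ are jointly fourth-moment Gaussian. Applying the Wick rule \eqref{eq_Wick}, each covariance such as $\operatorname{Cov}(a_i^2, a_j^2) = 2\,\E[a_i a_j]^2 = 2C_{uu}^2\delta_{ij}$, $\operatorname{Cov}(a_i a'_i, a_j a'_j) = C_{uu}^2(1+c_i^2)\delta_{ij}$, $\operatorname{Cov}(a_i b_i, a_j b_j) = (C_{uu}C_{vv}+C_{uv}^2 c_i^2)\delta_{ij}$ vanishes off the diagonal, because the only way for the pairings to connect index $i$ with index $j$ is through inner products $\langle \u_i,\u_j\rangle$, $\langle \v_i,\v_j\rangle$, or $\langle \u_i,\v_j\rangle$, all of which force $i=j$. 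Consequently the variance of each sum on the LHS has the form $\sum_{i} O(1)\cdot|z-c_i^2|^{-2}$, which is $O(K) = O(S)$ once $z$ stays a fixed distance $\delta>0$ from every $c_i^2$. Chebyshev's inequality then gives $(\mathrm{LHS}-\mathrm{RHS})/S \to 0$ in probability. The same computation handles $(z-c_i^2)^{-2}$ in place of $(z-c_i^2)^{-1}$ with one extra power of $\delta^{-1}$.

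Third, I would lift from deterministic to random $(\u_i,\v_j,c_j)$ by conditioning (the variance bound is uniform in the realization), and from pointwise to uniform convergence in $z$ using the analyticity of both sides away from $\{c_i^2\}$: on any compact set of $z$ at distance $\geq\delta$ from $\{c_i^2\}$, the derivatives in $z$ carry only one additional factor of $(z-c_i^2)^{-1}$, so the random functions are Lipschitz in $z$ with a deterministic constant, and pointwise convergence in probability on a finite $\delta$-net upgrades to uniform convergence in probability. The main potential pitfall is merely bookkeeping: there are twelve separate sums, and one has to verify that in each case the covariance does indeed vanish off the diagonal and that the leading constant matches; this is a mechanical but essential check driven entirely by orthonormality of $\u_i,\v_j$ and the identity $\langle \u_i,\v_j\rangle = c_i\delta_{ij}$.
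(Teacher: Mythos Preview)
Your proposal is correct and follows essentially the same route as the paper's proof: condition on the frame $(\u_i,\v_j,c_j)$, verify that the expectation of each summand matches the right-hand side via the bilinear identity $\E[\langle \u^*,\boldsymbol\chi\rangle\langle \v^*,\boldsymbol\psi\rangle]=C_{uv}\langle\boldsymbol\chi,\boldsymbol\psi\rangle$, use the Wick rule to show distinct summands are uncorrelated so that the variance is $O(S)$, and then upgrade pointwise convergence to uniform convergence via a Lipschitz/finite-net argument. The only cosmetic points are a small slip in one illustrative covariance constant (irrelevant to the $\delta_{ij}$ conclusion) and that uniformity over unbounded $z$ requires one extra remark on the $1/|z|$ decay of each term, which the paper handles by restricting to $|z|\le d$ first.
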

\begin{proof}
 We condition on  $(\u_i)_{i=1}^{K-1}$, $(\v_j)_{j=1}^{M-1}$, $(c_j)_{j=1}^{M-1}$ throughout the proof and assume them to be deterministic.

{\bf Step 1.} Let us show that the expectation of the left-hand side matches the right-hand side in all approximations. Take any two deterministic $S$--dimensional vectors ${\boldsymbol \chi}$  and ${\boldsymbol \psi}$. Using the uncorrelated mean $0$ assumption on the components of $\u^*$ and $\v^*$, we have
$$
 \E \langle \u^*, {\boldsymbol \chi}\rangle \langle \u^*, {\boldsymbol \psi}\rangle= \E \left(\sum_{k=1}^S [\u^*]_k \chi_k\right) \left(\sum_{k=1}^S [\u^*]_k \psi_k\right)
 =\sum_{k=1}^S \chi_k \psi_k \E  \bigl([\u^*]_k [\u^*]_k\bigr) = C_{uu} \langle {\boldsymbol \chi},{\boldsymbol \psi}\rangle.
$$
Similarly,
$$
 \E \langle \v^*, {\boldsymbol \chi}\rangle \langle \v^*, {\boldsymbol \psi}\rangle= C_{vv} \langle {\boldsymbol \chi},{\boldsymbol \psi}\rangle, \qquad  \E \langle \u^*, {\boldsymbol \chi}\rangle \langle \v^*, {\boldsymbol \psi}\rangle= C_{uv} \langle {\boldsymbol \chi},{\boldsymbol \psi}\rangle.
$$
Applying these expectation identities and using the scalar products table for $\u_i$ and $\v_i$, we conclude that the expectations of the left sides of \eqref{eq_approx_1}-\eqref{eq_approx_5} are given by the right sides.

\smallskip

{\bf Step 2.} Next, we show that the terms in the sums in  \eqref{eq_approx_1}-\eqref{eq_approx_5} are uncorrelated. For that we take four $S$--dimensional deterministic vectors ${\boldsymbol \chi}$, ${\boldsymbol \psi}$, ${\boldsymbol \chi}'$, ${\boldsymbol \psi}'$ such that
$$
 \langle {\boldsymbol \chi}, {\boldsymbol \chi}'\rangle=  \langle {\boldsymbol \chi}, {\boldsymbol \psi}'\rangle =  \langle {\boldsymbol \psi}, {\boldsymbol \chi}'\rangle =  \langle {\boldsymbol \psi}, {\boldsymbol \psi}'\rangle =0.
$$
Since the coordinates of $\u^*$ are mean $0$, uncorrelated, fourth-moment Gaussian,  we have:
\begin{align}
\label{eq_x5}
\E &\langle \u^*, {\boldsymbol \chi}\rangle \langle \u^*, {\boldsymbol \psi}\rangle  \langle \u^*, {\boldsymbol \chi}'\rangle \langle \u^*, {\boldsymbol \psi}'\rangle
\\ \notag &
=
\E \left(\sum_{k=1}^S [\u^*]_k \chi_k\right) \left(\sum_{k=1}^S [\u^*]_k \psi_k\right) \left(\sum_{k=1}^S [\u^*]_k \chi'_k\right) \left(\sum_{k=1}^S [\u^*]_k y'_k\right)
\\ \notag &=\sum_{k=1}^S \left(\E \bigl([\u^*]_k\bigr)^4- 3 \left(\E \bigl([\u^*]_k\bigr)^2\right)^2\right)  \chi_k \psi_k \chi_k' \psi_k'
\\ \notag & \qquad + (C_{uu})^2 \left[ \langle {\boldsymbol \chi}, {\boldsymbol \psi}\rangle \langle {\boldsymbol \chi}', {\boldsymbol \psi}'\rangle+\langle {\boldsymbol \chi}, {\boldsymbol \psi}'\rangle \langle {\boldsymbol \chi}', {\boldsymbol \psi}\rangle
+\langle {\boldsymbol \chi}, {\boldsymbol \chi}'\rangle \langle {\boldsymbol \psi}, {\boldsymbol \psi}'\rangle \right]
\\ & \notag =(C_{uu})^2 \langle {\boldsymbol \chi}, {\boldsymbol \psi}\rangle \langle {\boldsymbol \chi}', {\boldsymbol \psi}'\rangle=\E \langle \u^*, {\boldsymbol \chi}\rangle \langle \u^*, {\boldsymbol \psi}\rangle  \cdot \E \langle \u^*, {\boldsymbol \chi}'\rangle \langle \u^*, {\boldsymbol \psi}'\rangle,
\end{align}
where in transition from the second to the third line we used that all the joint moments of coordinates, in which some coordinate is repeated one time, vanish because of fourth-moment Gaussian assumption, and the sum $\sum\limits_{k=1}^S$ in the third line vanishes by the same reason. In the same way, we get
\begin{equation}
\label{eq_x6}
\E \langle \v^*, {\boldsymbol \chi}\rangle \langle \v^*, {\boldsymbol \psi}\rangle  \langle \v^*, {\boldsymbol \chi}'\rangle \langle \v^*, {\boldsymbol \psi}'\rangle=\E \langle \v^*, {\boldsymbol \chi}\rangle \langle \v^*, {\boldsymbol \psi}\rangle \cdot \E \langle \v^*, {\boldsymbol \chi}'\rangle \langle \v^*, {\boldsymbol \psi}'\rangle,
\end{equation}
And by a similar computation, we have
\begin{align}
\label{eq_x7}
\E &\langle \u^*, {\boldsymbol \chi}\rangle \langle \v^*, {\boldsymbol \psi}\rangle  \langle \u^*, {\boldsymbol \chi}'\rangle \langle \v^*, {\boldsymbol \psi}'\rangle
\\ & \notag =
\E \left(\sum_{k=1}^S [\u^*]_k \chi_k\right) \left(\sum_{k=1}^S [\v^*]_k \psi_k\right) \left(\sum_{k=1}^S [\u^*]_k \chi'_k\right) \left(\sum_{k=1}^S [\v^*]_k \psi'_k\right)
\\ \notag
 &=\sum_{k=1}^S \Bigl(\E \left([\u^*]_k^2 [\v^*]_k^2\right) -\E [\u^*]_k^2 \E [\v^*]_k^2 - 2 \left( \E [\u^*]_k [\v^*]_k \right)^2   \Bigr)  \chi_k \psi_k \chi_k' \psi_k'
 \\ \notag
 &\qquad +  C_{uv} C_{uv} \langle {\boldsymbol \chi}, {\boldsymbol \psi}\rangle \langle {\boldsymbol \chi}', {\boldsymbol \psi}'\rangle + C_{uv} C_{uv}\langle {\boldsymbol \chi}, {\boldsymbol \psi}'\rangle \langle {\boldsymbol \chi}', {\boldsymbol \psi}\rangle
+ C_{uu} C_{vv} \langle {\boldsymbol \chi}, {\boldsymbol \chi}'\rangle \langle {\boldsymbol \psi}, {\boldsymbol \psi}'\rangle
 \\
 & \notag =(C_{uv})^2 \langle {\boldsymbol \chi}, {\boldsymbol \psi}\rangle \langle {\boldsymbol \chi}', {\boldsymbol \psi}'\rangle
 =\E \langle \u^*, {\boldsymbol \chi}\rangle \langle \v^*, {\boldsymbol \psi}\rangle  \cdot \E \langle \u^*, {\boldsymbol \chi}'\rangle \langle \v^*, {\boldsymbol \psi}'\rangle.
\end{align}
Altogether, \eqref{eq_x5}, \eqref{eq_x6}, and \eqref{eq_x7} show that all the sums in \eqref{eq_approx_1}-\eqref{eq_approx_5} have uncorrelated terms.

\smallskip

{\bf Step 3.} The statement of the lemma for a fixed $z$ follows by the weak law of large numbers for uncorrelated sum: the variance of each sum is upper bounded by a constant times $S$.

In order to prove the uniformity in $z$, let $\xi_S(z)$ denote the difference of the left- and right-hand sides in one of the approximations \eqref{eq_approx_1}--\eqref{eq_approx_5}. We fix $\delta>0$ and aim to prove that $\tfrac{1}{S} \xi_S(z)$ tends to $0$ in probability as $S\to\infty$ uniformly over all $z\in\mathbb C$ at distance at least $\delta$ from $\{c_i^2\}$. Choose $\eps>0$ and note that the magnitude of each term in the sums \eqref{eq_approx_1} is upper bounded by $\frac{1}{|z|+1}$. Hence, there exists a constant $d>0$, such that
\begin{equation}
\label{eq_x26}
 \left|\frac{1}{S} \xi_S(z)\right|< \eps, \qquad \text{ with probability } 1  \text{ for all }z\text{ such that }|z|>d.
\end{equation}
Introduce a compact set $\mathfrak C\subset \mathbb C$ given by
$$
 \mathfrak C=\{z\in\mathbb C:\, |z|\le d,\quad |z-c_i^2|\ge \delta \text{ for all }i\}.
$$
There exists a constant $\ell$, depending only on $\delta$, such that $\frac{1}{S} \xi_S(z)$ is $\ell$--Lipshitz on $\mathfrak C$:
$$
 \left|\frac{1}{S} \xi_S(z_1)-\frac{1}{S} \xi_S(z_2)\right|\le \ell |z_1-z_2|, \qquad \text{ for all } z_1,z_2\in\mathfrak C,\qquad \text{with probability }1.
$$
Therefore, we can choose a finite collection of points $z_1,\dots,z_n\in \mathfrak C$, such that $n$ does not grow with $S$ and
\begin{equation}
 \mathrm{Prob}\left[\sup_{z\in\mathfrak C} \left|\frac{1}{S} \xi_S(z)\right| > \eps\right]\le \sum_{i=1}^n  \mathrm{Prob}\left[\left|\frac{1}{S} \xi_S(z_i)\right| > \eps/2\right].
\end{equation}
The right-hand side of the last formula tends to $0$ as $S\to\infty$ by the fixed $z$ convergence result. Hence, combining with \eqref{eq_x26} we deduce the desired uniformity in $z$.
\end{proof}

\begin{proof}[Proof of Theorem \ref{Theorem_master}] Using Lemma \ref{Lemma_simple_data}, we assume without loss of generality that $\ba=(1,0^{K-1})$ and $\bb=(1,0^{M-1})$, which means that the signal vectors are the first rows of $\U$ and $\V$, respectively. Further, by the same lemma we assume $A[i,j]=\delta_{j=i+1}$ and $B[i,j]=\delta_{j=i+1}$, which means that the noise part is given by the remaining $K-1$ rows of $\U$, denoted as $(K-1)\times S$ matrix $\widetilde \U$ and remaining $M-1$ rows of $\V$ denoted as $(M-1)\times S$ matrix $\widetilde \V$. This is the setting of Theorem \ref{Theorem_master_equation} and we apply it.

We divide \eqref{eq_CCA_master} by $S^2$ and note that by the law of large numbers
\begin{equation}
\label{eq_x10}
 \frac{1}{S}\langle \u^*, \u^* \rangle = C_{uu}+o(1),\qquad  \frac{1}{S}\langle \v^*, \v^* \rangle = C_{vv}+o(1).
\end{equation}
Hence, using Lemma \ref{Lemma_asymptotic_approx}, we get an asymptotic approximation of \eqref{eq_CCA_master}:
\begin{multline*}
 C^2_{uv}\left[1 - \frac{M-1}{S} -z  \frac{1}{S} \sum_{i=1}^{K-1}    \frac{1-c_i^2}{z - c_i^2}   \right]^2+o(1)\\= C_{uu} C_{vv} z \left[-1+\frac{1}{S} \sum_{j=1}^{M-1}\frac{  1 -2 c_j^2 }{z-c_j^2} +z\frac{1}{S} \sum_{i=1}^{K-1}   \frac{1}{z - c_i^2} \right] \left[-1+\frac{1}{S} \sum_{i=1}^{K-1}  \frac{1 - 2  c_i^2}{z -c_i^2} +z\frac{1}{S} \sum_{j=1}^{M-1} \frac{1}{z - c_j^2}\right].
\end{multline*}
The desired \eqref{eq_CCA_master_asymptotic} is an equivalent form of the same equation with renamed variable $z=\lambda_i$.

Further, recalling that $\widehat \x$ in Theorem \ref{Theorem_master} and Definition \ref{Definition_sample_setting} becomes  $\alpha_0 \u^*+ \sum_{i=1}^{K-1} \alpha_i \u_i$ in Theorem \ref{Theorem_master_equation}, while $\x$ in Theorem \ref{Theorem_master} becomes $\u^*$, and noticing that $\widehat \x$ was normalized in our procedure, while $\x$ was not, we rewrite $\cos^2(\theta_x)$ as
\begin{equation}
\label{eq_x11}
\cos^2(\theta_x)=\frac{\left(\langle \u^*, \alpha_0 \u^*+ \sum_{i=1}^{K-1} \alpha_i \u_i \rangle \right)^2 }{\langle \u^*,\u^*\rangle}
\end{equation}
 For the denominator in \eqref{eq_x11} we use \eqref{eq_x10}. For the numerator we first use \eqref{eq_CCA_cos_a} to express it through $\alpha_0$. Using Lemma \ref{Lemma_asymptotic_approx} and notation \eqref{eq_G_def}, we get
 \begin{align}
\notag \cos^2(\theta_x)+o(1)&=S \frac{\alpha_0^2}{C_{uu}}
 \Biggl(C_{uu}+ \frac{1}{S}\sum_{i=1}^{K-1} \frac{1}{z - c_i^2} \Bigl[ c_i^2 C_{uu} -z C_{uu} -z \mathfrak Q_\alpha(z) \bigl(C_{uv} - c_i^2 C_{uv}\bigr)\Bigr]\Biggr)^2
 \\ &= C_{uu} S\alpha_0^2
 \Biggl(1-\frac{K-1}{S} -z \mathfrak Q_\alpha(z) \frac{C_{uv}}{C_{uu}}\left(\frac{K-1}{S}+ (1-z) G(z)\right) \Biggr)^2 \label{eq_x25}
\end{align}
In the last formula $K-1$ can be replaced with $K$ leading to another $o(1)$ error. For $\alpha_0^2$, we use \eqref{eq_CCA_alpha},  Lemma \ref{Lemma_asymptotic_approx} and \eqref{eq_x10} getting:
\begin{align}
\notag  \frac{1}{C_{uu} S \alpha_0^2}&=o(1)+1 + \frac{2}{S} \sum_{i=1}^{K-1}  \frac{1}{z - c_i^2} \Bigl[ c_i^2  -z  -z \mathfrak Q_\alpha(z) \tfrac{C_{uv}}{C_{uu}} \bigl(1 - c_i^2 \bigr)\Bigr]
\\
\notag & \qquad +\frac{1}{S}\sum_{i=1}^{K-1} \frac{1}{(z - c_i^2)^2} \Bigl[ c_i^2 -2 z c_i^2  +z^2  +2 z^2 \tfrac{C_{uv}}{C_{uu}} \mathfrak Q_\alpha(z)(1-c_i^2) + z^2 \mathfrak Q_\alpha(z)^2 \tfrac{C_{vv}}{C_{uu}}(1-c_i^2)\Bigr]
\\
&=\notag o(1)+1-2\tfrac{K}{S} - 2\tfrac{K}{S }z \tfrac{C_{uv}}{C_{uu}} \mathfrak Q_\alpha(z)  \\ &\notag \qquad +G(z)\left[2z-1+2z(2z-1)\tfrac{C_{uv}}{C_{uu}}  \mathfrak Q_\alpha(z) + \tfrac{C_{vv}}{C_{uu}}z^2 \mathfrak Q_\alpha^2(z)\right]
\\  &\qquad +(z^2-z)G'(z) \Bigl[1  +2\tfrac{C_{uv}}{C_{uu}} z \mathfrak Q_\alpha(z) + \tfrac{C_{vv}}{C_{uu}}z \mathfrak Q_\alpha^2(z) \Bigr]\notag .
\end{align}
Next, we analyze the asymptotics of $\mathfrak Q_\alpha(z)$ from \eqref{eq_CCA_alpha_Q} using again Lemma \ref{Lemma_asymptotic_approx} and \eqref{eq_x10}:
\begin{align}
 \mathfrak Q_\alpha(z)\frac{C_{uv}}{C_{uu}} &=  \frac{-1+ \frac{1}{S} \sum_{j=1}^{M-1}\frac{  1 -2 c_j^2 }{z-c_j^2} +\frac{z}{S}\sum_{i=1}^{K-1}   \frac{1}{z - c_i^2} }
 {1+\frac{1}{S} \sum_{j=1}^{M-1}  \frac{c_j^2-z}{z-c_j^2} -\frac{z}{S} \sum_{i=1}^{K-1}    \frac{1-c_i^2}{z - c_i^2}}+o(1)
 \\
 \notag &= -  \frac{1-\frac{2K}{S} -\frac{1}{z} \frac{M-K}{S}- (1-z) G(z)}
 {1-\frac{M}{S} -z\frac{K}{S} -z(1-z) G(z)}+o(1)
\end{align}
This is precisely the function $Q_x(z)$ of \eqref{eq_Q_def}. Hence, plugging $\alpha_0^2$ and $\mathfrak Q_\alpha(z)$  into \eqref{eq_x25} and recalling that $r^2=\frac{C_{uv}^2}{C_{uu} C_{vv}}$, we get \eqref{eq_alpha_cos_final}.

For $\cos^2(\theta_y)$ we argue similarly and obtain:
\begin{align*}
\cos^2(\theta_y)&=\frac{\langle \v^*, \beta_0 \v^*+  \sum_{i=1}^{M-1} \beta_i \v_i \rangle^2 }{\langle \v^*,\v^*\rangle}\\&= S\frac{\beta_0^2}{C_{vv}} \Biggl(C_{vv}+ \frac{1}{S}\sum_{j=1}^{M-1}\frac{1}{z-c_j^2} \Bigl[-z \mathfrak Q_\beta(z) \bigl(  C_{uv} -c_j^2 C_{uv}  \bigr)+C_{vv}c_j^2  -z C_{vv}\Bigr]\Biggr)^2 +o(1)
\\&= S \beta_0^2 C_{vv}\left(1-\tfrac{M}{S}- z \mathfrak Q_\beta(z) \tfrac{C_{uv}}{C_{vv}} \left(\tfrac{M}{S}+(1-z) G(z) + \tfrac{1-z}{z}  \cdot \tfrac{M-K}{S}\right)\right)^2+o(1),
\\
 \frac{1}{C_{vv} S \beta_0^2}&= o(1)+1-2\tfrac{M}{S} - 2\tfrac{K}{S }z \tfrac{C_{uv}}{C_{vv}} \mathfrak Q_\beta(z)+\tfrac{M-K}{S}\left[ 1+  \tfrac{C_{uu}}{C_{vv}} \mathfrak Q_\beta^2(z)\right]  \\ &\notag \qquad +G(z)\left[2z-1+2z(2z-1)\tfrac{C_{uv}}{C_{vv}}  \mathfrak Q_\beta(z) + \tfrac{C_{uu}}{C_{vv}}z^2 \mathfrak Q_\beta^2(z)\right]
\\  &\qquad +(z^2-z)G'(z) \Bigl[1  +2\tfrac{C_{uv}}{C_{vv}} z \mathfrak Q_\beta(z) + \tfrac{C_{uu}}{C_{vv}}z \mathfrak Q_\beta^2(z) \Bigr],
\\
 \mathfrak Q_\beta(z)\frac{C_{uv}}{C_{vv}} &=
   -  \frac{1-\frac{K}{S} -\frac{M}{S}- (1-z) G(z)}
 {1-\frac{M}{S}-z\frac{K}{S} -z(1-z) G(z)}+o(1).
\end{align*}
Combining all ingredients, we get \eqref{eq_beta_cos_final}. \end{proof}

\subsection{Proofs of Theorems \ref{Theorem_basic_setting} and \ref{Theorem_4moments}}

\label{Section_Wachter_specialization}

Theorem \ref{Theorem_4moments} is a particular case of Theorem \ref{Theorem_master}, and the proof of Theorem \ref{Theorem_4moments} consists of simplifications of the formulas \eqref{eq_CCA_master_asymptotic}--\eqref{eq_beta_cos_final} in the i.i.d.\ noise situation, which we do in this section.

Take two real parameters\footnote{In order to match the notations of \cite{bao2019canonical}, we should set there $c_1=\tau_M^{-1}$ and $c_2=\tau_K^{-1}$. In order to match the notations of \cite{BG1, BG2}, we should set there $\q=\tau_K-\tau_K/\tau_M$,
$\p=\tau_K/\tau_M$.}   $\tau_K>\tau_M>1$ with $\tau_K^{-1}+\tau_M^{-1}<1$ and define the \emph{Wachter distribution} $\omega_{\tau_K,\tau_M}$ through its density
\begin{equation}
\label{eq_Wachter}
 \omega_{\tau_K,\tau_M}(x)\, \dd x=\frac{\tau_K}{2\pi } \frac{\sqrt{(x-\lambda_-)(\lambda_+-x)}}{x (1-x)} \mathbf 1_{[\lambda_-,\lambda_+]}\, \dd x,
\end{equation}
where the support $[\lambda_-,\lambda_+]$ of the measure is defined via
\begin{equation}
\label{eq_lambda_pm_def}
\lambda_\pm=\left(\sqrt{\tau_M^{-1}(1-\tau_K^{-1})}\pm \sqrt{\tau_K^{-1}(1-\tau_M^{-1})}  \right)^2.
\end{equation}
One can check that $0<\lambda_-<\lambda_+<1$ for every  $\tau_K>\tau_M>1$ with $\tau_K^{-1}+\tau_M^{-1}<1$ and that \eqref{eq_Wachter} is a probability measure. A direct computation also shows that:

\begin{lemma} \label{Lemma_Wachter_modified_Stieltjes}
 The \emph{modified} Stieltjes transform of $\omega_{\tau_K,\tau_M}$ is:
\begin{equation}
\label{eq_Wachter_Stieltjes}
 \mathcal G_{\tau_K,\tau_M}(z):=\frac{1}{\tau_K} \int_{\lambda_-}^{\lambda_+} \frac{1}{z-x} \omega_{\tau_K,\tau_M}(x)\, \dd x = \frac{\tau_M^{-1}+\tau_K^{-1}-z+ \sqrt{ (z-\lambda_-)(z-\lambda_+)}}{2 z(z-1)} + \frac{1}{z \tau_K},
\end{equation}
where the branch of the square root is chosen so that for large positive $z$ the value of the square root is positive and for negative $z$ it is negative.
\end{lemma}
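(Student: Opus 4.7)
The plan is to evaluate the integral defining $\mathcal{G}_{\tau_K,\tau_M}(z)$ by contour integration on the complex plane, reducing it to a sum of residues at the three poles $w\in\{0,1,z\}$ of a meromorphic function with a branch cut on $[\lambda_-,\lambda_+]$.

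Concretely, I would introduce the function
\[
 f(w) = \frac{\sqrt{(w-\lambda_-)(w-\lambda_+)}}{w(w-1)(z-w)},
\]
defined on $\mathbb C\setminus[\lambda_-,\lambda_+]$ with the branch of the square root that is positive for large positive real $w$ and negative for large negative real $w$ (the same branch specified in the lemma). The standard jump relations for this square root give $\sqrt{(x\pm i0-\lambda_-)(x\pm i0-\lambda_+)} = \pm i\sqrt{(x-\lambda_-)(\lambda_+-x)}$ for $x\in(\lambda_-,\lambda_+)$, so the dogbone contour integral around the cut equals $-2i\int_{\lambda_-}^{\lambda_+}\frac{\sqrt{(x-\lambda_-)(\lambda_+-x)}}{x(1-x)(z-x)}\,dx$.

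Next I would note that at infinity $\sqrt{(w-\lambda_-)(w-\lambda_+)}\sim w$, so $f(w)=-w^{-2}+O(w^{-3})$ and the integral over a large circle vanishes. By Cauchy's theorem, the dogbone integral equals $2\pi i$ times the sum of residues at $w=0,1,z$. Computing these residues requires the branch values at $0$ and $1$: by the chosen branch one has $\sqrt{(0-\lambda_-)(0-\lambda_+)}=-\sqrt{\lambda_-\lambda_+}$ (since $0<\lambda_-$ lies to the left of the cut, where the square root is negative), and $\sqrt{(1-\lambda_-)(1-\lambda_+)}=+\sqrt{(1-\lambda_-)(1-\lambda_+)}$ (since $\lambda_+<1$). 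Using the explicit formulas $\lambda_\pm=(\sqrt{\tau_M^{-1}(1-\tau_K^{-1})}\pm\sqrt{\tau_K^{-1}(1-\tau_M^{-1})})^2$, one checks by direct expansion that
\[
 \lambda_+\lambda_- = (\tau_M^{-1}-\tau_K^{-1})^2, \qquad (1-\lambda_+)(1-\lambda_-) = (1-\tau_M^{-1}-\tau_K^{-1})^2,
\]
yielding $\sqrt{\lambda_+\lambda_-}=\tau_M^{-1}-\tau_K^{-1}$ and $\sqrt{(1-\lambda_+)(1-\lambda_-)}=1-\tau_M^{-1}-\tau_K^{-1}$ (using $\tau_K\ge\tau_M$ and $\tau_M^{-1}+\tau_K^{-1}<1$).

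Collecting everything, I get
\[
 \int_{\lambda_-}^{\lambda_+}\!\frac{\sqrt{(x-\lambda_-)(\lambda_+-x)}}{x(1-x)(z-x)}\,dx
 = -\pi\!\left[\frac{\tau_M^{-1}-\tau_K^{-1}}{z} + \frac{1-\tau_M^{-1}-\tau_K^{-1}}{z-1} - \frac{\sqrt{(z-\lambda_-)(z-\lambda_+)}}{z(z-1)}\right],
\]
and multiplying by $\frac{1}{\tau_K}\cdot\frac{\tau_K}{2\pi}=\frac{1}{2\pi}$ yields $\mathcal G_{\tau_K,\tau_M}(z)$. The final step is algebraic: combine the first two rational terms over the common denominator $z(z-1)$, and verify that the resulting expression splits as the desired sum of $\frac{\tau_M^{-1}+\tau_K^{-1}-z+\sqrt{(z-\lambda_-)(z-\lambda_+)}}{2z(z-1)}$ plus $\frac{1}{z\tau_K}$, by checking that $\frac{-2\tau_K^{-1}+2z\tau_K^{-1}}{2z(z-1)}=\frac{1}{z\tau_K}$. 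The main subtlety is the bookkeeping of signs coming from the chosen branch of the square root at $w=0$ (which contributes with the "wrong" sign relative to $w=1$); a quick sanity check is that the large-$z$ expansion of the right-hand side gives the leading term $\frac{1}{z\tau_K}$, consistent with $\omega_{\tau_K,\tau_M}$ being a probability measure.
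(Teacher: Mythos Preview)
Your contour-integration argument is correct: the residues at $w=0,1,z$ are exactly as you state, the branch values $\sqrt{\lambda_-\lambda_+}=\tau_M^{-1}-\tau_K^{-1}$ and $\sqrt{(1-\lambda_-)(1-\lambda_+)}=1-\tau_M^{-1}-\tau_K^{-1}$ follow from the explicit formula for $\lambda_\pm$, and the final algebraic regrouping into the form \eqref{eq_Wachter_Stieltjes} checks out. The paper itself does not give a proof --- it simply states the lemma as following from ``a direct computation'' --- so your argument supplies exactly the details the paper omits, via what is the standard route for Stieltjes transforms of measures with square-root-edge densities.
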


\begin{theorem} \label{Theorem_Wachter}
 For $K\le M <S$, let $\tilde \U$ and $\tilde \V$ be $(K-1)\times S$ and $(M-1)\times S$ random matrices, respectively, so that all their matrix elements are independent mean $0$ variance $1$ random variables with uniformly bounded $(4+\kappa)$th moments for some $\kappa>0$.  Let
 $c_1^2\ge \dots \ge c_{K-1}^2$ be squared sample canonical correlation coefficients between $\tilde \U$ and $\tilde \V$, as in Definition \ref{Definition_sample_setting}, and let $\mu^S$ be their empirical distribution:
 $$
 \mu^S=\frac{1}{K-1}\sum_{i=1}^{K-1} \delta_{c_i^2}.
 $$
 Then as $K,M,S\to\infty$, so that $S/M\to \tau_M$ and $S/K\to \tau_K$ with $\tau_K>\tau_M>1$, $\tau_K^{-1}+\tau_M^{-1}< 1$, we have
 $$
  \lim_{S\to\infty} \mu^S=\omega_{\tau_K,\tau_M}, \text{ weakly, in probability; and for each fixed } i\ge 1
 $$
 $$
  \lim_{S\to\infty} c_i^2=\lambda_+, \quad \lim_{S\to\infty} c_{K-i}^2=\lambda_-, \text{ in probability}.
 $$
\end{theorem}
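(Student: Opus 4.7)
My plan is to establish Theorem \ref{Theorem_Wachter} in two phases: first the bulk convergence of the empirical distribution $\mu^S$ to $\omega_{\tau_K,\tau_M}$, then the convergence of individual extreme squared canonical correlations to $\lambda_\pm$.

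For the bulk convergence, the natural approach is to reduce to the Gaussian case by universality. The sample squared canonical correlations depend on $\tilde \U$ and $\tilde \V$ only through their row spans, so under the $(4+\kappa)$ moment assumption, moment-comparison arguments in the style of \cite{yang2022limiting} show that the limiting empirical distribution is the same as when the entries are i.i.d.\ $\mathcal N(0,1)$. In the Gaussian case, rotational invariance makes $P_{\tilde \U}$ and $P_{\tilde \V}$ independent Haar-distributed orthogonal projectors of ranks $K-1$ and $M-1$, and the $c_i^2$ are the nonzero eigenvalues of the product $P_{\tilde \U} P_{\tilde \V} P_{\tilde \U}$. By Voiculescu's asymptotic freeness of independent Haar projectors, the limiting spectral distribution is the free multiplicative convolution of two Bernoulli laws with weights $\tau_K^{-1}$ and $\tau_M^{-1}$, which by a classical computation equals $\omega_{\tau_K,\tau_M}$. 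Equivalently, and perhaps more natively to this paper, one can verify directly that the Stieltjes transform $G(z) = \frac{1}{S} \sum_{k=1}^{K-1} \frac{1}{z - c_k^2}$ converges to $\mathcal G_{\tau_K,\tau_M}(z)$ by deriving a self-consistent equation from the master-equation framework of Theorem \ref{Theorem_master_equation} specialized to the noise-only situation, and then solving this equation to recover \eqref{eq_Wachter_Stieltjes}.

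For the extreme eigenvalues, the key ingredient is the absence of outliers: with high probability, no $c_i^2$ lies outside a vanishing neighborhood of $[\lambda_-, \lambda_+]$. In the Gaussian case this is the standard edge statement for the Jacobi/MANOVA ensemble from \cite{Johnstone_Jacobi}, and the extension to entries satisfying only our moment conditions follows via edge universality. Combined with the bulk convergence, the absence of outliers forces $c_i^2 \to \lambda_+$ and $c_{K-i}^2 \to \lambda_-$ in probability for each fixed $i \ge 1$. The interlacing inequalities of Lemma \ref{Lemma_interlacing} furnish an auxiliary tool: they reduce edge statements for rank-modified products to the unperturbed case, which helps propagate edge control between closely related ensembles and may also be used to bootstrap bulk convergence to edge convergence via a monotonicity argument.

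The main obstacle is the edge universality step under only $(4+\kappa)$ moments. Standard proofs of edge universality for covariance-type ensembles typically require more regularity or invoke delicate Green-function comparison estimates, and handling this with minimal moment assumptions is the most technically involved part of the argument. The bulk convergence, by contrast, is essentially classical --- either through asymptotic freeness, or through a direct Stieltjes-transform calculation that parallels (and in fact underlies) the computations already carried out in Lemma \ref{Lemma_asymptotic_approx} and the proof of Theorem \ref{Theorem_master}.
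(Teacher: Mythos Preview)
Your proposal sketches a from-scratch proof, but the paper does not prove this theorem at all: its proof of Theorem \ref{Theorem_Wachter} is a one-line citation to \cite{wachter1980limiting} for the original bulk statement and to \cite[Corollary 2.6]{FanYang} for the exact version (bulk plus edge) under the stated moment conditions. So the ``proof'' you should be comparing against is simply an appeal to existing literature.

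Given that, your outline is a reasonable high-level summary of how such results are actually established in that literature, and your identification of edge universality under only $(4+\kappa)$ moments as the delicate step is accurate. Two cautions, though. First, your suggestion to derive the bulk law by specializing the master-equation framework of Theorem \ref{Theorem_master_equation} to the noise-only situation is circular in the context of this paper: the paper uses Theorem \ref{Theorem_Wachter} (via Corollary \ref{Corollary_Wachter}) as an \emph{input} to simplify the master-equation formulas, not the other way around. Second, the interlacing inequalities of Lemma \ref{Lemma_interlacing} compare rank-one perturbations and are not by themselves strong enough to upgrade bulk convergence to edge convergence; the no-outlier statement genuinely requires the edge analysis carried out in the cited references.
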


\begin{proof} The first statement of this kind is due to \cite{wachter1980limiting}, and the exact form we use is from \cite[Corollary 2.6]{FanYang}, see also references in the latter article.
\end{proof}

\begin{corollary} \label{Corollary_Wachter}
  Let $\W=\begin{pmatrix}\U\\ \V \end{pmatrix}$ be $(K+M)\times S$ matrix composed of $S$ independent samples of $\begin{pmatrix} \u \\ \v \end{pmatrix}$, with the latter satisfying Assumption \ref{ass_4moments}. Then as $K,M,S\to\infty$, so that $S/M\to \tau_M$ and $S/K\to \tau_K$ with $\tau_K>\tau_M>1$, $\tau_K^{-1}+\tau_M^{-1}< 1$, the function $G(z)$ of \eqref{eq_G_def} in Theorem \ref{Theorem_master}, converges towards  $\mathcal G_{\tau_K,\tau_M}(z)$ of \eqref{eq_Wachter_Stieltjes} in probability, uniformly over $z$ in compact subsets of $\mathbb C\setminus [\lambda_-,\lambda_+]$.
\end{corollary}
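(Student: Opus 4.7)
The plan is to reduce Corollary \ref{Corollary_Wachter} to Theorem \ref{Theorem_Wachter} applied to the noise matrices $A\U$ and $B\V$, and then upgrade the resulting pointwise convergence of Stieltjes transforms to the required uniform convergence on compacts via a Lipschitz estimate.

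First I would verify that Assumption \ref{ass_4moments} supplies precisely the hypotheses needed for Theorem \ref{Theorem_Wachter} with $\tilde\U:=A\U$ and $\tilde\V:=B\V$. Within one sample, the noise coordinates $\tilde u_1,\dots,\tilde u_{K-1},\tilde v_1,\dots,\tilde v_{M-1}$ are jointly independent with mean $0$, unit variance, and uniformly bounded $(4+\kappa)$-th moments. Because the columns of $\W$ are i.i.d., all $(K-1)S+(M-1)S$ entries of $A\U$ and $B\V$ are mutually independent with these moment properties. Hence Theorem \ref{Theorem_Wachter} applies: the empirical distribution
$$\mu^S:=\frac{1}{K-1}\sum_{k=1}^{K-1}\delta_{c_k^2}$$
converges weakly in probability to $\omega_{\tau_K,\tau_M}$, and the extremal correlations satisfy $c_1^2\to\lambda_+$ and $c_{K-1}^2\to\lambda_-$ in probability.

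Next I would rewrite
$$G(z)=\frac{K-1}{S}\int\frac{1}{z-x}\,d\mu^S(x),\qquad \frac{K-1}{S}\to \tau_K^{-1}.$$
For a fixed $z\in\mathbb{C}\setminus[\lambda_-,\lambda_+]$ with distance at least $2\delta>0$ from $[\lambda_-,\lambda_+]$, introduce the event $E_S:=\{c_1^2\le\lambda_++\delta,\ c_{K-1}^2\ge\lambda_--\delta\}$, which has probability tending to $1$. On $E_S$, all atoms of $\mu^S$ lie in $[\lambda_--\delta,\lambda_++\delta]$, so $x\mapsto(z-x)^{-1}$ is continuous and bounded there; combined with weak convergence of $\mu^S$ this yields $G(z)\to\tau_K^{-1}\int(z-x)^{-1}d\omega_{\tau_K,\tau_M}(x)=\mathcal G_{\tau_K,\tau_M}(z)$ pointwise in probability (and the identification of the limit with the explicit formula in \eqref{eq_Wachter_Stieltjes} is by Lemma \ref{Lemma_Wachter_modified_Stieltjes}).

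Finally I would upgrade to uniform convergence. For a compact $\mathcal K\subset\mathbb{C}\setminus[\lambda_-,\lambda_+]$ at distance at least $2\delta$ from the interval, on $E_S$ I would use the explicit bound
$$|G'(z)|\le \frac{1}{S}\sum_{k=1}^{K-1}\frac{1}{|z-c_k^2|^2}\le \frac{K-1}{S\delta^2}=O(1),$$
which is uniform in $z\in\mathcal K$. Thus on $E_S$ the random functions $G$ are uniformly Lipschitz on $\mathcal K$, and pointwise in-probability convergence at a finite $\epsilon$-net in $\mathcal K$, combined with this Lipschitz bound between net points and the continuity of the holomorphic limit $\mathcal G_{\tau_K,\tau_M}$, yields uniform in-probability convergence on $\mathcal K$. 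Equivalently, one may invoke Vitali's theorem for the uniformly bounded family of holomorphic functions $\{G\}$ on an open neighborhood of $\mathcal K$ in $\mathbb{C}\setminus[\lambda_-,\lambda_+]$. The only delicate point is the probabilistic bookkeeping --- combining the extreme-eigenvalue convergence (which pins down the support of $\mu^S$) with the weak convergence of $\mu^S$ to control $G(z)$ on the same high-probability event $E_S$ --- but this is routine once the event is fixed uniformly in $z\in\mathcal K$.
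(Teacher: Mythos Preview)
Your proposal is correct and takes essentially the same approach as the paper: both identify $A\U$ and $B\V$ as satisfying the hypotheses of Theorem \ref{Theorem_Wachter}, then combine the resulting weak and edge convergence with Lemma \ref{Lemma_Wachter_modified_Stieltjes} to identify the limit. The paper's proof is a two-line sketch that simply cites these two results, whereas you have spelled out the passage from weak convergence of $\mu^S$ plus edge control to uniform convergence of $G(z)$ on compacts via a Lipschitz bound (or Vitali); this is a legitimate and standard elaboration of the same argument.
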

\begin{proof}
 $G(z)=\frac{1}{S} \sum_{k=1}^{K-1} \frac{1}{z- c_k^2}$ in  Theorem \ref{Theorem_master} is constructed by the canonical correlations between the matrices $A \U$ and $B\V$. The latter two matrices under Assumption \ref{ass_4moments} are of the form $\tilde \U$ and $\tilde \V$ of Theorem \ref{Theorem_Wachter}. Applying this theorem and Lemma \ref{Lemma_Wachter_modified_Stieltjes}, we are done.
\end{proof}

Our next step is to analyze the behavior of the relation \eqref{eq_CCA_master_asymptotic} in the i.i.d.\ noise setting and derive the formula \eqref{eq_zrho}.
\begin{lemma} \label{Lemma_Wachter_spike_answer} Consider the relationship between $z$ and $\rho$, written using the function \eqref{eq_Wachter_Stieltjes}:
\begin{equation} \label{eq_x13}
  \frac{\displaystyle z \left[1-2\tau_K^{-1}-\frac{1}{z} \cdot (\tau_M^{-1}-\tau_K^{-1})-(1-z) \mathcal G_{\tau_K,\tau_M}(z) \right]  \left[1-\tau_K^{-1}-\tau_M^{-1}-(1 - z)  \mathcal G_{\tau_K,\tau_M}(z) \right]}{\displaystyle  \left[1 - \tau_M^{-1}-z\tau_K^{-1}  -  z(1-z)  \mathcal G_{\tau_K,\tau_M}(z)  \right]^2}
  =\rho^2.
\end{equation}
If $0\le \rho^2\le \tfrac{1}{\sqrt{(\tau_M-1)(\tau_K-1)}}$, then there is no $z>\lambda_+$ satisfying \eqref{eq_x13}. But if
\begin{equation}
 \label{eq_Wachter_cutoff}
  \rho_c^2<\rho^2 \le 1,
\end{equation}
then there is a unique $z>\lambda_+$ satisfying \eqref{eq_x13}, denoted $z_\rho$. In this situation the relationship between $\rho$ and $z_\rho$ is:
\begin{align}
\label{eq_z_Wachter} z_\rho&=\frac{\bigl(  (\tau_K-1)\rho^2  + 1 \bigr) \bigl(  (\tau_M-1) \rho^2 + 1\bigr)}{\rho^2 \tau_K \tau_M } \qquad \text{ or, equivalently, }\\
\label{eq_z_inverse_Wachter} \rho^2&= \frac{z_\rho-\tau_M^{-1}-\tau_K^{-1}+2\tau_M^{-1}\tau_K^{-1} +\sqrt{(z_\rho-\lambda_-)(z_\rho-\lambda_+)}}{2(1-\tau_M^{-1})(1-\tau_K^{-1})}.
\end{align}
\end{lemma}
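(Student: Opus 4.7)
My overall plan is to substitute the explicit form of $\mathcal G_{\tau_K,\tau_M}(z)$ from \eqref{eq_Wachter_Stieltjes} into \eqref{eq_x13} and reduce the resulting equation to a factorable quadratic, from which both \eqref{eq_z_Wachter} and \eqref{eq_z_inverse_Wachter} can be read off. Throughout I set $R := \sqrt{(z-\lambda_-)(z-\lambda_+)}$, which is a well-defined positive real quantity for $z > \lambda_+$.

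Direct computation using \eqref{eq_Wachter_Stieltjes} shows that each of the three bracketed factors in \eqref{eq_x13} is an affine function of $R$ divided by a monomial in $z$; bringing them to the form $(A+R)/(2z)$, $(B+R)/(2z)$, $(C+R)/2$, one obtains explicit polynomial expressions $A,B,C$ in $z$ with
\[
C = 2 - \tau_K^{-1} - \tau_M^{-1} - z.
\]
The crucial algebraic observation is the telescoping identity
\[
A - C = 2(z-1)(1-\tau_K^{-1}), \qquad B - C = 2(z-1)(1-\tau_M^{-1}).
\]
Substituting $A+R = (C+R) + 2(z-1)(1-\tau_K^{-1})$ and analogously for $B+R$, the equation \eqref{eq_x13} collapses to the factorable quadratic
\[
\rho^2\, z\, X^2 \;=\; (X + 1-\tau_K^{-1})(X + 1-\tau_M^{-1}),
\]
in the auxiliary variable $X := (C+R)/[2(z-1)]$.

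To extract \eqref{eq_z_Wachter}, I would verify the ansatz directly. Writing $\alpha := \tau_K - 1$, $\beta := \tau_M - 1$, elementary algebra yields the identities
\[
z_\rho - \lambda_\pm = \frac{(\sqrt{\alpha\beta}\,\rho^2 \mp 1)^2}{\tau_K\tau_M\,\rho^2}, \qquad z_\rho - 1 = \frac{(\alpha\beta\rho^2-1)(\rho^2-1)}{\tau_K\tau_M\,\rho^2},
\]
so $R(z_\rho) = (\alpha\beta\rho^4 - 1)/(\tau_K\tau_M\rho^2)$ for $\rho^2 \ge \rho_c^2$; plugging these into the quadratic confirms that $z = z_\rho$ solves \eqref{eq_x13}. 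To obtain \eqref{eq_z_inverse_Wachter}, I would instead treat $\tau_K\tau_M\,\rho^2 z_\rho = (\alpha\rho^2+1)(\beta\rho^2+1)$ as a quadratic in $\rho^2$; its discriminant simplifies to $(\tau_K\tau_M R)^2$ (via the identity $(\tau_K+\tau_M-2)^2 - 4\alpha\beta = (\tau_K-\tau_M)^2$), and the larger root is exactly \eqref{eq_z_inverse_Wachter}.

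Finally, the dichotomy follows from monotonicity. Rewriting \eqref{eq_z_Wachter} as
\[
z_\rho = \frac{\alpha\beta}{\tau_K\tau_M}\rho^2 + \frac{\alpha+\beta}{\tau_K\tau_M} + \frac{1}{\tau_K\tau_M\,\rho^2}
\]
and differentiating in $\rho^2$, the derivative vanishes exactly at $\rho^2 = 1/\sqrt{\alpha\beta} = \rho_c^2$, which is the global minimum, with value $z_{\rho_c} = \lambda_+$. Thus $\rho^2 \mapsto z_\rho$ is a strictly increasing bijection from $[\rho_c^2, 1]$ onto $[\lambda_+, 1]$, giving existence and uniqueness of a root $z > \lambda_+$ for $\rho_c^2 < \rho^2 \le 1$. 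For $\rho^2 \le \rho_c^2$, the formula \eqref{eq_z_Wachter} still yields $z \ge \lambda_+$, but then $\alpha\beta\rho^4 - 1 \le 0$ forces the implied $R$ to be non-positive, contradicting $R > 0$ for $z > \lambda_+$; equivalently, this $z$ corresponds to the spurious root of the rationalized equation and is not a genuine solution of \eqref{eq_x13}. The main obstacle is the algebraic reduction to the quadratic: spotting the telescoping identity for $A-C$ and $B-C$ is what makes the clean formula \eqref{eq_z_Wachter} appear; without it the answer looks unmotivated.
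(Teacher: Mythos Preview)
Your proof is correct and reaches the same conclusion, but the route differs from the paper's in a way worth noting. Both of you begin by substituting \eqref{eq_Wachter_Stieltjes} and writing the three bracketed factors as $(A+R)/(2z)$, $(B+R)/(2z)$, $(C+R)/2$ with $C=2-\tau_K^{-1}-\tau_M^{-1}-z$. From there the paper simply combines the three factors and, ``after a long but straightforward computation,'' arrives directly at the closed form \eqref{eq_z_inverse_Wachter} for the left-hand side; it then shows this expression is monotone in $z$ on $[\lambda_+,1]$, evaluates it at the endpoints to obtain $\rho_c^2$ and $1$, and finally inverts to recover \eqref{eq_z_Wachter}. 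You instead exploit the telescoping identities $A-C=2(z-1)(1-\tau_K^{-1})$ and $B-C=2(z-1)(1-\tau_M^{-1})$ to collapse the equation to a clean quadratic in $X=(C+R)/[2(z-1)]$, then verify the ansatz \eqref{eq_z_Wachter} by direct substitution, and argue the dichotomy via monotonicity of $\rho^2\mapsto z_\rho$. Your telescoping observation is genuinely illuminating: it explains \emph{why} the formula factors so neatly, which the paper's brute-force simplification obscures. On the other hand, the paper's monotonicity-in-$z$ argument makes uniqueness and non-existence immediate, whereas your monotonicity-of-the-inverse route requires the (unstated but standard) step that $f\circ g=\mathrm{id}$ with $g$ bijective forces $f=g^{-1}$ on $[\lambda_+,1]$; your closing sentence about the spurious root for $\rho^2\le\rho_c^2$ only rules out the candidate from the formula and is not by itself enough---the real reason is the bijectivity you already established.
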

\begin{proof} The second factor in the numerator of \eqref{eq_x13} is
\begin{multline}
\label{eq_x17}
1-2\tau_K^{-1}-\frac{1}{z} \cdot (\tau_M^{-1}-\tau_K^{-1})+ \frac{\tau_M^{-1}+\tau_K^{-1}-z+ \sqrt{ (z-\lambda_-)(z-\lambda_+)}}{2  z}  -\frac{(1-z)\tau_K^{-1}}{z}
 \\= \frac{\tau_K^{-1}-\tau_M^{-1} +z(1-2\tau_K^{-1}) +\sqrt{ (z-\lambda_-)(z-\lambda_+)}}{2  z},
\end{multline}
the third factor is
\begin{multline}
\label{eq_x18}
1-\tau_M^{-1}-\tau_K^{-1}+  \frac{\tau_M^{-1}+\tau_K^{-1}-z+ \sqrt{ (z-\lambda_-)(z-\lambda_+)}}{2 z} - \frac{(1 - z)\tau_K^{-1}}{z}
\\= \frac{\tau_M^{-1}-\tau_K^{-1}+z(1-2\tau_M^{-1})+ \sqrt{ (z-\lambda_-)(z-\lambda_+)}}{2 z},
\end{multline}
and the expression in denominator (which is being squared) is
\begin{multline}
\label{eq_x19}
1 -\tau_M^{-1}- \tau_K^{-1} z + \frac{\tau_M^{-1}+\tau_K^{-1}-z+ \sqrt{ (z-\lambda_-)(z-\lambda_+)}}{2 } - (1-z)\tau_K^{-1}
\\=  \frac{2-\tau_M^{-1}-\tau_K^{-1}-z+ \sqrt{ (z-\lambda_-)(z-\lambda_+)}}{2 }.
\end{multline}
After a long, but straightforward computation, based on \eqref{eq_x17}, \eqref{eq_x18}, \eqref{eq_x19}, and \eqref{eq_lambda_pm_def} we transform the left-hand of \eqref{eq_x13} into
\begin{equation}
\label{eq_x12}
  \frac{z-\tau_M^{-1}-\tau_K^{-1}+2\tau_M^{-1}\tau_K^{-1} +\sqrt{(z-\lambda_-)(z-\lambda_+)}}{2(1-\tau_M^{-1})(1-\tau_K^{-1})},
\end{equation}
which matches \eqref{eq_z_inverse_Wachter} if we set $z=z_\rho$.

Note that \eqref{eq_x12} is a monotonously increasing function of $z\in [\lambda_+, 1]$. Hence, for $z$ in this interval, it takes exactly once all values between the value at $z=\lambda_+$, which is
$$
  \frac{\lambda_+-\tau_M^{-1}-\tau_K^{-1}+2\tau_M^{-1}\tau_K^{-1} }{2(1-\tau_M^{-1})(1-\tau_K^{-1})}
  =
   \frac{1}{\sqrt{(\tau_M-1)(\tau_K-1)}},
$$
 and the value at $z=1$, which is
$$
  \frac{1-\tau_M^{-1}-\tau_K^{-1}+2\tau_M^{-1}\tau_K^{-1} +\sqrt{(1-\lambda_-)(1-\lambda_+)}}{2(1-\tau_M^{-1})(1-\tau_K^{-1})}
  %
   =1.
$$
Therefore, for $\rho^2$ satisfying \eqref{eq_Wachter_cutoff}, there exists a unique $z\in [\lambda_+, 1]$, solving \eqref{eq_x13} and this $z=z_\rho$ is given by \eqref{eq_z_inverse_Wachter}. Simultaneously, we have shown that for $\rho^2\le  \frac{1}{\sqrt{(\tau_M-1)(\tau_K-1)}}$, \eqref{eq_x13} does not have solutions $z\in [\lambda_+, 1]$.

In order to get \eqref{eq_z_Wachter}, we need to compute the inverse function to \eqref{eq_x12} on $z\in [\lambda_+,1]$. In other words, we treat $\rho^2$ as given, and solve \eqref{eq_z_inverse_Wachter} as an equation on $z_\rho$. After another short computation, this results in \eqref{eq_z_Wachter}.
\end{proof}
\begin{remark} \label{Remark_no_bottom_outlier}
 The $z$--derivative of \eqref{eq_x12} on $[0,\lambda_-]$ is
 $$
  \frac{1}{2(1-\tau_M^{-1})(1-\tau_K^{-1})}\left(1 -\frac{\lambda_--z+\lambda_+-z}{2 \sqrt{(\lambda_--z)(\lambda_+-z)}}\right)\le 0,
 $$
 and therefore, \eqref{eq_x12} is a decreasing function of $z$. The value of \eqref{eq_x12} at $z=0$ is
 $$
  \frac{-\tau_M^{-1}-\tau_K^{-1}+2\tau_M^{-1}\tau_K^{-1} +\sqrt{\lambda_-\lambda_+}}{2(1-\tau_M^{-1})(1-\tau_K^{-1})}=\frac{-\tau_K^{-1}}{1-\tau_K^{-1}}<0.
 $$
 Hence, all the values on $z\in [0, \lambda_-]$ are negative, and there is no $z\in[0,\lambda_-]$ satisfying \eqref{eq_x13}.
\end{remark}

We further simplify the functions $Q_x(z)$ and $Q_y(z)$ of Theorem \ref{Theorem_master} in the i.i.d.\ noise case.

\begin{lemma} Consider the limit versions of $Q_x(z)$ and $Q_y(z)$ explicitly given by:
\begin{equation}
\label{eq_x14}
  - \frac{1-2\tau_K^{-1}-\frac{1}{z} \cdot (\tau_M^{-1}-\tau_K^{-1})-(1-z)\mathcal G_{\tau_K,\tau_M}(z)}{1 - \tau_M^{-1}-z\tau_K^{-1}  -  z(1-z)  \mathcal G_{\tau_K,\tau_M}(z) }, \qquad - \frac{1-\tau_K^{-1}-\tau_M^{-1}-(1 - z) \mathcal G_{\tau_K,\tau_M}(z)}{1 - \tau_M^{-1}-z\tau_K^{-1}  -  z(1-z) \mathcal G_{\tau_K,\tau_M}(z)},
\end{equation}
If we plug $z=z_\rho$ given by \eqref{eq_z_Wachter}, then these functions become the following functions of $\rho^2> \frac{1}{\sqrt{(\tau_M-1)(\tau_K-1)}}$, respectively:
\begin{equation}
 -\frac{\rho^2 \tau_M}{\rho^2(\tau_M-1) + 1} \qquad\text{ and }\qquad -\frac{\rho^2 \tau_K}{\rho^2(\tau_K-1) + 1}.
\end{equation}
\end{lemma}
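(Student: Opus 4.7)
The plan is to carry out a direct computation: substitute the explicit formula for $\mathcal G_{\tau_K,\tau_M}$ from Lemma \ref{Lemma_Wachter_modified_Stieltjes}, then use \eqref{eq_z_inverse_Wachter} to eliminate the square root at $z=z_\rho$, and finally simplify using the factorization \eqref{eq_z_Wachter}. Most of the first step has already been done inside the proof of Lemma \ref{Lemma_Wachter_spike_answer}: formulas \eqref{eq_x17}, \eqref{eq_x18}, \eqref{eq_x19} express the two numerators and the common denominator of \eqref{eq_x14} in closed form involving $\sqrt{(z-\lambda_-)(z-\lambda_+)}$. I would just quote these identities rather than redo them.

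Next, I would apply \eqref{eq_z_inverse_Wachter} to write
\[
\sqrt{(z_\rho-\lambda_-)(z_\rho-\lambda_+)}=2\rho^2(1-\tau_M^{-1})(1-\tau_K^{-1})+\tau_M^{-1}+\tau_K^{-1}-2\tau_M^{-1}\tau_K^{-1}-z_\rho,
\]
and substitute this into the three expressions, so that the square root disappears. Introducing the shorthand $a=(\tau_K-1)\rho^2+1$, $b=(\tau_M-1)\rho^2+1$, and using \eqref{eq_z_Wachter} in the form $\rho^2\tau_K\tau_M z_\rho=ab$, a short calculation shows that the numerator of the first function of \eqref{eq_x14} reduces to $\dfrac{(\tau_M-1)(\tau_K-1)\rho^2-1}{\tau_K\,b}$ and that the denominator reduces to $\dfrac{(\tau_M-1)(\tau_K-1)\rho^2-1}{\rho^2\tau_M\tau_K}$. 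The common factor $(\tau_M-1)(\tau_K-1)\rho^2-1$ (nonzero precisely because we are in the regime $\rho^2>\rho_c^2$) cancels, leaving $-\rho^2\tau_M/b$, as claimed.

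For the second function of \eqref{eq_x14} I would observe the symmetry: after substituting $\mathcal G_{\tau_K,\tau_M}$, the two numerators in \eqref{eq_x17} and \eqref{eq_x18} are related simply by swapping $\tau_K\leftrightarrow\tau_M$ (both have the same $\sqrt{D}$ term), while the denominator \eqref{eq_x19} is symmetric in $\tau_K$ and $\tau_M$. Since the formula for $z_\rho$ in \eqref{eq_z_Wachter} is also symmetric in $\tau_K\leftrightarrow\tau_M$ (this swap merely exchanges $a$ and $b$), the identical simplification yields $-\rho^2\tau_K/a$. Alternatively, one could verify just one of the two claims and then invoke the identity \eqref{eq_x13} from Lemma \ref{Lemma_Wachter_spike_answer}, which, restated in these terms, asserts that the product of the two expressions in \eqref{eq_x14} evaluated at $z_\rho$ equals $\rho^2/z_\rho=\rho^4\tau_K\tau_M/(ab)$, forcing the second value once the first is known.

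The only real obstacle is bookkeeping in the algebra; there is no conceptual difficulty since $\mathcal G_{\tau_K,\tau_M}$ is explicit and the defining relation of $z_\rho$ removes all square roots. The cleanest presentation is probably to state the identities \eqref{eq_x17}--\eqref{eq_x19} as the starting point, then carry out the substitution of $\sqrt{D}$ and the telescoping cancellation of $(\tau_M-1)(\tau_K-1)\rho^2-1$ in one display, and finally appeal to the $\tau_K\leftrightarrow\tau_M$ symmetry for the second formula.
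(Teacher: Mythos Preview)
Your approach is correct and essentially the same as the paper's: both substitute the explicit Stieltjes transform, eliminate $\sqrt{(z_\rho-\lambda_-)(z_\rho-\lambda_+)}$ via the defining relation of $z_\rho$, and observe the cancellation of the factor $(\tau_M-1)(\tau_K-1)\rho^2-1$ between numerator and denominator. The paper organizes the computation slightly differently (it first records $\mathcal G_{\tau_K,\tau_M}(z_\rho)$ explicitly via \eqref{eq_x15} rather than quoting \eqref{eq_x17}--\eqref{eq_x19}, and computes both functions directly rather than invoking the $\tau_K\leftrightarrow\tau_M$ symmetry), but these are presentational choices, not a different method.
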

\begin{proof} We start by computing $\mathcal G_{\tau_K,\tau_M}(z_\rho)$. We have
\begin{equation}
\label{eq_x15}
 \sqrt{(z_\rho-\lambda_-)(z_\rho-\lambda_+)}=  \frac{\rho^4 (\tau_K-1)(\tau_M-1) - 1}{\rho^2 \tau_M \tau_K}
\end{equation}
by plugging \eqref{eq_z_Wachter} into the first appearance of $z_\rho$ in \eqref{eq_z_inverse_Wachter}. Thus,
\begin{multline}
 \mathcal G_{\tau_K,\tau_M}(z_\rho) = \frac{\tau_M^{-1}+\tau_K^{-1}-z_\rho+ \sqrt{ (z_\rho-\lambda_-)(z-\lambda_+)}}{2 z_\rho(z_\rho-1)} + \frac{1}{z_\rho \tau_K}
 \\=\frac{\rho^2 \tau_M(\tau_K-1)}{(\rho^2( \tau_K-1)(\tau_M-1) - 1)\, (\rho^2 (\tau_K-1) + 1)}.
\end{multline}
Plugging into \eqref{eq_x14}, we find that the denominators in both formulas are
$$
 1 - \tau_M^{-1}-z\tau_K^{-1}  -  z(1-z) \mathcal G_{\tau_K,\tau_M}(z)= \frac{\rho^2(\tau_K-1)(\tau_M-1) - 1}{\rho^2 \tau_M \tau_K},
$$
while the numerators are
$$
 \frac{\rho^2 (\tau_K-1)(\tau_M-1) - 1}{(\rho^2(\tau_M-1) + 1)\tau_K} \quad\text{ and }\quad \frac{\rho^2(\tau_K-1)(\tau_M-1) - 1}{(\rho^2(\tau_K-1) + 1)\tau_M}.\qedhere
$$
\end{proof}

Finally, we simplify the formulas \eqref{eq_alpha_cos_final} and \eqref{eq_beta_cos_final} in the i.i.d.\ noise case.

\begin{lemma} \label{Lemma_cos_limit_iid} In the limit $K,M,S\to\infty$, so that $S/M\to \tau_M$ and $S/K\to \tau_K$ with $\tau_K>\tau_M>1$, $\tau_K^{-1}+\tau_M^{-1}< 1$, in the independent noise case under Assumption \ref{ass_4moments}, we have
\begin{equation}
\label{eq_x16}
 \cos^2 \theta_x\to \frac{\tau_K(\rho^4(\tau_K-1)(\tau_M-1) - 1)}{(\rho^2(\tau_K-1) + 1)(\rho^2(\tau_K-1)(\tau_M-1) - 1)},
\end{equation}
\begin{equation}
\label{eq_x20}
\cos^2 \theta_y \to \frac{\tau_M (\rho^4(\tau_K-1)(\tau_M-1) - 1)}{(\rho^2(\tau_M-1) + 1)(\rho^2(\tau_K-1)(\tau_M-1) - 1)}.
\end{equation}
\end{lemma}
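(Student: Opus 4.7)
The plan is to specialize the general formulas \eqref{eq_alpha_cos_final} and \eqref{eq_beta_cos_final} of Theorem \ref{Theorem_master} to the i.i.d.\ noise regime by substituting the explicit limits of the quantities appearing on their right-hand sides, and then simplifying. Under Assumption \ref{ass_4moments}, Corollary \ref{Corollary_Wachter} gives $G(z)\to \mathcal G_{\tau_K,\tau_M}(z)$ uniformly on compact subsets of $\mathbb C\setminus[\lambda_-,\lambda_+]$ (in probability). Since $\rho^2>\rho_c^2$ places $z_\rho>\lambda_+$, and since $\lambda_1\to z_\rho$ by Lemma \ref{Lemma_Wachter_spike_answer} combined with \eqref{eq_CCA_master_asymptotic}, we can evaluate all limits at $z=z_\rho$. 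Moreover, uniform convergence of holomorphic functions yields $G'(z_\rho)\to \mathcal G'_{\tau_K,\tau_M}(z_\rho)$.

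First I would gather the three explicit ingredients already available or easily computed: (i) $\mathcal G_{\tau_K,\tau_M}(z_\rho)=\dfrac{\rho^2\tau_M(\tau_K-1)}{(\rho^2(\tau_K-1)(\tau_M-1)-1)(\rho^2(\tau_K-1)+1)}$ (computed in the proof of the preceding lemma); (ii) $Q_x(z_\rho)\to -\dfrac{\rho^2\tau_M}{\rho^2(\tau_M-1)+1}$ and $Q_y(z_\rho)\to -\dfrac{\rho^2\tau_K}{\rho^2(\tau_K-1)+1}$; and (iii) $\mathcal G'_{\tau_K,\tau_M}(z_\rho)$, obtained by differentiating \eqref{eq_Wachter_Stieltjes} and then plugging in $z=z_\rho$ together with \eqref{eq_x15} to rationalize the square root. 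The ratios $K/S,M/S$ are replaced throughout by $\tau_K^{-1},\tau_M^{-1}$, and $\lambda_i$ by $z_\rho$.

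Next I would substitute everything into the right-hand side of \eqref{eq_alpha_cos_final}. The numerator is a squared linear expression in $z_\rho, Q_x(z_\rho), G(z_\rho)$; after substitution it factors nicely because it is essentially (up to sign) the numerator that already appeared in the first factor of \eqref{eq_x13} divided by the denominator $1-\tau_M^{-1}-z_\rho\tau_K^{-1}-z_\rho(1-z_\rho)\mathcal G_{\tau_K,\tau_M}(z_\rho)$ used to define $Q_x$. This structural observation allows me to express the numerator as a clean rational function of $\rho$, $\tau_K$, $\tau_M$. The denominator in \eqref{eq_alpha_cos_final}, which collects the pieces $1-2\tfrac{K}{S}-\cdots$, $G(z)\bigl[2z-1+\cdots\bigr]$, and $(z^2-z)G'(z)\bigl[1+\cdots\bigr]$, is then expanded term by term; combining like terms over the common denominator $(\rho^2(\tau_K-1)+1)(\rho^2(\tau_K-1)(\tau_M-1)-1)$ produces the right-hand side of \eqref{eq_x16} after cancellation of a common factor with the numerator. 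The analogous calculation, with $\tau_K\leftrightarrow\tau_M$ roles swapped in the obvious places and with the extra $(M-K)/S$ term in \eqref{eq_beta_cos_final} accounted for, yields \eqref{eq_x20}.

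The main obstacle is purely algebraic: the rational functions in \eqref{eq_alpha_cos_final}--\eqref{eq_beta_cos_final} are long, and carrying out the cancellations by hand is error-prone. The key simplifying trick is to exploit the relations among the three factors in \eqref{eq_x13}: after substituting $z_\rho$, those factors become the simple rational expressions in $\rho^2,\tau_K,\tau_M$ computed in \eqref{eq_x17}--\eqref{eq_x19}, and the $Q$-functions are exactly the ratios of such factors. Organizing the computation around these compact pieces (rather than expanding blindly) reduces the manipulation to combining a handful of rational expressions sharing a common denominator, after which verification of \eqref{eq_x16} and \eqref{eq_x20} is mechanical.
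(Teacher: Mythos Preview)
Your proposal is correct and follows essentially the same route as the paper: specialize \eqref{eq_alpha_cos_final}--\eqref{eq_beta_cos_final} at $z=z_\rho$ using the limits of $G$, $G'$, $Q_x$, $Q_y$ already computed, then simplify. One specific simplification the paper singles out, which you only gesture at, is that the squared factor in the first line of \eqref{eq_alpha_cos_final} (and likewise of \eqref{eq_beta_cos_final}) collapses to exactly $1$ at $z=z_\rho$; recognizing this up front makes the remaining algebra considerably shorter than the ``combine over a common denominator and cancel'' plan you outline.
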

\begin{proof}
In the previous two lemmas we investigated the asymptotic behavior of all ingredients of the formulas \eqref{eq_alpha_cos_final} and \eqref{eq_beta_cos_final} expect for $G'(z)$. Differentiating \eqref{eq_Wachter_Stieltjes}, we have
\begin{align*}
 \frac{\partial}{\partial z} \mathcal G_{\tau_K,\tau_M}(z)&= \frac{-1+ \frac{2z-\lambda_--\lambda_+}{2\sqrt{ (z-\lambda_-)(z-\lambda_+)}}}{2 z(z-1)}- \frac{\tau_M^{-1}+\tau_K^{-1}-z+ \sqrt{ (z-\lambda_-)(z-\lambda_+)}}{2 z^2(z-1)}\\ &- \frac{\tau_M^{-1}+\tau_K^{-1}-z+ \sqrt{ (z-\lambda_-)(z-\lambda_+)}}{2 z(z-1)^2} - \frac{1}{z^2 \tau_K}.
\end{align*}
Using \eqref{eq_z_Wachter} and \eqref{eq_x15}, we compute
$$
  \frac{\partial}{\partial z} \mathcal G_{\tau_K,\tau_M}(z)\Bigr|_{z=z_\rho}=-\frac{\tau_M^2\tau_K\rho^4(\tau_K - 1)(\rho^4(\tau_K-1)^2(\tau_M-1) + 1)}{(\rho^4(\tau_K-1)(\tau_M-1) - 1)(\rho^2(\tau_K-1) + 1)^2 (\rho^2(\tau_K-1)(\tau_M-1) - 1)^2}.
$$
Next, we start plugging all the computed ingredients into \eqref{eq_alpha_cos_final}. An interesting cancelation happens: the squared factor in the first line\footnote{We recall from the proof of Theorem \ref{Theorem_master_equation} that this factor is the ratio of $\cos \theta_x$ and $\alpha_0$.} of \eqref{eq_alpha_cos_final} simplifies to 1:
$$
1-\tau_K^{-1} -z_\rho Q_x(z_\rho) \left(\tau_K^{-1}+ (1-z_\rho) \mathcal G_{\tau_K,\tau_M}(z_\rho)\right)=1.
$$
The two last lines of \eqref{eq_alpha_cos_final} transform into \eqref{eq_x16} after another compuation. The same simplification happens for the first line of \eqref{eq_beta_cos_final}:
$$
 1-\tau_M^{-1} -z_\rho Q_y(z_\rho) \left(\tau_M^{-1}+(1-z_\rho)\mathcal G_{\tau_K,\tau_M}(z_\rho)+\frac{1-z_\rho}{z_\rho}(\tau_M^{-1}-\tau_K^{-1})\right)=1.
$$
The two last lines of \eqref{eq_beta_cos_final} transform into \eqref{eq_x20}.
\end{proof}
Using $\sin^2\theta=1-\cos^2\theta$ we can further transform the answers and derive \eqref{eq_sx} and \eqref{eq_sy}.
\begin{corollary} \label{Corollary_final_angles} In the setting of Lemma \ref{Lemma_cos_limit_iid}  we have
\begin{equation}
\label{eq_x21}
 \sin^2 \theta_x\to \frac{(1-\rho^2)  (\tau_K - 1)(\rho^2(\tau_M-1) + 1)}{(\rho^2(\tau_K-1)(\tau_M-1) - 1)(\rho^2(\tau_K-1) + 1)},
\end{equation}
\begin{equation}
\label{eq_x22}
\sin^2 \theta_y \to  \frac{(1-\rho^2)  (\tau_M - 1)(\rho^2(\tau_K-1) + 1)}{(\rho^2(\tau_K-1)(\tau_M-1) - 1)(\rho^2(\tau_M-1) + 1)}.
\end{equation}
\end{corollary}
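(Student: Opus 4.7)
The corollary is a direct algebraic consequence of Lemma \ref{Lemma_cos_limit_iid}, so my plan is simply to compute $\sin^2\theta = 1-\cos^2\theta$ from the limits \eqref{eq_x16} and \eqref{eq_x20} and rearrange. I would begin with $\theta_x$, writing
\[
1 - \cos^2\theta_x \to 1 - \frac{\tau_K\bigl(\rho^4(\tau_K-1)(\tau_M-1)-1\bigr)}{\bigl(\rho^2(\tau_K-1)+1\bigr)\bigl(\rho^2(\tau_K-1)(\tau_M-1)-1\bigr)}.
\]
Bringing everything to the common denominator $AB := (\rho^2(\tau_K-1)+1)(\rho^2(\tau_K-1)(\tau_M-1)-1)$, it remains to show that the numerator
\[
AB - \tau_K\bigl(\rho^4(\tau_K-1)(\tau_M-1)-1\bigr)
\]
factors as $(\tau_K-1)(1-\rho^2)(\rho^2(\tau_M-1)+1)$.

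The key observation, which I expect to drive the factorization, is that the coefficient of the leading $\rho^4(\tau_K-1)(\tau_M-1)$ term in $AB$ is $\tau_K-1$, while the subtracted term contributes coefficient $\tau_K$. Expanding $AB$ gives $\rho^4(\tau_K-1)^2(\tau_M-1) + \rho^2(\tau_K-1)(\tau_M-2) - 1$, and subtracting $\tau_K\rho^4(\tau_K-1)(\tau_M-1) - \tau_K$ collapses the $\rho^4$ coefficient to $-(\tau_K-1)(\tau_M-1)$. Collecting powers of $\rho$ then yields $(\tau_K-1)\bigl[-\rho^4(\tau_M-1) + \rho^2(\tau_M-2) + 1\bigr]$, and one verifies directly that $-\rho^4(\tau_M-1) + \rho^2(\tau_M-2) + 1 = (1-\rho^2)(\rho^2(\tau_M-1)+1)$. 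This delivers \eqref{eq_x21}.

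The proof of \eqref{eq_x22} is identical, since the roles of $\tau_K$ and $\tau_M$ in \eqref{eq_x16} and \eqref{eq_x20} are interchanged (note that the factor $\rho^2(\tau_K-1)(\tau_M-1)-1$ in the denominator is symmetric in $\tau_K,\tau_M$). There is no genuine obstacle here; the only thing one must be careful about is signs, since under the condition $\rho^2>\rho_c^2$ one has $\rho^2(\tau_K-1)(\tau_M-1)-1>0$ and $\rho^2<1$, ensuring that the resulting expressions are nonnegative and bounded by $1$, as required for squared sines.
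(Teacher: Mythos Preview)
Your proposal is correct and follows exactly the approach indicated in the paper, which simply notes that the corollary follows from Lemma \ref{Lemma_cos_limit_iid} via $\sin^2\theta=1-\cos^2\theta$ without spelling out the algebra. Your expansion and factorization of the numerator are accurate, and the symmetry argument for \eqref{eq_x22} is valid.
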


Now we have all the ingredients.

\begin{proof}[Proof of Theorem \ref{Theorem_4moments}] First, suppose that $\rho^2> \rho_c^2$, then by Lemma \ref{Lemma_Wachter_spike_answer}, $z_\rho>\lambda_+$ and the solution to the equation \eqref{eq_CCA_master_asymptotic} converges to $z_\rho$ as $S\to\infty$. By Theorem \ref{Theorem_master} and Corollary \ref{Corollary_Wachter} this implies that the largest canonical correlation $\lambda_1$ converges to $z_\rho$ as $S\to\infty$.

 Because \eqref{eq_CCA_master_asymptotic} is $S\to\infty$ approximation of the equation \eqref{eq_CCA_master} of Theorem \ref{Theorem_master_equation} solved by \emph{each} of the canonical correlations $\lambda_1\ge \lambda_2 \dots \ge \lambda_K$, and the limiting equation by Lemma \ref{Lemma_Wachter_spike_answer} has only one solution larger than $\lambda_+$, we conclude that $\limsup_{S\to\infty} \lambda_2\le \lambda_+$. On the other hand, by the interlacing inequalities of Lemma \ref{Lemma_interlacing}, $\lambda_2\ge c_2^2$, and the latter converges to $\lambda_+$ by Theorem \ref{Theorem_Wachter}, implying $\liminf_{S\to\infty} \lambda_2=\lambda_+$. We conclude that $\lambda_2$ converges in probability to $\lambda_+$ as $S\to\infty$. Hence, \eqref{eq_canonical_limit} is proven.

 The limit of the angles $\theta_x$ and $\theta_y$ is given in Theorem \ref{Theorem_master} by the formulas \eqref{eq_alpha_cos_final} and \eqref{eq_beta_cos_final}. By Corollary \ref{Corollary_final_angles}, this formulas lead to \eqref{eq_sx} and \eqref{eq_sy}. Hence, \eqref{eq_vector_limit1} and \eqref{eq_vector_limit2} are proven.

 \smallskip

 Second, suppose that  $\rho^2\le \tfrac{1}{\sqrt{(\tau_M-1)(\tau_K-1)}}$. Then by Lemma \ref{Lemma_Wachter_spike_answer}, the equation \eqref{eq_CCA_master_asymptotic}, does not have solutions larger and bounded away from $\lambda_+$ as $S\to\infty$. Hence, by Theorem \ref{Theorem_master}, $\limsup_{S\to\infty} \lambda_1=\lambda_+$. On the other hand, by Lemma \ref{Lemma_interlacing}, $\lambda_1\ge c_1^2$, and the latter converges to $\lambda_+$ by Theorem \ref{Theorem_Wachter}. We conclude that $\lambda_1$ converges in probability to $\lambda_+$ as $S\to \infty$, thus, proving \eqref{eq_no_spike}.
\end{proof}

\begin{proof}[Proof of Theorem \ref{Theorem_basic_setting}]
We would like to show that Assumption \ref{ass_basic} implies Assumption \ref{ass_4moments} and, therefore, Theorem \ref{Theorem_4moments} implies Theorem \ref{Theorem_basic_setting}.

The parts in Assumption \ref{ass_4moments} about being fourth-moment Gaussian and about the existence of the $(4+\tau)$th moments are automatic in the Gaussian setting. It remains to choose the matrices $A$ and $B$. Let us introduce a positive definite symmetric bilinear form $\Xi^x$ on vectors in $\mathbb R^K$, given by
 $$
  \Xi^u(\be,
   \bz)=\E \bigl[(\be^\T \u)  ( \bz^\T \u) \bigr].
 $$
Essentially, $\Xi^u$ is the covariance matrix of $\u$. Let $\mathfrak L^u$ be $(K-1)$--dimensional subspace in $\mathbb R^K$ consisting of vectors $\Xi^u$--orthogonal to $\ba$:
 $$
  \mathfrak L^u=\bigl\{\be\in\mathbb R^K \mid \Xi^u(\be,\ba)=0\bigr\}.
 $$
Similarly, we let $\Xi^v$ be the covariance matrix of $\v$ and let $\mathfrak L^v$ be $(M-1)$--dimensional subspace in $\mathbb R^M$ consisting of vectors $\Xi^v$--orthogonal to $\bb$. Note that the spaces $\mathfrak L^u$ and $\mathfrak L^v$ are linear spaces of the vectors $\bg$ mentioned in Assumption \ref{ass_basic}.

Choose a $\Xi^u$--orthonormal basis $\bg^{1;u},\dots, \bg^{K-1;u}$ in $\mathfrak L^{u}$:
 $$
  \Xi^u(\bg^{i;u},\bg^{j;u})=\delta_{i=j}, \quad 1\le i,j, \le K-1.
 $$
Also choose a $\Xi^v$--orthonormal basis $\bg^{1;v},\dots,\bg^{M-1;v}$ in $\mathfrak L^v$. Define $A$ to be $(K-1)\times K$ matrix whose $i$--th row is formed by the coordinates of $\bg^{i,u}$, and define $B$ to be $(M-1)\times M$ matrix whose $i$--th row is formed by the coordinates of $\bg^{i,v}$.

The conditions of Assumption \ref{ass_4moments} now follow from the conditions of Assumption \ref{ass_basic} and the fact that for Gaussian vectors being uncorrelated implies being independent. \end{proof}

\subsection{Proof of Theorem \ref{Theorem_non_iid_signal}}
The idea of the proof is to use rotational invariance of the Gaussian law to reduce Theorem \ref{Theorem_non_iid_signal} to Theorems \ref{Theorem_basic_setting} or \ref{Theorem_4moments}.

We rely on the following basic property of Gaussian distributions.
\begin{lemma} \label{Lemma_Gaussian_rotation}
 Let $\W$ be $N\times S$ random matrix, such that each column is a mean $0$ Gaussian vector (of an arbitrary covariance) and the columns are i.i.d.. In addition, let $O$ be $S\times S$ orthogonal matrix which is independent from $\W$. Then:
 \begin{enumerate}
  \item $\W O$ has the same distribution as $\W$.
  \item $\W O$ is independent from $O$.
 \end{enumerate}
\end{lemma}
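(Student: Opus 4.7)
The plan is to reduce both parts to the classical rotational invariance of the standard multivariate Gaussian and then handle the randomness of $O$ by conditioning.

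\smallskip

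First, I would write the columns of $\W$ in a canonical form. Let $\Sigma$ denote the $N\times N$ covariance matrix of each column of $\W$. Since each column of $\W$ is an i.i.d.\ centered Gaussian vector with covariance $\Sigma$, we may realize $\W$ as $\W = \Sigma^{1/2} Z$, where $Z$ is an $N\times S$ matrix with i.i.d.\ $\mathcal N(0,1)$ entries, constructed on the same probability space and independent of $O$. (Here $\Sigma^{1/2}$ is a fixed symmetric square root, and if $\Sigma$ is degenerate we just replace it by any matrix $L$ with $LL^\T=\Sigma$ and take the appropriate dimension for $Z$.) Then $\W O = \Sigma^{1/2}(Z O)$, so it suffices to prove the analogous statements for $Z$.

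\smallskip

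Next, I establish rotational invariance of $Z$ when $O$ is deterministic. The joint density of the $NS$ entries of $Z$ is proportional to $\exp\!\bigl(-\tfrac12 \operatorname{tr}(Z Z^\T)\bigr)$, which is a function of the Frobenius norm $\|Z\|_F^2$ only. For any deterministic $S\times S$ orthogonal $O$,
\begin{equation*}
 (ZO)(ZO)^\T = Z O O^\T Z^\T = Z Z^\T,
\end{equation*}
so $\|ZO\|_F = \|Z\|_F$ and the joint density of $ZO$ coincides with that of $Z$. Equivalently, this can be seen from the characteristic function: $\E \exp\!\bigl(i \operatorname{tr}(T^\T Z)\bigr) = \exp(-\tfrac12 \|T\|_F^2)$ for any deterministic test matrix $T$, and replacing $Z$ by $ZO$ replaces $T$ by $TO^\T$, which has the same Frobenius norm.

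\smallskip

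Finally, I handle the random $O$ by conditioning. Since $O$ is independent of $\W$, and hence of $Z$, the conditional distribution of $Z O$ given $O = O_0$ is the same as the distribution of $Z O_0$ for a fixed orthogonal matrix $O_0$, which by the previous step equals the distribution of $Z$. This conditional distribution does not depend on $O_0$. Two consequences follow immediately: the unconditional distribution of $ZO$ coincides with that of $Z$ (proving part (1) after multiplying by $\Sigma^{1/2}$); and because the conditional law of $ZO$ given $O$ does not depend on $O$, the random matrix $ZO$ is independent of $O$ (proving part (2)). No real obstacle is present; the only thing to watch is keeping the conditioning argument clean and making sure that the $\Sigma^{1/2}$ factor, which commutes with the right multiplication by $O$, does not disturb the independence from $O$.
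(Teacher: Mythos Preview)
Your proof is correct and follows essentially the same approach as the paper: condition on $O$, use that right multiplication by a fixed orthogonal matrix preserves the Gaussian law (via covariance/Frobenius norm invariance), and then infer independence from the fact that the conditional law does not depend on $O$. The only cosmetic difference is that you introduce the factorization $\W=\Sigma^{1/2}Z$ to reduce to the i.i.d.\ standard Gaussian case, whereas the paper works directly with $\W$ by noting that $\W O$ is Gaussian with the same covariance structure as $\W$; both are equivalent ways of packaging the same invariance.
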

\begin{proof}
 Let us condition on the value of $O$. The random vector formed by vectorizing $\W O$ is a linear transformation of the vectorization of $\W$; hence, it is Gaussian. The orthogonality of $O$ implies that the covariance structure of the matrix elements of $\W O$ is the same as the one for $\W$. Since the laws of mean $0$ Gaussian vectors are uniquely determined by the covariances, we conclude that the distributions of $\W$ and $\W O$ coincide conditionally on $O$.

 Because the law of $\W O$ is the same for any choice of $O$, we also conclude the independence between $\W O$ and $O$.
\end{proof}

Throughout the proof of Theorem \ref{Theorem_non_iid_signal} we condition on the vectors $\x=\U^\T \ba$ and $\y =\V^\T \bb$ in Assumption \ref{ass_cor_signal} and treat them as deterministic. Because the canonical correlations and corresponding angles between true and estimated canonical variables are unchanged when we rescale $\x$ or $\y$, we can and will assume without loss of generality that they are normalized so that $\x^\T \x=\y^\T \x=S$. Therefore, also
$$
 \x^\T \y= \pm\, S\, \hat r,
$$
where $\hat r$ is given in Eq.~\eqref{eq_sample_r}.
The choice of the sign in the last formula is merely a question of the definition of $\hat r$, because \eqref{eq_sample_r} only fixes its square. Hence, we assume without loss of generality that the sign is $+$.

We introduce an additional uniformly random $S\times S$ orthogonal matrix $O$, which is independent from the rest of the data in Theorem \ref{Theorem_non_iid_signal}. Note that an orthogonal transformation in $S$--dimensional space does not change the scalar products, lengths, and angles between vectors. Therefore, is we replace in Theorem \ref{Theorem_non_iid_signal} the matrices $\U$ and $\V$ with $\U O$ and $\V O$, respectively, then the distributions of squared canonical correlations between $\U$ and $\V$ and corresponding angles between true and estimated canonical variables is unchanged. Hence, it is sufficient to establish the conclusion of Theorem \ref{Theorem_non_iid_signal} for data matrices $\U O$ and $\V O$.

For the $A$ and $B$ of Assumption \ref{ass_cor_signal}, by Lemma \ref{Lemma_Gaussian_rotation},  $A \U O$ and $B \V O$ are matrices of i.i.d.\ Gaussian random variables and they are independent from $O$ and hence from the vectors $O^\T \x =O^\T \U^\T \ba$ and $O^\T \y =O^\T \V^\T \bb$. Therefore, the noise part $A \U O$ and $B \V O$ matches the i.i.d.\ setting of Theorems \ref{Theorem_basic_setting} and \ref{Theorem_4moments}.

Let us look at the signal part: the pair of the $S$--dimensional vectors $(O^\T \x, O^\T \y)$ is obtained from the pair of vectors $(\x,\y)$ by applying uniformly random orthogonal rotation matrix. Denote $(\u^*,\v^*)=(O^\T \x, O^\T \y)$. At this point, the only difference between the present setting and the one of Theorem \ref{Theorem_basic_setting} is whether $\u^*$ and $\v^*$ have i.i.d.\ Gaussian components or not. Recall that the only place where the i.i.d.\ Gaussian assumption on the signal vector was used on our path to Theorem \ref{Theorem_basic_setting} through Theorems \ref{Theorem_master_equation} and \ref{Theorem_master} is in Lemma \ref{Lemma_asymptotic_approx}. Hence, to finish the proof of Theorem \ref{Theorem_non_iid_signal} we show the following generalization:
\begin{lemma} Set $C_{uu}=C_{vv}=1$, $C_{uv}=C_{vu}=\hat r$. Then the asymptotic approximations of Lemma \ref{Lemma_asymptotic_approx} remain true for $(\u^*,\v^*)\stackrel{d}{:=}(O^\T \x, O^\T \y)$.
\end{lemma}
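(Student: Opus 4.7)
The plan is to verify that the two ingredients used in the original proof of Lemma \ref{Lemma_asymptotic_approx}---matching first moments and approximately uncorrelated summands---continue to hold when $(\u^*,\v^*)=(O^\T\x,O^\T\y)$ with $O$ Haar-distributed on $O(S)$ and $\x,\y$ deterministic with squared length $S$ and scalar product $S\hat r$. Once these two facts are in hand, Chebyshev's inequality together with the Lipschitz/finite-net argument at the end of the proof of Lemma \ref{Lemma_asymptotic_approx} delivers the required convergence uniformly in $z$. Throughout I would treat $\u_i,\v_j,c_j,\x,\y$ as deterministic, with $O$ being the sole source of randomness.

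The first moments are straightforward. Using the identity $\E\,O_{ij}O_{kl}=\tfrac{1}{S}\delta_{ik}\delta_{jl}$, for any deterministic $\boldsymbol\chi,\boldsymbol\psi\in\mathbb R^S$ one computes
\begin{equation*}
\E\langle \u^*,\boldsymbol\chi\rangle\langle \u^*,\boldsymbol\psi\rangle=\x^\T\E[O\boldsymbol\chi\boldsymbol\psi^\T O^\T]\x=\tfrac{\x^\T\x}{S}\langle\boldsymbol\chi,\boldsymbol\psi\rangle=\langle\boldsymbol\chi,\boldsymbol\psi\rangle,
\end{equation*}
and similarly $\E\langle \v^*,\boldsymbol\chi\rangle\langle \v^*,\boldsymbol\psi\rangle=\langle\boldsymbol\chi,\boldsymbol\psi\rangle$ and $\E\langle \u^*,\boldsymbol\chi\rangle\langle \v^*,\boldsymbol\psi\rangle=\hat r\langle\boldsymbol\chi,\boldsymbol\psi\rangle$. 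Specializing to $\boldsymbol\chi,\boldsymbol\psi\in\{\u_i,\v_j\}$ and invoking the scalar product table $\langle \u_i,\u_j\rangle=\delta_{ij}$, $\langle \v_i,\v_j\rangle=\delta_{ij}$, $\langle \u_i,\v_j\rangle=c_i\delta_{ij}$ from Lemma \ref{Lemma_canonical_bases} reproduces the right-hand sides of \eqref{eq_approx_1}--\eqref{eq_approx_5} in expectation with $C_{uu}=C_{vv}=1$ and $C_{uv}=\hat r$.

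The main obstacle is bounding the covariances. In the original i.i.d.\ setting the fourth-moment Gaussian hypothesis made the terms of each sum exactly uncorrelated; under Haar measure they are only approximately uncorrelated, with residual covariances of order $1/S$. As a representative case, for two indices $i\ne j$ one can complete $\{\u_i,\u_j\}$ to an orthonormal basis and use left-invariance of Haar measure to rewrite $\E[\langle \u^*,\u_i\rangle^2\langle \u^*,\u_j\rangle^2]=\E[w_1^2 w_2^2]$, where $\mathbf w$ is uniform on the sphere of radius $\sqrt S$; the latter expectation equals $S/(S+2)=1-O(1/S)$, so the covariance with $\E[\langle \u^*,\u_i\rangle^2]\E[\langle \u^*,\u_j\rangle^2]=1$ is $O(1/S)$. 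For the more general four-vector covariances arising in \eqref{eq_approx_1}--\eqref{eq_approx_5}, I would use the Weingarten formula
\begin{equation*}
\E\,O_{i_1j_1}O_{i_2j_2}O_{i_3j_3}O_{i_4j_4}=\sum_{\pi,\sigma}\mathrm{Wg}_S(\pi,\sigma)\,\Delta_\pi(i)\Delta_\sigma(j),
\end{equation*}
summed over pair partitions of $\{1,2,3,4\}$, with weights $\mathrm{Wg}_S(\pi,\sigma)=S^{-2}\mathbf 1_{\pi=\sigma}+O(S^{-3})$, to obtain the same $O(1/S)$ bound for all relevant covariances (including the mixed $\u^*/\v^*$ ones). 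This yields variance $O(K/S^2)+O(K^2/S^3)=O(1/S)$ for each of the sums in \eqref{eq_approx_1}--\eqref{eq_approx_5} after the $1/S$ normalization, and Chebyshev's inequality followed by the uniformity-in-$z$ argument from Step 3 of the original proof finishes the job. A cleaner alternative that avoids the Weingarten bookkeeping would be to apply Gromov--Milman concentration of Haar measure on $O(S)$ directly to the $O(1/\sqrt S)$-Lipschitz maps $O\mapsto \tfrac{1}{S}\sum_i \langle \x,O\u_i\rangle^2/(z-c_i^2)$ and their analogues.
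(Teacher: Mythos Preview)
Your approach is correct but takes a genuinely different route from the paper's. You work directly with Haar moments---computing the exact first moments and then bounding the off-diagonal covariances to $O(1/S)$ via the Weingarten calculus (or, alternatively, concentration on $O(S)$)---whereas the paper sidesteps these computations entirely through a coupling trick. The paper writes $\y=\hat r\,\x+\sqrt{1-\hat r^2}\,\boldsymbol\nu$ with $\boldsymbol\nu\perp\x$, observes that $(O^\T\x,O^\T\boldsymbol\nu)$ is a uniformly random orthogonal pair of vectors of length $\sqrt S$, and then realizes such a pair as Gram--Schmidt normalizations \eqref{eq_x23} of two independent standard Gaussian vectors $(\boldsymbol\xi,\boldsymbol\psi)$. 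Since the normalizing constants converge to deterministic limits by the law of large numbers, the claim reduces to the already-proven i.i.d.\ Gaussian instance of Lemma \ref{Lemma_asymptotic_approx} with $C_{uv}=0$, and the case $C_{uv}=\hat r$ then follows by the linear substitution \eqref{eq_x24}. Your argument is more computational but self-contained and does not need to invoke the i.i.d.\ result as a black box; the paper's argument is shorter and avoids any fourth-moment bookkeeping on the orthogonal group, at the price of a slightly indirect reduction step.
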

\begin{proof}
 The random vectors $O^\T \x$ and $O^\T \y$ do not have i.i.d.\ components, however, we can construct them from vectors with i.i.d.\ components. For that, let us write $\y=\hat r \x + \sqrt{1-\hat r^2}\, \boldsymbol \nu$, where $\boldsymbol \nu$ is a vector orthogonal to $\x$. Our normalizations imply that the squared length of $\boldsymbol \nu$ is $S$. Observe that $(O^\T \x, O^\T \boldsymbol \nu)$ is a uniformly random pair of orthogonal vectors of length $\sqrt{S}$ in $S$--dimensional space. Here is an alternative way to construct such a pair: Take two independent vectors $\boldsymbol \xi$ and $\boldsymbol \psi$ with i.i.d.\ $\mathcal N(0,1)$ components, represented as $S\times 1$ matrices. Set
 \begin{equation}
 \label{eq_x23}
  \widetilde \x = \sqrt{S} \frac{\boldsymbol \xi}{\sqrt{\boldsymbol \xi^\T\boldsymbol \xi}}, \qquad \widetilde {\boldsymbol \nu} = \sqrt{S}   \frac{\boldsymbol \psi - \boldsymbol \xi \frac{\boldsymbol \xi^\T\boldsymbol \psi}{\boldsymbol \xi^\T\boldsymbol \xi}}{\sqrt{\bigl(\boldsymbol \psi - \boldsymbol \xi \frac{\boldsymbol \xi^\T\boldsymbol \psi}{\boldsymbol \xi^\T\boldsymbol \xi}\bigr)^\T\bigl(\boldsymbol \psi - \boldsymbol \xi \frac{\boldsymbol \xi^\T\boldsymbol \psi}{\boldsymbol \xi^\T\boldsymbol \xi}\bigr)}}.
 \end{equation}
 The invariance of the i.i.d.\ Gaussian vectors $\boldsymbol \xi$ and $\boldsymbol \psi$ under orthogonal transformations readily implies that $(\widetilde \x, \widetilde {\boldsymbol \nu})$ has the same distribution as $(O^\T \x, O^\T \boldsymbol \nu)$. Hence, we can also write
 \begin{equation}
 \label{eq_x24}
   (O^\T \x,\, O^\T \y)\stackrel{d}{=} \left(\widetilde \x,\, \hat r \widetilde \x + \sqrt{1-\hat r^2}\, \widetilde{\boldsymbol \nu}\right).
 \end{equation}
 Note that all the random constants appearing in \eqref{eq_x23} have straightforward deterministic limits by the law of large numbers for i.i.d.\ random variables: as  $S\to \infty$
 $$
  \frac{\sqrt{S}}{\sqrt{\boldsymbol \xi^\T\boldsymbol \xi}}\to 1 ,\qquad \frac{\boldsymbol \xi^\T\boldsymbol \psi}{\boldsymbol \xi^\T\boldsymbol \xi}\to 0, \qquad \frac{\sqrt{S}}{\sqrt{\bigl(\boldsymbol \psi - \boldsymbol \xi \frac{\boldsymbol \xi^\T\boldsymbol \psi}{\boldsymbol \xi^\T\boldsymbol \xi}\bigr)^\T\bigl(\boldsymbol \psi - \boldsymbol \xi \frac{\boldsymbol \xi^\T\boldsymbol \psi}{\boldsymbol \xi^\T\boldsymbol \xi}\bigr)}} \to 1.
 $$
 Hence, we have the following chain of reductions: the conclusion of Lemma \ref{Lemma_asymptotic_approx} holds for $(\u^*,\v^*)\stackrel{d}{:=} (\boldsymbol \xi,\boldsymbol \psi)$ with $C_{uu}=C_{vv}=1$, $C_{uv}=C_{vu}=0$, therefore, it also holds for $(\u^*,\v^*)\stackrel{d}{:=} (\widetilde \x, \widetilde {\boldsymbol \nu})$ with $C_{uu}=C_{vv}=1$, $C_{uv}=C_{vu}=0$, and therefore, it also holds for $(\u^*,\v^*)\stackrel{d}{:=}  (O^\T \x,\, O^\T \y)$  with $C_{uu}=C_{vv}=1$, $C_{uv}=C_{vu}=\hat r$.
\end{proof}

\subsection{Proof of Theorem \ref{Theorem_basic_multi}} By the same argument as in Lemma \ref{Lemma_simple_data}, we can assume without loss of generality that the signal vectors are represented by the first $\mathbbm q$ rows in $\U$ and by the first $\mathbbm q$ rows in $\V$. The remaining $K-\mathbbm q$ rows in $\U$ and $M-\mathbbm q$ rows in $\V$ represent the noise.

Next, we apply Theorem \ref{Theorem_master} $\mathbbm q$ times: we start from $(K-\mathbbm q)\times S$ and $(M-\mathbbm q)\times S$ matrices $\tilde \U$ and $\tilde \V$ representing noise and then add the rows representing signal one by one. We claim that after the addition of $q\le \mathbbm q$ signals, the squared canonical correlations $c_1^2\ge \dots\ge c^2_{K-\mathbbm q+q}$ appearing in \eqref{eq_G_def} have the three following features as $S\to\infty$: fix arbitrary small $\eps>0$,
\begin{enumerate}
 \item All but finitely many squared canonical correlations belong to the segment ${[\lambda_--\eps,\lambda_++\eps]}$ as $S\to\infty$ and their empirical distribution converges to the Wachter law, as in Theorem \ref{Theorem_Wachter};
 \item There might be several outliers in $\eps$--neighborhoods of points $z_{\rho[1]},\dots, z_{\rho[q]}$ corresponding to those $\rho[i]$, $1\le i \le q$, which are larger than $\tfrac{1}{\sqrt{(\tau_M-1)(\tau_K-1)}}$;
 \item There are no other squared canonical correlations outside $[\lambda_--\eps,\lambda_++\eps]$ beyond those described in $(2)$ above.
\end{enumerate}
The validity of these features for all $q=0,1,\dots,\mathbbm q$ is proven inductively in $q$: for $q=0$ we are in the pure noise situation and use Theorem \ref{Theorem_Wachter}. For the inductive step, note that the first feature follows by applying Lemma \ref{Lemma_interlacing}, which guarantees that the empirical distribution does not change much on each step of adding another pair of signal vectors. The second and third features follow from the first one: away from the outliers the function $G(z)$ of \eqref{eq_G_def} converges towards $\mathcal G_{\tau_K,\tau_M}(z)$ and then the results of Lemma \ref{Lemma_Wachter_spike_answer} and Remark \ref{Remark_no_bottom_outlier} applied to the equation \eqref{eq_CCA_master_asymptotic} of  Theorem \ref{Theorem_master} give the location of the next possible outlier.

We conclude that in Theorem \ref{Theorem_basic_multi}, any squared sample canonical correlations, which remain outside $[\lambda_--\eps,\lambda_++\eps]$, should converge to one of the numbers $z_{\rho[q]}$ as $S\to\infty$. It remains to show that for each $q$ such that $\rho^2[q]> \tfrac{1}{\sqrt{(\tau_M-1)(\tau_K-1)}}$, there is exactly one canonical correlation converging to $z_{\rho[q]}$ and that formulas  \eqref{eq_vector_limit1_multi}, \eqref{eq_vector_limit2_multi} hold. For that notice that in the above inductive procedure the order of addition of the rows representing the signals does not matter for the final result. In particular, we can assume that the rows corresponding to $\rho^2[q]$ are the last ones to be added; in this situation, we had no outliers in $\eps$--neighborhood of $z_{\rho[q]}$ before the last step\footnote{Here we use the fact that $\rho^2[1],\dots,\rho^2[\mathbbm q]$ are all distinct.}, and the result follows by Theorem \ref{Theorem_master} with the answers simplified through $G(z)\approx \mathcal G_{\tau_K,\tau_M}(z)$ by the computations of Appendix \ref{Section_Wachter_specialization}.

\section{Additional Monte Carlo simulations}\label{Section_appendix_MC}

In this section we present additional simulations investigating the speed of convergence of the angles $\theta_x$ and $\theta_y$ to their asymptotic values given by \eqref{eq_vector_limit1} and \eqref{eq_vector_limit2} in Theorem \ref{Theorem_basic_setting}. Figures \ref{fig_angles_theor_simul100}-\ref{fig_angles_theor_simul_many} show the effect of dimension increase on the distributions of the angles. In all these figures we plot angles as functions of $\rho^2$. As we can see, for very small samples, as in Figure \ref{fig_angles_theor_simul100}, the variance is quite high, yet the average values for the angles still match theoretical predictions of Theorem \ref{Theorem_basic_setting}. As we scale all dimensions by $10$ in Figure \ref{fig_angles_theor_simul10}, the realizations become much tighter. Scaling all dimensions again by $10$ in Figure \ref{fig_angles_theor_simul_many} brings the Monte Carlo simulations very close to the theoretical limit.

\begin{figure}[t]
\begin{subfigure}{.49\textwidth}
  \centering
  \includegraphics[width=1.0\linewidth]{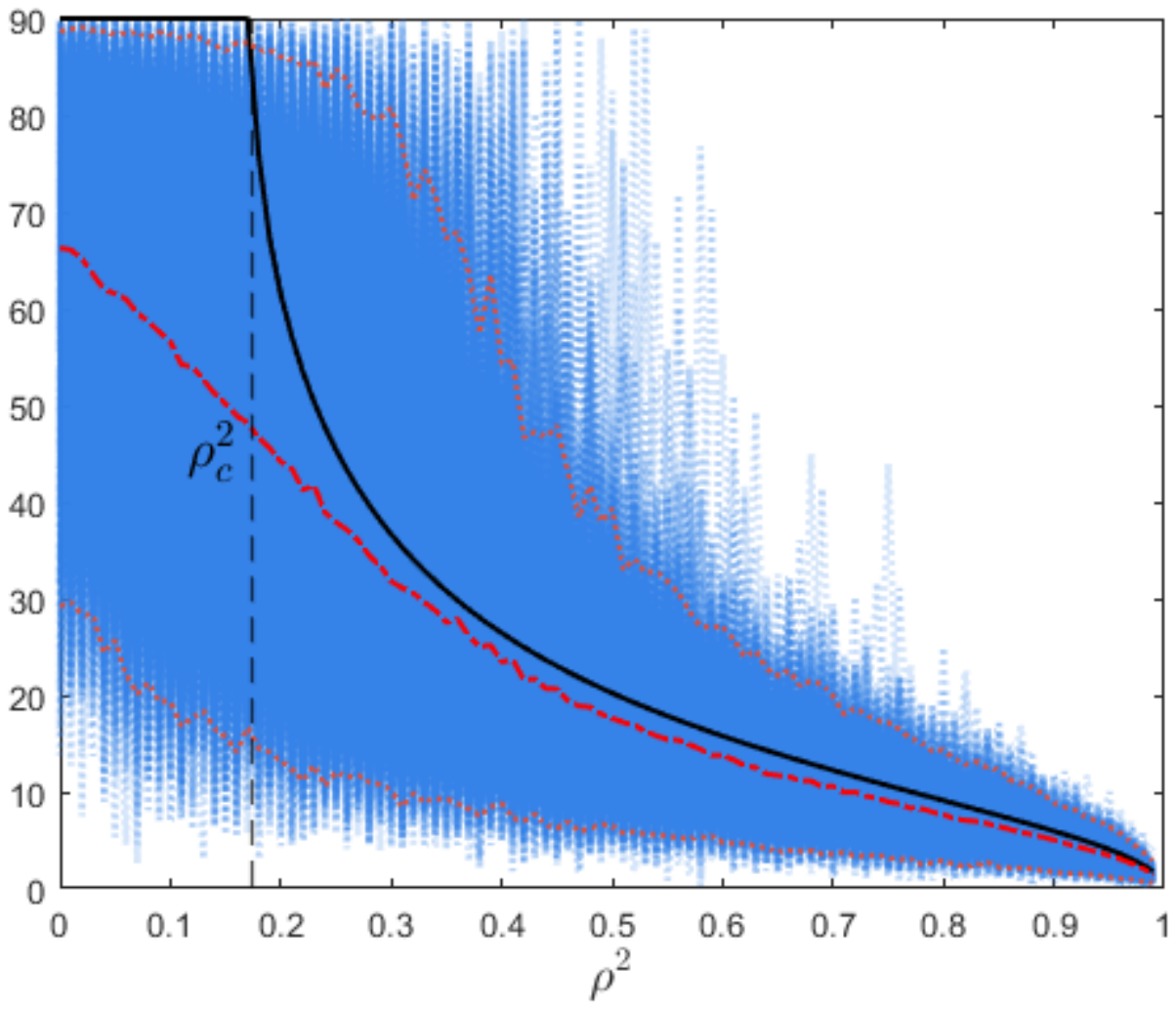}
  \caption{Values of $\theta_x$ (in degrees): \\theoretical (black) vs.~simulated (blue) angles.}
  \label{thetax_pic100}
\end{subfigure}%
\begin{subfigure}{.49\textwidth}
  \centering
  \includegraphics[width=1.0\linewidth]{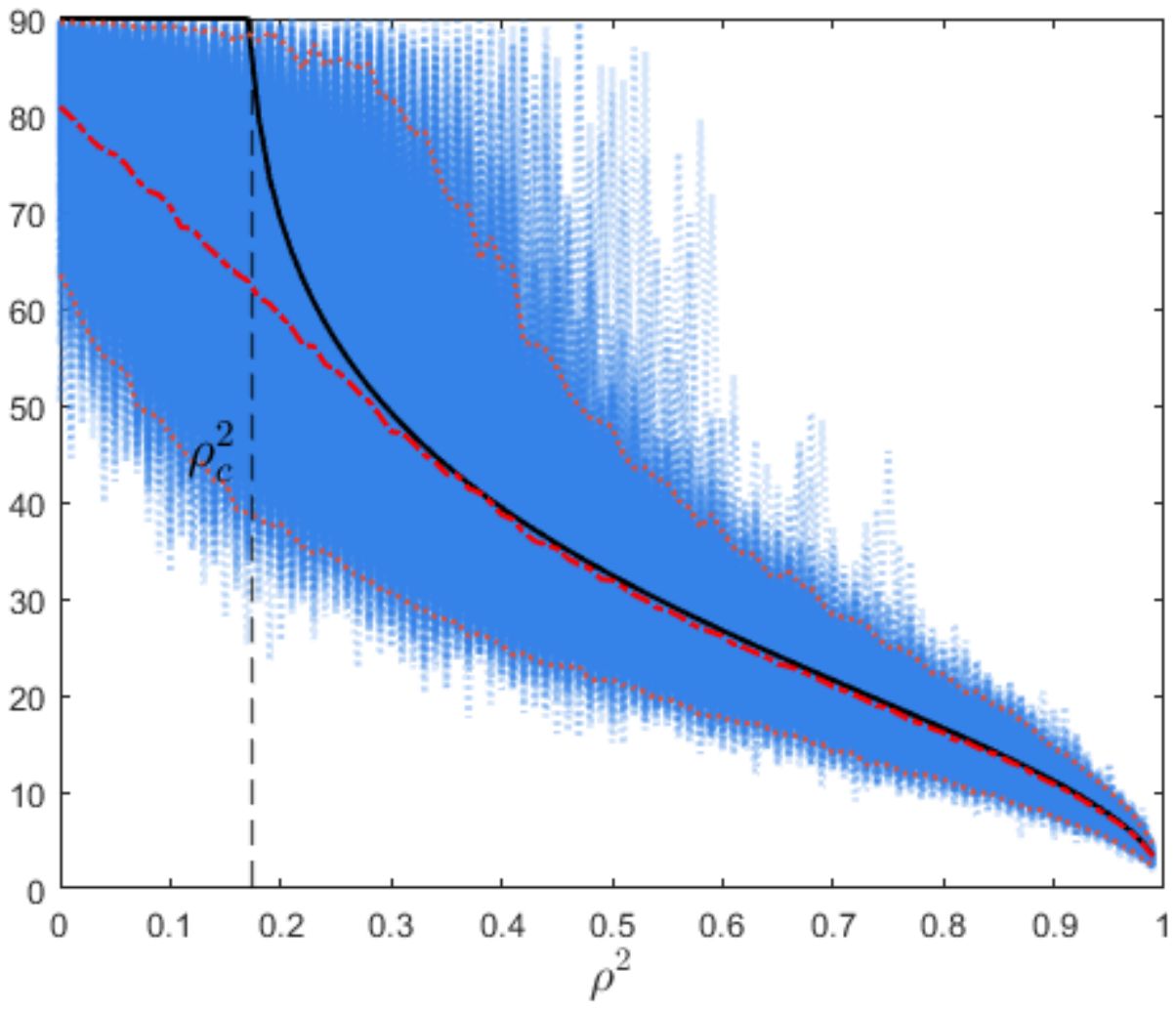}
  \caption{Values of $\theta_y$ (in degrees): \\theoretical (black) vs.~simulated (blue) angles.}
  \label{thetay_pic100}
\end{subfigure}
\caption{Comparison of theoretical (straight black) and simulated results (blue dotted) for ${K=5}$, ${M=25}$, $S=80$, $MC=1000$ simulations. Blue dotted clouds are $1000$ different MC realizations. Red dotted curves represent a $95\%$ empirical confidence interval for simulated angles. Red dash-dotted curves are sample averages of simulated angles.}
\label{fig_angles_theor_simul100}
\end{figure}

\begin{figure}[t]
\begin{subfigure}{.49\textwidth}
  \centering
  \includegraphics[width=1.0\linewidth]{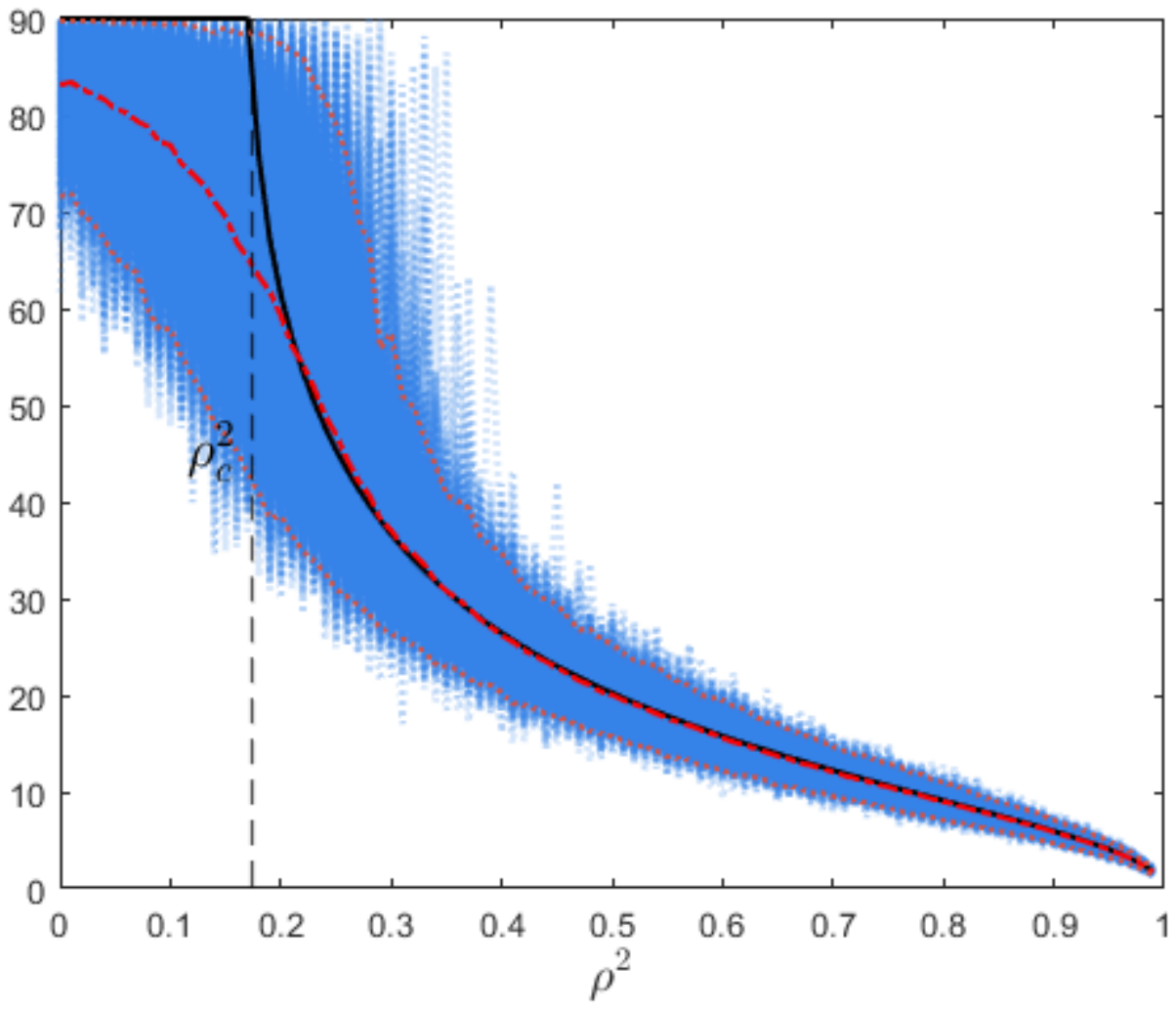}
  \caption{Values of $\theta_x$ (in degrees): \\theoretical (black) vs.~simulated (blue) angles.}
  \label{thetax_pic10}
\end{subfigure}%
\begin{subfigure}{.49\textwidth}
  \centering
  \includegraphics[width=1.0\linewidth]{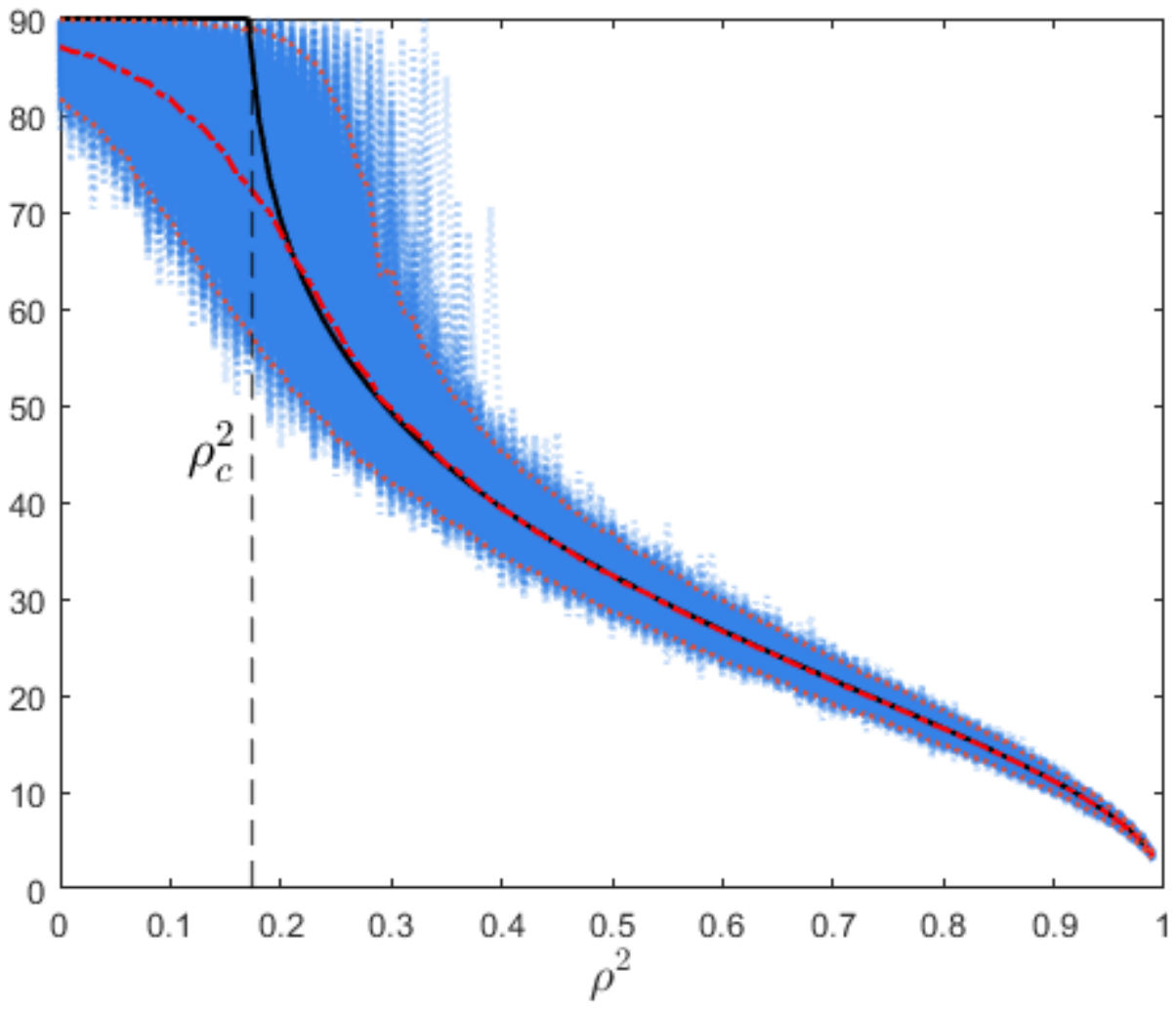}
  \caption{Values of $\theta_y$ (in degrees): \\theoretical (black) vs.~simulated (blue) angles.}
  \label{thetay_pic10}
\end{subfigure}
\caption{Comparison of theoretical (straight black) and simulated results (blue dotted) for ${K=50}$, ${M=250}$, $S=800$, $MC=1000$ simulations. Blue dotted clouds are $1000$ different MC realizations. Red dotted curves represent a $95\%$ empirical confidence interval for simulated angles. Red dash-dotted curves are sample averages of simulated angles.}
\label{fig_angles_theor_simul10}
\end{figure}

\begin{figure}[t]
\begin{subfigure}{.49\textwidth}
  \centering
  \includegraphics[width=1.0\linewidth]{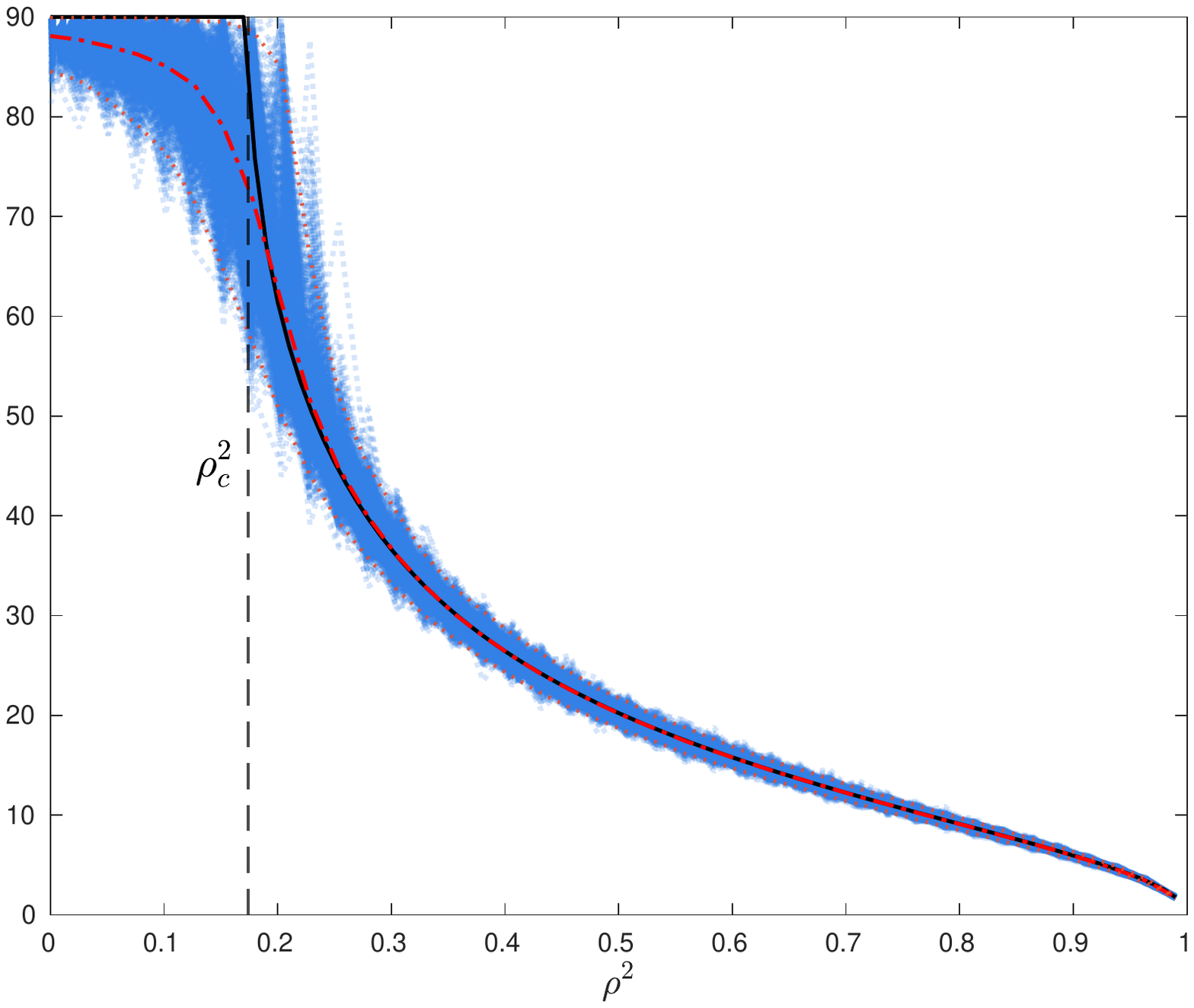}
  \caption{Values of $\theta_x$ (in degrees): \\theoretical (black) vs.~simulated (blue) angles.}
  \label{thetax_pic_many}
\end{subfigure}%
\begin{subfigure}{.49\textwidth}
  \centering
  \includegraphics[width=1.0\linewidth]{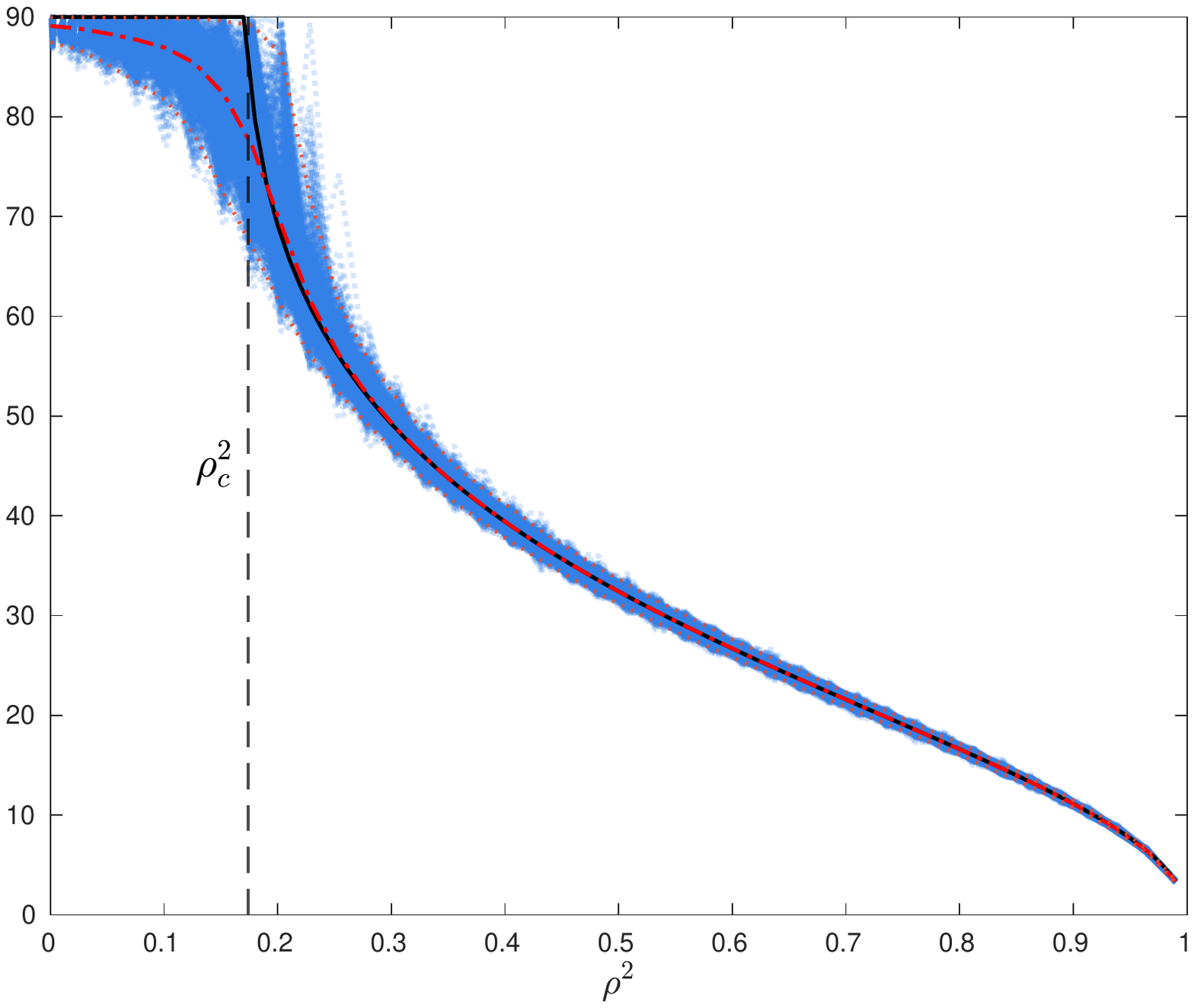}
  \caption{Values of $\theta_y$ (in degrees): \\theoretical (black) vs.~simulated (blue) angles.}
  \label{thetay_pic_many}
\end{subfigure}
\caption{Comparison of theoretical (straight black) and simulated results (blue dotted) for ${K=500}$, ${M=2500}$, $S=8000$, $MC=1000$ simulations. Blue dotted clouds are $1000$ different MC realizations. Red dotted curves represent a $95\%$ empirical confidence interval for simulated angles. Red dash-dotted curves are sample averages of simulated angles.}
\label{fig_angles_theor_simul_many}
\end{figure}

\section{Data}
\label{Section_appendix_data}
We use 80 largest (as of July 2023) companies in each category. The list of cyclical stocks consists of AMZN, AVY, AZO,	BALL, BBWI,	BBY, BC, BKNG, BWA,	CASY, CCK, CCL,	CHDN, CHH, CMG,	CROX, CUK, DECK, DHI, DKS, DPZ,	DRI, EBAY, EXPE, F,	GNTX, GPC, GPK,	H, HAS, HD, HMC, IGT, IHG, IP, KMX, LAD, LEA, LEN, LKQ, LOW, LULU, LVS, MAR, MAT, MCD, MELI, MGA, MGM, MHK, MTN, NKE, NVR, ORLY, PAG, PHM, PII, PKG, RCL, RL, ROL, ROST, SBUX, SCI, SEE, SKX, TCOM, TJX, TM, TOL, TPR, TPX, TSCO, TXRH, ULTA, VFC, WHR, WSM,	WYNN, YUM.

The list of non-cyclical stocks consists of ABEV, ADM, ANDE, ATGE, BF-B, BG, BGS, BIG, BTI,	BUD, CAG, CALM,	CHD, CL, CLX, COST,	CPB, CVGW, DAR,	DEO, DG, DLTR, EL, EPC,	FDP, FIZZ, FLO,	FMX, GHC, GIS, HAIN, HLF, HRL, HSY,	IFF, IMKTA,	INGR, JJSF,	K, KDP,	KMB, KO,	KR,	LANC, LMNR,	MDLZ, MGPI,	MKC, MNST, MO, NUS,	NWL, PEP, PG, PM, PPC, PRDO, PRMW, SAM,	SENEA, SJM,	SPB, SPTN, STKL, STRA, STZ, SYY, TAP, TGT, THS, TR,	TSN, UL, UNFI, USNA, UVV, VLGEA, WBA, WMK, WMT.

Results of PCA separately applied to both stock categories are shown in Figure \ref{fig_PCA_stocks}.

\begin{figure}[h]
\begin{subfigure}{.49\textwidth}
  \centering
  \includegraphics[width=1.0\linewidth]{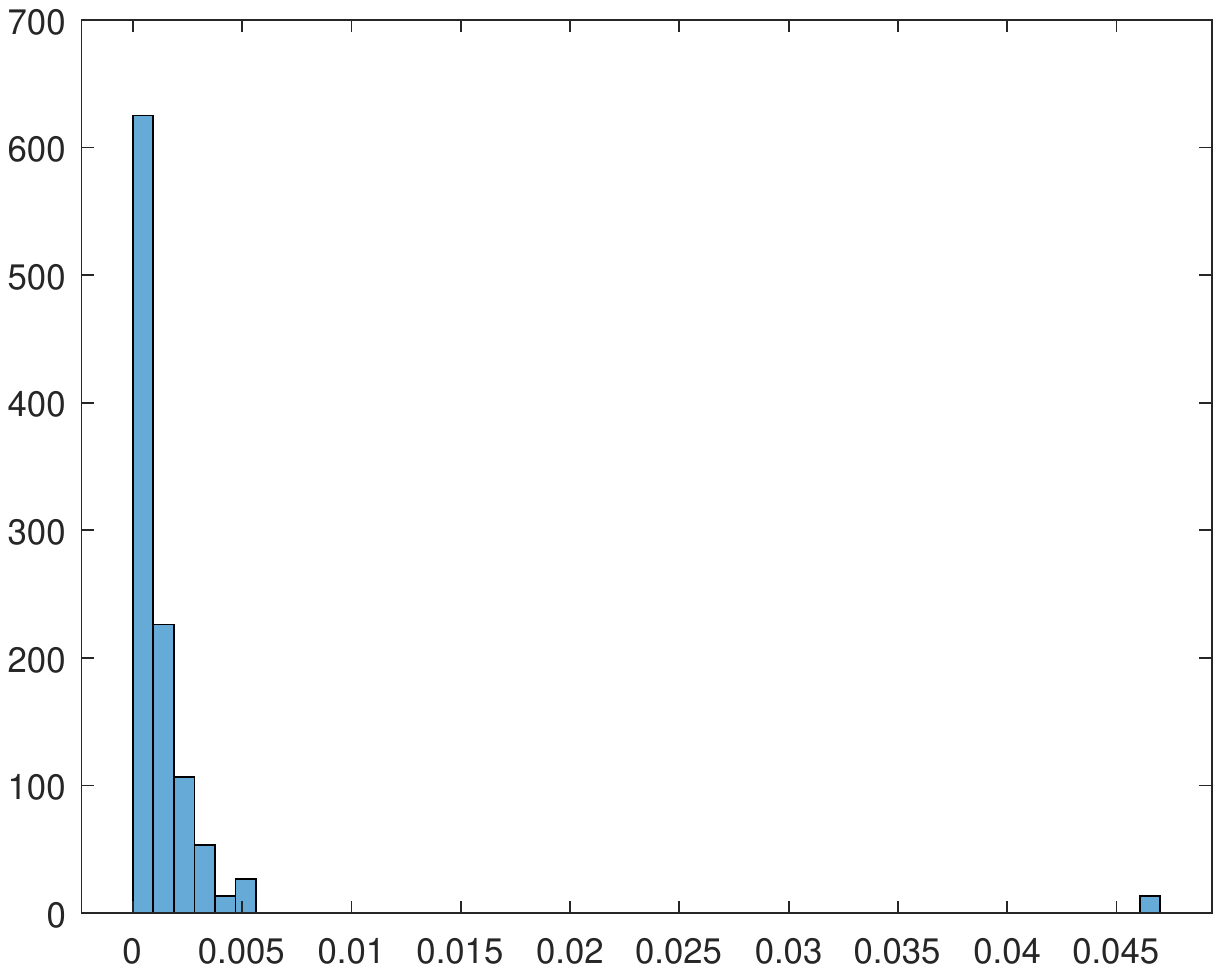}
  \caption{PCA: Cyclical stock returns.}
  \label{fig_PCA_cycl}
\end{subfigure}
\begin{subfigure}{.49\textwidth}
  \centering
  \includegraphics[width=1.0\linewidth]{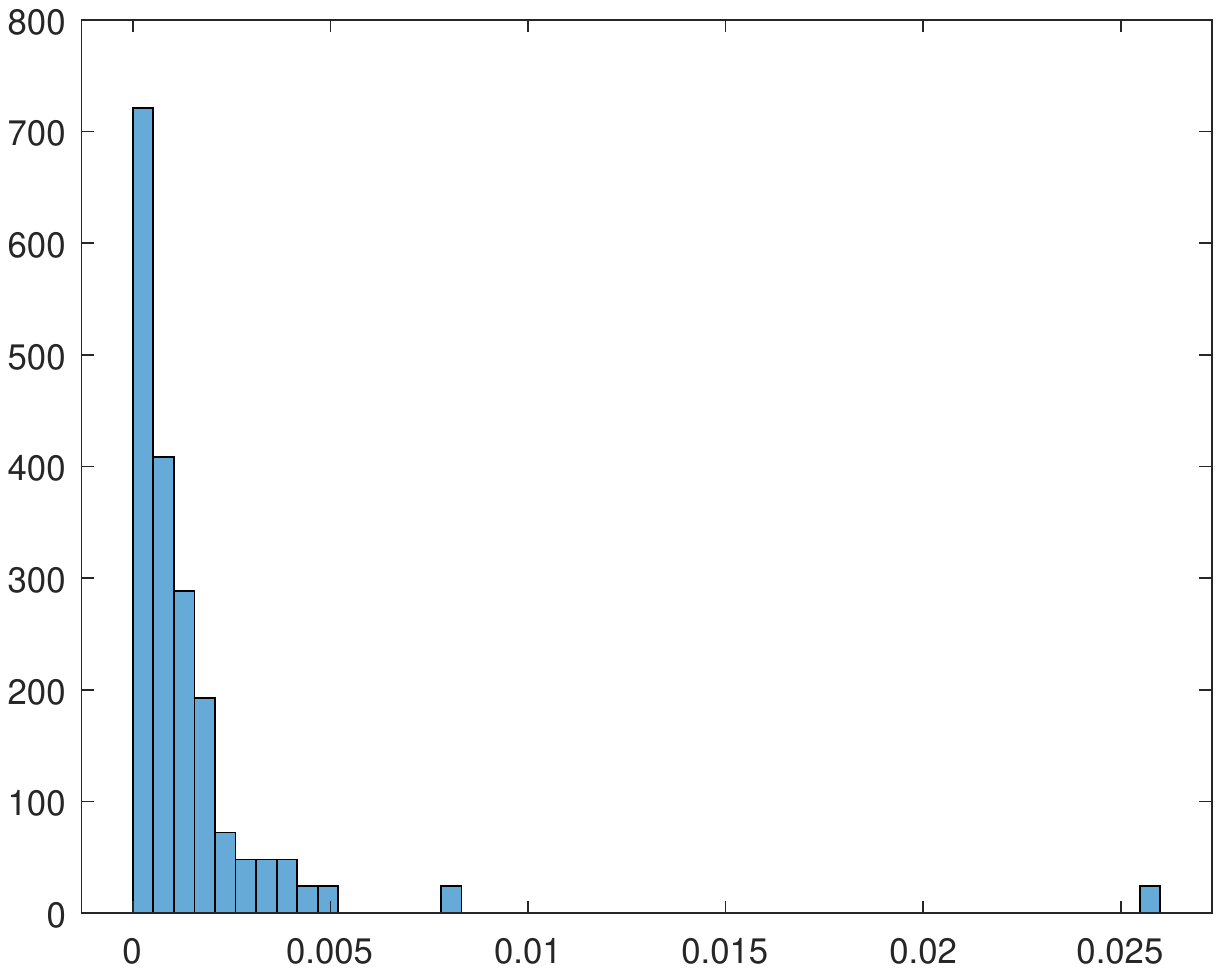}
  \caption{PCA: Non-cyclical stock returns.}
  \label{fig_PCA_noncycl}
\end{subfigure}
\caption{Eigenvalues from PCA for cyclical and non-cyclical groups of stock returns.}
\label{fig_PCA_stocks}
\end{figure}

\bibliographystyle{abbrvnat}
\bibliography{CCA_biblio}

\end{document}